\RequirePackage{amsmath}
\documentclass[runningheads]{llncs}

\newif\ifshowcomments
\showcommentstrue
\showcommentsfalse

\usepackage[T1]{fontenc}
\usepackage{graphicx}
\usepackage{comment}

\usepackage{bm}

\usepackage[acronym]{glossaries}
\usepackage[acronym,nomain,shortcuts=ac]{glossaries-extra}
\setabbreviationstyle[acronym]{long-short}
\makeglossaries 

\usepackage[lambda, advantage, operators, sets, adversary, landau, probability, notions, logic, ff, mm, primitives, events, complexity, oracles, asymptotics, keys]{cryptocode}
\usepackage{thmtools,thm-restate}

\usepackage{url}
\usepackage[colorlinks]{hyperref} \usepackage[capitalize]{cleveref}
\usepackage{todonotes} \usepackage{csquotes}
\usepackage{stmaryrd}
\usepackage{multirow}
\usepackage{enumerate}
\usepackage{enumitem}
\usepackage{amssymb}
\usepackage{etoolbox}
\usepackage{quantikz}
\usepackage{orcidlink}
\AtEndEnvironment{proof}{\qed}

 \newcommand{\hvzktv}{\ensuremath{\epsilon_{\Sigma}^{\hvzk{}}}}
\newcommand{\hvzkrenyi}{\ensuremath{\delta_{\Sigma}^{\alpha\pcmathhyphen{}\hvzk}}}
\newcommand{\hvzkkl}{\ensuremath{\epsilon_{\Sigma}^{KL\pcmathhyphen{}\hvzk}}}
\newcommand{\rpsfdomrenyi}[1][\alpha]{\ensuremath{\delta_{\rpsf}^{#1\pcmathhyphen{}dom}}}
\newcommand{\rpsfdomtv}{\ensuremath{\epsilon_{\rpsf}^{dom}}}
\newcommand{\rpsfpretv}{\ensuremath{\epsilon_{\rpsf}^{pre}}}
\newcommand{\rpsfprerenyi}[1][\alpha]{\ensuremath{\delta_{\rpsf}^{#1\pcmathhyphen{}pre}}}
\newcommand{\rpsfprekl}{\ensuremath{\epsilon_{\rpsf}^{KL\pcmathhyphen{}pre}}}
\newcommand{\hfinstancetv}{\ensuremath{\epsilon_{\INSTANCE}}}
\newcommand{\hfrandtv}{\ensuremath{\epsilon_{\RAND, \nrhashqueries}}}
\newcommand{\hfsigntv}{\ensuremath{\epsilon_{\SIGN, \nrsignqueries}}}
\newcommand{\hfsignrenyi}{\ensuremath{\delta_{\SIGN, \nrsignqueries, \alpha}}}
\newcommand{\hfabort}{\ensuremath{\epsilon_{abort}}}
\newcommand{\hffinish}{\ensuremath{\epsilon_{\FINISH}}}
\renewcommand{\minentropy}{\beta}
\newcommand{\saltbits}{k}
\newcommand{\msg}{m}

\newcommand{\hashcollisionqrom}{20(\nrhashqueries + N \cdot\nrsignqueries + N)^3/|\mathcal{C}|}
\newcommand{\gammacldef}{\gamma_{cl}:= \max_{c \in \mathcal{C}} |\{y \in \mathcal{Y} \mid \gamma(y) = c\}|}

\newcommand{\ntrursis}{\pcnotionstyle{\mathcal{R}\pcmathhyphen{}NTRU\pcmathhyphen{}SIS}}
\newcommand{\ntrurisis}{\pcnotionstyle{\mathcal{R}\pcmathhyphen{}NTRU\pcmathhyphen{}ISIS}}
\newcommand{\tntrurisis}[1]{\pcnotionstyle{#1\pcmathhyphen{}\mathcal{R}\pcmathhyphen{}NTRU\pcmathhyphen{}ISIS}}
\newcommand{\tntrurspisisold}[1]{\pcnotionstyle{#1\pcmathhyphen{}\mathcal{R}\pcmathhyphen{}NTRU\pcmathhyphen{}SPISIS}}
\newcommand{\tntrurspisis}[1]{\pcnotionstyle{#1\pcmathhyphen{}\mathcal{R}\pcmathhyphen{}NTRU\pcmathhyphen{}SPISIS^*}}
\newcommand{\ntrutrapgen}{\pcalgostyle{TpdGen_{NTRU}}}
\newcommand{\ntrusamplepre}{\pcalgostyle{PreSmp_{NTRU}}}

\newcommand{\rsig}{\pcalgostyle{RSig}}
\newcommand{\setup}{\pcalgostyle{Stp}}

\newcommand{\nrchalqueries}{{q_{c}}}
\newcommand{\nrsignqueries}{{q_{s}}}
\newcommand{\nrhashqueries}{q_{\hash}}
\newcommand{\salt}{\pckeystyle{salt}}

\newcommand{\START}{\pcalgostyle{START}}
\newcommand{\RAND}{\pcalgostyle{RAND}}
\newcommand{\SIGN}{\pcalgostyle{SIGN}}
\newcommand{\FINISH}{\pcalgostyle{FINISH}}
\newcommand{\INSTANCE}{\pcalgostyle{INSTANCE}}

\newcommand{\REPRO}{\pcalgostyle{REPRO}}

\newcommand{\seufcra}{\pcnotionstyle{SUF\pcmathhyphen{}CRA}}
\newcommand{\weufcra}{\pcnotionstyle{UF\pcmathhyphen{}CRA}}
\newcommand{\weufcraone}{\pcnotionstyle{UF\pcmathhyphen{}CRA1}}
\newcommand{\eufnra}{\pcnotionstyle{UF\pcmathhyphen{}NRA}}

\newcommand{\wrec}{\pcnotionstyle{W\pcmathhyphen{}REC}}
\newcommand{\cur}{\pcnotionstyle{CUR}}

\newcommand{\anon}{\pcnotionstyle{ANON}}

\newcommand{\pcrequire}{\highlightkeyword{require}}

\newcommand{\chal}{\pcoraclestyle{CHAL}}

\newcommand{\ch}{\pckeystyle{ch}}
\newcommand{\com}{\pckeystyle{com}}
\newcommand{\rsp}{\pckeystyle{rsp}}
\newcommand{\imp}{\pcnotionstyle{IMP}}
\newcommand{\instance}{\pckeystyle{inst}}
\newcommand{\gen}{\pcalgostyle{Gen}}
\newcommand{\AOS}{\pcalgostyle{AOS}}

\newcommand{\SUC}{\pcalgostyle{SUC}}
\newcommand{\CL}{\pcalgostyle{CL}}
\newcommand{\SZ}{\pcalgostyle{SZ}}
\newcommand{\PProp}{\pcalgostyle{P}}

\newcommand{\trapgen}{\pcalgostyle{TpdGen}}

\newcommand{\sampledom}{\pcalgostyle{SampleDom}}
\newcommand{\samplepre}{\pcalgostyle{SamplePre}}

\newcommand{\SK}{\pcalgostyle{SK}}

\newcommand{\rpsf}{\pcalgostyle{RPSF}}

\newcommand{\domain}{\pckeystyle{D}}
\newcommand{\range}{\pckeystyle{R}}
\newcommand{\oneway}{\pcnotionstyle{OW}}
\newcommand{\collision}{\pcnotionstyle{COL}}

\newcommand{\ID}{\pcalgostyle{ID}}
\newcommand{\hvzk}{\pcnotionstyle{HVZK}}

\newcommand{\ketbra}[2]{\ket{#1}\bra{#2}}
\newcommand{\trace}{\operatorname{Tr}}
\newcommand{\tracenorm}[1]{\norm{#1}_{tr}}
\newcommand{\operatornorm}[1]{\norm{#1}_{\infty}}
\newcommand{\euclnorm}[1]{\norm{#1}_{2}}
\DeclareMathOperator{\CNOT}{CNOT}

\newcommand{\game}[1]{\pcgameprocedurestyle{\pcgamename}_{#1}}

\newcommand{\oraclef}[1]{\ensuremath{{f_{#1}}}}
\newcommand{\Oraclef}[1]{\ensuremath{\oracle{}_{\oraclef{#1}}}}

\ifshowcomments
\newcommand{\cm}[1]{\textcolor{purple}{[\textit{CM: #1}]}}
\newcommand{\mb}[1]{\textcolor{brown}{[\textit{MB: #1}]}}
\else
\newcommand{\cm}[1]{}
\newcommand{\mb}[1]{}
\fi

\newcommand{\BC}{\operatorname{BC}}
\newcommand{\hellingerdistance}{\operatorname{dh}} 
\newboolean{shortver}
\setboolean{shortver}{false}
\newboolean{anonymous}
\setboolean{anonymous}{false}
\newboolean{withprevreviews}
\setboolean{withprevreviews}{false}

\ifthenelse{\boolean{shortver}}{}{
    \setcounter{tocdepth}{2}
}

\begin{document}

\newacronym{PQ}{PQ}{post-quantum}
\newacronym{RSS}{RSS}{ring signature scheme}
\newacronym{RO}{RO}{random oracle}

\newacronym{NIST}{NIST}{National Institute of Standards and Technology}

\newacronym{ppt}{ppt}{probabilistic polynomial time}

\newacronym{KDF}{KDF}{key-derivation function}
\newacronym{sKEM}{sKEM}{split-KEM}
\newacronym{AKEM}{AKEM}{authenticated KEM}
\newacronym{HPKE}{HPKE}{hybrid public key encryption}

\newacronym{DVS}{DVS}{designated verifier signature}
\newacronym{MAC}{MAC}{message authentication code}
\newacronym{RS}{RS}{ring signature}
\newacronym{dRS}{dRS}{deniable \ac{RS}}
\newacronym{FDH}{FDH}{full-domain-hash}
\newacronym{PFDH}{PFDH}{probabilistic full-domain-hash}

\newacronym{PSF}{PSF}{preimage sampleable (trapdoor) function}
\newacronym{RPSF}{RPSF}{ring preimage sampleable function}
\newacronym{RTDF}{RTDF}{ring trapdoor function}

\newacronym{AKE}{AKE}{authenticated key exchange}
\newacronym{DAKE}{DAKE}{deniable authenticated key exchange}
\newacronym{BAKE}{BAKE}{bundled \ac{AKE}}

\newacronym{PRF}{PRF}{pseudorandom function}

\newacronym{ROM}{ROM}{random oracle model}
\newacronym{QROM}{QROM}{quantum random oracle model}

\newacronym{KCI}{KCI}{key compromise impersonation}
\newacronym{PFS}{PFS}{perfect forward secrecy}
\newacronym{WFS}{WFS}{weak forward secrecy}
\newacronym{MEX}{MEX}{maximal exposure}
\newacronym{RR}{RR}{randomness reveal}
\newacronym{SSR}{SSR}{session state reveal}

\newacronym{DoS}{DoS}{denial of service}

\newacronym{HVZK}{HVZK}{honest verifier zero knowledge}

\newacronym{RD}{RD}{Rényi divergence}
\newacronym{SD}{SD}{statistical distance}

\newacronym{FSwA}{FSwA}{Fiat-Shamir with aborts}
\newacronym{HSwA}{HSwA}{Hash-then-Sign with aborts}
\title{
    Quantum Oracle Distribution Switching and Applications to \textsc{Falcon} and Ring Signatures
}

\titlerunning{Quantum Oracle Distribution Switching and Applications}

\ifthenelse{\boolean{anonymous}}{
    \author{}
    \institute{}
}{
    \author{Marvin Beckmann\inst{1}\orcidlink{0009-0008-0178-1423} \and
        Christian Majenz\inst{1}\orcidlink{0000-0002-1877-8385}}

    \authorrunning{Marvin Beckmann \and Christian Majenz}

    \institute{Technical University of Denmark, Kongens Lyngby 2800
        \email{\{mabeck, chmaj\}@dtu.dk}}
}

\maketitle
\begin{abstract}

    Motivated by digital signature algorithms ranging from \textsc{Falcon} to fully-anonymous ring signatures used in Signal-style key exchange, such as \textsc{Gandalf}, we revisit a fundamental problem in post-quantum security proofs: distinguishing oracle functions whose outputs are sampled independently from distributions $P$ and $Q$ that are close.
    In the context of signatures, closeness is often measured via Rényi divergences, which yield \emph{multiplicative} bounds in the classical setting.
    A counterexample shows that such multiplicative-error bounds for distinguishers with classical oracle access have no analogue for quantum access, and we provide two alternative approaches based on small-range distributions and reprogramming techniques.
    We also give a concrete, optimal bound for the case where $P$ and $Q$ are close in statistical distance.

    We apply these techniques to the motivating constructions.
    (i)~We give the first QROM security proof for \textsc{Falcon} that avoids oracle indistinguishability arguments based on statistical distance.
    This is crucial, as \textsc{Falcon}'s ROM proof relies on Rényi divergence, while the statistical distance induced by its parameters is too large to yield meaningful bounds.
    (ii)~We formalize and abstract the ring signature construction used in \textsc{Gandalf} as a modular framework by defining ring trapdoor preimage-sampleable functions (RPSFs), for which we obtain two QROM proofs.

    We also provide two QROM security proofs for AOS ring signatures, adapting existing QROM techniques.
    Together with our results on RPSF-based ring signatures, this yields QROM security proofs for a broad class of fully-anonymous linear ring signature constructions, including \textsc{Gandalf} and the AOS-based constructions \textsc{Erebor} and \textsc{MayoRS}.

    \keywords{QROM \and Rényi divergence \and Falcon \and Ring Signatures \and Oracle Indistinguishability}
\end{abstract}
\ifthenelse{\boolean{shortver}}{}{
    \tableofcontents
    \newpage
}

\section{Introduction}\label{sec: Introduction}

A digital signature scheme allows a user to authenticate messages to other parties.
With the growing threat of quantum computers, \ac{PQ} secure signatures are essential.
\ac{NIST} concluded its \ac{PQ} standardization process with three signature standards: ML-DSA, FN-DSA, and SLH-DSA.
Of these, FN-DSA (formerly \textsc{Falcon} \cite{fouque2018falcon}) has received significant attention in recent cryptographic research.

At its core, \textsc{Falcon} is built on the GPV framework \cite{STOC:GenPeiVai08}, a general paradigm for constructing hash-and-sign signatures from preimage sampleable functions.
However, \textsc{Falcon} deviates from the GPV framework in two important ways.
First, it uses rejection sampling (aborts and repeats) to ensure correctness.
Second, and more relevant to this work, it employs optimization techniques that replace statistical distance arguments with Rényi divergence properties, enabling significantly tighter parameter choices.
As shown in \cite{EPRINT:GajJanKil24b}, the statistical distance of \textsc{Falcon}'s preimage sampler is only on the order of $2^{-34}$, which is far too large for standard proof techniques.

The existing \ac{QROM} proof for the GPV framework \cite{AC:BDFLSZ11} proceeds by simulating the entire random oracle with hash values that are statistically close to uniform.
Giving an adversary quantum access to such a simulated oracle remains indistinguishable from the honest oracle, up to an error depending on the statistical distance and the number of queries.
However, this approach fundamentally requires the simulated hash values to have negligible statistical distance from uniform.
Since \textsc{Falcon}'s optimization techniques replace statistical distance with Rényi divergence, these existing \ac{QROM} techniques do not apply, and there is currently no argument that \textsc{Falcon} with its current parameters is secure in the \ac{QROM}.

Closing this gap requires understanding how quantum oracle access behaves when the underlying output distribution is switched from one distribution to another that is close in terms of Rényi divergence rather than statistical distance. Classically, the Rényi divergence yields a clean purely multiplicative bound \cite{EC:LanSteSte14,van_Erven_2014} when switching from the uniform to the simulated distribution.
Whether an analogous bound holds if provided quantum oracle access is the central technical question we address in this work.

Rivest, Shamir, and Tauman~\cite{AC:RivShaTau01} introduced \acp{RSS}, which provide the same functionality as group signatures without trusted setup, allowing a member to sign on behalf of a group without revealing their identity.
The same GPV framework that underlies \textsc{Falcon} also serves as the foundation for ring signature constructions like \textsc{Gandalf} \cite{C:GajJanKil24} and \textsc{FalconRS}, which extend the preimage sampleable function approach to the ring setting.

A prominent application of \acp{RSS} is the construction of \ac{DAKE}, e.g., for messenger applications.
The most widely used key exchange protocol for end-to-end-encrypted instant messaging is the Signal protocol (WhatsApp, Signal, and Facebook Messenger, etc.).
Its initial \ac{DAKE}, X3DH~\cite{Signal:X3DH}, is, however, based on the Diffie-Hellman key exchange, which can be broken by quantum computing attacks.
A variant of that protocol, PQXDH~\cite{Signal:PQXDH}, uses a hybrid approach with ML-KEM to ensure \ac{PQ} confidentiality.
PQXDH thus prevents ``harvest-now-decrypt-later'' attacks, but lacks \ac{PQ} authentication and might not be \ac{PQ}-anonymous either (``harvest-now-judge-later''~\cite{cryptoeprint:2025/1090}).
Most proposals for fully \ac{PQ} secure Signal-conforming DAKEs use \ac{RSS}~\cite{PKC:HKKP21,PKC:BFGJS22,USENIX2025:HKW}.
An \ac{AKEM} is also the primitive behind two modes of the \acl{HPKE} standard~\cite{rfc9180}.
Recent work~\cite{C:GajJanKil24} provides a generic construction of deniable \acp{AKEM} based on KEMs and \acp{RSS}.

Signal-conforming \ac{DAKE} constructions apply \acp{RSS} for ring size $2$, and require strong anonymity (anonymity under full key-exposure)~\cite{TCC:BenKatMor06}.
In this context, \acp{RSS} with a signature size growing linearly in the ring size are typically more efficient than logarithmic-sized \acp{RSS}.
Besides constructions like \textsc{Gandalf}, there is a second approach that uses the AOS-transform~\cite{AC:AbeOhkSuz02}, transforming $\mathsf{\Sigma}$-protocols into linear-sized \acp{RSS}, used in \textsc{Erebor} and \textsc{MayoRS}~\cite{CiC:BorLaiLer24,cryptoeprint:2025/1090}.

Both approaches and the resulting explicit constructions are currently only proven secure in the classical \ac{ROM}, and thus \emph{not} supported by a \ac{QROM} proof.
In other words, they do not enjoy provable \ac{PQ} security.
Prior to our work, there were thus two options for \ac{PQ} Signal-conforming \ac{DAKE}: Accepting the lower level of assurance provided by a ring-signature-based protocol without provable \ac{PQ} security, or using a split-KEM-based protocol and accepting its disadvantages (like its inefficiency).
\vspace{.2cm}

\noindent\textbf{Our Contributions.}
In this work, we provide \ac{QROM} proofs for \textsc{Falcon} and the two types of \ac{RSS} used in Signal-conforming \acp{DAKE}.

To facilitate the \ac{QROM} proofs that use Rényi divergence-based optimizations, we give a comprehensive set of results for using Rényi divergences for the (in)distinguishability of quantum-accessible oracles.
This includes a counterexample showing that the main approach used in the classical setting cannot work for quantum-accessible oracles, as well as two alternatives that do work: Targeted reprogramming of output values for high-entropy input based on adaptive reprogramming results \cite{AC:GHHM21}, and a technique to replace all outputs via small-range distributions \cite{FOCS:Zhandry12}.

For the ring trapdoor paradigm of constructing \acp{RSS}, we generalize and abstract the approach of \textsc{FalconRS} and \textsc{Gandalf} by formalizing a novel primitive: \acp{RPSF}.
We then give a generic \ac{RSS} construction from \acp{RPSF} that admits security proofs using either statistical distance or Rényi divergence properties independently for domain sampling and preimage sampling, yielding four concrete security bounds.
Finally, we prove \ac{QROM} security for the generic \ac{RPSF}-based \ac{RSS} construction using two different techniques.
One of them, (a generalization of) the formalism of history-free reductions~\cite{AC:BDFLSZ11}, is relatively straightforward and relies on statistical distance arguments.
The other uses Rényi divergences for tighter parameters, and requires use of our study of Rényi divergences for quantum-accessible oracles.

For AOS \acp{RSS}, we give two security bounds, one for generic $\mathsf{\Sigma}$-protocols and a tighter bound for $\mathsf{\Sigma}$-protocols with commit-and-open structure.
To obtain the generic bound, we employ the measure-and-reprogram technique~\cite{C:DonFehMaj20}.
While the resulting bound grows exponentially with the ring size, it can give meaningful guarantees for ring size 2, as needed for the application to Signal-conforming \acp{DAKE}.
For the second bound, practical for larger rings, we use straightline extraction techniques, yielding a multiplicatively tight reduction to the special soundness of the $\mathsf{\Sigma}$-protocol. The result can be extended to Merkle tree-based commit-and-open $\mathsf{\Sigma}$-protocols.

\vspace{.2cm}
\noindent\textbf{Technical Overview.}
In the following, we give a more detailed overview.

\noindent\textit{Quantum Oracle Distribution Switching.}
Consider an adversary with quantum access to an oracle for a function  $\oraclef{P}:\mathcal{X}\to \mathcal{Y}$.
The outputs of $f$ are independently sampled from a distribution $P$ over $\mathcal{Y}$ for each $x \in \mathcal{X}$.
If $P$ is the uniform distribution, then this is exactly the \ac{QROM}. We give several results on the distinguishability of such oracles for pairs of distributions $P$ and $Q$.

First, we provide an asymptotically tight \emph{explicit} bound on the statistical distance of an algorithm's output, where the algorithm interacts with an oracle for $\oraclef{P}$ or $\oraclef{Q}$, for two distributions $P$ and $Q$, as a function of the statistical distance of $P$ and $Q$.\footnote{Asymptotically, this was also shown in \cite{belovs2019quantum}, but without explicit bounds which are needed for parameter determination in practice. We need explicit bounds, so we included the theorem statement with explicit bounds in the main body, and the proof in \ifthenelse{\boolean{shortver}}{the full version of this work \cite[B.1]{cryptoeprint:2026/293}}{Appendix \ref{subsec: Proof Oracle Distribution Switching using Statistical Distance}}.}
For this explicit bound, we use a \emph{compressed oracle}~\cite{C:Zhandry19} view, and analyze the trace norm of the final states produced by an algorithm interacting with the two oracles.
By adding null-terms, we bound the trace norm by a sum of the operator norms of two ``compression'' operators that depend on the underlying classical distributions.
The norm of these operators can be bounded by the statistical distance of the classical distributions.

Now consider the Rényi divergence $R_{\alpha}(P\|Q)$ of order $\alpha$ between probability distributions $P$ and $Q$.
For an algorithm making $q$ classical queries to $\oraclef{P}$ or
$\oraclef{Q}$, the probability of any output-dependent event under $P$ admits
a multiplicative bound in terms of the Rényi divergence as $R_\alpha(P\|Q)^q$ with the event occuring under $Q$, up to an additional exponent $\frac{\alpha-1}{\alpha}$.

We show that no analogous bound can hold in the quantum setting.
Our construction proceeds as follows. From $f_R$, where $R \in \{P,Q\}$, we define a Boolean function $h[f_R]$ on a domain of size $2^{1+n}$ such that:
(i) a single query to $h[f_R]$ can be implemented using $2$ queries to $f_R$; and
(ii) $h[f_Q]$ is \emph{balanced in expectation} over the choice of $f_Q$, whereas $h[f_P]$ is biased in expectation.

We then apply the Deutsch-Jozsa algorithm~\cite{deutsch1992rapid} to $h[f_R]$ using a single query.
This yields a distinguisher which, given oracle access to $f_R$, outputs $1$ whenever the algorithm reports ``constant''.
Since $h[f_Q]$ is balanced in expectation, this occurs with probability $\bigO{2^{-n}}$ for $R = Q$, while it occurs with constant probability for $R = P$.
Hence, their ratio grows as $\Theta(2^{n})$, which outpaces any loss depending only on $P$ and $Q$, and continues to do so even after weakening the comparison to $\prob{1 \gets \adv^{\oraclef{Q}}}^{\gamma}$ for any fixed $\gamma > 0$.

As a positive result, we show that by accepting an additive error, we can use the \emph{small-range distribution} toolkit from \cite{FOCS:Zhandry12}.
Here, $r$ values are sampled according to the underlying distribution, and for each value in $\mathcal{X}$, one of these $r$ samples is assigned as the value in $\mathcal{Y}$.
For these $r$ samples, we can use the classical Rényi divergence properties and bound the overall capabilities of any algorithm.
However, the error term depends on $r$, and for practical applications, $r$ must be chosen super-polynomially large, increasing the power of the Rényi divergence factor.
To have practical application, it turns out that the statistical distance between the two distributions would also need to be negligible, so the Rényi divergence would not need to be used in the first place.
This concludes the Rényi divergence study with the insight that replacing the entire underlying oracle is impractical when using the Rényi divergence.

Taking a closer look at explicit constructions, we observed that the hash input usually includes a salt to achieve strong unforgeability.
In this case, we can use results on \emph{adaptive reprogramming}~\cite{AC:GHHM21}, also known as \emph{resampling} \cite{EC:ABKM22}, to \emph{replace only the outputs used by the signing oracle}.
In this technique, $R$ positions are reprogrammed by sampling them again from the same underlying distribution before they are programmed into the oracle.
Replacing these samples with samples from a different distribution, we can use the classical properties of the Rényi divergence (or the statistical distance).

\noindent\textit{\textsc{Falcon}.}
\textsc{Falcon} has enough entropy in the message hash input to enable our results for adaptive reprogramming in combination with the Rényi divergence.
The reduction follows ideas from \cite{EPRINT:GajJanKil24b} closely.
First, we reprogram the hashes from each preimage sampling try using one of the $C_s$ targets from the $\tntrurspisis{C_s}$ assumption, where $C_s$ bounds the total number of preimage sampling attempts across all signing queries.
This assumption provides $C_s$ syndromes together with Gaussian preimages and asks the adversary to find a \emph{new} short preimage for any of them.
We apply our results on the Rényi divergence here.
In the next step these signatures are replaced by the preimages provided with the hash values in the previous game.
Finally, we can either include challenges from a multi-target assumption $\tntrurisis{t}$ using small-range distributions or apply measure-and-reprogram to include a challenge from a single-target assumption $\ntrurisis$.
The assumptions differ in that they have either $t$ or $1$ uniform targets for which a valid preimage needs to be found.

\noindent\textit{Ring Trapdoor Ring Signatures.}
We introduce \ac{RTDF} and describe preimage sampleable properties for them, generalizing \acp{PSF}~\cite{STOC:GenPeiVai08}, to give a general \ac{RSS} construction.
Brakerski and Kalal~\cite{EPRINT:BraKal10} introduce a less general framework from \acp{PSF} to construct ring signatures in the standard model.
The first digital signature scheme for which \ac{QROM} security was proven uses \acp{PSF} \cite{AC:BDFLSZ11}. Unsurprisingly, similar strategies can be employed when proving the security of our \ac{RSS}:
We observe that a \emph{history-free} \ac{ROM} proof can be given.
For this, we adapt the original notion of history-free proofs for digital signatures to the setting of \acp{RSS}.
In essence, we can simulate the  \ac{RO} by composing a \emph{private} \ac{RO} with the domain sampler and the evaluation function of the \ac{RTDF}.
Combining the domain sampler and the evaluation function to simulate the \ac{RO} requires statistical distance arguments, and under those, this mapping produces a distribution in the range close to uniform, replacing the \ac{RO}.
Similar to the ordinary signature case, the classical history-free reduction implies a \ac{QROM} reduction.
We present a second separate result that works with properties based on the Rényi divergence that uses similar strategies as the proof for \textsc{Falcon}.

\noindent\textit{AOS Ring Signatures.}
Here, we first describe how to simulate the signing oracle.
This again requires adaptive reprogramming in the \ac{QROM}. To simulate, replace the single instance of the honest $\Sigma$-protocol prover by the \ac{HVZK} simulator, and reprogram the \ac{RO} accordingly.
Given sufficient min-entropy in the simulated commitments and thus the \ac{RO} inputs, we can apply the adaptive reprogramming lemma from \cite{AC:GHHM21}.
It is from this point on that we give two reductions, one for generic sigma protocols and a tight one for commit-and-open protocols.

The strategy for the generic proof is to construct an impersonation adversary against the underlying $\mathsf{\Sigma}$-protocol using a successful forgery adversary. This adversary must send a commitment $\com$, receive a challenge $\ch$, and send a response $\rsp$ in order.
Like in ordinary Fiat-Shamir, this requires programming the random oracle, but here, challenges are computed from the previous commitments in the ring.
In the impersonation game, the challenge is received after the commitment, so to reprogram, the query producing the challenge has to happen \emph{after} the commitment was queried, requiring a definite time order.
To get such an order, we use the measure-and-reprogram technique~\cite{C:DonFehMaj20}. This technique incurs a multiplicative loss in $(2q+1)^{2n}$, where $n$ is the number of queries for which the order is desired.
As we need an order of all queries for the ring, and the ring is of size $N$, this introduces an exponential loss in $N$. Here, one of the reprogrammings actually programs the $\mathsf{\Sigma}$-protocol challenge, and the remaining ones  \emph{commit} the adversary to a \emph{classical ordering} of the queries made for producing the signature.

For commit-and-open protocols, we use a different approach.
Commit-and-open protocols in the (Q)ROM allow \ac{RO}-based extraction of commitments. If an adversary succeeds for sufficiently many different challenges, one can use (generalized) special soundness to extract a witness.
Generalizing the strategy from \cite{TCC:ChiManSpo19,C:DFMS22}, we define a compressed oracle database property for fooling the extractor, i.e., the database contains a valid forgery, but the special soundness extractor fails.
Using the quantum transition capacity framework  \cite{EC:CFHL21} we bound the probability that after $\nrhashqueries$ hash queries, this database property holds.
If the database property does not hold, and the prover succeeds, we can measure the compressed oracle database to recover a witness. An additional Merkle tree optimization only requires slight modifications in the database properties.

\vspace{.2cm}

\noindent\textbf{Additional Related Work.} To the best of our knowledge, the AOS framework has not been analyzed in the \ac{QROM}, and ring trapdoor constructions also have not been proven secure in the \ac{QROM}.
There is one work~\cite{PQCRYPTO:DerRamSla18} that considers \ac{QROM} security of a logarithmic-size ring signature based on symmetric primitives to build an accumulator and simulation-sound extractability of an additional proof system (not considering inefficient plain model \acp{RSS} like \cite{EPRINT:BraKal10}).
The specific ring signatures, though logarithmic in asymptotics, are in the order of several hundred KBs even for small rings, and therefore practically infeasible.
Recent candidates for fully anonymous \acp{RSS}~\cite{C:LyuNguSei21,AC:BeuKatPin20,CiC:BorLaiLer24,ACNS:LuAuZha19,C:YELAD21} are also not proven secure in the \ac{QROM}.
We identified some works that consider the \ac{QROM}, but they are either not fully anonymous~\cite{PKC:HaqSca20}, or consider \ac{QROM} quantum-access-secure \ac{RSS}~\cite{PKC:CCLM22,C:CGHKLM21}, but are inefficient.
The framework of~\cite{EPRINT:CHHHLY21} for group signatures and accountable \acp{RSS} covers neither the AOS transform nor the ring-trapdoor paradigm, but discarding the $\mathrm{msk}$ in accountable \acp{RSS} does yield anonymity under full key exposure.
This does rely on assuming the $\mathrm{msk}$ is irrecoverable, and the repetition parameter ($2\lambda N$) in their work puts signature sizes out of practical range for Signal-conforming \acp{DAKE}.
Katsumata et al.~\cite{cryptoeprint:2025/1090} present a new weaker anonymity notion for \acp{RSS} that they argue to be sufficient.
Full anonymity implies this new deniability notion.
Signal-conforming \acp{DAKE} can also be constructed from split-KEMs like \cite{USENIX:CHNRV24,cryptoeprint:2025/853} instead of \ac{RSS} using the \textsc{K-Waay}~\cite{USENIX:CHNRV24}
protocol.
\vspace{.2cm}

\noindent\textbf{Concurrent independent work.}
After the research for this paper was concluded, we noticed a concurrent independent work~\cite{mittalring} that also generalizes the \ac{RTDF} approach from \cite{EPRINT:BraKal10} to capture the work of \textsc{Gandalf}.
Their work does not consider \ac{QROM} security, and they consider weaker anonymity notions.

Subsequent to our analysis of the AOS \ac{RSS}, there has been some further work on ring signatures, specifically an efficient \ac{QROM} secure ring signature \textsc{PegaRing}~\cite{zhang2025pegasus} was introduced whose security is also proven using commit-and-open $\Sigma$-protocols in the DualRing~\cite{C:YELAD21} framework using techniques similar to those we applied for proving the security of the AOS framework.
\section{Preliminaries}\label{sec: Preliminaries}
By $a \gets \mathcal{D}$, we assign a value to $a$ given $\mathcal{D}$,
where  $\mathcal{D}$ is a distribution or an algorithm.
If $D$ is a finite set, $a \gets D$ denotes uniform sampling from the set.
Let $\mathcal{U}(D)$ denote the uniform distribution over $D$.
We denote $[n]:= \{1,2,\ldots,n\}$.
For standard definitions of the statistical distance $\Delta(P, Q)$, the Rényi divergence $R_{\alpha}(P \| Q) $ and the Kullback-Leibler divergence $D_{KL}(P \| Q)$, see \ifthenelse{\boolean{shortver}}{the full version of this work \cite[A.1]{cryptoeprint:2026/293}}{Appendix~\ref{app:prob}}.

\begin{lemma}[{\cite[Lemma 4.1]{EC:LanSteSte14} and \cite[Theorem 1 and 9]{van_Erven_2014}}]\label{lemma: renyi divergence}
    Let $P$ and $Q$ be two discrete probability distributions and $E$ an event such that $E\subseteq\supp{P}\subseteq \supp{Q}$.
    Let $f: \supp{Q} \to \mathcal{X}$ be a function (or stochastic map).
    For any $\alpha \in (1, \infty]$, we have the probability preservation property and the data processing inequality,
    \begin{align*}
        \prob{P \in E}^{\frac{\alpha}{\alpha-1}} \leq & R_\alpha(P\|Q) \cdot \prob{Q \in E},\ \  \text{and}\ \
        R_\alpha(f(P) \| f(Q)) \leq R_\alpha(P \| Q).
    \end{align*}
\end{lemma}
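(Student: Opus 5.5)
The plan is to prove both statements directly from the definition of the Rényi divergence. For $\alpha\in(1,\infty)$ I take
\[
R_\alpha(P\|Q)=\Bigl(\sum_{x\in\supp{P}} P(x)^\alpha Q(x)^{1-\alpha}\Bigr)^{\frac{1}{\alpha-1}},
\]
and for $\alpha=\infty$ I take $R_\infty(P\|Q)=\max_{x\in\supp{P}} P(x)/Q(x)$. The case $\alpha=\infty$ is the limit of the finite case and can be checked by hand, so I treat it separately at the end.

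\emph{Probability preservation.} Write $P(E)=\sum_{x\in E}P(x)$ and factor each term as $P(x)=\bigl(P(x)Q(x)^{-\frac{\alpha-1}{\alpha}}\bigr)\cdot Q(x)^{\frac{\alpha-1}{\alpha}}$. Apply H\"older's inequality with exponents $\alpha$ and $\frac{\alpha}{\alpha-1}$:
\[
P(E)\le \Bigl(\sum_{x\in E}P(x)^\alpha Q(x)^{1-\alpha}\Bigr)^{1/\alpha}\Bigl(\sum_{x\in E}Q(x)\Bigr)^{\frac{\alpha-1}{\alpha}} .
\]
The first sum is at most $R_\alpha(P\|Q)^{\alpha-1}$, because the summands are nonnegative and $E\subseteq\supp{P}$. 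Raising both sides to the power $\frac{\alpha}{\alpha-1}$ gives $P(E)^{\frac{\alpha}{\alpha-1}}\le R_\alpha(P\|Q)\,Q(E)$. The hypothesis $\supp{P}\subseteq\supp{Q}$ guarantees that no term has $Q(x)=0$. For $\alpha=\infty$ the bound is immediate: $P(E)=\sum_{x\in E}P(x)\le R_\infty\sum_{x\in E} Q(x)$.

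\emph{Data processing.} Model a stochastic map $f$ by a kernel $K(y\mid x)$; a deterministic $f$ is the special case where $K$ is an indicator. Then $f(P)(y)=\sum_x K(y\mid x)P(x)$, and similarly for $Q$. The function $(p,q)\mapsto p^\alpha q^{1-\alpha}$ is jointly convex for $\alpha>1$, since it is the perspective of the convex map $t\mapsto t^\alpha$. Fix $y$ with $f(Q)(y)>0$ and define weights $w_x=K(y\mid x)Q(x)/f(Q)(y)$, which sum to $1$. Jensen's inequality then gives
\[
f(P)(y)^\alpha f(Q)(y)^{1-\alpha}\le \sum_x K(y\mid x)P(x)^\alpha Q(x)^{1-\alpha}.
\]
Summing over $y$ and using $\sum_y K(y\mid x)=1$ bounds the left side by $\sum_x P(x)^\alpha Q(x)^{1-\alpha}$. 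Taking the $\frac{1}{\alpha-1}$-th root, which is monotone, yields $R_\alpha(f(P)\|f(Q))\le R_\alpha(P\|Q)$. For $\alpha=\infty$, the pointwise bound $P(x)\le R_\infty Q(x)$ is preserved under mixing by $K$, which gives the claim directly.

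The main obstacle is the joint-convexity/Jensen step and the care it needs with support conditions. Terms with $Q(x)=0$ must be excluded, which is exactly what $\supp{P}\subseteq\supp{Q}$ provides, and I need $\supp{f(P)}\subseteq\supp{f(Q)}$ so that the left-hand divergence is well-defined. This containment follows from the same kernel argument: if $f(Q)(y)=0$ then $K(y\mid x)=0$ for every $x\in\supp{Q}\supseteq\supp{P}$, so $f(P)(y)=0$. Alternatively, both statements can simply be cited from \cite{EC:LanSteSte14,van_Erven_2014}, as the lemma's attribution indicates.
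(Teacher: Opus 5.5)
Your proof is correct. The Hölder step with exponents $\alpha$ and $\frac{\alpha}{\alpha-1}$ yields exactly $P(E)^{\frac{\alpha}{\alpha-1}}\le R_\alpha(P\|Q)\,Q(E)$. The weighted Jensen argument with $w_x=K(y\mid x)Q(x)/f(Q)(y)$ gives data processing for deterministic and stochastic maps alike. You also rightly check the containment $\supp{f(P)}\subseteq\supp{f(Q)}$, which is needed for the left-hand divergence to be defined, and both $\alpha=\infty$ cases are handled correctly.

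The paper does not prove this lemma; it imports both properties from \cite{EC:LanSteSte14} and \cite{van_Erven_2014}. Your argument is therefore a self-contained replacement for a citation rather than a different route through the paper's own reasoning. It uses the standard tools by which these facts are usually established: Hölder for probability preservation, and convexity of $t\mapsto t^\alpha$ (equivalently, joint convexity of its perspective) for data processing.

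What the citation buys is generality: the cited data-processing inequality holds in a general measure-theoretic setting and for a wider range of orders. Your discrete version is elementary and covers exactly what the paper uses (discrete distributions, $\alpha\in(1,\infty]$, stochastic post-processing). It also makes the role of the hypothesis $\supp{P}\subseteq\supp{Q}$ visible at each step. As a side remark, your Hölder argument does not actually need $E\subseteq\supp{P}$: restricting the sum to $E\cap\supp{P}$ leaves $P(E)$ unchanged and can only decrease the $Q$-mass on the right.
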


\begin{lemma}[{\cite[Lemma 2.9]{AC:BLLSS15}}]\label{lemma: renyi multiplication}
    Let $\alpha \in (1, \infty]$, and let $P_1, \ldots, P_q$ and $Q_1, \ldots, Q_q$ be distributions with $\supp{P_i}\subseteq \supp{Q_i}$.
    For independent coordinates it holds that
    \[R_{\alpha}\left(\bigotimes_{i \in [q]} P_i \,\middle\|\, \bigotimes_{i \in [q]} Q_i\right) = \prod_{i \in [q]} R_{\alpha}(P_i\|Q_i) \leq \left(\max_{i \in [q]} R_{\alpha}(P_i\|Q_i)\right)^{q}.\]
\end{lemma}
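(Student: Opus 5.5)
The plan is to prove the product identity directly from the definition of the Rényi divergence for discrete distributions, $R_\alpha(P\|Q) = \bigl(\sum_{x\in\supp{P}} P(x)^\alpha Q(x)^{1-\alpha}\bigr)^{\frac{1}{\alpha-1}}$ for $\alpha\in(1,\infty)$, and $R_\infty(P\|Q)=\max_{x\in\supp{P}} P(x)/Q(x)$. The inequality on the right is then immediate: each factor satisfies $R_\alpha(P_i\|Q_i)\ge 1$, and each is at most the maximum, so the product is at most the maximum raised to the power $q$.

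For finite $\alpha$, write $P=\bigotimes_i P_i$ and $Q=\bigotimes_i Q_i$. First note that $\supp{P}=\prod_i \supp{P_i}$. Because $\supp{P_i}\subseteq\supp{Q_i}$ for every $i$, this gives $\supp{P}\subseteq\supp{Q}$, so the divergence is well defined. For $x=(x_1,\dots,x_q)$, independence gives $P(x)^\alpha Q(x)^{1-\alpha}=\prod_i P_i(x_i)^\alpha Q_i(x_i)^{1-\alpha}$. The sum over the product set $\prod_i\supp{P_i}$ therefore factorizes into $\prod_i \sum_{x_i\in\supp{P_i}} P_i(x_i)^\alpha Q_i(x_i)^{1-\alpha}$. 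All terms are nonnegative, so this factorization (Fubini/Tonelli) is valid even for countably infinite supports. Raising both sides to the power $\frac{1}{\alpha-1}$ yields the claimed product. For $\alpha=\infty$, the ratio $P(x)/Q(x)$ equals $\prod_i P_i(x_i)/Q_i(x_i)$. Each factor is positive and depends on its own coordinate, which ranges independently over $\supp{P_i}$. Hence the supremum of the product equals the product of the suprema; alternatively, one can take the limit $\alpha\to\infty$, which is monotone.

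Finally, for the bound $R_\alpha(P_i\|Q_i)\ge1$, apply Jensen's inequality, or equivalently the probability preservation property of Lemma~\ref{lemma: renyi divergence} with $E=\supp{P_i}$. This gives $1\le R_\alpha(P_i\|Q_i)\,Q_i(\supp{P_i})\le R_\alpha(P_i\|Q_i)$. No step here is a real obstacle; the only care needed is the support bookkeeping and the $\alpha=\infty$ case.
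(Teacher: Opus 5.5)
Your proof is correct. The paper gives no proof of this lemma and simply imports it from \cite{AC:BLLSS15}; your direct factorization of the defining sum over $\prod_i \mathrm{supp}(P_i)$, and of the supremum for $\alpha=\infty$, is the standard argument behind it, and for the final inequality $0\le R_\alpha(P_i\|Q_i)\le \max_j R_\alpha(P_j\|Q_j)$ already suffices, so your detour through $R_\alpha(P_i\|Q_i)\ge 1$ is harmless but unnecessary.
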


\begin{lemma}[{\cite[Lemma 1]{CHES:PopDucGun14}}]\label{lemma: KL bound}
    Let $\adv^{P}, \adv^{Q}$ be algorithms making at most $q$ queries to an oracle sampling from distribution $P$ and $Q$
    respectively, and returning a bit.
    Let $0 \leq D_{KL}(P \| Q) \leq \epsilon$.
    Then, it holds that
    \[\left|\prob{1 \gets \adv^{P}} - \prob{1 \gets \adv^{Q}}\right| \leq \sqrt{{q\epsilon/2}}.\]
\end{lemma}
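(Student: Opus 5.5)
The plan is to reduce the claim to the statistical distance between the joint distributions of the oracle answers, and then bound that distance through the KL divergence using Pinsker's inequality.

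\emph{Coupling.} First I put both experiments on a common footing. The algorithm $\adv$ is fixed, and its internal randomness $\rho$ is independent of the oracle. Let $Y_1,\ldots,Y_q$ be the oracle answers, i.i.d.\ from $P$ in the first experiment and i.i.d.\ from $Q$ in the second. If $\adv$ makes fewer than $q$ queries, I pad with dummy samples that it ignores. The output bit is a deterministic function $g(\rho, Y_1, \ldots, Y_q)$. This holds even though the queries are adaptive: the $i$-th query can depend on $Y_1,\ldots,Y_{i-1}$, but each answer is a fresh independent sample. Consequently the answer sequence is distributed as $P^{\otimes q}$ or $Q^{\otimes q}$, and the output is a stochastic map $f$ applied to it.

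\emph{Data processing and Pinsker.} The distinguishing advantage is at most the statistical distance of the output bits, so
\[
\left|\prob{1 \gets \adv^{P}} - \prob{1 \gets \adv^{Q}}\right| \le \Delta\bigl(f(P^{\otimes q}), f(Q^{\otimes q})\bigr) \le \Delta\bigl(P^{\otimes q}, Q^{\otimes q}\bigr),
\]
using data processing for statistical distance. Pinsker's inequality then gives $\Delta(P^{\otimes q}, Q^{\otimes q}) \le \sqrt{D_{KL}(P^{\otimes q}\|Q^{\otimes q})/2}$, with the KL divergence in nats.

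\emph{Additivity.} The KL divergence is additive over independent coordinates, $D_{KL}(P^{\otimes q}\|Q^{\otimes q}) = q\, D_{KL}(P\|Q) \le q\epsilon$. This is the KL analogue of \cref{lemma: renyi multiplication}: take logarithms and let $\alpha \to 1$, or verify it directly from the chain rule. Substituting into the previous step yields the bound $\sqrt{q\epsilon/2}$.

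\emph{Main obstacle.} The only delicate step is justifying the product structure under adaptivity. I would handle it either by the padding and coupling argument above, or more robustly by applying the KL chain rule to the transcript. At step $i$, conditioned on the history, the next answer is distributed as $P$ versus $Q$, so it contributes at most $\epsilon$ regardless of the query. Summing over the $q$ steps gives total divergence at most $q\epsilon$, and data processing and Pinsker are then applied to the transcript distributions as before.

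If $D_{KL}(P\|Q)=\infty$ the hypothesis excludes this case. Likewise, the support condition is implicit in $\epsilon$ being finite.
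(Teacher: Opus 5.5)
Your proof is correct and is the standard argument behind this lemma, which the paper imports from \cite{CHES:PopDucGun14} without reproving. That argument presamples the $q$ oracle answers, so the experiment is a stochastic map applied to $P^{\otimes q}$ versus $Q^{\otimes q}$; your chain-rule variant over the transcript also covers input-dependent oracles. It then applies data processing and Pinsker's inequality, and uses additivity $D_{KL}(P^{\otimes q}\|Q^{\otimes q}) = q\,D_{KL}(P\|Q)$.

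One minor point: justify additivity by the direct computation rather than by letting $\alpha \to 1$ in \cref{lemma: renyi multiplication}. That lemma is stated only for $\alpha > 1$, and on a countably infinite alphabet one can have $D_{KL}(P\|Q) \le \epsilon$ while $R_\alpha(P\|Q) = \infty$ for every $\alpha > 1$.
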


\subsection{Signatures}\label{subsec: Signatures}

The standard definitions of digital signatures are recalled in \ifthenelse{\boolean{shortver}}{the full version of this work \cite[A.2]{cryptoeprint:2026/293}}{\cref{subsec: additional prelims Signatures}}.

A \Acf{RSS} is a quadruple of \ac{ppt} algorithms $(\setup, \kgen, \sign, \verify)$.
The key generation is performed by each participant individually.
A ring $\rho$ consists of the individual public keys of at most $\kappa$ participants.
For this ring, each member of $\rho$ can generate a signature $\sigma$ on behalf of the entire ring using their own secret key.
A signature can be verified using only the public information $\rho$.
The formal syntax and security definitions can be found in \ifthenelse{\boolean{shortver}}{the full version of this work \cite[A.3]{cryptoeprint:2026/293}}{Appendix~\ref{subsec: additional prelims Ring Signatures}}.
\subsection{Sigma Protocols}\label{subsec: Sigma Protocol}

A $\mathsf{\Sigma}$-protocol is a $3$-round public-coin interactive proof $(\prover = (\prover_1, \prover_2), \verifier = (\verifier_1, \verifier_2))$ for a relation $R \subseteq \mathcal{I} \times \mathcal{W}$:
First, the prover $\prover_1$ sends a \emph{commitment} $\com$; then the verifier $\verifier_1$ responds with a random \emph{challenge} $\ch \in \mathcal{C}$; and finally, $\prover_2$ sends a \emph{response} $\rsp$.
This final response is evaluated using $\verifier_2$.
The formal definition and well-known properties can be found in \ifthenelse{\boolean{shortver}}{the full version of this work \cite[A.4]{cryptoeprint:2026/293}}{Appendix~\ref{subsec: additional prelims Sigma Protocols}}.
\subsection{QROM}\label{subsec: QROM}

Our main technical proofs rely on reprogramming techniques~\cite{C:DonFehMaj20,AC:GHHM21} and on a framework~\cite{EC:CFHL21} for proving query complexity bounds in the \ac{QROM}.
The latter is a framework exploiting Zhandry's compressed-oracle technique.
We use a slightly adjusted version of the framework by~\cite{C:DFMS22}.
We also use a proof technique for digital signatures from~\cite{AC:BDFLSZ11} based on history-free reductions, which we generalize to the ring signature setting.
We model quantum access to a random oracle $\oracle: \mathcal{X} \to \mathcal{Y}$ via oracle access to a unitary $\bm{U}_{\oracle}$ defined by  $|x\rangle_{\mathcal{X}}|y\rangle_{\mathcal{Y}} \mapsto |x\rangle_{\mathcal{X}}|y \oplus \oracle(x)\rangle_{\mathcal{Y}}$,
and adversaries with quantum access to $\oracle$ act as a sequence of unitaries, interleaved with applications of $\bm{U}_{\oracle}$.
We utilize two reprogramming techniques.

\noindent\textbf{Measure-and-reprogram.} The first, \emph{measure-and-reprogram} \cite{C:DFMS19,C:DonFehMaj20},  allows measuring randomly selected query inputs and reprogramming the random oracle at those inputs to fresh random outputs.
If an algorithm outputs a tuple of query inputs such that the resulting input-output pairs together with an additional output fulfill a predicate, then the same will hold when the measurements and reprogrammings are applied, with respect to the reprogrammed values.
This also yields the order in which the relevant queries were made.

\begin{theorem}[{\cite[Theorem 6]{C:DonFehMaj20}}]\label{thm: measure and reprogram 2.0}
    Let $n$ be a positive integer, and let $\mathcal{X}$ and $\mathcal{Y}$ be finite non-empty sets.
    Let $\adv$ be an arbitrary oracle quantum algorithm that makes $q$ queries to a uniformly random $\hash: \mathcal{X} \to \mathcal{Y}$ and that outputs a tuple $\bm{x} = (x_{1}, \ldots, x_{n}) \in \mathcal{X}^{n}$ and a (possibly quantum) output $z$.
    There exists a black-box polynomial-time $(n+1)$-stage quantum algorithm $\mathcal{S}$, satisfying the following properties.
    $\cal S$ has the following syntactic behavior: in the first stage it outputs a permutation $\pi$ together with $x_{\pi(1)}$ and takes as input $\Theta_{\pi(1)}$, and then for every subsequent stage $1 < i \leq n$ it outputs $x_{\pi(i)}$ and takes as input $\Theta_{\pi(i)}$; eventually, in the final stage (labeled by $n+1$) it outputs $z$. We denote such an execution of $\cal S$ as $(\pi,\pi({\mathbf x}),z) \leftarrow \langle{\cal S}^{\cal A}, \pi({\mathbf \Theta})\rangle$.
    For any $\bm{x}' \in \mathcal{X}^{n}$ without duplicate entries, any predicate $V$ and uniformly random $\Theta\in\mathcal Y^n:$
    \begin{align*}
        \prob{\bm{x} = \bm{x}' \land V(\bm{x}, \Theta, z): (\pi, \pi(\bm{x}), z) \gets \langle \mathcal{S}^{\adv}, \pi(\Theta) \rangle} \\
        \geq \frac{1}{(2q + 1)^{2n}}\prob{\bm{x} = \bm{x}' \land V(\bm{x}, \hash(\bm{x}), z): (\bm{x}, z) \gets \adv^{\hash}}.
    \end{align*}
\end{theorem}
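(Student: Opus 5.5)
We prove the bound by induction on $n$, building $\mathcal{S}$ as an $n$-fold iteration of the single-input measure-and-reprogram construction of \cite{C:DFMS19}. The case $n=1$ is the single-input result.

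We first recall that construction, which we take as the base case. Given $\adv$ making $q$ queries, $\mathcal{S}$ does the following:
\begin{enumerate}
    \item It picks $(i,b)$ uniformly from $\big(\{0,\ldots,q-1\}\times\{0,1\}\big)\cup\{(q,0)\}$, a set of $2q+1$ choices.
    \item It runs $\adv$ with a simulated random oracle (for instance a $2q$-wise independent function) until query $i$.
    \item It measures the input register of query $i$ to obtain $x$, outputs $x$, and receives $\Theta$.
    \item From then on it answers with $\hash_{x\mapsto\Theta}$. If $b=0$ the reprogramming takes effect before query $i$ is answered; if $b=1$ it takes effect after.
    \item It outputs whatever $\adv$ outputs.
\end{enumerate}
To get the loss $1/(2q+1)^2$, write the final state of $\adv$ as a telescoping sum over the position at which the oracle switches from $\hash$ to $\hash_{x\mapsto\Theta}$. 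Each summand is a projected run in which query $i$ is measured to equal $x$. The target event then has amplitude equal to a sum of $2q+1$ terms, and each term's squared norm is exactly what $\mathcal{S}$ produces on choice $(i,b)$. Cauchy--Schwarz, $\|\sum_{j=1}^{2q+1}v_j\|^2\le(2q+1)\sum_{j}\|v_j\|^2$, together with averaging over the uniform $(i,b)$, gives the factor $(2q+1)^{-2}$. Here we use that the projector onto $V(x,\Theta,z)\wedge x=x'$ commutes appropriately, and that $\Theta$ uniform makes the reprogrammed oracle distributed like $\hash$. Since $x'$ is fixed, we can condition on it, so the bound holds pointwise in $x'$.

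For general $n$, we apply this argument $n$ times sequentially, in the order in which the queries to the distinct points $x_1',\ldots,x_n'$ are first relevant. Concretely, $\mathcal{S}$ samples $n$ pairs $(i_1,b_1)<\cdots<(i_n,b_n)$, which need not be sorted a priori. At the $k$-th selected query it measures the input, outputs it, and reprograms that point to the next received $\Theta$ value. The permutation $\pi$ records which coordinate of $\bm{x}$ each measured input matches; because $\bm{x}'$ has no duplicates, $\pi$ is well defined. The hybrid expansion now telescopes over $n$ switching positions, one for each reprogrammed point. Since the reprogrammings at distinct points commute as operators on the oracle, the final amplitude for the event is a sum over at most $(2q+1)^n$ tuples. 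Cauchy--Schwarz then costs a factor $(2q+1)^n$, and averaging over the uniformly chosen tuple costs a further $(2q+1)^n$, for a total loss of $(2q+1)^{2n}$. Sorting the tuple by query position yields the sequential stage structure and the permutation $\pi$.

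The main obstacle is the multi-point telescoping identity: showing that $\adv^{\hash}$ followed by the projection on $\bm{x}=\bm{x}'$ equals a sum of $(2q+1)^n$ terms. Each term must be exactly a run of $\mathcal{S}$ with the corresponding choice, and the expansion must remain valid when several reprogrammed points interact within the same query. I would handle this by induction on $n$. Treat the last-reprogrammed point via the $n=1$ identity, applied to the algorithm that already includes the first $n-1$ measure-and-reprogram steps. Then absorb that algorithm's query count, which is still $q$ for the original oracle. This recovers the factor $(2q+1)^2$ per point.
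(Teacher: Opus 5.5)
The paper gives no proof of this statement; it imports it from \cite{C:DonFehMaj20}. Your plan is essentially the argument used there, and it does give the factor $(2q+1)^{-2n}$. That plan has three parts: expand the run against the reprogrammed oracle into $2q+1$ signed terms indexed by $(i,b)$; apply Cauchy--Schwarz and average over the uniform choice; then repeat this once per distinct point, with the other points' switches already inserted.

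One step has to be corrected: the permutation $\pi$. You define it as recording which coordinate each measured input matches. But $\mathcal{S}$ must output $\pi$ together with $x_{\pi(1)}$ in its first stage, before $\mathcal{A}$ has produced $\bm{x}$. Moreover, $\mathcal{S}$ is fixed before $\bm{x}'$ is quantified, so it cannot match against $\bm{x}'$ either. Use the version your last sentence hints at: $\mathcal{S}$ samples a labeled tuple $\big((i_j,b_j)\big)_{j\in[n]}$ uniformly among the $(2q+1)^n$ possibilities, assigns the input measured at position $(i_j,b_j)$ to coordinate $j$, and lets $\pi$ be the sorting order of the tuple, with ties broken by $j$.

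Tuples in which two coordinates share a query index $i<q$ give vanishing terms. Projectors onto distinct points of the same input register are orthogonal and commute with the oracle call, so such tuples only enter through the normalization.

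Finally, state the expansion and the Cauchy--Schwarz step for every fixed $H$ and $\Theta$. This is what lets you apply the single-point step to an oracle already reprogrammed at other points while keeping the query count at $q$. The final average over uniform $H,\Theta$ then uses the distinctness of $\bm{x}'$ to get $(H_{\bm{x}'\mapsto\Theta},\Theta)\sim(H,H(\bm{x}'))$.
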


\noindent\textbf{Resampling lemma.}
The resampling lemma~\cite{AC:GHHM21} says that if an input with sufficiently high min-entropy is chosen and the corresponding output is reprogrammed to a fresh, uniformly random value (it is \emph{resampled}), a polynomial-query distinguisher can detect the reprogramming with a small probability only.
For a precise statement, let  $\REPRO_0$ and $\REPRO_1$ refer to two games, where after an initial learning phase with $q$ queries, in $\REPRO_1$ the random oracle is reprogrammed as described, while it is left unchanged in $\REPRO_0$.

\begin{lemma}[{\cite[Proposition 2]{AC:GHHM21}}]\label{thm: adaptive reprogramming}
    Let $\mathcal{X}_1, \mathcal{X}_2, \mathcal{X}'$ and $\mathcal{Y}$ be finite sets, and let $p$ be a distribution on $\mathcal{X}_1 \times \mathcal{X}'$.
    Let $\distinguisher$ be any distinguisher, issuing $q$ (quantum) queries to $\Oracle{}$ and $R$ reprogramming instructions such that each instruction consists of a value $x_2$, together with the fixed distribution $p$.
    Then
    \[\left|\prob{1 \gets \REPRO_1^\distinguisher} - \prob{1 \gets \REPRO_0^\distinguisher}\right| \leq \frac{3R}{2}\sqrt{q\cdot p_{\max}}\]
    where $p_{\max} := \max_{x_1}p(x_1)$.
\end{lemma}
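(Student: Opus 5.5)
The plan follows the structure of the original argument in \cite{AC:GHHM21}: a hybrid over the $R$ reprogramming instructions, plus a single-reprogramming bound proved in Zhandry's purified/compressed oracle picture~\cite{C:Zhandry19}. For $0 \le i \le R$, define the hybrid $G_i$ in which the first $i$ reprogramming instructions of $\distinguisher$ are executed as in $\REPRO_1$ and the remaining ones as in $\REPRO_0$. Then $G_0 = \REPRO_0$ and $G_R = \REPRO_1$. By the triangle inequality it suffices to show
\[
\left|\prob{1\gets G_i^{\distinguisher}} - \prob{1\gets G_{i-1}^{\distinguisher}}\right| \le \tfrac{3}{2}\sqrt{q\, p_{\max}}
\]
for every $i$. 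The two hybrids differ only in whether the $i$-th point $x=(x_1,x_2)$, with $(x_1,x')\gets p$ and $x_2$ chosen by $\distinguisher$, has its oracle value replaced by a fresh uniform $y\in\mathcal{Y}$. All later behaviour can be absorbed into a post-processing of the joint state, which cannot increase trace distance.

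For the single step, I would purify the random oracle. Each input $\tilde x$ carries an oracle register initialised to $\sum_y|y\rangle/\sqrt{|\mathcal{Y}|}$. Resampling at $x$ is then exactly a swap of the register at $x$ with a fresh register in the same uniform superposition. Let $\ket{\psi}$ be the joint state of $\distinguisher$ and the oracle just before the $i$-th instruction; it depends on at most $q$ queries and on $x_2$, but not on $x_1$, which is sampled freshly. In the compressed (Fourier) basis, write $\Pi_{x}$ for the projector onto databases that contain $x$. On the image of $1-\Pi_x$ the register at $x$ is in the uniform state, so the swap acts as the identity there. Hence the two post-reprogramming states differ by at most a constant times $\|\Pi_x\ket{\psi}\|$. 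Converting the Euclidean distance of pure states into trace distance, and accounting for the side information $x'$ (which is classical, so one can condition on it), should give a bound of the form $c\cdot\mathbb{E}_{x_1}\|\Pi_{(x_1,x_2)}\ket{\psi}\|$ with $c=3/2$.

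The counting step is clean. After $q$ queries the compressed database has at most $q$ entries, so $\sum_{x_1}\|\Pi_{(x_1,x_2)}\ket{\psi}\|^2\le q$. By Cauchy--Schwarz,
\[
\sum_{x_1} p(x_1)\,\|\Pi_{(x_1,x_2)}\ket{\psi}\| \le \Big(\sum_{x_1}p(x_1)^2\Big)^{1/2}\sqrt{q} \le \sqrt{p_{\max}\, q},
\]
using $\sum_{x_1}p(x_1)^2\le p_{\max}$. Here $p$ denotes the marginal on $x_1$. Summing over the $R$ hybrids then yields the claimed $\frac{3R}{2}\sqrt{q\cdot p_{\max}}$.

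The main obstacle is the single-reprogramming estimate with the right constant. Two points need care. First, one must argue precisely that the swap with a fresh uniform register is the identity on the complement of $\Pi_x$ in the compressed picture, since decompression interacts with the register at $x$. Second, the pure-state Euclidean bound $\|(\mathrm{SWAP}-1)\Pi_x\ket\psi\|\le 2\|\Pi_x\ket\psi\|$ must be turned into a trace-distance bound that produces $\frac{3}{2}$ rather than $2$. For the latter I would first try the bound $\tracenorm{\ketbra{a}{a}-\ketbra{b}{b}}\le 2\|\ket{a}-\ket{b}\|$ combined with the fact that the fresh register is averaged (traced out), which should shave the constant. Failing that, the statement with constant $2$ would still follow from the same argument.
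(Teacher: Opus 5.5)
Your plan is the right one. The paper gives no proof of this lemma; it imports it from \cite{AC:GHHM21}, whose argument follows the superposition-oracle route you describe:
a hybrid over the $R$ instructions, reprogramming modelled as a swap of the oracle register at $x$ with a fresh register in the uniform superposition $\ket{\phi_0}$, at most $q$ registers outside $\ket{\phi_0}$, and Cauchy--Schwarz against $p$.

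The one step you leave open, the constant, does not close as written. The estimate $\|(\mathrm{SWAP}-1)\Pi_x\ket{\psi}\|\le 2\|\Pi_x\ket{\psi}\|$ only gives $2R\sqrt{q\,p_{\max}}$. That is weaker than the claimed bound, so your fallback ``with constant $2$'' does not prove the lemma.

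\textbf{The missing step is an orthogonality observation.} Let $\ket{a}:=\ket{\psi}\ket{\phi_0}_E=\ket{a_0}+\ket{a_1}$ with $\ket{a_1}:=(\Pi_x\ket{\psi})\ket{\phi_0}_E$. Let $\ket{b}:=\mathrm{SWAP}_{F_xE}\ket{a}=\ket{a_0}+\mathrm{SWAP}_{F_xE}\ket{a_1}$.
\begin{itemize}
\item In $\ket{a_0}$ and $\ket{a_1}$ the register $E$ is in $\ket{\phi_0}$, whereas in $\mathrm{SWAP}_{F_xE}\ket{a_1}$ it lies in $\ket{\phi_0}^{\perp}$.
\item Together with $\langle a_0|a_1\rangle=0$, this gives $\langle a|b\rangle=\|\ket{a_0}\|^2=1-\epsilon_x$, where $\epsilon_x:=\|\Pi_x\ket{\psi}\|^2$.
\item Hence $\frac12\tracenorm{\ketbra{a}{a}-\ketbra{b}{b}}=\sqrt{1-(1-\epsilon_x)^2}\le\sqrt{2\epsilon_x}$. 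Likewise $\|\ket{a}-\ket{b}\|=\sqrt{2\epsilon_x}$, not $2\sqrt{\epsilon_x}$.
\end{itemize}
Everything after the instruction acts only on the distinguisher and on $F$, never on $E$. So this bounds the hybrid gap for fixed $(x_1,x_2)$. Your Cauchy--Schwarz step then yields $\sqrt{2q\,p_{\max}}\le\frac32\sqrt{q\,p_{\max}}$ per hybrid, which proves the lemma, in fact with $\sqrt2$ in place of $\frac32$. Tracing out $E$ can only decrease the distance, so your suggestion is compatible with this. It is the orthogonality, though, that removes the extra factor.

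\textbf{Three smaller points.}
\begin{itemize}
\item Work in the uncompressed superposition oracle. There $1-\Pi_x=\ketbra{\phi_0}{\phi_0}_{F_x}$ holds exactly, so the step ``the swap acts as the identity off $\Pi_x$'' needs no decompression argument. The count $\sum_{\tilde x}\|\Pi_{\tilde x}\ket{\psi}\|^2\le q$ is then just the fact that each query changes a single Fourier register.
\item In hybrid $G_i$, the state before instruction $i$ already contains $i-1$ swaps. These only insert fresh $\ket{\phi_0}$ registers into $F$, so the count bound still holds.
\item Since $x_2$ (and anything else the distinguisher has measured) is classical, condition on it and purify the conditional state. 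The count bound survives because those measurements act only on the distinguisher's registers and commute with $\sum_{\tilde x}\Pi_{\tilde x}$.
\end{itemize}
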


\subsubsection{Query Complexity bounds in the QROM.}\label{subsubsec: query complexity bound}
For query bounds in the \ac{QROM} we rely on
the technique from  \cite{EC:CFHL21} which builds on Zhandry's compressed oracle framework~\cite{C:Zhandry19}.
The compressed oracle is a simulation of the quantum oracle of a random function $H:\mathcal X\to \mathcal Y$. Its internal state of the oracle can be thought of as a (superposition of) lazy-sampling-style databases of input-output pairs.
The probability that $\adv$ succeeds in a search task can be related to the probability of the database $D$, obtained by measuring the internal state of the compressed oracle after the interaction with $\adv$, satisfying a certain property related to the search task (see \cref{lemma: QROM consistency compressed oracle} below). We can think of the database $D$ as a partial function from $\mathcal X$ to $\mathcal Y$. We write $D(x)=\bot$ if $D$ is undefined on $x$. We denote the set of all possible such databases by $\mathcal D$. For $D\in\mathcal D$, $x\in\mathcal X$ and $y\in\mathcal Y\cup\{\bot\}$, we define $D[x\mapsto y]$ by
$D[x\mapsto y](x)=	y$
and $D[x\mapsto y](x')=D(x')$ for $x'\neq x$.
It will suffice for us to use the  \emph{(quantum) transition capacity} formalism from \cite{EC:CFHL21}, so we will not introduce the compressed oracle in detail.

A subset $P \subseteq \mathfrak{D}$ is called a \emph{database property}, we say that $D \in \mathfrak{D}$ \emph{satisfies} $ P$.
The complement is denoted by $\neg P = \mathfrak{D}\setminus P$.
For a database property $P$, a database $D$ and an input $x\in \mathcal X$, we define the corresponding local property as
\[
    P|_{D|^x}=\{y\in\mathcal Y|D[x\mapsto y]\in P\}.
\]
The maximal probability of a database $D$ satisfying P, when $D$ is obtained by measuring the internal state of the compressed oracle after interaction with $\adv$, maximized over all $q$-query algorithms $\adv$, is defined as
\[\llbracket \bot \Rightarrow^q P \rrbracket := \max_{\adv}\sqrt{\prob{D \in P}}.\]
Its square is an upper bound on the probability of $\adv$ producing such a database.

Lemma 5.6 in~\cite{EC:CFHL21} shows that the probability of fulfilling a database property after $q$ queries can be bounded using the quantum transition capacity.
More precisely, a sequence $P_0, P_1, \ldots, P_q$ with $\neg P_0 = \{\bot\}$ and $P_q = P$ yields
\[\llbracket \bot \Rightarrow^q P\rrbracket \leq \sum_{s=0}^{q-1} \llbracket \neg P_s \to P_{s+1} \rrbracket\]
where each term is the quantum transition capacity between the databases. We do not define the transition capacity as we will only use it as a formal tool. The transition capacity can be bounded using the following.
\begin{theorem}[{\cite[Theorem 2.4]{C:DFMS22}}]\label{thm: quantum transition capacity}
    Let $P$ and $P'$ be database properties with trivial intersection, i.e.\, $P \cap P' = \emptyset$, and for every $D \in \mathcal{D}$ and $x \in \mathcal{X}$ let
    \[L^{x,D} := \begin{cases}
            P|_{D|^{x}}  & \text{if } \bot \in P'|_{D|^x} \\
            P'|_{D|^{x}} & \text{if } \bot \in P|_{D|^x}
        \end{cases}\]
    with $L^{x,D}$ being either of the two if $\bot\notin P|_{D|^x} \cup P'|_{D|^x} $.
    Then
    \begin{align}\label{eq:transcap}
        \llbracket P \rightarrow P' \rrbracket \leq \max_{x,D}\sqrt{10\prob{U \in L^{x,D}}},
    \end{align}
    where $U$ is uniform over $\mathcal{Y}$, and the maximization can be restricted to $D \in \mathcal{D}$ and $x \in \mathcal{X}$ for which both $P|_{D|^{x}}$ and $P'|_{D|^{x}}$ are non-empty.
\end{theorem}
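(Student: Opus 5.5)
The statement is \cite[Theorem 2.4]{C:DFMS22}, a mild refinement of \cite[Theorem 5.17]{EC:CFHL21}. I would prove it inside the compressed-oracle formalism of \cite{C:Zhandry19,EC:CFHL21} in three steps: reduce the transition capacity to a local quantity, then bound one local matrix. Writing $M=|\mathcal{Y}|$, the target is $\llbracket P\to P'\rrbracket\le \max_{x,D}\sqrt{10\,|L^{x,D}|/M}$, which is exactly \eqref{eq:transcap}.

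\textbf{Step 1 (localisation).} A compressed-oracle query on input $x$ acts only on the register holding $D(x)\in\mathcal{Y}\cup\{\bot\}$. It does so through a fixed unitary $F$, conjugated by the Fourier transform on the query's output register. Fix the query input $x$ and the values of $D$ outside $x$. The projectors onto $P$ and $P'$ then restrict to projectors $\Pi_{P|_{D|^x}}$ and $\Pi_{P'|_{D|^x}}$ on that single register. I would use the lemma of \cite{EC:CFHL21} that the capacity is a maximum of local operator norms:
\[\llbracket P\to P'\rrbracket \le \max_{x,D,\eta}\big\| \Pi_{P'|_{D|^x}}\, F_\eta\, \Pi_{P|_{D|^x}}\big\|_{\infty}.\]
Here $F_\eta$ is $F$ restricted to Fourier mode $\eta$. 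The mode $\eta=0$ acts trivially, and since $P\cap P'=\emptyset$ its term vanishes, so only $\eta\neq 0$ matters. The same disjointness lets me discard pairs $(x,D)$ for which either local set is empty, which gives the stated restriction of the maximisation.

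\textbf{Step 2 (matrix entries).} Write $A=P|_{D|^x}$ and $B=P'|_{D|^x}$. From Zhandry's formula for $F_\eta$ on $\mathrm{span}\{\ket{\bot},\ket{y}\}$, the entries $\bra{y'}F_\eta\ket{y}$, $\bra{\bot}F_\eta\ket{y}$ and $\bra{y}F_\eta\ket{\bot}$ all have magnitude $O(1/\sqrt{M})$ or $O(1/M)$, apart from the diagonal identity part. I would split $\Pi_B F_\eta\Pi_A$ into three pieces: the $\bot\to\mathcal{Y}$ block, the $\mathcal{Y}\to\bot$ block, and the $\mathcal{Y}\to\mathcal{Y}$ block. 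The identity contributes nothing because $A\cap B=\emptyset$.

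\textbf{Step 3 (case analysis), the main obstacle.} This is where the case distinction in $L^{x,D}$ enters.
\begin{itemize}
\item If $\bot\in B$, then $\bot\notin A$, so only $\mathcal{Y}\to\bot$ and $\mathcal{Y}\to\mathcal{Y}$ entries survive. The $\mathcal{Y}\to\bot$ part is a row vector supported on $A$ with entries of size $1/\sqrt{M}$, so its norm is $\sqrt{|A|/M}$. The $\mathcal{Y}\to\mathcal{Y}$ part is a rank-one-plus-rank-one matrix with entries of size $1/M$, so its norm is at most $O(\sqrt{|A|\,|B|}/M)\le O(\sqrt{|A|/M})$. Summing gives roughly $\sqrt{|A|/M}$ with a constant, which is the bound with $L^{x,D}=A$.
\item If $\bot\in A$, the situation is symmetric and yields $\sqrt{|B|/M}$ with $L^{x,D}=B$.
\item If $\bot$ lies in neither set, only the $\mathcal{Y}\to\mathcal{Y}$ block remains, with norm $O(\sqrt{|A||B|}/M)\le O(\sqrt{\min(|A|,|B|)/M})$. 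So either choice of $L^{x,D}$ works.
\end{itemize}
The delicate point is tracking the explicit constants so that the squares of the block norms sum to at most $10|L^{x,D}|/M$. I would do this with the triangle inequality followed by $(a+b+c)^2\le 3(a^2+b^2+c^2)$, and accept slack. Doing this carefully for the exact entries of $F_\eta$, including the $(1-2/M)$-type corrections, is the main technical work I expect.
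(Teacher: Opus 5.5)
Your outline is the standard argument behind this result; the paper imports it from \cite{C:DFMS22}, a restatement of \cite{EC:CFHL21}, without proof. Most of it is sound. The capacity is bounded by a maximum over blocks indexed by $x$, the Fourier mode $\eta\neq 0$, and the values of $D$ off $x$. Disjointness of $A=P|_{D|^x}$ and $B=P'|_{D|^x}$ removes $\eta=0$ and the diagonal of your $F_\eta$. The case split on where $\bot$ lies produces the three cases of $L^{x,D}$.

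The step that fails is the last one, where the constant is fixed. Take the case $\bot\in B$. The two surviving blocks are the row into $\ket{\bot}$, with norm $\sqrt{|A|/M}$, and the $\mathcal{Y}\to\mathcal{Y}$ block. The latter has entries $(1-\omega^{\eta\cdot r}-\omega^{\eta\cdot y})/M$, which split into three rank-one pieces of norm $\sqrt{|A|\,|B|}/M$ each, so its norm is at most $3\sqrt{|A|/M}$. Combining these by the triangle inequality gives $16|A|/M$. Using $(a+b)^2\le 2(a^2+b^2)$ gives $20|A|/M$, and your three-term inequality gives $30|A|/M$. None of these is $10|A|/M$. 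The theorem asserts the explicit constant $10$, and it propagates into \cref{lemma: transition collision bound}, \cref{lemma: probability hash collision qrom} and the bounds built on them. So accepting slack proves a strictly weaker statement.

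The missing observation is that the two blocks are orthogonal, so their squared norms add rather than their norms.
\begin{itemize}
\item If $\bot\in B$, write $X=\Pi_B F_\eta \Pi_A$ as $X_1+X_2$ with $X_1=\ket{\bot}\bra{\bot}X$ and $X_2=\Pi_{B\setminus\{\bot\}}X$. Their ranges are orthogonal, so $\|X\phi\|^2=\|X_1\phi\|^2+\|X_2\phi\|^2$. Hence $\|X\|^2\le |A|/M+9|A|\,|B\setminus\{\bot\}|/M^2\le 10|A|/M$.
\item If $\bot\in A$, split on the input side instead: $X\ket{\bot}\bra{\bot}$ and $X\Pi_{A\setminus\{\bot\}}$ have orthogonal supports. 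Cauchy--Schwarz then gives $\|X\|^2\le |B|/M+9|A\setminus\{\bot\}|\,|B|/M^2\le 10|B|/M$.
\item If $\bot\notin A\cup B$, only the $\mathcal{Y}\to\mathcal{Y}$ block remains, and $9|A|\,|B|/M^2\le 9\min\{|A|,|B|\}/M$.
\end{itemize}
This is exactly where $10=1+3^2$ comes from.

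There are also no $(1-2/M)$-type corrections to track. The off-diagonal entries of the local unitary are exactly $\omega^{\eta\cdot y}/\sqrt{M}$, $\omega^{\eta\cdot r}/\sqrt{M}$ and $(1-\omega^{\eta\cdot r}-\omega^{\eta\cdot y})/M$. The only $1/M$ correction sits on the diagonal, which disjointness annihilates.
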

The  \emph{fundamental lemma} of the compressed oracle relates the database to the knowledge of a $q$-query adversary, which yields query bounds.
\begin{lemma}[{\cite[Lemma 2.6]{C:DFMS22}}]\label{lemma: QROM consistency compressed oracle}
    Let $\adv$ be an oracle quantum algorithm that outputs $\bm{x} = (x_1, \ldots, x_\ell) \in \mathcal{X}^\ell$ and $z \in \mathcal{Z}$.
    Let $\tilde{\adv}$ be the oracle quantum algorithm that runs $\adv$, makes $\ell$ classical queries on the outputs $x_i$ to obtain $\bm{y} = \hash(\bm{x})$, and then outputs $(\bm{x}, \bm{y}, z)$.
    When $\tilde{\adv}$ interacts with the compressed oracle instead, and at the end $D$ is obtained by measuring the internal state of the compressed oracle, then, conditioned on $\tilde{\adv}$'s output $(\bm{x}, \bm{y}, z)$,
    \[\prob{\bm{y} = D(\bm{x}) \mid (\bm{x}, \bm{y}, z)} \geq 1 - 2\ell|\mathcal{Y}|^{-1}.\]
\end{lemma}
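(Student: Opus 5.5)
The plan is to argue entirely inside Zhandry's compressed-oracle picture and to reduce the statement to a single-register calculation. First we fix notation. The compressed oracle's state is a superposition over databases $D$, which we view as a tensor product of registers $D_x$, one for each $x\in\mathcal X$. Each register holds a value in $\mathcal Y\cup\{\bot\}$. A query at $x$ is implemented as
\[
\mathsf{Dec}_x \circ \bm U^{\mathrm{std}}_x \circ \mathsf{Dec}_x,
\]
where the standard-oracle query acts in the computational basis. The local (involutive) decompression unitary $\mathsf{Dec}_x$ on $D_x$ swaps $\ket{\bot}$ with the uniform superposition
\[
\ket{\phi_0}=|\mathcal Y|^{-1/2}\sum_{y\in\mathcal Y}\ket{y},
\]
and acts as the identity on the orthogonal complement of these two vectors.

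Next we use the structure of $\tilde{\adv}$. After $\adv$ halts, its output $\bm x$ is measured, since it is classical. The $\ell$ final queries are then classical queries: for each $i$, decompress $D_{x_i}$, CNOT its value into a fresh output register, measure that register to obtain $y_i$, and recompress. Because these operations act on distinct registers $D_{x_i}$ when the $x_i$ are distinct, they commute with one another. Repeated $x_i$ can simply be collapsed, which only helps the bound. We may therefore analyse each $i$ separately, conditioned on all measurement outcomes $(\bm x,\bm y,z)$, and finish with a union bound over the $\ell$ positions. This union bound is the source of the factor $\ell$.

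The core computation is the following. Right after the classical query at $x_i$ returns $y_i$, the register $D_{x_i}$ in its decompressed form is exactly $\ket{y_i}$, since measuring the copied value collapses it, whatever the entanglement with the rest of the system. Applying $\mathsf{Dec}_{x_i}$ and writing $c=\langle\phi_0|y_i\rangle=|\mathcal Y|^{-1/2}$, we get
\[
\mathsf{Dec}\ket{y_i} = \bigl(\ket{y_i}-c\ket{\phi_0}\bigr) + c\ket{\bot}.
\]
The amplitude of $\ket{y_i}$ in this state is $1-c^2 = 1-|\mathcal Y|^{-1}$. Hence measuring $D_{x_i}$ yields $y_i$ with probability at least $(1-|\mathcal Y|^{-1})^2 \ge 1-2|\mathcal Y|^{-1}$.

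The step I expect to need the most care is justifying that this per-register bound survives the later operations and the final measurement of the whole database. Two things must be checked. First, the later classical queries act on other registers $D_{x_j}$, so they commute with, and do not disturb, the reduced state on $D_{x_i}$. Second, the final computational-basis measurement of $D$ factorises over registers, so the marginal distribution of $D(x_i)$ is exactly the one computed above. Given these two facts, a union bound over $i\in[\ell]$ yields
\[
\prob{\bm y\neq D(\bm x)\mid(\bm x,\bm y,z)}\le 2\ell|\mathcal Y|^{-1},
\]
which proves the claim.
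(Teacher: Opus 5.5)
Your proof is correct and follows the standard argument for this fact; the paper itself gives no proof and imports the lemma from \cite{C:DFMS22}. The core steps are that the classical query collapses the decompressed register $D_{x_i}$ to $\ket{y_i}$, recompression leaves it in the product state $\mathsf{Dec}\ket{y_i}$ that no later operation touches, and $|\bra{y_i}\mathsf{Dec}\ket{y_i}|^2=(1-|\mathcal Y|^{-1})^2\ge 1-2|\mathcal Y|^{-1}$ combined with a union bound over $i\in[\ell]$ gives the claim. The one step you should state explicitly is why the decompressed register has no $\ket{\bot}$ component: the compressed-oracle state is always $\bigotimes_x F_{D_x}$ applied to a standard-oracle state. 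That fact is what makes the CNOT a genuine copy and the collapse to $\ket{y_i}$ exact.
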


\subsection{Deutsch-Jozsa Algorithm}\label{subsec: Deutsch-Jozsa Algorithm}

Let $f: \bin^n \to \bin$ be an $n$-bit function.
As a quantum circuit, the Deutsch-Jozsa algorithm~\cite{deutsch1992rapid} can be seen as a measurement in the $+/-$ basis on the first $n$ qubits of the state
$U_f \ket{+}^{\otimes n}\ket{-}$, where $U_f\ket{x}\ket{y} := \ket{x}\ket{y\oplus f(x)}$, $x \in \bin^n$ and $y \in \bin$.
The amplitude of the state $\ket{+}^{\otimes n}$ is $Z_f/2^n$ where
\begin{equation}\label{eq: Deutsch-Jozsa amplitude}
    Z_f := \sum_{x \in \bin^n}(-1)^{f(x)}.
\end{equation}
\section{Quantum Oracle Distribution Switching}\label{sec: Quantum Oracle Distribution Switching}

We define an oracle $\Oraclef{P}$ for the i.i.d function $\oraclef{P}: \mathcal{X} \to \mathcal{Y}$ where $\oraclef{P}(x) \sim P$ for all $x \in \mathcal{X}$ independently and $P$ is a probability distribution on $\mathcal{Y}$.

Consider an algorithm $\adv$ interacting with an oracle $\Oraclef{P}$ or an oracle $\Oraclef{Q}$.
We characterize $\adv$'s output behavior based on the number of quantum queries $q$, as well as the statistical distance and Rényi divergence of $P$ and $Q$.

The distinguishing advantage of an algorithm $\adv$ that can make at most $q$ classical oracle queries is bounded as
\begin{equation}
    \label{eq: classical statistical distance bound}
    \left|\prob{1 \gets \adv^{\oraclef{P}}} - \prob{1 \gets \adv^{\oraclef{Q}}}\right| \leq q\Delta(P,Q)
\end{equation}
and for the Rényi divergence assuming $\supp{P} \subseteq \supp{Q}$
\begin{equation}
    \label{eq: classical RD bound}
    \quad \prob{1 \gets \adv^{\oraclef{P}}} \leq (R_{\alpha}(P\|Q)^q\prob{1 \gets \adv^{\oraclef{Q}}})^{\frac{\alpha-1}{\alpha}}.
\end{equation}

\subsection{Oracle Switching using Statistical Distance}

In this section, let $\epsilon = \Delta(P,Q)$.
Boneh et al.~\cite[Lemma 3]{AC:BDFLSZ11} show that the output distributions of $\adv$ have statistical distance $\bigO{q^2\sqrt{\epsilon}}$ when $P$ or $Q$ is the uniform distribution.
This was improved to $\bigO{q^{1.5}\sqrt{\epsilon}}$ in \cite[Section 7.2]{FOCS:Zhandry12}.
Using the compressed oracle technique, we show that the density matrices of $\adv$'s outputs have trace distance at most $8q\hellingerdistance(P,Q)\leq 8q\sqrt{\epsilon}$.

\begin{restatable}{theorem}{OracleDistributionSwitchingSD}\label{thm: Oracle Distribution Switching using Statistical Distance}
    An algorithm $\adv$ making $q$ quantum queries to $\Oraclef{P}$ or $\Oraclef{Q}$
    has the same output distribution, up to statistical distance
    $8q\hellingerdistance(P, Q) \leq 8q\sqrt{\epsilon}$.
\end{restatable}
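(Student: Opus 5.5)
We will work in a purified, compressed-oracle picture adapted to a general output distribution. For a distribution $R\in\{P,Q\}$ on $\mathcal{Y}$, the random function $\oraclef{R}$ is purified by giving each $x\in\mathcal{X}$ an oracle register initialized to $\ket{\psi_R}:=\sum_{y}\sqrt{R(y)}\ket{y}$. The oracle query then acts as the standard-basis controlled XOR, $\ket{x}\ket{z}\ket{y}_x\mapsto\ket{x}\ket{z\oplus y}\ket{y}_x$. Tracing out the oracle registers reproduces the mixed oracle exactly. We then apply a ``compression'' unitary $V_R$ on each register: it swaps $\ket{\psi_R}$ with a fixed symbol $\ket{\bot}$, orthogonal to $\mathbb{C}^{\mathcal{Y}}$, and acts as the identity on the complement of $\mathrm{span}\{\ket{\psi_R},\ket{\bot}\}$.

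In this picture, both the $P$-world and the $Q$-world start from the same all-$\bot$ database. The $q$-query interaction produces the final joint states
\[
\ket{\Phi_R}=\bigl(\textstyle\bigotimes_x V_R\bigr)\,O\,U_q\cdots O\,U_1\,\bigl(\textstyle\bigotimes_x V_R\bigr)\ket{\text{init}},
\]
with the compression conjugated around each query. Call the resulting compressed query operator $\tilde O_R=W_R\,O\,W_R^\dagger$. Because $V_R$ acts nontrivially only on the queried register, $\tilde O_R$ is local to $\ket{x}$.

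By the hybrid argument (telescoping), $\|\ket{\Phi_P}-\ket{\Phi_Q}\|\le q\max\operatornorm{\tilde O_P-\tilde O_Q}$. The trace distance of the outputs is then at most $2\|\ket{\Phi_P}-\ket{\Phi_Q}\|$. The paper's hint about ``adding null-terms'' suggests writing
\[
\tilde O_P-\tilde O_Q=(V_P-V_Q)\,O\,V_P+V_Q\,O\,(V_P-V_Q),
\]
so that $\operatornorm{\tilde O_P-\tilde O_Q}\le 2\operatornorm{V_P-V_Q}$, since $O$ and the $V$'s are unitary. Here $V_P-V_Q$ is understood as acting on the single queried register, controlled on $x$, so its operator norm is the single-register norm.

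The remaining task is to bound $\operatornorm{V_P-V_Q}$ by a constant times $\|\ket{\psi_P}-\ket{\psi_Q}\|$. We have $\|\ket{\psi_P}-\ket{\psi_Q}\|=\sqrt{2-2\BC(P,Q)}=\sqrt2\,\hellingerdistance(P,Q)$ under the convention $\hellingerdistance^2=1-\BC$. The bound holds because $V_P$ and $V_Q$ are reflections-like swaps whose defining vectors differ by a small rotation. Concretely, we can define $V_R=I-\ket{\psi_R}\bra{\psi_R}-\ket{\bot}\bra{\bot}+\ket{\psi_R}\bra{\bot}+\ket{\bot}\bra{\psi_R}$. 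Then $V_P-V_Q$ is a sum of rank-one differences, each bounded in norm by $\|\psi_P-\psi_Q\|$ (for instance, $\|\ketbra{\psi_P}{\psi_P}-\ketbra{\psi_Q}{\psi_Q}\|\le\|\psi_P-\psi_Q\|$), giving $\operatornorm{V_P-V_Q}\le 2\sqrt2\,\hellingerdistance$.

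Collecting constants gives a bound of the form $c\,q\,\hellingerdistance(P,Q)$. The constants must then be tuned to reach $8$: use the sharper norm of the difference of the two-dimensional swaps, which is roughly $\sqrt2\|\psi_P-\psi_Q\|$, and the relation between trace distance and Euclidean distance, $\tracenorm{\ketbra{a}{a}-\ketbra{b}{b}}\le 2\|a-b\|$. Finally, $\hellingerdistance(P,Q)^2\le\Delta(P,Q)=\epsilon$ gives the second inequality.

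The step I expect to be the main obstacle is the locality and norm accounting. We must verify that conjugating by the global product $\bigotimes_x V_R$ yields a per-query operator whose difference is controlled by a single-register norm. This is justified because $O$ touches only register $x$, so for the other registers $V_R^\dagger V_R=I$ cancels. The other delicate point is getting the constant to exactly $8$. A looser constant would still give the asymptotic $O(q\,\hellingerdistance)$ claim, so I would first secure that bound and then tighten the constants.
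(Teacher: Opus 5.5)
Your proposal follows the paper's proof essentially step for step. The shared ingredients are:
\begin{itemize}
\item the general-distribution compressed oracle, built from the swap between $\ket{\bot}$ and $\ket{\phi_R}$, with both worlds starting from the same all-$\bot$ database;
\item a telescoping argument over the $2q$ single-swap replacements (your decomposition $(V_P-V_Q)OV_P+V_QO(V_P-V_Q)$, applied locally in $x$, is exactly the paper's null terms);
\item a bound on $\|F^Q-F^P\|_\infty$ in terms of $\|\phi_P-\phi_Q\|=\sqrt2\,\hellingerdistance(P,Q)$.
\end{itemize}

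The one step you only assert is the sharp estimate ``roughly $\sqrt2\|\psi_P-\psi_Q\|$''. This is precisely the paper's key lemma, and it needs an actual computation. Set $\ket{\Delta_\pm}=\ket{\phi_Q}\pm\ket{\phi_P}$; these are orthogonal to each other and to $\ket{\bot}$. On their span, $F^Q-F^P$ is a real symmetric $3\times 3$ matrix with eigenvalues $0$ and $\pm\|\Delta_-\|\sqrt{1+\|\Delta_+\|^2/4}$. Since $\|\Delta_+\|\le 2$, its norm is at most $\sqrt2\|\Delta_-\|=2\hellingerdistance(P,Q)$.

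You can also avoid this computation entirely. The statistical distance of the outputs is at most $\frac12\|\cdot\|_{tr}$ of the final pure states, which is at most $\|\Phi_P-\Phi_Q\|$, not $2\|\Phi_P-\Phi_Q\|$. With that normalization, your triangle-inequality bound $\|V_P-V_Q\|_\infty\le 2\|\psi_P-\psi_Q\|$ already gives $4\sqrt2\,q\,\hellingerdistance(P,Q)\le 8q\,\hellingerdistance(P,Q)$. That bound is valid only if you group $\ket{\delta}\bra{\bot}+\ket{\bot}\bra{\delta}$, with $\delta=\psi_P-\psi_Q\perp\ket{\bot}$, as a single Hermitian term of norm $\|\delta\|$. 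Bounding the three rank-one pieces separately gives $3\|\delta\|$, hence $6\sqrt2\,q\,\hellingerdistance(P,Q)$, which exceeds the claimed constant.
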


\ifthenelse{\boolean{shortver}}{
    The proof and an extended discussion can be found in the full version of this work \cite{cryptoeprint:2026/293}.
    Therefore, the quantum analog of \cref{eq: classical statistical distance bound} only incurs a square-root loss in the statistical distance.
    Our result implies a quantum query complexity of $\Omega(1/\hellingerdistance(P, Q))$ for constant distinguishing advantage, matching the tight bound of \cite{belovs2019quantum}, while additionally providing explicit constants at every advantage.
}{
    The proof can be found in Appendix~\ref{subsec: Proof Oracle Distribution Switching using Statistical Distance}.
    Therefore, the quantum analog of \cref{eq: classical statistical distance bound} only incurs a square-root loss in the statistical distance.

    \paragraph{Complexity Theory Literature.}
    \cite{belovs2019quantum} characterizes the query complexity of
    distinguishing $P$ from $Q$ as $\Theta(1/\hellingerdistance(P,Q))$, where
    $\hellingerdistance$ denotes the Hellinger distance.
    \cref{thm: Oracle Distribution Switching using Statistical Distance} is
    expressed in the same quantity and yields the matching lower bound
    $q = \Omega(1/\hellingerdistance(P,Q))$ for constant distinguishing advantage.
    It is, however, obtained by a different route.
    The bound of \cite{belovs2019quantum} is proven via the adversary method,
    which characterizes the query complexity at constant success probability and
    does not provide the explicit constants needed for concrete parameters,
    whereas we bound the trace distance of the output states at every advantage.

    One can use the result to get a per-query statement, but at a square-root loss.
    Consider a distinguisher $\adv$ with advantage $\eta$, that outputs $1$ with probability $1/2 + \eta/2$ when interacting with
    $\Oraclef{P}$ and with probability $1/2 - \eta/2$ when interacting with
    $\Oraclef{Q}$.
    By $\Theta(1/\eta^2)$ independent repetitions (domain-separating the repetitions)
    and a majority vote, the success probability is boosted to $2/3$.
    A single repetition uses $q$ queries, so $\bigO{q/\eta^2}$
    queries in total, while~\cite{belovs2019quantum} requires
    $\Omega(1/\hellingerdistance(P,Q))$ queries for constant success probability.
    Hence, $q/\eta^2 = \Omega(1/\hellingerdistance(P,Q))$, i.e.,
    \[
        \eta = \bigO{\sqrt{q\cdot \hellingerdistance(P,Q)}}.
    \]
    This is weaker than \cref{thm: Oracle Distribution Switching using Statistical Distance}
    for all $q = o(1/\hellingerdistance(P, Q))$, and the two coincide only at
    $q = \Theta(1/\hellingerdistance(P, Q))$, where both give constant advantage.
}
\subsection{Oracle Switching using Rényi Divergence}\label{subsec: Oracle Switching using Rényi Divergence}

The Rényi divergence measures how close two probability distributions are \emph{multiplicatively}.
The post-processing and probability-preservation properties (see \cref{lemma: renyi divergence}) facilitate relative-error bounds on $\adv$'s outputs when restricted to classical queries.
We explore how this behavior generalizes to quantum access.

\subsubsection{Replacing the entire Distribution Fails}\label{subsec: renyi fails}

\cref{eq: classical RD bound} does not translate to the quantum oracle access setting, i.e., a purely multiplicative bound is impossible.
Any such bound has a multiplicative loss depending only on $P$ and $Q$ (in particular, on their Rényi divergences), and is therefore independent of $n$, the input size for $\oraclef{P}$ and $\oraclef{Q}$.
We show that there is always a choice of $n$ for which the quotient of success probabilities exceeds that loss.

\begin{theorem}
    For all distributions $P \neq Q$ on $\mathcal{Y}$ with $\supp{P} \subseteq \supp{Q}$ and fixed $\gamma > 0$, there exists an algorithm $\adv$ making $2$ quantum queries such that
    \[\lim_{n \to \infty} \prob{1 \gets \adv^{{\oraclef{P}}}} /\prob{1 \gets \adv^{{\oraclef{Q}}}}^\gamma = \infty.\]
\end{theorem}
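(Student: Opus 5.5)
Following the construction sketched in the technical overview, I will build a Boolean function $h[f_R]$ from $f_R$, where $R \in \{P,Q\}$, and run Deutsch--Jozsa on it.

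\textbf{Construction of $h$.} Since $P\neq Q$, there is a set $S\subseteq\mathcal{Y}$ with $P(S)\neq Q(S)$; write $p=P(S)$, $q_0=Q(S)$, and let $g(y)=\mathbb{1}[y\in S]$. Then $g(f_R(x))$ is an i.i.d.\ Bernoulli bit with bias $p$ or $q_0$. To make the $Q$-case balanced in expectation, I combine two independent oracle values into one bit, using the domain bit of $\{0,1\}^{1+n}$ and the oracle domain $\{0,1\}^{n+1}$. Set $h[f](0,x)=g(f(0x))$ and $h[f](1,x)=1\oplus g(f(1x))$, composed with a fixed classical rebalancing. Concretely, I will look for a deterministic post-processing $\phi:\mathcal{Y}^2\to\{0,1\}$ with $\Pr_{Q\otimes Q}[\phi=1]=1/2$ exactly and $\Pr_{P\otimes P}[\phi=1]\neq 1/2$.

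If no deterministic $\phi$ works, I will allow $h$ to depend on a fixed auxiliary function, e.g.\ $h(b,x)=\phi_{b,x}(f(\cdot))$ with per-input thresholds chosen so the $Q$-expectation is exactly $1/2$. The goal is a bit $h(z)$ that is i.i.d.\ over $z$, with $\mathbb{E}_Q[(-1)^{h(z)}]=0$ and $\mathbb{E}_P[(-1)^{h(z)}]=c\neq 0$. Each evaluation of $h$ needs at most two evaluations of $f$. I compute $h$ into a phase by querying $f$, computing $\phi$ into the $\ket{-}$ register, and uncomputing with a second query. This gives a single phase query $U_h$ from $2$ queries to $f_R$, in line with the paper's outline.

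\textbf{Analysis.} By \cref{eq: Deutsch-Jozsa amplitude}, $\Pr[\text{output }1]=\mathbb{E}[(Z_h/N)^2]$ with $N=2^{n+1}$ and $Z_h=\sum_z(-1)^{h(z)}$ a sum of i.i.d.\ $\pm1$ variables.

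Under $Q$, the mean is $0$, so $\mathbb{E}[Z_h^2]=N$ and the probability is exactly $1/N$.

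Under $P$, the second moment gives $\mathbb{E}[Z_h^2]=N^2c^2+N(1-c^2)\ge N^2c^2$, so the probability is at least $c^2$, a constant.

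The ratio is therefore at least $c^2 N^{\gamma}$, which tends to $\infty$ as $n\to\infty$ for every fixed $\gamma>0$.

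\textbf{Main obstacle.} The delicate step is getting exact balance under $Q$ while keeping bias under $P$ with a deterministic function, because the probabilities $Q(S)$ may be irrational or awkward. A simple pairing trick may do it: XOR of $g(f(x))$ and $g(f(x'))$ is not balanced in general, but $\phi(y_1,y_2)=\mathbb{1}[\pi(y_1)<\pi(y_2)]$ for a total order $\pi$, with ties broken by antisymmetric handling, is balanced under any i.i.d.\ law apart from ties.

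If that route fails, I will not insist on exact balance. It suffices that $|\mathbb{E}_Q(-1)^h|\le$ something decaying with $n$ fast enough, or that the functions $h_z$ vary with $z$ so that $\sum_z\mathbb{E}_Q(-1)^{h(z)}=0$ while the $P$-sum is $\Theta(N)$. For example, I can pair inputs $z$ with $\bar z$ using the two complementary functions $g$ and $1-g$ weighted to cancel $Q$'s bias exactly in aggregate. With the mean zero in aggregate, the variance computation still gives $\Pr_Q=O(1/N)$, and that is all the argument needs.
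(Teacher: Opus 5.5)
There is a genuine gap. The whole argument hinges on a Boolean $h$, computable with two queries, whose aggregate mean $\mathbb{E}_Q[Z_h]$ is $O(\sqrt N)$ while $\mathbb{E}_P[Z_h]=\Theta(N)$. None of your concrete candidates provides such an $h$.

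\emph{Exact balance.} A deterministic $\phi$ of finitely many oracle values cannot be exactly balanced in general, and the same objection applies to your per-input thresholds. If $Q$ is Bernoulli with transcendental parameter $\theta$, then $\Pr_{Q^{\otimes k}}[\phi=1]$ is an integer-coefficient polynomial in $\theta$ whose value at $\theta=0$ is $0$ or $1$. Hence $2\Pr_{Q^{\otimes k}}[\phi=1]-1$ is a nonzero integer polynomial and cannot vanish at $\theta$.

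\emph{The comparison bit.} The bit $[\pi(y_1)<\pi(y_2)]$ fails for two reasons. First, it is exchangeable under every product law, so off-diagonal pairs split evenly under $P$ and $Q$ alike. Its entire bias comes from ties, giving mean $\sum_y \pm R(y)^2$ for $R\in\{P,Q\}$. This is generically nonzero under $Q$ and unrelated to the bias $P(S)\neq Q(S)$ you want to detect. Second, a tie $y_1=y_2$ is symmetric, so no rule depending only on $(y_1,y_2)$ can break it antisymmetrically.

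\emph{Mixing $g$ with $1-g$.} The per-input means are $\pm(1-2Q(S))$ under $Q$ and $\pm(1-2P(S))$ under $P$, with the same sign pattern. So any weighting that cancels the $Q$-aggregate also cancels the $P$-aggregate, unless already $Q(S)=1/2$.

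\emph{Query count.} A $\phi$ depending on two oracle values costs four queries (compute and uncompute both), not the two the statement requires.

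The missing idea is to pad with \emph{oracle-independent} bits. Constant bits contribute identically under $P$ and $Q$, so they cancel $Q$'s bias without touching the $P$-signal. You should also tolerate an $O(1)$ rounding error rather than demand exact balance.

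The paper takes $y_0$ maximizing $P(y)/Q(y)$ and sets $h(0\|x)=0$ iff $f(x)=y_0$. It lets $h(1\|x)$ be $0$ on exactly $N_0=\lfloor 2^n(1-Q(y_0))\rfloor$ inputs and $1$ on the rest. Then $\mathbb{E}_Q[Z_h]=-2\delta$ with $\delta\in[0,1)$, while $\mathbb{E}_P[Z_h]=2^{n+1}(P(y_0)-Q(y_0))-2\delta$. Since $\mathbb{E}_Q[Z_h^2]=4\delta^2+2^{n+2}Q(y_0)(1-Q(y_0))$, one still gets $\Pr_Q[\text{output }1]=\Theta(2^{-n})$. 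Each $h(0\|x)$ depends on a single oracle value, so two queries suffice.

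With this $h$, your second-moment analysis goes through essentially unchanged. Your set $S$ would also work in place of $y_0$, since $\operatorname{supp}P\subseteq\operatorname{supp}Q$ forces $Q(S)\in(0,1)$.
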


\begin{proof}The high-level strategy is as follows.
    We define a classical deterministic algorithm that takes a bit string as an input, makes $2$ queries to the oracle given to the distinguisher and outputs one bit, such that the boolean function it implements behaves differently for oracle $f_P$ and oracle $f_Q$.
    For oracle $f_Q$, the boolean function is \emph{balanced in expectation} over the choice of $f_Q$, meaning that in expectation, the outputs $0$ and $1$ occur the same number of times.
    For $f_P$, on the other hand, it is not.
    The Deutsch-Jozsa algorithm therefore almost never reports ``constant'' for $f_Q$, with probability inversely proportional to the domain size $2^n$, whereas for $f_P$ it does so with constant probability.
    The quotient of the two therefore diverges, and does so even when the $f_Q$ probability is raised to any fixed power $\gamma > 0$.

    Fix $\mathcal{X} = \bin^n$ and define the point of maximal probability quotient $y_0 := \arg \max_{y \in \supp{P}}\frac{P(y)}{Q(y)}$, for which $Q(y_0) \in (0,1)$ and $P(y_0)/Q(y_0) > 1$ due to $P \neq Q$ and $\supp{P}\subseteq \supp{Q}$.
    From any function $f:\mathcal X\to \mathcal Y$, we construct a function with a binary output,
    $g[f]:\mathcal X\to \bin$, where $g[f](x) = 0$ if $f(x) = y_0$ and $g[f](x)=1$ otherwise.
    Since this function is not necessarily balanced, we define a padded function $h[f]: \bin^{1 + n} \to \bin$ increasing the size of the input domain, where $h[f](b\|x) = g[f](x)$ if $b=0$.
    For $b=1$, $h[f]$ is defined to be $0$ for exactly $N_0 \coloneqq \lfloor2^{n}(1- Q(y_0))\rfloor$ elements (say for concreteness the first $N_0$ inputs in lexicographic order).
    The $b=1$ half is thus chosen so that its bias cancels the expected bias of $g[f_Q]$, up to the rounding error $\delta \coloneqq 2^{n}(1-Q(y_0)) - N_0 \in [0,1)$.
    Each evaluation of $h[f]$ uses two queries to $f$: one to compute the comparison $f(x) \overset{?}{=} y_0$ into an ancilla, and one to uncompute it after the phase has been applied.
    Since the uncomputation restores the ancilla to $\ket{0}$, the analysis below is unaffected.
    After running the DJ algorithm with $f$ (post-processed to $h[f]$), the probability of measuring $\ket{+}^{\otimes (1 + n)}$ is
    \[
        \prob{1 \gets \adv^{f}} = \mathbb{E}_f  \left[\frac{Z_{h[f]}^2}{2^{2(1+n)}}\right]  = \frac{\mathbb{E}_f\left[Z_{h[f]}^2\right]}{2^{2(1+n)}}
        = \frac{{\mathbb{E}_f\left[Z_{h[f]}\right]}^2 +  \operatorname{Var}_f\left(Z_{h[f]}\right)}{{2^{2(1+n)}}}
    \]
    using the notation from \cref{eq: Deutsch-Jozsa amplitude}.
    Since the $b=1$ half contributes the constant $2N_0 - 2^n = -\mathbb{E}_{f_Q}[Z_{g[f_Q]}] - 2\delta$, we have
    \begin{align*}
        \mathbb{E}_{f}[Z_{h[f]}] & = \mathbb{E}_{f}[Z_{g[f]}] - \mathbb{E}_{f_Q}[Z_{g[f_Q]}] - 2\delta,
    \end{align*}
    so that $\mathbb{E}_{f_Q}\left[Z_{h[f_Q]}\right] = -2\delta$ and $\mathbb{E}_{f_P}[Z_{h[f_P]}] = 2^{1 + n}(P(y_0) - Q(y_0)) - 2\delta$.
    Since $\oraclef{P}$ and $\oraclef{Q}$ assign each input an independent sample, the random variables ${(-1)}^{g[f](x)}$ for $x \in \bin^{n}$ are i.i.d., and the $b=1$ half of $h[f]$ is deterministic.
    Hence, the variance is additive:
    \begin{align*}
        \operatorname{Var}_f  \left(Z_{h[f]}\right)
         & = \sum_{x \in \bin^{n}}\operatorname{Var}_f\left({(-1)}^{g[f](x)}\right)                    \\
         & = \sum_{x \in \bin^{n}}4\prob{g[f](x) = 0}\left(1-\prob{g[f](x) = 0}\right) = 2^{2 + n} v_f
    \end{align*}
    where $v_f = P(y_0)(1-P(y_0))$ if $f = f_P$ and $Q(y_0)(1-Q(y_0))$ if $f = f_Q$.
    Using $\delta < 1$, the rounding contributes $\bigO{2^{-2n}}$ to $\prob{1 \gets \adv^{\oraclef{Q}}}$, and $\bigO{2^{-n}}$ to $\prob{1 \gets \adv^{\oraclef{P}}}$ via the cross term $-2\delta\left(P(y_0)-Q(y_0)\right)2^{-n}$.
    Since $Q(y_0) \in (0,1)$, the denominator is nonzero, and thus
    \begin{align*}
        \frac{\prob{1 \gets \adv^{\oraclef{P}}}}{\prob{1 \gets \adv^{\oraclef{Q}}}^\gamma} & =  \frac{\left(P(y_0) - Q(y_0)\right)^2 + \bigO{2^{-n}}}{\left( 2^{-n}Q(y_0)(1-Q(y_0)) + \bigO{2^{-2n}}\right)^\gamma}
        =\Theta(2^{n\gamma})
    \end{align*}
    tends to $\infty$ in $n$ for any fixed choice of $\gamma$.
\end{proof}

\subsubsection{Replacing some Positions using Small-Range Distributions}\label{subsubsec:small-range distribution approach renyi}
The counterexample makes crucial use of the fact that the multiplicative bounds should hold for any absolute value for the probabilities. Adding an additive error term could be sufficient to achieve a meaningful result for the Rényi divergence, even if the classical properties do not translate.
We explore this approach by considering small-range distributions from \cite{FOCS:Zhandry12}.
In short, our approach relies on first sampling $r$ independent values according to a distribution $P$ on $\mathcal{Y}$.
These are used to simulate the oracle $\Oraclef{P}$.
It is indistinguishable for $\adv$ up to an error term of $\bigO{q^3/r}$ using \cite{FOCS:Zhandry12}.
Next, the distribution is changed from $P$ to $Q$.
Now we can use the properties of the Rényi divergence for $r$ replacements.

\begin{restatable}{proposition}{smallRangeRenyiProof}\label{proposition: Renyi divergence proof with sr}
    Let $\alpha \in (1, \infty], r \in \mathbb{N}_{\geq 1}$ and $P, Q$ be classical distributions over $\mathcal{Y}$ with $\supp{P}\subseteq \supp{Q}$.
    For any $\adv$ making at most $q$ quantum queries
    \[\prob{1 \gets \adv^{\oraclef{P}}} \leq \left(R_{\alpha}(P\|Q)^r \left(\prob{1 \gets \adv^{\oraclef{Q}}} + \frac{\ell(q)}{r}\right)\right)^{\frac{\alpha-1}{\alpha}}+ \frac{\ell(q)}{r}
        ,\]
    where $\ell(q) = \pi^2(2q)^3/3 < 27q^3$.
\end{restatable}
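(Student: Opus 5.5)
We introduce the small-range oracle and switch distributions on it in three hops. For a distribution $D\in\{P,Q\}$, let $\mathrm{SR}_r^D$ denote the small-range oracle: sample $r$ values $y_1,\dots,y_r \gets D$ independently and a uniformly random assignment function $g:\mathcal{X}\to[r]$, and set $\oraclef{}(x) := y_{g(x)}$. The hybrids are
\[
\oraclef{P} \;\to\; \mathrm{SR}_r^P \;\to\; \mathrm{SR}_r^Q \;\to\; \oraclef{Q}.
\]

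\textbf{First and last hops.} We invoke Zhandry's small-range distribution theorem from \cite{FOCS:Zhandry12}. It states that for any distribution $D$ on $\mathcal{Y}$, a $q$-query quantum algorithm distinguishes $\oraclef{D}$ from $\mathrm{SR}_r^D$ with advantage at most $\ell(q)/r$, where $\ell(q) = \pi^2(2q)^3/3$. This holds for arbitrary $D$, not only uniform $D$: the proof only uses that the outputs are i.i.d. across the $r$ buckets, via the polynomial method with $2q$-degree polynomials. I expect the main obstacle to be confirming that the cited bound is stated, or directly transfers, for arbitrary output distributions and with exactly this constant. If it is stated only for uniform $D$, I would reduce to that case: view $D$ as a deterministic post-processing of a uniform seed, and apply the theorem to the seed oracle, which the algorithm simulates query by query. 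This gives
\[
\prob{1 \gets \adv^{\oraclef{P}}} \leq \prob{1\gets \adv^{\mathrm{SR}_r^P}} + \ell(q)/r
\quad\text{and}\quad
\prob{1\gets \adv^{\mathrm{SR}_r^Q}} \leq \prob{1\gets\adv^{\oraclef{Q}}} + \ell(q)/r.
\]
The bound $\ell(q) < 27q^3$ follows from $8\pi^2/3 < 27$.

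\textbf{Middle hop.} Here we use the classical Rényi machinery. The success probability of $\adv$ with oracle $\mathrm{SR}_r^D$ is a stochastic function of the classical tuple $(y_1,\dots,y_r)$, since $g$ and $\adv$ are independent of it. Let $E$ be the set of tuples $(y_1,\dots,y_r)$ together with internal randomness on which $\adv$ outputs $1$; the quantum computation is just a stochastic map of this classical input. By \cref{lemma: renyi multiplication}, $R_\alpha(P^{\otimes r}\|Q^{\otimes r}) = R_\alpha(P\|Q)^r$, and the support condition holds coordinatewise. Applying \cref{lemma: renyi divergence} (data processing, then probability preservation) to the output bit gives
\[
\prob{1\gets\adv^{\mathrm{SR}_r^P}} \leq \Bigl(R_\alpha(P\|Q)^r\,\prob{1\gets\adv^{\mathrm{SR}_r^Q}}\Bigr)^{\frac{\alpha-1}{\alpha}}.
\]

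\textbf{Chaining.} Substitute the last-hop bound into the right-hand side of the middle-hop inequality; this is valid because $t\mapsto t^{(\alpha-1)/\alpha}$ is monotone. Then add the first-hop error term. The result is exactly the claimed inequality, with the $\alpha=\infty$ case read with exponent $1$.
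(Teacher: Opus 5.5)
Your proposal is correct and follows the paper's proof: the same hybrid chain $\oraclef{P}\to\mathsf{SR}_r^P\to\mathsf{SR}_r^Q\to\oraclef{Q}$, with Zhandry's small-range bound for the two outer hops and, for the middle hop, data processing plus probability preservation applied to the $r$ i.i.d.\ classical samples via \cref{lemma: renyi multiplication}. Your worry about arbitrary output distributions is unnecessary, because the paper's \cref{cor: small-range-distribution} is already stated for an arbitrary distribution $D$ on $\mathcal{Y}$, so the uniform-seed fallback is not needed.
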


\begin{remark}
    The result is partially for exposition and study of how Rényi divergence can be used by accepting a small additive loss while still switching between the entire oracles.
    This technique could be useful in some situations.
    Statistical distance arguments, however, often yield tighter bounds. This is because selecting $r$ such that the error is negligible implies a small statistical distance, sufficient to apply the results from \cref{thm: Oracle Distribution Switching using Statistical Distance}.
    The full discussion on applicability can be found in \ifthenelse{\boolean{shortver}}{the full version of this work \cite[B.2]{cryptoeprint:2026/293}}{Appendix~\ref{subsec: proof of Oracle Distribution Switching using Small-Range Distributions}}.
\end{remark}
\subsection{Reprogramming with Different Distributions}\label{subsection: Reprgramming with Different Distributions}

Though possible for the statistical distance, it is impossible to use the Rényi divergence for \ac{PQ} security as in the classical setting.
For a fixed oracle, replacing all underlying values with another distribution is infeasible when using the Rényi divergence.
However, it is possible as shown in \cref{subsubsec:small-range distribution approach renyi}, but at the cost of an additive error.

In many applications, it turns out to be sufficient to \emph{reprogram} a quantum oracle for an i.i.d. function with output distribution $P$ using outputs sampled according to $Q$, \emph{for a number of randomly sampled inputs}.
The adaptive reprogramming approach can be modified to facilitate this (see \cref{thm: adaptive reprogramming}).
Let $\REPRO_{1, Q}$ refer to a game, modified from $\REPRO_{1}$, where the output is reprogrammed to a fresh, random value sampled from a distribution $Q$.

\begin{restatable}[Adaptive Reprogramming with Distribution Switching]{lemma}{AdRwDS}\label{cor: adaptive reprogramming with other distribution}
    Let $\mathcal{X}_1, \mathcal{X}_2, \mathcal{X}'$ and $\mathcal{Y}$ be finite sets, and let $p$ be a distribution on $\mathcal{X}_1 \times \mathcal{X}'$.
    Let $Q$ be a distribution over $\mathcal{Y}$ and  $\distinguisher$ be any distinguisher, issuing $q$ (quantum) queries to $\Oracle{}$ and $R$ reprogramming instructions such that each instruction consists of a value $x_2$, together with the fixed distribution $p$.
    Then
    \begin{align*}
        \Big|\prob{1 \gets \REPRO_0^\distinguisher} & - \prob{1 \gets \REPRO_{1, Q}^\distinguisher} \Big| \leq  R\cdot \Delta(Q, \mathcal{U}(\mathcal{Y})) + \delta_{repr}                          \\
        \prob{1 \gets \REPRO_{0}^\distinguisher}    & \leq \left(R_\alpha(\mathcal{U}(\mathcal Y)\|Q)^R\prob{1 \gets \REPRO_{1,Q}^\distinguisher}\right)^{\frac{\alpha -1}{\alpha}} + \delta_{repr}
    \end{align*}
    where $\delta_{repr} = \frac{3R}{2}\sqrt{q\cdot p_{\max}}$ with $p_{\max} := \max_{x_1}p(x_1)$ and $\alpha \in (1, \infty]$.
\end{restatable}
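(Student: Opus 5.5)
The plan is a hybrid argument through the intermediate game $\REPRO_1$ of \cref{thm: adaptive reprogramming}, where the reprogrammed outputs are fresh uniform values. First, I bound the distance between $\REPRO_0$ and $\REPRO_1$ directly by \cref{thm: adaptive reprogramming}. This gives $|\prob{1 \gets \REPRO_0^\distinguisher} - \prob{1 \gets \REPRO_1^\distinguisher}| \leq \delta_{repr}$. Next, I compare $\REPRO_1$ with $\REPRO_{1,Q}$. The two games are identical except for how the $R$ reprogrammed output values $(y_1,\ldots,y_R)$ are sampled: i.i.d.\ uniform over $\mathcal{Y}$ in one, i.i.d.\ from $Q$ in the other.

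The key observation is that the whole experiment in $\REPRO_1$ (resp.\ $\REPRO_{1,Q}$) can be written as a randomized procedure $F$ applied to the vector $(y_1,\ldots,y_R)$. The procedure $F$ samples the initial oracle and the reprogramming positions from $p$. It runs $\distinguisher$, including its quantum queries, and uses $y_i$ whenever the $i$-th reprogramming instruction is issued. Finally it outputs $\distinguisher$'s bit. Adaptivity is harmless here because the $y_i$ are independent of everything and can be sampled upfront, in advance, and consumed in order; what $x_2$ the distinguisher chooses only affects where $y_i$ is programmed, not its distribution. Quantum access is also harmless, since $F$ is a classical-input stochastic map and is allowed to simulate the quantum oracle internally. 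This point is exactly where the counterexample of \cref{subsec: renyi fails} does not apply: only $R$ classical samples are switched, not the whole oracle.

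Given this, the statistical distance bound follows from data processing together with subadditivity over product distributions: $\Delta(\mathcal{U}^{\otimes R}, Q^{\otimes R}) \leq R\,\Delta(\mathcal{U}(\mathcal Y),Q)$. Combined with the triangle inequality and the first step, this gives the first claim. For the Rényi claim, I apply \cref{lemma: renyi multiplication} to get $R_\alpha(\mathcal{U}^{\otimes R}\|Q^{\otimes R}) = R_\alpha(\mathcal{U}(\mathcal Y)\|Q)^R$. I then apply the probability preservation part of \cref{lemma: renyi divergence} to the event that $F$ outputs $1$, using data processing to pass to $F$'s output. This needs $\supp{\mathcal U(\mathcal Y)}\subseteq\supp{Q}$; otherwise the divergence is infinite and the bound is trivial. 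The result is $\prob{1\gets\REPRO_1^\distinguisher}\leq (R_\alpha(\mathcal{U}(\mathcal Y)\|Q)^R\prob{1\gets\REPRO_{1,Q}^\distinguisher})^{(\alpha-1)/\alpha}$. Finally, $\prob{1\gets\REPRO_0^\distinguisher}\leq\prob{1\gets\REPRO_1^\distinguisher}+\delta_{repr}$ gives the second claim.

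The main obstacle is the formal justification that the reprogrammed values enter as an independent classical input to a fixed stochastic map. In particular, the values must be sampled independently of the distinguisher's adaptive choices and of the positions drawn from $p$. I would handle this by presampling the vector $(y_1,\ldots,y_R)$ at the start of the game, and noting that the games are syntactically identical otherwise.
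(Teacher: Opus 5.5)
Your proposal is correct and follows the paper's proof. Like the paper, you hybrid through $\REPRO_1$, using \cref{thm: adaptive reprogramming} for the $\delta_{repr}$ term, and then switch the $R$ independent classical reprogramming samples from uniform to $Q$, via subadditivity of statistical distance for the first bound and via \cref{lemma: renyi multiplication} with probability preservation for the second. Your presampling and stochastic-map argument spells out a step the paper leaves implicit.
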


The lemma is an immediate consequence of \cref{thm: adaptive reprogramming} when applying the statistical and Rényi divergence properties to the classical samples used in reprogramming.
A proof, outlining the equations, can be found in \ifthenelse{\boolean{shortver}}{the full version of this work \cite[B.3]{cryptoeprint:2026/293}}{Appendix~\ref{subsec: proof of corollary adaptive reprogramming with other distribution}}.
The result can be extended to two arbitrary underlying distributions. %
\section{A Closer Look at \textsc{Falcon} in the QROM}\label{sec: A Closer Look at Falcon in the QROM}

Our analysis in \cref{sec: Quantum Oracle Distribution Switching} can be applied to the \ac{NIST} candidate \textsc{Falcon}~\cite{fouque2018falcon}.
\textsc{Falcon}'s security in the \ac{QROM} has not been rigorously established in existing literature.
\cite{fouque2018falcon} argues that \textsc{Falcon} is secure because it uses the GPV-framework, which was proven using a history-free reduction in \cite{AC:BDFLSZ11}.
However, a closer look reveals that this line of argument is problematic.
The \ac{QROM} proof of \cite{AC:BDFLSZ11} requires quantum oracle indistinguishability, which is proven via statistical distance arguments (\cref{thm: Oracle Distribution Switching using Statistical Distance} could be used to get tighter arguments than those in \cite{AC:BDFLSZ11}).
The parameters of \textsc{Falcon}, however, are set using ROM security considerations via Rényi divergence arguments\footnote{It also requires considering aborts, we address this separately in the proof.}.
Although it is possible to use the Rényi divergence to get bounds on the statistical distance using Pinsker's inequality, the bounds that are derived this way are impractical (\cite{EPRINT:GajJanKil24b} shows that \textsc{Falcon} has statistical distance $ \approx2^{-34}$).
This renders \cite{AC:BDFLSZ11} infeasible for \textsc{Falcon}.

Classically, the Rényi divergence yields a multiplicative bound (see the ROM proof for \textsc{Falcon} in \cite{EPRINT:GajJanKil24b}).
This directly allows using the data processing inequality (see \cref{lemma: renyi divergence}).
A straightforward generalization of this strategy to the QROM is impossible as we have shown in  \cref{subsec: renyi fails}.
Tolerating additional error terms, we could use \cref{proposition: Renyi divergence proof with sr}, but this implies a small statistical distance, rendering this approach unusable for \textsc{Falcon}'s parameters as well.

As we have sufficient entropy in the hash inputs, we can use our results from \cref{subsection: Reprgramming with Different Distributions} to only reprogram those queries where a call on $\ntrusamplepre$ is performed, enabling use of the Rényi divergence.
To obtain a tighter reduction, \textsc{Falcon} uses a multi-target assumption ($\tntrurisis{t}$) when showing $\eufcma$ security.
In the \ac{QROM}, the only technique we are aware of that supports reduction to this assumption is small-range distributions \cite{FOCS:Zhandry12}.
However, the number of targets must increase significantly: classically $t = \nrhashqueries + 1$ suffices with no additive error in $\epsilon$, whereas in the \ac{QROM} an additive error of $\eta$ in $\epsilon$ costs $t > 27(\nrhashqueries + C_s + 1)^3/\eta$
(see the discussion below \cref{thm: seufcma of CoreFalcon}).

Although \cite[Section 5.1]{EPRINT:GajJanKil24b} does make the assumption that $\tntrurisis{t}$ and $\tntrurspisisold{t}$ are as hard as $\ntrursis$ independent of $t$,
we believe this assumption warrants additional scrutiny due to the significantly larger $t$.
We also need to work with a slightly different version $\tntrurspisis{t}$ of the assumption from \cite[Definition 10]{EPRINT:GajJanKil24b} ($\tntrurspisisold{t}$), as conditioning the challenges on the norm bound, as they do, is incompatible with applying the Rényi arguments per sample.
This is only needed when proving $\seufcma$, and not needed for $\eufcma$ (see \ifthenelse{\boolean{shortver}}{\cite[Remark 3]{cryptoeprint:2026/293}}{\cref{remark: spisis}}).
\cite{EPRINT:GajJanKil24b} also acknowledges that more efficient attacks against $\tntrurisis{t}$
for large $t$ may exist \cite{EPRINT:Bernstein22d}.
For this reason, we consider the alternative reduction to single-target
$\ntrurisis$ via measure-and-reprogram (\cref{thm: measure and reprogram 2.0}).
This is simpler and only incurs the standard square root loss compared to the analogous classical reduction. On the other hand, this strategy yields less tight reductions compared to the reduction to multi-target \ntrurisis.

The details of the proofs and \textsc{Falcon} can be found in \ifthenelse{\boolean{shortver}}{the full version of this work \cite[A.6 and E.1]{cryptoeprint:2026/293}}{Appendix~\ref{subsec: NTRU Lattices} and Appendix~\ref{sec: Falcon in the QROM}}.
The version of \textsc{Falcon} we analyze is \textsc{CoreFalcon}$^{+}$ \cite{EPRINT:GajJanKil24b}.

\begin{restatable}[$\seufcma$ of \textsc{CoreFalcon}$^{+}$]{theorem}{SUFCMAFalcon}\label{thm: seufcma of CoreFalcon}
    For any adversary $\adv$ against the $\seufcma$ security of \textsc{CoreFalcon}$^{+}$, making at most $\nrsignqueries$ signing queries and $\nrhashqueries$ \ac{QROM} queries, there exists an adversary $\bdv$ against $\tntrurspisis{C_s}$, an adversary $\cdv$ against $\ntrurisis$ and an adversary $\ddv$ against $\tntrurisis{t}$,
    s.t. for all $C_s, t \in \NN^{\geq 1}, \alpha_u, \alpha_p \in \RR^{> 1}$:
    \[
        \advantage{\seufcma}{\adv, \textsc{CoreFalcon}^{+}, \nrsignqueries} \leq \left(r_u^{C_s}\left(r_p^{C_s}\left(\epsilon + \epsilon'\right)\right)^{\frac{\alpha_p-1}{\alpha_p}} \right)^{\frac{\alpha_u-1}{\alpha_u}} + \delta
    \]
    where $\epsilon$ and the runtimes of $\bdv$ and $\ddv$ are bounded in \cref{tab: seufcma CoreFalcon} and the other parameters, including $\delta$ as a function of $C_s$ which is in a trade-off relationship with $t$ via epsilon, are defined in \cref{tab: CoreFalcon params}.
\end{restatable}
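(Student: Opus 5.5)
We prove the theorem by a sequence of games, following the classical ROM proof of \cite{EPRINT:GajJanKil24b}. Classical Rényi steps are replaced by \cref{cor: adaptive reprogramming with other distribution}. The final embedding of the hard instance uses either small-range distributions or \cref{thm: measure and reprogram 2.0}.

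Game $G_0$ is the $\seufcma$ game for \textsc{CoreFalcon}$^{+}$ in the QROM. The signing oracle samples a fresh salt, queries $\hash(\salt\|\msg)$, and runs $\ntrusamplepre$, repeating with a fresh salt whenever the norm check aborts. We first cap the total number of preimage-sampling attempts over all $\nrsignqueries$ signing queries at $C_s$. Exceeding the cap is a tail event of the abort distribution, and its probability is charged to $\delta$.

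In $G_1$, every attempt reprograms $\hash$ at the salted point $\salt\|\msg$ to a fresh value. The salt has min-entropy $\saltbits$ bits, so \cref{thm: adaptive reprogramming} applies with $R=C_s$ and $p_{\max}=2^{-\saltbits}$, giving an additive error $\tfrac{3C_s}{2}\sqrt{(\nrhashqueries+C_s)2^{-\saltbits}}$ that goes into $\delta$. We then replace the uniform reprogrammed hash value with $f(\bm{s})$, where $\bm{s}$ is drawn from the domain (Gaussian) distribution. This is the Rényi form of \cref{cor: adaptive reprogramming with other distribution}, with $r_u := R_{\alpha_u}(\mathcal{U}\|f(\text{Gaussian}))$, and it yields the outer factor $(r_u^{C_s}\,\cdot)^{(\alpha_u-1)/\alpha_u}$.

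In $G_2$, the preimage output by $\ntrusamplepre$ is replaced by the domain sample $\bm{s}$ conditioned on its image. Each signing attempt is a fresh classical sample, so \cref{lemma: renyi divergence} and \cref{lemma: renyi multiplication} over the $C_s$ samples give the inner factor $(r_p^{C_s}\,\cdot)^{(\alpha_p-1)/\alpha_p}$. The signing oracle of $G_2$ uses no trapdoor, since it only programs pairs $(f(\bm{s}),\bm{s})$. These pairs are supplied by the $\tntrurspisis{C_s}$ challenger with Gaussian preimages and no norm conditioning, which is why the starred variant is needed.

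It remains to bound the success probability in $G_2$ by $\epsilon+\epsilon'$. We split on the type of forgery. If the forgery uses one of the programmed salted points, it is a new short preimage of one of the $C_s$ syndromes, and $\bdv$ wins $\tntrurspisis{C_s}$. Otherwise the forgery lies on an unprogrammed hash point, and there are two options for this branch.
\begin{enumerate}
\item[(a)] Simulate the unprogrammed part of the oracle with a small-range distribution built from $t$ targets of $\tntrurisis{t}$. By \cite{FOCS:Zhandry12} this costs an additive $27(\nrhashqueries+C_s+1)^3/t$, and the forgery solves one target, giving $\ddv$.
\item[(b)] Apply \cref{thm: measure and reprogram 2.0} with $n=1$, reprogramming the forgery's hash point to the single $\ntrurisis$ target. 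This costs a factor $(2(\nrhashqueries+C_s)+1)^2$ and gives $\cdv$.
\end{enumerate}
Consistency between the queried hash and the final verification check adds a $2/|\mathcal{Y}|$-type term.

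The main obstacle is ordering and compatibility. The distribution switch must be applied only to the reprogrammed, salted positions, never to the whole oracle, because of the counterexample in \cref{subsec: renyi fails}. The abort and repeat structure also has to fit a fixed budget of $C_s$ reprogrammings, so that $R=C_s$ is deterministic. A further care point is that the measure-and-reprogram step in (b) has to be carried out relative to an oracle that already contains the signing-oracle reprogrammings. We handle this by treating the signing oracle as part of the algorithm $\adv$, simulated with its programmed pairs, before invoking \cref{thm: measure and reprogram 2.0}.
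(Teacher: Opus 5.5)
Your proposal is correct and follows the paper's proof essentially step for step. You cap the attempts at $C_s$, reprogram every attempt with $\mathcal{Q}_{\bm{h}}$-distributed syndromes via \cref{cor: adaptive reprogramming with other distribution} (outer factor $r_u$), and replace the presampler output by the challenge preimages (inner factor $r_p$). You then finish with $\tntrurspisis{C_s}$ for programmed points, and with measure-and-reprogram or small-range embedding otherwise.

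The one deviation is that you merge the paper's separate case, where an un-reprogrammed hash value hits one of the $\bm{c}_i$, into the un-reprogrammed branch. This is sound, because both embeddings only need the forgery to be a short preimage of the un-reprogrammed value, so the $10(C_s+\nrhashqueries+1)^2C_s/|\mathcal{R}_q|$ term becomes unnecessary. Your extra $2/|\mathcal{Y}|$ consistency term is not needed either, though it is dominated by that collision term anyway.
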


The two reductions to single- and multi-target assumptions have different advantages and drawbacks.
The single-target reduction via measure-and-reprogram has the advantage of only suffering the standard Grover-type loss compared to the corresponding classical reduction.
It is, however, not tight. The multi-target reduction has the same success probability tightness advantage as its classical analgue.
In contrast to the classical reduction, however, the use of the small-range distribution technique (intuitively necessitated by the fact that a single superposition query can explore all QRO outputs) yields a reduction runtime polynomial in the runtime and the inverse success probability of the adversary (seen by balancing the two summands in the bound on $\epsilon$ in \cref{tab: seufcma CoreFalcon}). Additionally, the additive error $\delta$ that results from the use of reprogramming and thus has no analogue in the classical reduction required a relatively large salt length $t$.

\begin{table*}
    \caption{Trade-offs between runtime and assumptions for $\epsilon$ in \cref{thm: seufcma of CoreFalcon} when reducing to \textsc{CoreFalcon}$^+$. $C_s$ is a parameter of the reduction that can be set to $\Theta(q_s/p)$, where $1-p$ is the rejection probability of the rejection sampling routine.
        In \cite[Corollary 3 to 6]{EPRINT:GajJanKil24b} it was for example set as $C_s < 2 \nrsignqueries$.
    }\label{tab: seufcma CoreFalcon}
    \centering
    {\renewcommand{\arraystretch}{1.3}
        \setlength{\tabcolsep}{10pt}
        \begin{tabular}{|c|c|l|}
            \hline
            Assumption    & Reduction Runtime                     & Bound on $\epsilon$                                                                               \\
            \hline
            single-target & $\poly[\secpar, C_s, \nrhashqueries]$ & $(2\nrhashqueries + 2C_s + 3)^{2}\advantage{\ntrurisis}{\cdv, 1, q, \alpha, \beta}$               \\
            multi-target  & $\poly[\secpar, C_s, t]$              & $\advantage{\tntrurisis{t}}{\ddv, 1, q, \alpha, \beta} + \frac{27(\nrhashqueries + C_s+ 1)^3}{t}$ \\\hline
        \end{tabular}}
\end{table*}
\begin{table*}
    \caption{Parameters from \cref{thm: seufcma of CoreFalcon}
        where $\mathcal{Q}_{\bm{h}}$ is the distribution of
        $\bm{h}\bm{s}_1 + \bm{s}_2 \bmod q$ with
        $\bm{s}_1, \bm{s}_2 \leftarrow \mathcal{D}_{\mathcal{R},s}$, and where
        $\bm{\Lambda}_{\bm{c}} := \bm{\Lambda}(\bm{B}) + (\bm{0}, \bm{c})$ denotes the coset of
        the NTRU lattice containing the preimages of $\bm{c}$, the output of
        $\ntrusamplepre$ being read in the order $(\bm{s}_2, \bm{s}_1)$.
    }\label{tab: CoreFalcon params}
    \centering
    {\renewcommand{\arraystretch}{1.3}
        \setlength{\tabcolsep}{10pt}
        \begin{tabular}{|c|l|}
            \hline
                        & Definition                                                                                                                                                                                                    \\
            \hline
            $\epsilon'$ & $\advantage{\tntrurspisis{C_s}}{\bdv, q, \alpha, \beta} + \dfrac{10(C_s + \nrhashqueries + 1)^2 C_s}{|\mathcal{R}_q|}$                                                                                        \\[6pt]
            $\delta$    & $\dfrac{3C_s}{2}\sqrt{\dfrac{C_s + \nrhashqueries + 1}{2^{\saltbits}}} + \displaystyle\sum_{i = 0}^{\nrsignqueries}{C_s\choose i} (1-p)^{C_s - i}(p)^{i}$                                                     \\[6pt]
            $r_u$       & $\displaystyle\max_{(\bm{h}, \cdot, \cdot) \in \supp{\ntrutrapgen}} R_{\alpha_u}(\mathcal{U}(\mathcal{R}_q) \| \mathcal{Q}_{\bm{h}})$                                                                         \\[6pt]
            $r_p$       & $\displaystyle\max_{(\bm{h}, \bm{f}, \bm{g}) \in \supp{\ntrutrapgen},\, \bm{c} \in \mathcal{R}_q} R_{\alpha_p}(\ntrusamplepre(\bm{B}_{\bm{f}, \bm{g}}, s, \bm{c}) \| \mathcal{D}_{\bm{\Lambda}_{\bm{c}}, s})$ \\[6pt]
            $p$         & $\displaystyle\min_{(\bm{h}, \bm{f}, \bm{g}) \in \supp{\ntrutrapgen},\, \bm{c} \in \mathcal{R}_q} \prob{\substack{\|(\bm{s}_1, \bm{s}_2)\|_2 \leq \beta :                                                     \\[2pt] (\bm{s}_1, \bm{s}_2) \gets \ntrusamplepre(\bm{B}_{\bm{f}, \bm{g}}, s, \bm{c})}}$ \\[10pt]
            \hline
        \end{tabular}}
\end{table*}
\section{Ring Signature from Ring Preimage Sampleable Functions}\label{sec: Ring Signature from RPSF}
We analyse the \ac{QROM} security of ring-trapdoor-based ring signatures.
We first formulate a notion of \acp{RTDF} to capture these ring signatures.

\begin{definition}[Ring Trapdoor Function]\label{def: RPSF}
    A \acf{RTDF} is a quadruple of \ac{ppt} algorithms $(\setup, \trapgen, f, \samplepre)$ defined as follows:
    \begin{itemize}
        \item $\setup(\secparam) \rightarrow \pp$: The setup algorithm takes in the security parameter $\secparam$ and outputs public parameters $\pp$, that also define the maximal ring size $\kappa$.
        \item $\trapgen(\pp) \rightarrow (a, t)$: The trapdoor generation algorithm
              takes the public parameter $\pp$ as input and produces a pair $(a,t)$
              with public value $a$ and trapdoor $t$.
              Every ring $\rho = \{a_1, \ldots, a_N\}$ with $N \leq \kappa$ defines a domain $\domain_{\rho}$, a finite range $\range_{\rho}$ and an efficient function $f_{\rho}: \domain_{\rho} \to \range_{\rho}$.
        \item $\samplepre(\rho, t_{i}, r) \rightarrow d$: The presampling algorithm takes in a ring $\rho$, a trapdoor $t$ and an element $r \in \range_{\rho}$.
              The ring must satisfy the size bound, i.e., $N = |\rho| \leq \kappa$, and $\exists i \in [N]$ such that $\mu(t) = \rho_i$\footnote{We assume (w.l.o.g.) that there exists a one-way function $\mu$ such that $\mu(t) = a$ for all $(a,t) \in \supp{\trapgen}$.}.
              It outputs a $d \in \domain_{\rho}$.
    \end{itemize}
    A \ac{RTDF} is $\delta(\kappa)$-correct if for any $\pp \gets \setup(\secparam)$, all $N \leq \kappa$, any $\{(a_j, t_j)\}_{j \in [N]} \subseteq \supp{\trapgen(\pp)}$ defining $\rho = \{a_1, \ldots, a_N\}$ and any $r \in \range_{\rho}$
    \[\prob{f_{\rho}(\samplepre(\rho, t_j, r)) \neq r} \leq \delta(\kappa).\]
\end{definition}

We introduce properties for \acp{RTDF} similar to those of \acp{PSF}~\cite{STOC:GenPeiVai08}, in particular, the existence of an efficient domain sampling algorithm $\sampledom(\rho)$ that takes a ring $\rho$ with $|\rho| \leq \kappa$ as input and outputs $d \in \domain_{\rho}$ satisfying the properties below.
If a \ac{RTDF} satisfies all of these properties, we also refer to it as a \ac{RPSF}.
Let $\pp \gets \setup(\secparam), \{(a_{i}, t_{i})\}_{i \in [\kappa]} \gets \trapgen(\pp)$ and $\rho = \{a_i\}_{i \in [\kappa]}$.

\begin{enumerate}
    \item \label{prop: rpsf domain uniformity}\emph{Domain sampling with uniform output}:
          Define $\rpsfdomtv$ and $\rpsfdomrenyi$
          such that for any $\rho'$ with $\rho \cap \rho' \neq \emptyset$ it holds that
          \begin{align*}
              \Delta(D, \mathcal{U}(\range_{\rho'})) \leq \rpsfdomtv \quad
              \text{ and } \quad R_\alpha(\mathcal{U}(\range_{\rho'})\| D) \leq \rpsfdomrenyi
          \end{align*}
          where $D := f_{\rho'}(\sampledom(\rho'))$.

    \item \label{prop: rpsf preimage sampling undetectable}\emph{Preimage sampling is not detectable}:
          Define three upper bounds $\rpsfpretv, \rpsfprerenyi$ and $\rpsfprekl$ such that for every $\rho'$ with $\rho \cap \rho' \neq \emptyset$ and $\forall i \in [\kappa]$ it holds that
          \begin{align*}
              \Delta(D, Q) \leq \rpsfpretv, \quad
              R_\alpha(D\| Q) \leq \rpsfprerenyi\quad \text{ and } \quad
              D_{KL}(D\| Q) \leq \rpsfprekl
          \end{align*}
          where $r \gets \range_{\rho'}$, $D:= \samplepre(\rho', t_{i}, r)$ and $Q$ is the distribution of $d \gets \sampledom(\rho')$ conditioned on $f_{\rho'}(d) = r$ for $\alpha \in (1, \infty]$.

    \item\label{prop: rpsf onewayness}\emph{One-wayness}:
          For a \ac{ppt} algorithm $\adv = (\adv_1, \adv_2)$, define the one-wayness
          \[\advantage{\oneway}{\adv, \rpsf} = \prob{
                  f_{\rho'}(d)  = r \land \rho' \subseteq \rho:
                  \begin{array}{l}
                      (\state, \rho')\gets\adv_1(\secparam, \rho) \\
                      \land r \gets \range_{\rho'}                \\
                      \land d \gets \adv_2(\state, r)
                  \end{array}}.\]

    \item\label{prop: rpsf preimage min entropy}\emph{Preimage min-entropy}:
          The min-entropy of $\sampledom(\rho')$ conditioned on $f_{\rho'}(d) = r$ for $\rho' \subseteq \rho$ is at least $\minentropy(\secpar)$ if for every $r \in \range_{\rho'}$ and every $D \in \domain_{\rho'}$
          \[\prob{D = d: d \gets \sampledom(\rho') \mid f_{\rho'}(d) = r} \leq 2^{-\minentropy(\secpar)}.\]

    \item\label{prop: rpsf collision resistance}\emph{Collision-resistance}:
          For any \ac{ppt} algorithm $\adv$, define the collision resistance
          \[\advantage{\collision}{\adv, \rpsf} \!= \!\prob{
                  f_{\rho'}(d_{1})  \!= \! f_{\rho'}(d_{2})  \! \land\! d_{1}    \! \neq \!d_{2}
                  \! \land \!\rho'   \!  \subseteq \!\rho
                  : (d_{1},\! d_{2},\! \rho')\gets\adv(\secparam, \rho)}.\]
\end{enumerate}
\label{def: def and security RS}

We construct a generic ring signature from a \ac{RPSF} as in \cref{fig: rsig from rpsf}.
Each party generates a \ac{RPSF} trapdoor ($\sk$) and public value ($\pk$).
To sign a message for a fixed ring, hash the ring and the message together with a salt $\salt $ to compute a target $h$.
We then use the \acp{RPSF} presample algorithm to get a preimage.
The signature is the preimage and the salt.
A signature is verified by recomputing the target and checking that the signature maps to the target under $f$.

\begin{figure*}
    \begin{pchstack}[boxed, center, space=0.5em]
        \procedure[linenumbering]{$\sign(\sk, \rho, \msg)$}{
            \salt \gets \bin^{\saltbits}\\
            h \gets \hash(\rho, \salt, \msg)\\
            \sigma \gets \rpsf.\samplepre(\rho, \sk, h)\\
            \pcreturn (\salt, \sigma)
        }
        \procedure[linenumbering]{$\verify(\rho, \msg, (\salt, \sigma))$}{
            h \gets \hash(\rho, \salt, \msg)\\
            \pcreturn \rpsf.f_\rho(\sigma) = h
        }
    \end{pchstack}
    \caption{Generic construction of a ring signature from a \ac{RPSF} with $k$ bits of salt. The key generation and setup algorithm of the ring signature are identical to those of $\rpsf$.}\label{fig: rsig from rpsf}
\end{figure*}

\subsection{Unforgeability via Adaptive Reprogramming}\label{subsec: Unforgeability via TAR}

The target for the preimage sampler in the signing algorithm in \cref{fig: rsig from rpsf} is computed using a hash function.
If the hash input has sufficiently high min-entropy in its inputs to $\hash$, e.g. $\lambda =\saltbits$, we can model the hash function as a quantum-accessible random oracle and use reprogramming (\cref{cor: adaptive reprogramming with other distribution}) to simulate the signing oracle for an \seufcra~attacker (\cref{cor: adaptive reprogramming with other distribution}).

\begin{theorem}\label{thm: Unf RPSF RS TAR}
    Let $\rpsf$ be a $\delta(\kappa)$-correct \ac{RPSF} with preimage-min entropy $\beta(\lambda)$, and $\rsig$ be the generic ring signature from \cref{fig: rsig from rpsf}.
    Let $\alpha_1, \alpha_2 \in (1,\infty], N \leq \kappa, k \in \NN$ and $\hash$ be modeled in the \ac{QROM}.
    For any $\seufcra$ adversary $\adv$ making at most $\nrsignqueries$ signing and $\nrhashqueries$ hash queries, there exists an adversary $\adv_{\collision}$ against the collision property of the \ac{RPSF}, and an adversary $\adv_{\oneway}$ against the one-wayness of the \acp{RPSF} s.t. $\advantage{\seufcra}{\adv, \rsig, N, \nrsignqueries}$ is bounded as in \cref{tab: seufcra bounds RPSF via TAR}.
\end{theorem}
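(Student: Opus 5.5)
We prove \cref{thm: Unf RPSF RS TAR} with a sequence of games, following the classical GPV-style argument. We replace the hash outputs used by signing via \cref{cor: adaptive reprogramming with other distribution}, and handle the forgery by a case distinction between a collision and a one-wayness break. The table of bounds comes from choosing, independently for domain sampling and for preimage sampling, whether the statistical-distance or the Rényi-divergence variant is used. This yields four combinations.

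\emph{Game 0 to Game 1 (salted reprogramming).} In Game~0 the $\seufcra$ game is run honestly. In Game~1 each signing query first samples $\salt\gets\bin^{\saltbits}$ and then resamples $\hash(\rho,\salt,\msg)$ to a fresh uniform value $h$ before calling $\samplepre$. Since $\salt$ is fresh with min-entropy $\saltbits$, \cref{thm: adaptive reprogramming} with $R=\nrsignqueries$ bounds this change by $\frac{3\nrsignqueries}{2}\sqrt{(\nrhashqueries+\nrsignqueries+1)2^{-\saltbits}}$. The query count includes the verification query of the forgery.

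\emph{Game 1 to Game 2 (domain sampling).} Instead of a uniform $h$, sample $d\gets\sampledom(\rho)$ and set $h:=f_\rho(d)$. This is $\REPRO_{1,Q}$ with $Q=f_\rho(\sampledom(\rho))$. \cref{cor: adaptive reprogramming with other distribution} then gives either an additive loss $\nrsignqueries\rpsfdomtv$ or a multiplicative factor $(\rpsfdomrenyi)^{\nrsignqueries}$ with exponent $\frac{\alpha_1-1}{\alpha_1}$. Property~\ref{prop: rpsf domain uniformity} applies because the queried ring intersects the honest keys.

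\emph{Game 2 to Game 3 (preimage sampling).} Return $d$ itself as the signature instead of $\samplepre(\rho,t_i,h)$. Conditioned on $h$, the pair $(h,\samplepre)$ becomes $(h,d)$ with $d$ drawn from $\sampledom$ conditioned on $f_\rho(d)=h$. By property~\ref{prop: rpsf preimage sampling undetectable} and \cref{lemma: renyi multiplication}, this costs $\nrsignqueries\rpsfpretv$ additively, or $(\rpsfprerenyi[\alpha_2])^{\nrsignqueries}$ multiplicatively. It also costs $\nrsignqueries\delta(\kappa)$ for correctness failures. A KL variant via \cref{lemma: KL bound} is also available.

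Both of these steps concern only classical samples handed to a classical signing interface, so the counterexample of \cref{subsec: renyi fails} does not apply. Game~3 uses no trapdoor.

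\emph{Game 3 and the reduction.} In Game~3 the adversary outputs $(\rho^*,\msg^*,\salt^*,\sigma^*)$ with $f_{\rho^*}(\sigma^*)=\hash(\rho^*,\salt^*,\msg^*)$. We distinguish two cases.

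In the first case, $(\rho^*,\salt^*,\msg^*)$ was a signing-query point. Strong unforgeability then forces $\sigma^*\neq d$ while both map to the same $h$. This yields a collision for $\adv_{\collision}$. Preimage min-entropy (property~\ref{prop: rpsf preimage min entropy}) is not needed here, but it does bound the probability that $\sigma^*=d$ by accident.

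In the second case, the hash was never programmed. Here $\adv_{\oneway}$ must embed its challenge $r$ into an unprogrammed point, and this is the main obstacle: the adversary's superposition queries give the reduction no single query to plant $r$ into. We follow the \textsc{Falcon} approach and use measure-and-reprogram (\cref{thm: measure and reprogram 2.0} with $n=1$). This lets $\adv_{\oneway}$ commit $\rho'\subseteq\rho$ and program $\Theta=r$, at a loss of $(2\nrhashqueries+1)^2$. The catch is that $\adv_{\oneway}$ must output $\rho'$ before receiving $r$, and the one-wayness game allows exactly this via $\adv_1$.

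An alternative, yielding a tighter-looking row of the table, is a small-range simulation using \cref{proposition: Renyi divergence proof with sr}. We would present the measure-and-reprogram version. Collecting the additive and multiplicative terms in the order of the games gives the nested bound. The Rényi factor $\rpsfdomrenyi$ sits outermost, since it is applied first when going back from Game~3.
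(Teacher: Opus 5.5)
Your proof is correct. Your game sequence matches the paper's; the paper does your Games~1 and~2 in a single hop, applying \cref{cor: adaptive reprogramming with other distribution} directly with reprogrammed value $f_{\rho_i}(\sigma_i)$, $\sigma_i\gets\sampledom$, and keeps the same SD/RD alternatives and nesting.

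The real difference is the endgame case split. The paper splits on the forgery's \emph{target}:
\begin{itemize}
\item If $\hash(\rho^*,\salt^*,\msg^*)$ equals some signing target $h_i$ and $\sigma^*\neq\sigma_i$, it outputs a collision.
\item If $\sigma^*=\sigma_i$, freshness forces a different hash input. That input's un-reprogrammed oracle value must then have landed in the set of at most $q_s$ programmed targets. The paper bounds this event by a transition-capacity argument, which is the $20(q_s+q_H+1)^2q_s/C$ term in $\epsilon$.
\item Only forgeries on a fresh target go to measure-and-reprogram.
\end{itemize}
You split on whether the hash \emph{input} was programmed, so both ``same target, unprogrammed input'' sub-cases go into the one-wayness reduction. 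This is legitimate. \cref{thm: measure and reprogram 2.0} holds for an arbitrary predicate. Take $V$ = ``$f_{\rho^*}(\sigma^*)=y$, $\rho^*\subseteq\rho$, and the input is not programmed''. Then every such forgery becomes a preimage of the planted $r$, whether or not its original value coincided with some $h_i$.

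Your route therefore avoids the compressed-oracle hit-target term at no extra cost, since both proofs pay the measure-and-reprogram loss anyway. The paper's split is the more direct translation of the classical argument, and it sends more forgeries through the loss-free collision reduction.

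Three small points:
\begin{itemize}
\item The extra $q_s\delta(\kappa)$ term is superfluous. Property~\ref{prop: rpsf preimage sampling undetectable} already compares $\samplepre$ against conditioned domain samples, which are always preimages; the paper notes this in a footnote. If you keep the term, your bound is not literally the tabulated one.
\item In the measure-and-reprogram step, count the queries of the whole Game~3 simulation, not just the adversary's $q_H$. The paper uses $q_s+q_H+1$, giving $(2q_s+2q_H+3)^2$. Your $(2q_H+1)^2$ needs an explicit argument that the reprogrammed oracle can be simulated with one $\hash$-query per adversarial query. It must also be consistent with your own convention that the forgery point is queried.
\item The small-range ``alternative'' does not correspond to any row of the table; the rows differ only in SD versus RD for the two switches. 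It would also need a multi-target one-wayness notion that the RPSF does not provide.
\end{itemize}
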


\begin{table*}
    \caption{Bounds on $\advantage{\seufcra}{\adv, \rsig, N, \nrsignqueries}$ in \cref{thm: Unf RPSF RS TAR} when using statistical distance (SD) or Rényi divergence (RD) where $C$ is the minimal number of elements in $\range_{\rho}$ for any honestly generated ring $\rho$,
    $\epsilon' = \frac{3\nrsignqueries}{2}\sqrt{(\nrsignqueries + \nrhashqueries + 1)\cdot \frac{1}{2^\saltbits}}$ and
    $\epsilon = \advantage{\collision}{\adv_{\collision}, \rpsf} + {20(\nrsignqueries + \nrhashqueries + 1)^2}{C^{-1}}\nrsignqueries + (2\nrsignqueries + 2\nrhashqueries + 3)^2\advantage{\oneway}{\adv_{\oneway}, \rpsf}$.
    }\label{tab: seufcra bounds RPSF via TAR}
    \centering
    \begin{tabular}{|c|c|c|}
        \hline
        \rule{0pt}{11pt} Domain & Preimage & Bound on $\advantage{\seufcra}{\adv, \rsig, N, \nrsignqueries}$                                                                                                                                     \\
        \hline
        RD                      & RD       & $\left((\rpsfdomrenyi[\alpha_1])^\nrsignqueries\left((\rpsfprerenyi[\alpha_2])^\nrsignqueries\cdot \epsilon\right)^{\frac{\alpha_2-1}{\alpha_2}}\right)^{\frac{\alpha_1 -1}{\alpha_1}} + \epsilon'$ \\
        SD                      & RD       & $\left((\rpsfprerenyi[\alpha_2])^\nrsignqueries\cdot \epsilon\right)^{\frac{\alpha_2-1}{\alpha_2}} + \nrsignqueries \rpsfdomtv+ \epsilon'$                                                          \\
        RD                      & SD       & $\left((\rpsfdomrenyi[\alpha_1])^\nrsignqueries\left(\epsilon + \nrsignqueries\rpsfpretv\right)\right)^{\frac{\alpha_1 -1}{\alpha_1}} + \epsilon'$                                                  \\
        SD                      & SD       & $\nrsignqueries(\rpsfdomtv + \rpsfpretv) + \epsilon + \epsilon'$                                                                                                                                    \\\hline
    \end{tabular}
\end{table*}

\begin{proof}
    The correctness is $\delta(\kappa)$ and follows by construction.
    The strategy for the $\seufcra$ proof is as follows.
    The random oracle is reprogrammed at every signature query.
    The hash inputs are domain-separated for different rings.
    We can reprogram the hash value to a fresh uniform random value, which we then in turn can replace by sampling a value $\sigma$ in the domain using $\sampledom$ and using $f_\rho(\sigma)$. Now, the signature $\sigma$ is freshly uniformly sampled and thus does not depend on the private signing keys anymore,
    so we can transition to the $\eufnra$ game.
    Finally, a forgery breaks the one-wayness of the underlying \ac{RPSF}.

    \noindent$\game{0}$: This is the original game, so here we have
    \[\advantage{\seufcra}{\adv, \rsig, N, \nrsignqueries} = \prob{\game{0}^{\adv}}.\]
    $\adv$ eventually outputs a forgery $(\rho^*, (\salt^*, \sigma^*), m^*)$, and we assume that the adversary made a classical query on $(\rho^*, \salt^*, m^*)$.
    This adds one query to $\nrhashqueries$.

    \noindent$\game{1}$:
    The hash value in signature query $i$ for $\rho_i$ is reprogrammed to $f_{\rho_i}(\sigma_i)$, where $\sigma_i \gets \sampledom_{\rho_i}(\secparam)$.
    We use Property \ref{prop: rpsf domain uniformity} and \cref{cor: adaptive reprogramming with other distribution} to get
    \begin{align*}
        \left|\prob{\game{0}^\adv} - \prob{\game{1}^\adv}\right| & \leq \nrsignqueries\cdot \rpsfdomtv + \epsilon'                                                                             \\
        \prob{\game{0}^\adv}                                     & \leq \left((\rpsfdomrenyi[\alpha_1])^{\nrsignqueries}\prob{\game{1}^\adv}\right)^{\frac{\alpha_1 -1}{\alpha_1}} + \epsilon'
    \end{align*}
    where $\epsilon' = \frac{3\nrsignqueries}{2}\sqrt{(\nrsignqueries + \nrhashqueries + 1)\cdot p_{\max}}$ and $\rho' \cap \rho \neq \emptyset$.
    The hash input has at least $\saltbits$ bits of min-entropy from \salt, so $p_{\max} \leq \frac{1}{2}^{\saltbits}$.

    \noindent$\game{2}$: Each signature is replaced by $\sigma_i$.\footnote{This step implicitly contains the correctness properties of $\rpsf$.
        The simulated signatures are all valid, but the ones generated with $\samplepre$ might not be actual preimages.
        This difference in distribution is, however, captured within Property~\ref{prop: rpsf preimage sampling undetectable}, as the property already considers that $\samplepre$ might not output correct preimages.
    }
    This is undetected due to Property \ref{prop: rpsf preimage sampling undetectable}
    \begin{align*}
        \left|\prob{\game{1}^\adv} - \prob{\game{2}^\adv}\right| & \leq \nrsignqueries\cdot \rpsfpretv                                                                              \\
        \prob{\game{1}^\adv}                                     & \leq \left((\rpsfprerenyi[\alpha_2])^{\nrsignqueries}\prob{\game{2}^\adv}\right)^{\frac{\alpha_2 -1}{\alpha_2}}.
    \end{align*}

    We reduce $\game{2}^\adv$ to the properties of the underlying \ac{RPSF}.
    If the final forgery $(\rho^*, (\salt^*, \sigma^*), m^*)$ uses the same target as a previous signature query on $(m, \salt, \rho)$, they define two preimages for the same target.
    If the signatures are the same, then $(m^*,\salt^*) \neq (m,\salt)$,
    and $\adv$'s forgery query hits one of the targets fixed by the signing queries.
    Let $T \subseteq \range_\rho$ denote the set of these targets, so $|T| \le \nrsignqueries$.
    The forgery query lies outside the reprogrammed positions, so its value is uniform, and $\prob{\gamma(U) \in T} \le \nrsignqueries \cdot \gamma_{cl}/|\mathcal{Y}|$ for any fixed $T$.
    Substituting this for $\prob{U \in L^{x,D}} \le s \cdot \gamma_{cl}/|\mathcal{Y}|$
    in the proof of \cref{lemma: transition collision bound} gives a transition capacity of
    $\sqrt{10}\sqrt{\nrsignqueries \cdot \gamma_{cl}/|\mathcal{Y}|}$, which is independent of $s$ and holds at every step.
    Summing over the $\nrsignqueries + \nrhashqueries + 1$ queries and squaring as in the proof of
    \cref{lemma: probability hash collision qrom} bounds the probability that the database contains
    such a query by
    $10(\nrsignqueries + \nrhashqueries + 1)^2 \cdot \nrsignqueries \cdot \gamma_{cl}/|\mathcal{Y}|
        \le 20(\nrsignqueries + \nrhashqueries + 1)^2\nrsignqueries/C$\footnote{The hash function maps directly into
        $\range_\rho$, and hash inputs are domain-separated between rings. Note that
        $\gamma$ can be chosen such that hashing into $\mathcal{Y}$ and mapping to
        $\range_\rho$ satisfies $\gamma_{cl}/|\mathcal{Y}| \le 2/C$ whenever
        $|\mathcal{Y}| \ge C$.}.

    If $\adv$ uses a different uniform target, we reduce to one-wayness by replacing the hash query from the final forgery by a uniform value using measure-and-reprogram.
    This incurs a multiplicative loss of $(2\nrsignqueries + 2\nrhashqueries + 3)^2$.
    In total:
    \[\prob{\game{2}^\adv} \!\leq\! \advantage{\collision}{\adv, \rpsf} \!+\! \frac
        {20(\nrsignqueries + \nrhashqueries + 1)^2\nrsignqueries}{C} + (2\nrsignqueries + 2\nrhashqueries + 3)^2\advantage{\oneway}{\adv, \rpsf}.\]
    The four combinations of transitions between $\game{0}$ to $\game{1}$ and $\game{1}$ to $\game{2}$ yield the four rows of \cref{tab: seufcra bounds RPSF via TAR}.
\end{proof}

Note that removing the salt does not yield a construction provably with $\weufcra$ or $\weufcraone$ security.
The salt is in fact required for the proof technique itself, as the adaptive reprogramming needs the additional min-entropy.

\paragraph{Comparison to the \ac{ROM}.}\label{par:RPSF Proof ROM}
The proof in the \ac{ROM} is structurally the same.
The terms for adaptive reprogramming and \ac{QROM} collisions are replaced by terms for salt and \ac{ROM} collisions.
Both are only additive, but the salt length $\saltbits$ and size of the range can be made smaller.
Lastly, the one-wayness challenge can be included by guessing of one of the $\nrhashqueries + 1$ hash queries and include the challenge in there.
This is roughly a multiplicative difference of $\nrhashqueries$ assuming that $\nrsignqueries \leq \nrhashqueries$.

\subsection{Unforgeability via History-Free Proofs}\label{subsec: History Free Proofs}
If adding a salt is too costly, we can use the history-free reduction technique from \cite{AC:BDFLSZ11}.
Instead of targeted reprogramming, this technique replaces all hash outputs by the RPSF applied to random domain samples, i.e., it switches to a different i.i.d.~function and the results from \cref{thm: Oracle Distribution Switching using Statistical Distance} and \cref{subsec: Oracle Switching using Rényi Divergence} apply.
Hence, we cannot use the Rényi divergence properties in Property~\ref{prop: rpsf domain uniformity}.

We can only show $\weufcraone$ security of the ring signature without the salt.

\begin{theorem}\label{thm: Unf RPSF RS HF}
    Let $\rpsf$ be a $\delta(\kappa)$-correct \ac{RPSF} with preimage-min entropy $\beta(\lambda)$, and $\rsig$ be the generic ring signature from \cref{fig: rsig from rpsf}.
    Let $\alpha \in (1,\infty], N \leq \kappa, \saltbits = 0$ and $\hash$ be modeled in the \ac{QROM}.
    For any $\weufcraone$ adversary $\adv$ making at most $\nrsignqueries$ signing and $\nrhashqueries$ hash queries, there exists an adversary $\adv_{\collision}$ against the collision property of the \ac{RPSF} such that $\advantage{\weufcraone}{\adv, \rsig, N, \nrsignqueries}$ is bounded as in \cref{tab: weufcraone bounds RPSF via HF}.
\end{theorem}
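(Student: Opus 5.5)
We follow the history-free template of \cite{AC:BDFLSZ11}, adapted to rings. The key idea is to simulate the random oracle without any reprogramming. We fix a private random function $\hash'$ (simulated by a $2(\nrhashqueries+\nrsignqueries+1)$-wise independent function or a PRF, so no oracle state is kept). We then answer every hash query on $(\rho,\msg)$ by $f_\rho(\sampledom(\rho;\hash'(\rho,\msg)))$, and answer a signing query on $(\rho,\msg)$ by returning $\sigma=\sampledom(\rho;\hash'(\rho,\msg))$. Because $\saltbits=0$ and signatures under $\weufcraone$ are queried at most once per ring--message pair, the answer to the signing query is a deterministic function of $\hash'(\rho,\msg)$. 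This makes the simulation history-free, so the classical argument lifts to quantum access, where the simulated oracle is queried in superposition.

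\noindent$\game{0}$: This is the real $\weufcraone$ game with $\hash$ uniform into $\range_\rho$ (domain-separated per ring).

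\noindent$\game{1}$: We replace $\hash$ by the i.i.d.\ function $x\mapsto f_\rho(\sampledom(\rho))$ and answer signing queries with $\samplepre$ on these outputs. Since every output is switched, the Rényi route is ruled out by \cref{subsec: renyi fails}. We therefore invoke \cref{thm: Oracle Distribution Switching using Statistical Distance} with Property~\ref{prop: rpsf domain uniformity}, which costs $8(\nrhashqueries+\nrsignqueries+1)\sqrt{\rpsfdomtv}$. To make the argument clean we view the oracle as sampling a pair $(d,f_\rho(d))$ per input, with the adversary seeing only the second coordinate.

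\noindent$\game{2}$: We replace each signature $\samplepre(\rho,t_i,\hash(\rho,\msg))$ by the stored domain sample $d$ underlying $\hash(\rho,\msg)$. Conditioned on $f_\rho(d)=r$, $d$ is distributed as $Q$ in Property~\ref{prop: rpsf preimage sampling undetectable}, and the target $r$ is close to uniform. This per-query change is made on classically known, classically revealed values, namely the signing answers. This is where the three options of \cref{tab: weufcraone bounds RPSF via HF} arise: statistical distance gives $\nrsignqueries\rpsfpretv$, the Rényi divergence gives the multiplicative factor $((\rpsfprerenyi)^{\nrsignqueries}\cdot)^{\frac{\alpha-1}{\alpha}}$ via \cref{lemma: renyi divergence,lemma: renyi multiplication}, and KL gives $\sqrt{\nrsignqueries\rpsfprekl/2}$ via \cref{lemma: KL bound}. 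The point needing care is that these classical divergence bounds still apply although the adversary also queries $\hash$ quantumly. We handle this by treating the whole quantum interaction as post-processing of the classical table of pairs $(d_x, \text{signature}_x)$ and applying data processing per row. Since only rows queried to the signing oracle change, the product runs over at most $\nrsignqueries$ factors.

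\noindent Reduction: In $\game{2}$ no trapdoor is used, so $\adv_{\collision}$ runs the whole game itself. When $\adv$ outputs a valid forgery $(\rho^*,\msg^*,\sigma^*)$ on a fresh pair, $\adv_{\collision}$ recomputes $d^*=\sampledom(\rho^*;\hash'(\rho^*,\msg^*))$. Then $f_{\rho^*}(\sigma^*)=f_{\rho^*}(d^*)$, and $d^*$ was never revealed to $\adv$. By preimage min-entropy (Property~\ref{prop: rpsf preimage min entropy}), $\sigma^*=d^*$ occurs with probability at most $2^{-\minentropy}$, so $\prob{\game{2}^\adv}\le\advantage{\collision}{\adv_{\collision},\rpsf}+2^{-\minentropy}$.

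The main obstacle is this last step in the quantum setting. The adversary has superposition access to an oracle that is a function of the $d$'s, so we must argue that the conditional distribution of $d^*$ given $\adv$'s entire view is still the preimage-conditioned $\sampledom$ distribution. We would handle this as in \cite{AC:BDFLSZ11} by resampling $d_x$ given $f_\rho(d_x)$ independently at the end, which is undetectable because the oracle depends only on $f_\rho(d_x)$. The forgery must also satisfy $\rho^*\subseteq\rho$, as required by the collision game.
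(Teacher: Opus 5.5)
Your proposal is correct and is essentially the paper's proof. The paper plugs exactly your simulation ($\RAND$ as $f_{\rho'}\circ\sampledom$ on a private oracle $\oracle_c$, $\SIGN$ returning the same domain sample, $\FINISH$ outputting $(\sampledom_{\rho^*}(\oracle_c(\rho^*,\msg^*)),\sigma^*)$) into its generic history-free lifting theorem (\cref{thm: ring signature history free}), with the same hops and costs: $8(\nrsignqueries+\nrhashqueries)\sqrt{\rpsfdomtv}$, then $\nrsignqueries\rpsfpretv$ or $(\rpsfprerenyi)^{\nrsignqueries}$, then $2^{-\beta(\lambda)}$. You unroll those hops directly instead, and you are more careful than the paper about the final min-entropy step; your extra $+1$ verification query only adds $8\sqrt{\rpsfdomtv}$ to the first term.

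One small fix is needed in the Rényi hop. Run it on the pairs $(f_\rho(d_x),\text{signature}_x)$, not $(d_x,\text{signature}_x)$: the latter have infinite divergence, since in $\game{2}$ the signature equals $d_x$. Also obtain the $\nrsignqueries$-fold product by adaptive composition conditioned on the hash table, rather than by data processing over a table indexed by all inputs.
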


\begin{table}
    \caption{Bounds on
        $\advantage{\weufcraone}{\adv, \rsig, N, \nrsignqueries}$ in
        \cref{thm: Unf RPSF RS HF} when using statistical distance (SD) or
        Rényi divergence (RD) where $\epsilon = 8(\nrsignqueries + \nrhashqueries)(\rpsfdomtv)^{1/2}$.
    }\label{tab: weufcraone bounds RPSF via HF}
    \centering
    \begin{tabular}{|c|c|}
        \hline
        \rule{0pt}{11pt} Preimage & Bound on
        $\advantage{\weufcraone}{\adv, \rsig, N, \nrsignqueries}$                              \\[2pt]
        \hline
        SD                        & $\epsilon + \nrsignqueries\rpsfpretv + 2^{-\beta(\lambda)}
        + \advantage{\collision}{\adv_{\collision}, \rpsf}$                                    \\
        RD                        & $\epsilon + \left[(\rpsfprerenyi)^{\nrsignqueries}
                \left(2^{-\beta(\lambda)} +
                \advantage{\collision}{\adv_{\collision}, \rpsf}
        \right)\right]^{\frac{\alpha-1}{\alpha}}$                                              \\
        \hline
    \end{tabular}
\end{table}

We prove the \ac{QROM} security of this generic ring signature via a history-free reduction.
This includes the history-free simulation of the random oracle and signing oracle, ensuring consistency.
History-free reductions were initially introduced in \cite{AC:BDFLSZ11} for signature schemes.
We extend the concept to ring signatures (see \ifthenelse{\boolean{shortver}}{the full version of this work \cite[C]{cryptoeprint:2026/293}}{Appendix \ref{sec: history free ring signature}}).
We augment the formalism to support explicit security bounds and use \cref{thm: Oracle Distribution Switching using Statistical Distance} to bound the detectability of switching the oracle distribution.
The adaptations and the formal proof can be found in \ifthenelse{\boolean{shortver}}{the full version of this work \cite[C]{cryptoeprint:2026/293}}{Appendix~\ref{sec: history free ring signature}}.

The reduction strategy can be sketched as follows.
The random oracle is simulated by using a different, private random oracle $\mathsf{O}_c$ to generate randomness for $\sampledom$, using it to ``sample'' $\sigma$ in the domain, and then mapping $\sigma$ to the range using $f$.
Then, \cref{thm: Oracle Distribution Switching using Statistical Distance} can be applied.
The signing queries can be simulated by using $\mathsf{O}_c$ and $\sampledom$, which is undetectable due to the preimage sampling being undetectable.
A valid forgery yields a collision for the \ac{RPSF} with high probability due to the min-entropy of the domain sampling.

\subsection{Anonymity}\label{subsec: Anonymity RS}

The anonymity proof uses the same ideas as \cite[Theorem 1]{C:GajJanKil24}.
The strategy is independent of the modelling of the hash oracle, so it applies to the \ac{QROM}.

\begin{corollary}\label{cor: Anon RPSF RS}
    Let $\rpsf$ be a $\delta(\kappa)$-correct \ac{RPSF}, and $\rsig$ be the generic ring signature from \cref{fig: rsig from rpsf}.
    Let $\hash$ be modeled in the \ac{QROM}.
    For any $\anon$ adversary $\adv$ making at most $\nrchalqueries$ challenge queries
        {\allowdisplaybreaks\begin{align}
                \label{eq: den reduction RPSF TV}
                 & \advantage{\anon}{\adv, \rsig, N, \nrchalqueries}  \leq \nrchalqueries\rpsfpretv,                       \\
                \label{eq: den reduction RPSF KL}
                 & \advantage{\anon}{\adv, \rsig, N, \nrchalqueries}  \leq \sqrt{\left(\nrchalqueries\rpsfprekl\right)/2}.
            \end{align}}
\end{corollary}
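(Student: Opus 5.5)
We prove the corollary with a hybrid argument over the $\nrchalqueries$ challenge queries. The idea is to replace every challenge signature, whichever member's trapdoor produced it, by a signature that does not depend on any secret key. In the anonymity game the adversary gets (possibly all) secret keys and submits challenge queries $(\rho, \msg, i_0, i_1)$. It receives $\sign(t_{i_b}, \rho, \msg)$, i.e.\ a salt, the target $h = \hash(\rho, \salt, \msg)$, and $\sigma \gets \samplepre(\rho, t_{i_b}, h)$.

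We define a game $\game{1}$ in which every challenge query instead returns $\sigma$ drawn from $\sampledom(\rho)$ conditioned on $f_\rho(\sigma) = h$. This is the distribution $Q$ of Property~\ref{prop: rpsf preimage sampling undetectable}. It is independent of $b$, so the advantage in $\game{1}$ is $0$. The conditional sampling need not be efficient: anonymity is a statistical notion, and the bound only has to hold information-theoretically.

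The key observation concerns the quantum oracle. Property~\ref{prop: rpsf preimage sampling undetectable} is stated for a uniform target $r \gets \range_{\rho}$, and $h$ is indeed a uniformly random value of the random oracle. Still, we never switch the oracle's distribution, and every challenge query is just a classical post-processing step applied to $h$. The comparison is therefore between the joint distributions of all classical samples used in the challenge answers.

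Concretely, we purify the random oracle: view the adversary together with the oracle as a fixed quantum process, and treat the tuples $(h_j, \sigma_j)$ of the challenge answers as classical samples fed into it. We then use the following coupling. Sampling $r$ uniformly and then $\samplepre(\rho, t_i, r)$ gives a pair $(r, d)$. Because $r$ is uniform in both experiments and $D_{KL}$, $\Delta$ and $R_\alpha$ are stated for the marginal over $d$ with $r$ uniform, we apply the property to the joint pair, with the target recoverable (up to correctness) as $f_\rho(d)$.

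The subtlety here is that the adversary may have queried $\hash$ at $(\rho, \salt, \msg)$ beforehand. It therefore holds quantum information correlated with $h$, so $h$ is not fresh from its view. We handle this by conditioning on the value $h$: for each fixed $h$, the adversary's state is fixed. The property then has to hold on average over uniform $h$, which is exactly how it is phrased. Taking the expectation over $h$ inside the divergence (by joint convexity of $\Delta$ and $D_{KL}$) gives a per-query loss of $\rpsfpretv$, respectively $\rpsfprekl$.

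Over $\nrchalqueries$ queries, a standard hybrid (triangle inequality) yields~\eqref{eq: den reduction RPSF TV}. For~\eqref{eq: den reduction RPSF KL}, we use that the adaptive composition of the $\nrchalqueries$ per-query distributions has KL-divergence at most $\nrchalqueries\rpsfprekl$, by the chain rule. \cref{lemma: KL bound} then gives advantage at most $\sqrt{\nrchalqueries\rpsfprekl/2}$. The two members $i_0, i_1$ are both compared to the common $Q$; absorbing the factor two into the definition of advantage (or into the loose constant) matches the stated bounds.

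The main obstacle is making rigorous that the quantum access to $\hash$ does not break the classical per-sample argument. This is settled by the conditioning-on-$h$ argument above, since only the challenge outputs are changed and $\hash$ itself is never touched.
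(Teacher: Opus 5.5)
Your skeleton is the paper's: replace every challenge answer by conditional domain sampling, note that the final game is independent of $b$, and pay $\rpsfpretv$ per query via a hybrid, resp.\ $\nrchalqueries\rpsfprekl$ in total via the chain rule and \cref{lemma: KL bound}. What fails is your treatment of the random oracle. Write $Q_h$ for conditional domain sampling with target $h$. The loss in the $j$-th hybrid is $\mathbb{E}\bigl[\Delta(\samplepre(\rho'_j, t_{i_b}, h_j), Q_{h_j})\bigr]$, with the expectation taken over the \emph{actual} game distribution of $h_j = \hash(\rho'_j, \salt_j, \msg_j)$.

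This distribution need not be uniform, because $(\rho'_j,\msg_j)$ is chosen by an adversary that has queried $\hash$. With $\saltbits = 0$ (the corollary assumes nothing about the salt), and to a lesser degree for any salt length, the adversary can search for a message whose hash lands in a set of targets on which $\samplepre$ deviates strongly from $Q_r$. A bound that holds only on average over uniform $r$ then says nothing about the per-query loss.

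Conditioning on $h$ does not repair this. The adversary's state and its choice of query depend on all of $\hash$, not only on $h$. Joint convexity only bounds the divergence of the mixture by the average divergence, not conversely. Your coupling through $f_\rho(d)$ also leaks. It only gives $\mathbb{E}_{r}\Delta(\samplepre(\rho',t_i,r),Q_r)\le \rpsfpretv+\delta(\kappa)$, an additive term absent from the statement. For KL the joint divergence is even infinite as soon as $\samplepre$ can output a non-preimage, since $Q_r$ never does.

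The missing idea is to use Property~\ref{prop: rpsf preimage sampling undetectable} as a bound valid for \emph{every} target $r\in\range_{\rho'}$. This is how the paper's two-line hybrid implicitly uses it, and it is what the \textsc{Gandalf} instantiation in \cref{subsec: Application to Gandalf} actually establishes, since the NTRU presampler bound holds for an arbitrary syndrome. With that reading the QROM needs no special argument. Fix the whole function $\hash$, the bit $b$, and the classical history up to the $j$-th challenge query. The $j$-th answer is then a classical sample from $\samplepre(\rho'_j, t_{i_b}, h_j)$ resp.\ $Q_{h_j}$. Its statistical distance is at most $\rpsfpretv$ (KL at most $\rpsfprekl$) whatever the adversary knows about $h_j$, and your hybrid and chain-rule steps go through verbatim.

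If you want to keep the average-case reading, you must instead use the salt's min-entropy and an adaptive-reprogramming hop (\cref{thm: adaptive reprogramming}) to make each $h_j$ fresh and uniform. That costs an additive term the corollary does not contain.

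There is also no factor two to absorb. Running the hybrid on the game with random $b$ compares $\Pr[b=b']$ directly with exactly $1/2$.
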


In the anonymity game, we simulate the $\chal$ oracle using conditional preimage sampling, rendering the signature manifestly independent of the signer and thus the $\chal$ oracle independent of $b$.
The change in output behavior of $\adv$ can be measured using the statistical distance or the Kullback-Leibler divergence.

\begin{proof}
    Replace line 3 in the signing algorithm of \cref{fig: rsig from rpsf} by the conditional preimage sampling, i.e., $d \gets \sampledom_{\rho}(\secparam)$ conditioned on $f_{\rho}(d) = h$.
    In the anonymity game, observe that the signature is independent of $b$, so the adversary can only output the correct $b$ with probability $1/2$.
    It remains to consider how to bound the difference between the original and the modified signing oracle.

    To get \cref{eq: den reduction RPSF TV}, we use the statistical distance $\rpsfpretv$.
    For both $b=0$ or $b=1$, the statistical distance for each query is $\rpsfpretv$, and with $\nrchalqueries$, the overall difference in winning in the unmodified and the modified game is at most $\nrchalqueries \rpsfpretv$.

    For \cref{eq: den reduction RPSF KL}, we use the Kullback-Leibler divergence.
    For both $b=0$ and $b=1$, the only change is in the signing oracle, and the difference between signing oracle and the domain sampling is given by the Kullback-Leibler divergence with $\rpsfprekl$.
    By application of \cref{lemma: KL bound} for $\nrchalqueries$ queries, we get the desired relation.
\end{proof} \section{AOS Ring Signatures}\label{sec: AOS Ring Signatures}

Abe et al.~\cite{AC:AbeOhkSuz02} provide simple ring signatures from any $\Sigma$-protocol using a circular version of the Fiat-Shamir transform~\cite{C:FiaSha86}, the AOS-transform.
It is used in several linear-size ring signatures~\cite{CiC:BorLaiLer24,cryptoeprint:2025/1090}.
Its security was recently proven by Yuen et al.~\cite{C:YELAD21} for canonical verification and established for general AOS-like ring signatures by Borin et al.~\cite{CiC:BorLaiLer24}.
However, both proofs are in the \ac{ROM} only.

The specification of AOS ring signatures is in \cref{fig: AOS Ring Signature}.
Given a $\Sigma$-protocol, we refer to the ring signature obtained using the AOS-framework as $\AOS(\Sigma)$.
\begin{figure*}
    \centering
    \begin{pcvstack}[boxed, space=0.5em,]
        \begin{pchstack}[space=0.5em]
            \procedure[linenumbering]{$\sign(\sk = w, \rho = \{\instance_{1}, \ldots, \instance_{N}\}, \msg)$}{
                \pclinecomment{let $i\in [N]$ be the index s.t. $\mu(w) = \instance_i$}\\
                (\com_{i}, \state_{i}) \gets \Sigma.\prover_{1}(\instance_{i})\\
                \pcfor j = i+1, \ldots, N, 1, \ldots, i-1\\
                \t \ch_{j} \gets \hash(j, \rho, \com_{j-1}, \msg)\\
                \t (\com_{j}, \rsp_{j}) \gets \simulator(\instance_{j}, \ch_{j})\\
                \ch_{i} \gets \hash(i, \rho, \com_{i-1}, \msg)\\
                \rsp \gets \Sigma.\prover_{2}(\state, w, \ch_{i})\\
                \sigma \gets ((\com_{1},\rsp_{1}), \ldots, (\com_{N},\rsp_{N}))\\
                \pcreturn \sigma
            }

            \procedure[linenumbering]{$\verify(\rho = \{\instance_{1}, \ldots, \instance_{N}\}, \msg, \sigma)$}{
                \ch_{1} = \hash(1, \rho, \com_{N}, \msg)\\
                \pcfor j = 1, \ldots, N\\
                \t \pcif \neg \ID.\verifier_{2}(\instance_{j}, \com_{j}, \ch_{j}, \rsp_{j})\\
                \t \t \pcreturn \false\\
                \t \ch_{j+1} \gets \hash(j+1, \rho, \com_{j}, \msg)\\
                \pcreturn \true
            }
        \end{pchstack}
    \end{pcvstack}
    \caption{Construction of a ring signature $\AOS(\Sigma)$ from a $\mathsf{\Sigma}$-protocol $\Sigma$ and its \ac{HVZK} simulator $\simulator$ using \cite{AC:AbeOhkSuz02}.
        The construction can be modified if $\Sigma$ has commitment recoverability.
        In the modified version, $\ch_1$ can be sent instead of all the commitments, and the commitments and challenges are computed on the fly.
        Finally, there would be a consistency check, if $\ch_1$ is the challenge that can be recovered from the last commitment.
        $\setup$ only fixes the maximal ring size and $\kgen$ is identical to $\Sigma.\gen$.
    }\label{fig: AOS Ring Signature}
\end{figure*}
We will show $\seufcra$ security of AOS ring signatures in the \ac{QROM} to complement our results on the \ac{RPSF}-based  \ac{RSS}.
The techniques we use are unrelated to the ones developed in \cref{sec: Quantum Oracle Distribution Switching}.
First, we use \cref{thm: adaptive reprogramming} to relate the $\seufcra$ security to the $\eufnra$ security.
In \cref{subsec: General Strategy}, we employ the measure-and-reprogram technique from \cref{thm: measure and reprogram 2.0} to generically relate the $\eufnra$ security to impersonation security of the $\Sigma$-protocol.
For $C\&O$ $\Sigma$-protocols, we give a tighter alternative in \cref{subsec: Commit and Open}, generalizing the approach from \cite{C:DFMS22}.
In this section, we assume perfect correctness for simplicity, but the technique also works for imperfect correctness.

\subsection{\seufcra~to \eufnra}\label{subsec: seufcra to ufnra}

\begin{restatable}{theorem}{AOSSUFCRAtoUFNRA}\label{thm: UFCRA to UFNRA}
    Let $\Sigma$ be a special-sound $\mathsf{\Sigma}$-protocol with \ac{HVZK} simulator $\simulator$ and simulator commitment min-entropy\footnote{If the min-entropy is too small, we can either add commitment entropy by appending a random string, or we can instantiate the AOS construction with a \emph{salted} \ac{RO}, resulting in a higher effective commitment min-entropy.} $\beta(\lambda)$.
    Consider the AOS \ac{RSS} with maximal ring size $\kappa = \kappa(\lambda) \geq N$.
    For any $\seufcra$ adversary $\adv$ making at most $\nrsignqueries$ signing queries and $\anon$ adversary $\bdv$ making at most $\nrchalqueries$ challenge queries, each making at most $\nrhashqueries$ quantum queries to the \ac{RO} $\hash$, there exists a $\eufnra$ adversary $\adv_{\eufnra}$, a $\cur$ adversary $\adv_{\cur}$ and a $\wrec$ adversary $\adv_{\wrec}$
    such that $\advantage{\seufcra}{\adv, \AOS(\Sigma), N,
            \nrsignqueries}$ and $\advantage{\anon}{\bdv, \AOS(\Sigma), N,
            \nrchalqueries}$ are bounded as in \cref{tab: aos bounds} and the parameters are defined as in \cref{tab: aos params}.
\end{restatable}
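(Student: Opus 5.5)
The plan is a short sequence of game hops in which the signing oracle (resp.\ the $\chal$ oracle) is replaced by a simulation that never touches a secret key. The quantum ingredient is \cref{thm: adaptive reprogramming}; the remaining steps are classical hybrid arguments over the $\nrsignqueries$ (resp.\ $\nrchalqueries$) oracle calls.

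\textbf{Step 1 (reprogramming the signer's link).} In an honest signature by member $i$, the only value not produced by $\simulator$ is $\ch_i = \hash(i,\rho,\com_{i-1},\msg)$, paired with the honest transcript $(\com_i,\rsp_i)$. I first rewrite the signer equivalently:
\begin{itemize}
\item it samples $\ch_i$ uniformly from $\mathcal{C}$;
\item it runs the chain $i+1,\ldots,i-1$ exactly as in \cref{fig: AOS Ring Signature};
\item it reprograms $\hash(i,\rho,\com_{i-1},\msg):=\ch_i$.
\end{itemize}
The reprogrammed input contains $\com_{i-1}$, which is output by $\simulator$ on a fresh challenge and so has min-entropy at least $\beta(\lambda)$. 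Hence \cref{thm: adaptive reprogramming} bounds this hop by $\frac{3\nrsignqueries}{2}\sqrt{(\nrhashqueries+N\nrsignqueries+1)\,2^{-\beta(\lambda)}}$. The query count includes the $N$ classical hash evaluations made inside each signing query.

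One subtlety is that the prover's commitment $\com_i$ is queried inside the chain (as the input for index $i+1$) before the reprogramming. I therefore reprogram the link whose input is a simulated commitment, not the honest one. This is where the ordering of the reprogramming instructions must be handled carefully, so that each instruction is issued with a fresh distribution $p$ as \cref{thm: adaptive reprogramming} requires.

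\textbf{Step 2 (HVZK).} Since $\ch_i$ is now uniform and independent of the oracle, the honest pair $(\com_i,\rsp_i)$ given $\ch_i$ can be replaced by $\simulator(\instance_i,\ch_i)$. This costs the HVZK error once per signing query, giving $\nrsignqueries\hvzktv$.

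\textbf{Step 3 (strong forgeries).} After Step 2, signatures are computed from public data plus reprogramming alone, so a forger becomes an $\eufnra$ adversary $\adv_{\eufnra}$ that simulates the signing oracle itself. The gap between strong and plain unforgeability is a forgery $(\rho^*,\msg^*,\sigma^*)$ where $(\rho^*,\msg^*)$ was signed but $\sigma^*$ differs from the returned signature.
\begin{itemize}
\item If $\sigma^*$ agrees with the returned signature on all commitments, the challenges coincide. Some $\rsp_j$ must then differ, and I build $\adv_{\cur}$ against computationally unique responses.
\item If some commitment differs, then a hash input in the cycle is new. I would either reduce this case to $\adv_{\eufnra}$ directly, using domain separation by $(j,\rho,\msg)$, or use $\adv_{\wrec}$ to cover the case where a simulated transcript is completed inconsistently with a recoverable witness.
\end{itemize}
I expect this case split to be the main obstacle. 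The reprogrammed points are fixed in the database, so I must show that a new valid cycle partially reusing them still yields a non-reprogrammed query that the $\eufnra$ reduction can exploit.

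\textbf{Anonymity.} The same two hops are applied to the $\chal$ oracle: reprogram one link, then replace the one honest transcript by $\simulator$. After this the challenge signature is identical for $b=0$ and $b=1$ and never uses any (possibly exposed) key, giving
\[\advantage{\anon}{\bdv, \AOS(\Sigma), N, \nrchalqueries}\le \nrchalqueries\hvzktv+\tfrac{3\nrchalqueries}{2}\sqrt{(\nrhashqueries+N\nrchalqueries+1)2^{-\beta(\lambda)}},\]
possibly with a factor $2$ for the two values of $b$. Under full key exposure the adversary knows every witness, so I do not expect $\wrec$ to be needed here.
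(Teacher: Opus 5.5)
Your Steps 1 and 2 match the paper's $\game{1}$ and $\game{2}$. Step 3, however, has a genuine gap, and it is exactly the part the paper works hardest on. The $\eufnra$ challenger verifies the forgery against the \emph{original} oracle. So $\adv_{\eufnra}$ loses whenever the forgery's verification passes through a reprogrammed link $(i,\rho',\com'_{i-1},\msg)$ of some signing query, i.e.\ whenever $(\rho^*,\msg^*)=(\rho',\msg)$ and $\com^*_{i-1}=\com'_{i-1}$ for that query's signer index $i$. Your case ``some commitment differs'' does not exclude this: the forgery can reuse $\com'_{i-1}$ while changing other positions. Domain separation by $(j,\rho,\msg)$ does not help either, because the forgery shares $(\rho,\msg)$ with the signing query. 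What must be bounded is the probability that the forgery uses a reprogrammed link at all. This is the term $\epsilon'_{repr}$ in \cref{tab: aos params}, and it is absent from your bound.

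The missing idea is to walk the ring backwards from that link; this is where the theorem's special-soundness hypothesis, unused in your proposal, enters.
\begin{itemize}
\item If $\com^*_{i-1}=\com'_{i-1}$ but $\ch^*_{i-1}\neq\ch'_{i-1}$, then member $i-1$ has two accepting transcripts with the same commitment and different challenges: the simulated one returned by the signing oracle and the forger's. Special soundness yields a witness, giving $\adv_{\wrec}$ at a factor $N$ for guessing the index. This reduction is possible because in $\game{2}$ the signing oracle needs no witnesses.
\item If $\ch^*_{i-1}=\ch'_{i-1}$, then either $\com^*_{i-2}=\com'_{i-2}$ and you repeat one position earlier, or two distinct inputs collide under the challenge hash. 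The compressed-oracle collision bound (\cref{lemma: probability hash collision qrom}) controls the latter by $\hashcollisionqrom$.
\item If there is no collision and all commitments, hence all challenges, agree around the ring, the signatures differ only in some response. This gives $\adv_{\cur}$, and your CUR case is exactly this last step.
\end{itemize}

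Two minor points. First, signing rings $\rho'$ may have size up to $\kappa$ and contain adversarial keys, so the reprogramming lemma must be applied with $\nrhashqueries+\kappa\nrsignqueries+N$ queries, not $\nrhashqueries+N\nrsignqueries+1$. Second, you do not address the R\'enyi row for unforgeability or the KL row for anonymity. These follow by replacing the HVZK hop with \cref{lemma: renyi divergence,lemma: renyi multiplication} and \cref{lemma: KL bound}, respectively.
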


\begin{table*}
    \caption{Unforgeability and anonymity bounds in \cref{thm: UFCRA to UFNRA} using statistical distance (SD), Rényi divergence (RD) or Kullback-Leibler divergence (KL) for $\alpha \in (1, \infty]$.}\label{tab: aos bounds}
    \centering
    {\renewcommand{\arraystretch}{1.3}
        \setlength{\tabcolsep}{10pt}
        \begin{tabular}{|c|c|l|}
            \hline
            Security & Method                                              & Bound \\
            \hline
            \multirow{2}{*}{$\advantage{\seufcra}{\adv, \AOS(\Sigma), N,
                        \nrsignqueries}$}
                     & SD
                     & $\epsilon_{\eufnra} + \nrsignqueries \cdot \hvzktv
            + \epsilon_{repr} + \epsilon_{repr}'$                                  \\[4pt]
                     & RD
                     & $\left[(\hvzkrenyi)^{\nrsignqueries}
                    \epsilon_{\eufnra}\right]^{\frac{\alpha-1}{\alpha}}
            + \epsilon_{repr} + \epsilon_{repr}'$                                  \\[4pt]
            \hline
            \multirow{2}{*}{$\advantage{\anon}{\bdv, \AOS(\Sigma), N,
                        \nrchalqueries}$}
                     & SD
                     & $\nrchalqueries \cdot \hvzktv + \epsilon_{repr}''$          \\[4pt]
                     & KL
                     & $\sqrt{\left(\nrchalqueries \cdot \hvzkkl\right)/2}
            + \epsilon_{repr}''$                                                   \\[4pt]
            \hline
        \end{tabular}}
\end{table*}

\begin{table*}
    \caption{Parameters from \cref{thm: UFCRA to UFNRA} where $\epsilon_{\eufnra} := \advantage{\eufnra}{\adv_{\eufnra}, \AOS(\Sigma), N}$.
    }\label{tab: aos params}
    \centering
    {\renewcommand{\arraystretch}{1.3}
        \setlength{\tabcolsep}{10pt}
        \begin{tabular}{|c|l|}
            \hline
             & Definition                                           \\
            \hline
            $\epsilon_{repr}$
             & $\dfrac{3\nrsignqueries}{2}\sqrt{(\nrhashqueries
                    + \kappa \cdot \nrsignqueries + N)
            \cdot 2^{-\minentropy(\secpar)}}$                       \\[6pt]
            $\epsilon_{repr}'$
             & $N \cdot \left(\advantage{\cur}{\adv_{\cur}, \Sigma}
                + \advantage{\wrec}{\adv_{\wrec}, \Sigma}\right)
            + \hashcollisionqrom$                                   \\[6pt]
            $\epsilon_{repr}''$
             & $\dfrac{3\nrchalqueries}{2}\sqrt{(\nrhashqueries
                    + \kappa \cdot \nrchalqueries)
            \cdot 2^{-\minentropy(\secpar)}}$                       \\[6pt]
            \hline
        \end{tabular}}
\end{table*}

\begin{proof}[Sketch]
    The full proof can be found in \ifthenelse{\boolean{shortver}}{the full version of this work \cite[B.4]{cryptoeprint:2026/293}}{Appendix~\ref{subsec: proof of AOS ufcra to ufnra}}.
    For unforgeability we use \cref{thm: adaptive reprogramming} to reprogram every signature query with uniform challenges, which can be done assuming sufficient min-entropy in the hash input.
    This introduces the typical adaptive reprogramming error ($\epsilon_{repr}$), and an additive error term ($\epsilon_{repr}'$) to check whether the final forgery is valid for the non-reprogrammed random oracle.
    Then, we use the \ac{HVZK} simulator to remove use of the secret key which can be done using Rényi divergence and statistical distance.

    For anonymity we reprogram the random oracle, and then we use the \ac{HVZK} simulator.
    Statistical \ac{HVZK} gives a straightforward bound based on the number of challenge queries.
    Alternatively, the Kullback-Leibler divergence can be used, which gives a direct bound using \cref{lemma: KL bound}.
\end{proof}

\paragraph{Proof in the \ac{ROM}}\label{par: AOS Proof in the ROM}
As in \cref{par:RPSF Proof ROM}, the main changes arise from natural transitioning between collision resistance in the \ac{QROM} to the \ac{ROM} and the adaptive reprogramming.
Both are only additive terms, and when viewing in the \ac{ROM}, smaller choices for the min-entropy and the challenge space are likely possible.
A proof of $\weufcra$ in the \ac{ROM} can be found in \cite[Proposition A.3]{CiC:BorLaiLer24}.

\subsection{\eufnra: General Strategy using Measure-and-Reprogram}\label{subsec: General Strategy}

When considering the classical reduction from $\seufcra$ to $\eufnra$, a time-ordered list of the queries is essential.
The \ac{ROM} naturally provides this time-ordered list (and only has a guessing cost of $\nrhashqueries\cdot (\nrhashqueries -1)$~\cite[In the proof of Proposition A.3]{CiC:BorLaiLer24}), but unfortunately not the \ac{QROM}.
We can get such a list using the measure-and-reprogram technique~\cite{C:DonFehMaj20}.
We can get a query order for a subset of queries, allowing us to reprogram these queries in that specific order.

\begin{restatable}[$\eufnra$ using Measure-and-Reprogram 2.0]{theorem}{AOSUFNRAMeasureRepr}\label{thm: UFNRA from measure and reprogram}
    Let $\Sigma$ be a $\mathsf{\Sigma}$-protocol.
    For any $\eufnra$ adversary $\adv$ making at most $\nrhashqueries$ quantum queries to $\hash$, there exists an impersonation adversary $\bdv$ that can convince an honest verifier $\verifier$ such that

    \[\advantage{\eufnra}{\adv, \AOS(\Sigma), N} \leq N\cdot {(2(\nrhashqueries+N) + 1)}^{2N} \cdot \advantage{\imp}{\bdv, \Sigma}.\]
\end{restatable}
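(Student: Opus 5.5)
Our plan is to use \cref{thm: measure and reprogram 2.0} with $n=N$ to extract a classical, time-ordered view of the $N$ hash queries that make up the forgery. We then embed the impersonation challenge into one of them.

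First we normalize $\adv$. Given a forgery $(\rho^*,\msg^*,\sigma^*=((\com_1,\rsp_1),\ldots,(\com_N,\rsp_N)))$, we let $\tilde\adv$ additionally output the tuple $\bm{x}=(x_1,\ldots,x_N)$ with $x_j=(j,\rho^*,\com_{j-1},\msg^*)$, indices taken cyclically. These are exactly the inputs the verifier queries. The predicate $V(\bm{x},\Theta,z)$ is: for all $j$, $\ID.\verifier_2(\instance_j,\com_j,\Theta_j,\rsp_j)=1$. The $x_j$ are pairwise distinct because they carry the index $j$, so the no-duplicate requirement holds. Counting the verification queries, $\tilde\adv$ makes at most $\nrhashqueries+N$ queries. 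Summing over all $\bm{x}'$, \cref{thm: measure and reprogram 2.0} gives a simulator $\mathcal{S}$ that outputs the $x_j$ in some order $\pi$, receives fresh uniform challenges $\Theta_{\pi(i)}$ one at a time, and produces a valid ring signature with respect to these challenges with probability at least $(2(\nrhashqueries+N)+1)^{-2N}\advantage{\eufnra}{\adv,\AOS(\Sigma),N}$.

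The key structural observation concerns the order $\pi$. Let $j^*$ be such that $x_{j^*+1}=(j^*+1,\rho^*,\com_{j^*},\msg^*)$ is output last. All other inputs are output earlier, and the challenge for $x_{j^*+1}$ is $\Theta_{j^*+1}=\ch_{j^*+1}$. Since $x_{j^*+1}$ contains $\com_{j^*}$, the commitment for position $j^*$ is fixed classically before its challenge is sampled. More generally, a cyclic ordering means that for the last position the commitment precedes the challenge. It is not obvious that this is the only position with that property; we only need one such position, and we will use this last one. This ordering argument, together with handling the fact that $j^*$ is only known at run time, is the main obstacle.

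The reduction $\bdv$ guesses $j\in[N]$ uniformly, which costs a factor $N$. It places the impersonation instance at ring position $j$ and generates the remaining public keys itself, since in $\eufnra$ the ring keys are honest. It runs $\mathcal{S}^{\tilde\adv}$ and answers every stage with a uniform challenge of its own, except when $\mathcal{S}$ outputs $x_{j+1}$. At that point it sends the $\com_j$ read from $x_{j+1}$ as its commitment to the verifier and forwards the verifier's uniform challenge as $\Theta_{j+1}$. The distribution is unchanged because $\Theta$ is uniform in either case. If $\com_j$ is extracted from a stage other than the last, $\bdv$ can still commit at that point, since all that is needed is that $x_{j+1}$ is output before $\Theta_{j+1}$ is received, and that holds for every $j$ by the syntax of $\mathcal{S}$. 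So in fact we do not need the last position, and the guess of $j$ only costs the factor $N$; this simplification is worth checking carefully. At the end, $\bdv$ outputs $\rsp_j$ from the final output $z$. Whenever $V$ holds, the verifier accepts. Multiplying the loss factors gives the stated bound $N\cdot(2(\nrhashqueries+N)+1)^{2N}\advantage{\imp}{\bdv,\Sigma}$.
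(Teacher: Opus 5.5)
Your normalization of $\adv$ and your use of measure-and-reprogram are fine: the inputs $x_j=(j,\rho^*,\com_{j-1},\msg^*)$, the predicate $\verifier_2(\instance_j,\com_j,\Theta_j,\rsp_j)$ and the loss $(2(\nrhashqueries+N)+1)^{2N}$ all match the paper. The embedding step, however, has an off-by-one error, and it breaks the reduction.

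\begin{itemize}
\item The reply to $x_{j+1}$, the query that contains $\com_j$, is $\Theta_{j+1}=\ch_{j+1}$. That is the challenge of the \emph{next} ring member.
\item The challenge in the transcript $(\com_j,\ch_j,\rsp_j)$ that the forgery certifies is $\Theta_j$, the reply to $x_j$, which depends only on $\com_{j-1}$.
\item Your $\bdv$ commits to $\com_j$ and installs the verifier's challenge as $\Theta_{j+1}$. The verifier therefore checks $\verifier_2(\instance_j,\com_j,\Theta_{j+1},\rsp_j)$, while $V$ only guarantees $\verifier_2(\instance_j,\com_j,\Theta_j,\rsp_j)$ for a $\Theta_j$ that $\bdv$ chose itself. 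Nothing follows from this.
\end{itemize}

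You are importing the plain Fiat--Shamir picture $\ch=\hash(\com)$, which AOS does not have. The same slip appears in your structural observation. If $x_{j^*+1}$ is output last, then $x_{j^*}$, and hence $\ch_{j^*}$, was fixed \emph{before} $\com_{j^*}$ was revealed. That is the opposite of what you need.

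As a result, your ``simplification'' that no ordering argument is needed is false; the ordering is the actual content of the proof. To embed at position $p$, $\bdv$ must read $\com_p$ from $x_{p+1}$ and install the verifier's challenge as the reply to $x_p$. So $\mathcal{S}$ must output $x_{p+1}$ at an earlier stage than $x_p$.
\begin{itemize}
\item This fails for some positions.
\item It holds for at least one, because the stages cannot increase strictly all the way around the cycle. For example, take $p$ such that $x_p$ is output last. For $|\rho^*|=1$ it is the ordinary Fiat--Shamir situation.
\end{itemize}

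The correct reduction runs as follows. $\bdv$ plants its instance at a uniformly guessed index $i$. When $\mathcal{S}$ outputs $x_{i+1}$, $\bdv$ sends $\com_i$ to the verifier and answers $\Theta_{i+1}$ with its own uniform value. When $x_i$ appears later, it answers with the verifier's challenge. It aborts if its instance is not in $\rho^*$ or if $x_i$ was output first.

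All instances are identically distributed, and the verifier's challenge is uniform just like $\bdv$'s own values. Hence the run of $\mathcal{S}$, including $\pi$ and $\rho^*$, is independent of $i$. The guess therefore lands on a good position of $\rho^*$ with probability at least $1/N$, which is exactly how the paper's proof obtains the factor $N$.
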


\begin{proof}[Sketch]
    The full proof can be found in \ifthenelse{\boolean{shortver}}{the full version of this work \cite[B.5]{cryptoeprint:2026/293}}{Appendix~\ref{subsec: proof of AOS ufnra from measure-and-reprogram}}.
    We use \cref{thm: measure and reprogram 2.0} for an ordering of all $N$ queries used within the verification of the adversaries final forgery.
    There always exists an index in this list, where we can successfully include an interactive impersonation adversary against $\Sigma$.
    The first message must be made before the respective challenge for it is fixed by the previous ring member, and with the list, such an index always exists.
    The application of \cref{thm: measure and reprogram 2.0} yields the multiplicative loss in the order of $\nrhashqueries^{2N}$.
\end{proof}

\subsection{\eufnra: Commit-and-Open Sigma Protocols}\label{subsec: Commit and Open}
In this section, we consider commit-and-open ($C\&O$) $\Sigma$-protocols to construct AOS \ac{RSS}.
In a $C\&O$ $\Sigma$-protocol, the prover commits to responses within the first message of the protocol.
The challenge defines which responses need to be opened in the third protocol message.
Formally, we define them as follows.

\begin{definition}[Commit-and-Open Sigma Protocols]\label{def: cando sigma}
    A commit-and-open $\mathsf{\Sigma}$-protocol $\Sigma$ is a $\mathsf{\Sigma}$-protocol with a special form that uses a hash function $\hash$.
    In this protocol, the first message consists of $\ell$ commitments $y_{1}, \ldots, y_{\ell}$ computed as $y_i = \hash(m_i)$ for $m_i \in \mathcal{M}$ and possibly an additional string $a_{\circ}$.\footnote{Note that $m_i \in \mathcal{M}$ may contain some additional randomness.}    The challenge is then picked uniformly at random from $\mathcal{C} \subseteq 2^{[\ell]}$.
    This identifies the responses as $\bm{m}_c = (m_i)_{i \in c}$.
    The verifier accepts iff $\hash(m_i) = y_i$ for all $i\in c$ and some given predicate $\verifier(\instance, c, \bm{m}_c, a_\circ)$ is satisfied.
\end{definition}

Don et al.~\cite{EC:DFMS22} showed that $C\&O$ $\Sigma$-protocols have online extractability after application of the Fiat-Shamir transform.
We introduce their soundness definition for $\Sigma$-protocols, which will also be essential in our proof.
We recall the definition of $\mathfrak{S}^*$-soundness for $C\&O$ $\Sigma$-protocols from \cite{C:DFMS22}.

Consider $\mathcal{C}\subseteq 2^{[\ell]}$, and let $\mathfrak{S}\subseteq 2^{\mathcal{C}}$ be an arbitrary non-empty, monotone increasing (i.e. $S \in \mathfrak{S} \land S \subseteq S' \Rightarrow S' \in \mathfrak{S}$) set of subsets $S\subseteq \mathcal{C}$. Let $\mathfrak{S}_{\min{}} :=\{S \in \mathfrak{S}\mid S' \subsetneq S \Rightarrow S' \notin \mathfrak{S}\}$.
As an example, consider $\mathfrak{S}$ to be $\{S \subseteq \mathcal{C} \mid |S| \geq k\}$, then $\mathfrak{S}_{min{}}$ is defined as the subsets of $\mathcal{C}$ with exactly $k$ elements.

\begin{definition}[{\cite[Def. 5.2]{EC:DFMS22}}]
    A $C\&O$ $\mathsf{\Sigma}$-protocol $\Sigma$ is $\mathfrak{S}$-sound$^{*}$ if there exists an efficient deterministic algorithm $\extractor^{*}_{\mathfrak{S}}(\instance, m_1, \ldots, m_\ell, a_\circ)$
    that takes on input an instance $\instance \in \mathcal{I}$, messages $m_1, \ldots, m_\ell \in \mathcal{M} \cup\{\perp\}$ and a string $a_\circ$, outputs a witness for $\instance$, if $\exists S \in \mathfrak{S}$ such that $\verifier(\instance, c,\bm{m}_c, a_\circ)$ for all $c \in S$.
\end{definition}

Informally, this soundness implies the existence of an extractor capable of recovering a witness if the transcript is valid for many challenges.
This is modeled using $\mathfrak{S}_{\min}$ \emph{without} explicitly being given the set $S$.
Similar to \cite{EC:DFMS22}, we define
\[sz_{triv}^{\mathfrak{S}} := \max_{\hat{S} \notin \mathfrak{S}} |\hat{S}|.\]
capturing the number of challenges a prover can successfully answer by first picking a set $\hat{S} \notin \mathfrak{S}$ of challenges, and then preparing a message $\hat{\bm{m}}$ and $a_0$ such that $\verifier(\instance, c, \hat{\bm{m}}_c, a_0)$ holds if $c \in \hat{S}$.
If an adversary prepares the same for an $S \in \mathfrak{S}$, running the soundness-extractor retrieves the witness.
\cite{EC:DFMS22} defines the probability of a successful attack by dividing $sz_{triv}^{\mathfrak{S}}$ by $|\mathcal{C}|$ as $p_{triv}^{\mathfrak{S}}$.

\subsubsection{UF-NRA using $C\&O$ $\Sigma$-Protocols}\label{subsubsec: UFNRA CO}

Our goal is to show that AOS \acp{RSS} are $\eufnra$ secure by utilizing the online-extractability of $C\&O$ $\Sigma$-protocols.
The main tool for this analysis uses the framework introduced in \cref{subsubsec: query complexity bound}.
We follow a similar procedure to that of Don et al.~\cite[Section 4]{C:DFMS22} when proving the online extractability of $C\&O$ $\Sigma$-protocols under the Fiat-Shamir transformation.
(Alternatively, the formalism of \cite{TCC:ChiManSpo19} could be used.)
First, we introduce three database properties, similar to
those in \cite{C:DFMS22}.
The properties capture collisions, limited size, and the event where a proof can be produced, and no extraction is possible.
The size property can be defined as
\[\SZ_{\leq s} := \left\{D \,\middle|\,|\{x|D(x) \neq \bot\}|\leq s\right\}.\]

We now define the prover-success-extractor-fail property. Consider a set of $N$ instances $\{\instance_{1}, \ldots, \instance_{N}\} \subseteq \mathcal{I}$.We define the sets $\mathcal{M}$ for messages, $\mathcal{C}$ for challenges, and $\mathcal{Y}$ for commitments referring to the sets from the $C\&O$ $\Sigma$-protocols.
We define commitment queries as $D(m)$ for $m \in \mathcal{M}$, returning a commitment $y \in \mathcal{Y}$ and extend this notation to vectors of messages $\mathbf m$.
For a database $D$ and a commitment $y \in \mathcal{Y}$, we define $D^{-1}(y)$ to be the smallest $m \in \mathcal{M}$ with $D(m) = y$.
If no such $m$ exists, we define $D^{-1}(y) = \bot$, as well as $D^{-1}(\bot) = \bot$.
We denote challenge queries as $D(\beta, \rho, \bm{y}, \mu)$ where $\bm{y} \in \mathcal{Y}^{\ell}$, $\mu$ acts as auxiliary input, and $\rho$ is a collection of instances.
The output is a string from $\mathcal{Y}$, which means that we need an additional map $\gamma: \mathcal{Y} \to \mathcal{C}$.
With this, we can define the remaining two database properties of collision\footnote{This collision property captures slightly more than what we need, as we could also separate between collisions for commitments and challenges.}
\[\CL  := \left\{D \,\middle|\,
    \begin{array}{c}
        \exists x\neq x' \in \mathcal{X}: D(x) \neq \bot \neq D(x') \land \gamma \circ D(x) = \gamma \circ D(x')
    \end{array}\right\}\]
and adversarial success ($\rho' \subseteq \rho \subseteq \supp{\Sigma.\gen(\secparam)}$, $N' = |\rho'| \leq |\rho| = N$)
\[\SUC := \left\{D \,\middle|\,
    \begin{array}{c}
        \exists \mu\,\text{and}\,\bm{y}_{\beta} \in \mathcal{Y}^{\ell}\,\text{s.t.\ } \forall \beta \in [N']:                                                   \\
        \verifier(\instance_{\beta}, c_{\beta}, \bm{m}_{\beta, c_{\beta}})\land (\instance_{\beta}, \extractor^{*}(\instance_{\beta}, \bm{m}_{\beta})) \notin R \\
        \text{where}\, c_{\beta + 1} := \gamma \circ D(\beta+1, \rho', \bm{y}_{\beta}, \mu) \land
        \bm{m}_{\beta} := D^{-1}(\bm{y}_{\beta})
    \end{array}\right\}.\]

Our intermediate goal is to bound the transition capacity from an empty compressed database to satisfy either $\SUC$ or $\CL$ after $q$ queries.

\begin{lemma}\label{lemma: transition combined}
    Let $\gammacldef$ and $N < q$, then
    \begin{align*}\llbracket \bot & \Rightarrow^{q} \SUC \cup \CL \rrbracket \leq q\sqrt{{10}/{|\mathcal{Y}|}}\left(\sqrt{(q-1) \gamma_{cl}} + \sqrt{\max\left\{(q-1)\ell, \gamma_{cl} (sz_{triv}^{\mathfrak{S}})^N\right\}}\right).\end{align*}
\end{lemma}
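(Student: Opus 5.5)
We follow the template of \cite[Section 4]{C:DFMS22}. We use the sequence of properties $P_s := \SUC \cup \CL \cup \neg\SZ_{\leq s}$, with $P_0$ such that $\neg P_0 = \{\bot\}$ and $P_q$ containing $\SUC\cup\CL$. Since every query enlarges the database by at most one entry, a database reached after $s$ queries lies in $\SZ_{\leq s}$. We therefore bound
\[\llbracket \bot \Rightarrow^q \SUC\cup\CL\rrbracket \leq \sum_{s=0}^{q-1}\llbracket \SZ_{\leq s}\setminus(\SUC\cup\CL) \to \SZ_{\leq s+1}\cap(\SUC\cup\CL)\rrbracket\]
and, by subadditivity of transition capacities in the target property (as in \cite{EC:CFHL21}), split each summand into a term for reaching $\CL$ and a term for reaching $\SUC\setminus\CL$. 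Each term is bounded with \cref{thm: quantum transition capacity}. For every database $D$ with at most $s$ entries, not in $\SUC\cup\CL$, and every query point $x$, we bound the probability that a uniform $U\in\mathcal{Y}$ lands in the relevant local property $L^{x,D}$. This gives a per-step bound of $\sqrt{10\,\max_{x,D}\prob{U\in L^{x,D}}}$. The statement then follows by using the worst case $s=q-1$ and summing over $q$ steps, which produces the prefactor $q\sqrt{10/|\mathcal{Y}|}$.

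\emph{Collision term.} Reprogramming $x$ to $y$ creates a collision only if $\gamma(y)=\gamma(D(x'))$ for one of the at most $s\le q-1$ other defined entries $x'$. For each such $x'$, at most $\gamma_{cl}$ values $y$ have the required image under $\gamma$. Hence $\prob{U\in L}\le (q-1)\gamma_{cl}/|\mathcal{Y}|$, which gives the first summand. In the case where $\bot$ lies in the target, the complementary local property must be used; it is handled identically, since removing an entry cannot create a collision.

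\emph{Success term.} Because $D\notin\CL$, the map $D^{-1}$ is well defined and the commitment/challenge chain is rigid. We split according to the type of the query point $x$.

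If $x=m$ is a commitment query, the new value $y$ makes $\SUC$ hold only if $y$ equals one of the commitment coordinates $y_{\beta,i}$ occurring in challenge entries of $D$. Only those coordinates can newly acquire a preimage and thereby change $\bm m_\beta$ and the verifier predicate. Because the database has at most $q-1$ entries, each containing $\ell$ coordinates, there are at most $(q-1)\ell$ such values. This gives the bound $(q-1)\ell/|\mathcal{Y}|$.

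If $x=(\beta+1,\rho',\bm y_\beta,\mu)$ is a challenge query, it fixes $c_{\beta+1}$. Since $D\notin\SUC$, there is a ring position whose verification is not yet satisfied, and the new output must supply a challenge in the set of challenges that are accepted while extraction still fails. That set has size at most $sz_{triv}^{\mathfrak{S}}$: the extraction-fail condition forces the set of accepted challenges to lie outside $\mathfrak{S}$. Taking into account all $N'$ positions whose challenges are still free, and the at most $\gamma_{cl}$ preimages of each challenge under $\gamma$, we aim for the bound $\gamma_{cl}(sz_{triv}^{\mathfrak{S}})^N/|\mathcal{Y}|$. The maximum of the two cases gives the second summand. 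The assumption $N<q$ ensures that the chain length is dominated by the query count.

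\emph{Main obstacle.} The hardest step is making the challenge-query case rigorous. A single challenge value influences the whole cyclic ring, and when $\bot$ lies in the local property the complementary set has to be controlled, which means arguing that removing an entry cannot turn $\SUC$ on. I would first fix all coordinates of the chain except the reprogrammed one, then count the accepted challenges per position under the extraction-fail constraint and multiply over positions. This produces the coarse $(sz_{triv}^{\mathfrak{S}})^N$ factor, which is allowed to be loose.
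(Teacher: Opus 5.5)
Your decomposition and your three local bounds ($s\gamma_{cl}$ for collisions, $s\ell$ for commitment queries, $\gamma_{cl}(sz_{triv}^{\mathfrak{S}})^N$ for challenge queries, then $s\le q-1$) match the paper's. The $\SUC$ term, however, has a gap. \cref{thm: quantum transition capacity} must be applied to every pair $(x,D)$ for which both local properties are non-empty. This includes databases with $D(x)=\bot$ that already lie in $\SUC\setminus\CL$. For those, $\bot\in\SUC|_{D|^x}$, so $L^{x,D}$ must be the set of $U$ with $D[x\mapsto U]\in\SZ_{\leq s}\setminus\SUC\setminus\CL$, i.e.\ the values for which the query \emph{destroys} success.

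You only bound databases outside $\SUC\cup\CL$, and you propose to dismiss the complementary case by arguing that removing an entry cannot turn $\SUC$ on. That holds for challenge queries: the witnessing chain is unaffected, so the probability is $0$. It is false for commitment queries. $\SUC$ requires that $\extractor^{*}$ \emph{fails} on $\bm m_\beta=D^{-1}(\bm y_\beta)$. A new entry $m\mapsto U$, with $U$ a coordinate of some $\bm y_\beta$ that previously had no preimage, enlarges $\bm m_\beta$ and may let extraction succeed, so $D[x\mapsto U]\notin\SUC$. In other words, $\SUC$ is not monotone.

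The missing case is handled by fixing one witnessing chain $\mu,(\bm y_\beta)_{\beta\in[N']}$ for $D$. If $U$ differs from all $N'\ell$ coordinates of this chain, then $D^{-1}$ is unchanged on the chain and the chain still witnesses $\SUC$. Hence $\prob{U\in L^{x,D}}\le N\ell/|\mathcal{Y}|$. This adds $N\ell$ to the maximum, and this is exactly where the hypothesis $N<q$ is used, giving $N\ell\le (q-1)\ell$.

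In your proposal, $N<q$ is invoked only vaguely ("the chain length is dominated by the query count"), and none of your three bounds actually needs it; that is a symptom of the omitted case. Alternatively, the chain's $N'$ challenge entries are distinct defined entries of $D$, so its coordinates are among the at most $s\ell$ you already count in your commitment case. With this case added, your argument goes through and yields the stated bound.
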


\begin{proof}
    Using the same arguments as in \cite[Proof of Lemma 4]{C:DFMS22}, we get that
    \[\llbracket \bot \Rightarrow^{q} \SUC \cup \CL \rrbracket \leq \sum_{s=0}^{q-1}\Big(\llbracket \SZ_{\leq s}\setminus \CL \rightarrow\CL \rrbracket + \llbracket \SZ_{\leq s}\setminus \SUC \setminus \CL \rightarrow \SUC \rrbracket\Big).\]
    Compared to \cite{C:DFMS22}, we have implicitly removed the first term, as it vanishes regardless, and we have not simplified the last term, as we explicitly need the collision property.
    The first term is bounded similarly to \cite[Example 5.28]{EC:CFHL21}, but we use \cref{lemma: transition collision bound} to get a tighter bound.
    The second term is bounded by \cref{lemma: transition success bound}.
    Finally, we combine the results and use that $s \leq q-1$.
\end{proof}

\begin{restatable}{lemma}{TransitionCollisionBound}\label{lemma: transition collision bound}
    Let $\gammacldef$, then
    \[\llbracket \SZ_{\leq s} \setminus \CL \rightarrow \CL \rrbracket \leq \sqrt{10}\sqrt{s\cdot \gamma_{cl}/|\mathcal{Y}|}.\]
\end{restatable}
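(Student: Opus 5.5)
We apply \cref{thm: quantum transition capacity} with $P = \SZ_{\leq s}\setminus\CL$ and $P' = \CL$. These two properties are disjoint by definition, so the theorem applies. It then suffices to bound $\prob{U \in L^{x,D}}$ uniformly by $s\cdot\gamma_{cl}/|\mathcal{Y}|$ over all pairs $(x,D)$ for which both local properties $P|_{D|^x}$ and $P'|_{D|^x}$ are non-empty. Taking the square root of $10$ times this bound gives the claim.

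Fix such a pair $(x,D)$. The first step is to decide which case of the definition of $L^{x,D}$ applies. The key observation is that $\bot \in P|_{D|^x}$ always holds when $P|_{D|^x}$ is non-empty. Indeed, if $D[x\mapsto y] \in \SZ_{\leq s}\setminus\CL$ for some $y$, then erasing the entry at $x$ gives $D[x\mapsto\bot]$. This can only shrink the database, so it stays in $\SZ_{\leq s}$. Removing an entry also cannot create a collision, so it stays outside $\CL$. Hence $L^{x,D} = P'|_{D|^x} = \CL|_{D|^x}$, which is also a permitted choice if $\bot$ happens to lie in both local properties.

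It remains to bound $\CL|_{D|^x}$. Let $D' := D[x\mapsto\bot]$. Since $P|_{D|^x}\neq\emptyset$, we have $D'\in\SZ_{\leq s}\setminus\CL$, so $D'$ has at most $s$ defined entries and no collisions among them. Then $D'[x\mapsto y] \in \CL$ means exactly that $\gamma(y) = \gamma(D(x'))$ for some $x'\neq x$ with $D(x')\neq\bot$. The entries of $D'$ take at most $s$ distinct values $\gamma(D(x'))$, and each value $c\in\mathcal{C}$ has at most $\gamma_{cl}$ preimages $y$ under $\gamma$ by the definition of $\gamma_{cl}$. Therefore $|\CL|_{D|^x}| \leq s\cdot\gamma_{cl}$, and for uniform $U$ over $\mathcal{Y}$ we get $\prob{U\in \CL|_{D|^x}} \leq s\gamma_{cl}/|\mathcal{Y}|$. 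Plugging this into \cref{eq:transcap} gives $\sqrt{10}\sqrt{s\gamma_{cl}/|\mathcal{Y}|}$.

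The only delicate point is the case analysis in \cref{thm: quantum transition capacity}. We must check that the relevant local set is the collision set and not the complement-type set $P|_{D|^x}$, which could be large. This is handled by the monotonicity observation above: erasing an entry preserves both the size bound and collision-freeness.
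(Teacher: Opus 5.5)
Your proof is correct and follows essentially the same route as the paper. You instantiate \cref{thm: quantum transition capacity} with $\SZ_{\leq s}\setminus\CL$ and $\CL$, identify $L^{x,D}$ with the local collision set, and count at most $s\cdot\gamma_{cl}$ bad outputs. The only difference is that the paper invokes \cite[Remark 2.5]{C:DFMS22} to assume $D(x)=\bot$ and then case-splits on $D$. You instead get $\bot\in P|_{D|^x}$ directly, because erasing the entry at $x$ preserves both the size bound and collision-freeness; this is slightly more self-contained but not a genuinely different argument.
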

Similar results can be found in \cite[Example 5.28]{EC:CFHL21}, so we defer the proof to \ifthenelse{\boolean{shortver}}{the full version of this work \cite[B.6]{cryptoeprint:2026/293}}{Appendix \ref{subsec: proof lemma transition capacity database collision}}.
We can now finalize \cref{thm: UFCRA to UFNRA} using the previously established bound on the quantum transition capacity.

\begin{lemma}\label{lemma: probability hash collision qrom}
    Let $\gammacldef$, then for any oracle quantum algorithm $\adv$ with query complexity $q$, it holds that
    \[\Pr_{\adv}[D \in \CL] \leq 10 q^2 (q-1) \cdot \gamma_{cl}/|\mathcal{Y}|. \]
\end{lemma}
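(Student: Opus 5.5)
We obtain the bound from the transition-capacity machinery of \cref{subsubsec: query complexity bound}, using \cref{lemma: transition collision bound} at every step. We do not relate the adversary's output to the database, since the statement concerns only the measured database $D$. By definition, $\Pr_{\adv}[D \in \CL] \le \llbracket \bot \Rightarrow^{q} \CL \rrbracket^2$. It therefore suffices to show
\[
\llbracket \bot \Rightarrow^{q} \CL \rrbracket \le q\sqrt{10(q-1)\gamma_{cl}/|\mathcal{Y}|}.
\]

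To bound the capacity we use the sequence of properties $P_s := \CL \cup \neg\SZ_{\le s}$ for $s=0,\ldots,q$, in the spirit of \cite[Lemma 5.6]{EC:CFHL21} and of the decomposition already used in the proof of \cref{lemma: transition combined}. Two facts make this work. First, after $s$ queries the compressed database has at most $s$ defined entries, so the size part never becomes active. Second, $\neg P_0 = \SZ_{\le 0}\setminus\CL = \{\bot\}$, because the empty database contains no collision. The chaining inequality then gives
\[
\llbracket \bot \Rightarrow^{q} \CL\rrbracket \le \sum_{s=0}^{q-1} \llbracket \SZ_{\le s}\setminus \CL \rightarrow \CL \rrbracket .
\]
The transitions into $\neg\SZ_{\le s+1}$ contribute nothing, since a single query adds at most one entry. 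The justification is the same as in \cite[Proof of Lemma 4]{C:DFMS22}, which we reuse verbatim.

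We apply \cref{lemma: transition collision bound} to each summand, giving $\sqrt{10 s\gamma_{cl}/|\mathcal{Y}|}$. We then bound $s \le q-1$ uniformly and sum the $q$ terms, which yields $q\sqrt{10(q-1)\gamma_{cl}/|\mathcal{Y}|}$. Squaring gives $10q^2(q-1)\gamma_{cl}/|\mathcal{Y}|$, as claimed. One could tighten the estimate slightly by using $\sum_s\sqrt{s}$ instead of the uniform bound, but this is not needed.

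The only delicate step is making sure the chaining lemma applies with properties that mix $\CL$ and size. This requires checking that the size-bounded restriction is legitimate, i.e.\ that the compressed oracle's database after $s$ queries is supported on $\SZ_{\le s}$. This is standard, and we take it directly from \cite{EC:CFHL21,C:DFMS22}. The rest is bookkeeping.
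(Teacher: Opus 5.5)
Your proposal is correct and follows the paper's proof. You chain the capacities as $\llbracket \bot \Rightarrow^{q} \CL\rrbracket \leq \sum_{s=0}^{q-1} \llbracket \SZ_{\leq s} \setminus \CL \rightarrow \CL\rrbracket$, bound each term via \cref{lemma: transition collision bound} with $s \le q-1$, and square; your explicit choice $P_s = \CL \cup \neg\SZ_{\le s}$ just spells out the justification of the chaining step that the paper leaves implicit.
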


\begin{proof}
    Consider $\llbracket \bot \Rightarrow^{q} \CL\rrbracket \leq \sum_{s=0}^{q-1} \llbracket \SZ_{\leq s} \setminus \CL \rightarrow \CL\rrbracket$.
    Using \cref{lemma: transition collision bound} we get
    \[{\sum}_{s=0}^{q-1} \llbracket \SZ_{\leq s} \setminus \CL \rightarrow \CL\rrbracket \leq q \sqrt{10}\sqrt{(q-1)\cdot \gamma_{cl}/|\mathcal{Y}|}.\]
    Squaring this term yields the upper bound for any $\adv$ with $q$ queries.
\end{proof}

\begin{restatable}{lemma}{transitionSuccessBound}\label{lemma: transition success bound}
    Let $\gammacldef$, then
    \[ \llbracket \SZ_{\leq s}\setminus \SUC \setminus \CL \rightarrow \SUC\rrbracket  \leq \sqrt{10}\sqrt{\max\left\{N \cdot \ell,s\cdot \ell, \gamma_{cl}\cdot(sz_{triv}^{\mathfrak{S}})^N\right\}/|\mathcal{Y}|}.\]
\end{restatable}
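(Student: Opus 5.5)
We apply \cref{thm: quantum transition capacity} with $P := \SZ_{\leq s}\setminus \SUC \setminus \CL$ and $P' := \SUC$. These are disjoint, so it suffices to fix a database $D \in P$ and a query input $x$, and to bound $\prob{U \in L^{x,D}}$. Here $L^{x,D}$ is the local set of values $y$ for which $D[x\mapsto y]$ enters $\SUC$ (or leaves $P$, in the other branch of the definition). Since $D \notin \SUC$, any $y$ with $D[x\mapsto y] \in \SUC$ must make the newly written position $x$ part of a witness to $\SUC$. We therefore split into cases by the type of $x$: either a commitment query $x = m \in \mathcal{M}$, or a challenge query $x = (\beta, \rho', \bm{y}, \mu)$.

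\textbf{Commitment query.} Take $x = m$ with $D(m) = \bot$. Setting $D(m) = y$ can only create a new success witness if it changes some preimage $D^{-1}(y_{\beta,i})$. Since $D \notin \CL$, preimages are unique, so $y$ must equal one of the commitment strings $y_{\beta,i}$ already occurring in $D$. These strings occur either as entries of challenge-query inputs or as outputs. Our first attempt is to count only the vectors $\bm{y}_\beta$ that can be relevant. For a fixed candidate forgery there are at most $N$ ring positions with $\ell$ commitments each, which gives $N\ell$ targets. Uniformly over forgeries, the database has at most $s$ entries, each contributing at most $\ell$ commitment strings, which gives $s\ell$ targets. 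Hence $\prob{U \in L^{x,D}} \le \max\{N\ell, s\ell\}/|\mathcal{Y}|$. In the branch where $\bot \in P'|_{D|^x}$, we use $L = P|_{D|^x}$ instead; for this branch we would argue that the complement is similarly small, namely by bounding the set of values that either create a collision or hit a $\SUC$-relevant commitment, which is absorbed into the same count.

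\textbf{Challenge query.} Take $x = (\beta+1, \rho', \bm{y}_\beta, \mu)$. Since $D\notin\SUC$ and all other positions are fixed, the new value determines $c_{\beta+1} = \gamma(y)$. The remaining chain is fixed by $D$, because the preimages are determined and collision-free. We then argue that success requires, for every ring member, the fixed messages $\bm{m}_{\beta}$ to verify on their challenge while $\extractor^*$ fails. By $\mathfrak{S}$-soundness$^*$, extraction failing means the set of accepting challenges for each member lies outside $\mathfrak{S}$, so it has size at most $sz_{triv}^{\mathfrak{S}}$. Composing over the cyclic dependence of the $N$ members, the number of good values $c_{\beta+1}$ is at most $(sz_{triv}^{\mathfrak{S}})^N$. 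Each challenge value has at most $\gamma_{cl}$ preimages under $\gamma$, which gives $\prob{U\in L} \le \gamma_{cl}(sz_{triv}^{\mathfrak{S}})^N/|\mathcal{Y}|$. Taking the maximum over the cases and substituting into \cref{eq:transcap} yields the stated bound.

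\textbf{Main obstacle.} We expect the challenge-query case to be the hardest step. We must justify that only one position in the cycle is free after a single query, and that counting through the ring yields $(sz_{triv}^{\mathfrak{S}})^N$ rather than something larger. If the naive counting overshoots, we would restrict to the case where exactly one link is missing in $D$ and bound by $sz_{triv}^{\mathfrak{S}}\gamma_{cl}$, which is at most the stated quantity.
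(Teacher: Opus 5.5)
Your setup (\cref{thm: quantum transition capacity} with $P=\SZ_{\leq s}\setminus\SUC\setminus\CL$, $P'=\SUC$, split into commitment and challenge queries) is the paper's. The gap is that you essentially analyze only transitions \emph{into} $\SUC$: the step ``since $D\notin\SUC$'' silently assumes $D[x\mapsto\bot]\notin\SUC$. After normalizing to $D(x)=\bot$ via \cite[Remark 2.5]{C:DFMS22}, the maximum in \cref{eq:transcap} also ranges over pairs with $D[x\mapsto\bot]\in\SUC$, i.e.\ $\bot\in P'|_{D|^x}$. There $L^{x,D}=P|_{D|^x}$ is the set of values that \emph{destroy} success. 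This case is not vacuous: erasing an unopened commitment entry of a database in $P$ can make $\extractor^*$ fail while the opened positions still verify.

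This branch is precisely where the $N\cdot\ell$ term comes from. Fix one $\SUC$-witness $\mu,(\bm y_i)_{i\in[N']}$ of $D[x\mapsto\bot]$. For a commitment query $x=m$, any $U$ not among the at most $N\ell$ entries of the $\bm y_i$ leaves every $D^{-1}(\bm y_i)$ and every challenge unchanged. Hence $D[x\mapsto U]\in\SUC$ and $U\notin L^{x,D}$. For a challenge query the witness does not use $x$, so $L^{x,D}=\emptyset$.

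Your one sentence on this branch does not provide this argument: collision-creating values are \emph{excluded} from $P|_{D|^x}$, not included, and the challenge-query subcase is missing. Conversely, your ``fixed candidate forgery'' justification of $N\ell$ in the $\neg\SUC\to\SUC$ direction is not valid, since no witness is fixed there. Only your $s\ell$ count is sound in that direction, which is why your final maximum nonetheless comes out right.

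For challenge queries in the $\neg\SUC\to\SUC$ direction, your count $(sz_{triv}^{\mathfrak{S}})^N$ is right, but the claim that ``the remaining chain is fixed by $D$'' is not. Going forward from $c_{\beta+1}=\gamma(U)$, many vectors in $D$ may open correctly at $c_{\beta+1}$. The chain is pinned down only by walking \emph{backwards} from the fixed input $\bm y_\beta$ of $x$:
\begin{itemize}
\item $D^{-1}(\bm y_\beta)$ accepts at most $sz_{triv}^{\mathfrak{S}}$ challenges, because extraction fails.
\item Since $D\notin\CL$, each accepted challenge equals $\gamma\circ D$ of at most one entry of $D$, which fixes the preceding vector.
\item After $N'-1$ such steps there are at most $(sz_{triv}^{\mathfrak{S}})^{N'-1}$ candidates for $\bm y_{\beta+1}$, each accepting at most $sz_{triv}^{\mathfrak{S}}$ values of $c_{\beta+1}$.
\end{itemize}
This branching also shows that your fallback bound $\gamma_{cl}\,sz_{triv}^{\mathfrak{S}}$ undercounts in general.
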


A similar result can be found in \cite[Proof of Lemma 4.1]{C:DFMS22}, so we defer the full proof to the \ifthenelse{\boolean{shortver}}{the full version of this work \cite[B.7]{cryptoeprint:2026/293}}{Appendix~\ref{subsec: proof lemma transition success bound}}.
The main modifications are based on having to account for more potential targets due to the ring size $N$.
Specifically, $sz_{triv}^{\mathfrak{S}}$ captures the number of challenges that can be answered without allowing for extraction.
This size is now propagated through the entire ring, giving a total number of potentially valid targets of $(sz_{triv}^{\mathfrak{S}})^N$, where it was only $sz_{triv}^{\mathfrak{S}}$ in \cite{C:DFMS22}.

\begin{theorem}\label{thm: CO UFNRA}
    Let $\Sigma$ be a $\mathfrak{S}^{*}$-sound $C\&O$ $\mathsf{\Sigma}$-protocol with challenge space $\mathcal{C}= \mathcal{C}_{\lambda}$, $\ell = \ell(\secparam)$ commitments, $\hash: \{0,1\}^{*} \to \mathcal{Y}$ be a hash function modeled in the \ac{QROM} and $\gamma:\mathcal{Y}^{\ell} \to \mathcal{C}_{\lambda}$
    be a function with $\gammacldef$.
    Set $\omega = \omega(\lambda) := \max_{c \in \mathcal{C}_\lambda} |c|$.
    Furthermore, let $N \in \mathbb{N}$ be fixed with $N < \nrhashqueries$.
    For any $\eufnra$ adversary $\adv$ making at most $\nrhashqueries$ quantum queries to $\hash$, there exists an efficient adversary $\bdv$ against witness-recovery of the $\Sigma$ with
    \begin{align*}
         & \advantage{\eufnra}{\adv, \AOS(\Sigma), N}  \leq N\cdot \advantage{\wrec}{\bdv, \Sigma}
        + 2\cdot N(\omega(\secpar) + 1)\cdot|\mathcal{Y}|^{-1}                                                                                                                                                                     \\
         & \hspace{1cm} + \nrhashqueries^{2}10/|\mathcal{Y}|\left(\sqrt{(\nrhashqueries-1) \cdot \gamma_{cl}} + \sqrt{\max\left\{(\nrhashqueries-1)\cdot \ell, \gamma_{cl} \cdot (sz_{triv}^{\mathfrak{S}})^N\right\}}\right)^{2}. \\
         & \hspace{1cm} \leq N\cdot \advantage{\wrec}{\bdv, \Sigma}
        + 2\cdot N(\omega(\secpar) + 1)\cdot|\mathcal{Y}|^{-1}                                                                                                                                                                     \\
         & \hspace{1cm} + \nrhashqueries^{2}(\nrhashqueries-1)\ell 10\gamma_{cl}/|\mathcal{Y}|\left(1 + 2(sz_{triv}^{\mathfrak{S}})^{N/2} + (sz_{triv}^{\mathfrak{S}})^N\right).
    \end{align*}
\end{theorem}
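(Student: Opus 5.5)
The proof follows the online-extraction strategy of \cite[Section 4]{C:DFMS22}. The reduction $\bdv$ runs $\adv$ against the compressed oracle. It obtains the forgery $(\rho^*, \msg^*, \sigma^*)$ with $\sigma^* = ((\bm{y}_1, \rsp_1), \ldots, (\bm{y}_N, \rsp_N))$, where each $\rsp_\beta$ consists of openings $\bm{m}_{\beta,c_\beta}$ and $a_{\circ,\beta}$. It then measures the database $D$, computes $\bm{m}_\beta := D^{-1}(\bm{y}_\beta)$ for every $\beta \in [N]$, and runs $\extractor^*(\instance_\beta, \bm{m}_\beta, a_{\circ,\beta})$. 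Finally, $\bdv$ guesses an index $\beta \in [N]$ uniformly at random and outputs the extracted witness for $\instance_\beta$. This guess costs the factor $N$ in front of $\advantage{\wrec}{\bdv, \Sigma}$; alternatively, one can argue that success of any single extraction suffices, with the same loss.

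To analyse $\bdv$, I first pass from the real oracle to the compressed oracle. I make $\adv$ classically re-query all hash values needed for verification: the $N$ challenge queries $(\beta+1, \rho^*, \bm{y}_\beta, \msg^*)$ and the at most $N\omega$ commitment openings $\hash(m_{\beta,i})$ for $i \in c_\beta$. This adds at most $N(\omega+1)$ queries. By \cref{lemma: QROM consistency compressed oracle}, the values returned to $\adv$ agree with the measured database $D$ except with probability $2N(\omega+1)|\mathcal{Y}|^{-1}$, which is the second term of the bound. For the rest of the argument I treat the total number of queries as $\nrhashqueries$, absorbing the extra $N(\omega+1)$; I would need to check whether the stated bound tacitly uses $\nrhashqueries$ for the augmented algorithm.

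Conditioned on this consistency, a valid forgery means that $D$ makes every link of the ring verify. One of three cases then holds. In the first case, $D \in \CL$. In the second, $D \in \SUC$, meaning that for every $\beta$ the extractor fails. In the third, some $\beta$ admits successful extraction, which $\bdv$ exploits. Here I need to confirm that the openings in the forgery coincide with $D^{-1}(\bm{y}_\beta)$ on the coordinates in $c_\beta$. This holds outside $\CL$, because without collisions the preimage under $D$ is unique. Hence
\[
\advantage{\eufnra}{\adv, \AOS(\Sigma), N} \leq N\cdot\advantage{\wrec}{\bdv,\Sigma} + 2N(\omega+1)|\mathcal{Y}|^{-1} + \Pr[D \in \SUC \cup \CL].
\]
Squaring the bound of \cref{lemma: transition combined} with $q = \nrhashqueries$ gives the first displayed inequality of \cref{thm: CO UFNRA}. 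This step uses $N < \nrhashqueries$, which the statement assumes.

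The main obstacle is the combinatorial core of \cref{lemma: transition success bound}. That lemma handles how a single new query can complete $\SUC$: either a commitment query hits one of the $\le s\ell$ or $N\ell$ committed $\bm{y}$ values, or a challenge query lands in one of at most $(sz_{triv}^{\mathfrak{S}})^N$ non-extractable challenge patterns around the ring. I take that lemma as given. The remaining work is the simplification to the second line. I bound $\max\{a,b\} \leq ab$ for $a = (\nrhashqueries-1)\ell \geq 1$ and $b = \gamma_{cl}(sz_{triv}^{\mathfrak{S}})^N \geq 1$, and use $\gamma_{cl} \geq 1$, $\ell \geq 1$. Then each summand inside the square is at most $(\nrhashqueries-1)\ell\gamma_{cl}$ times $1$ or $(sz_{triv}^{\mathfrak{S}})^N$. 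Expanding $(\sqrt{A}+\sqrt{B})^2 = A + 2\sqrt{AB} + B$ then gives the factor $\bigl(1 + 2(sz_{triv}^{\mathfrak{S}})^{N/2} + (sz_{triv}^{\mathfrak{S}})^N\bigr)$. I would also double-check that the prefactor $10\,\nrhashqueries^2/|\mathcal{Y}|$ arises correctly from squaring $q\sqrt{10/|\mathcal{Y}|}$.
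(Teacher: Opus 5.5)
Your proposal is correct and follows the paper's proof. The steps match: compressed-oracle simulation with classical re-queries, \cref{lemma: QROM consistency compressed oracle} for the $2N(\omega+1)|\mathcal{Y}|^{-1}$ term, the split into $D\in\SUC\cup\CL$ versus successful extraction at some ring position, and squaring \cref{lemma: transition combined}. Your simplification of the second line via $\max\{a,b\}\le ab$ is sound provided $sz_{triv}^{\mathfrak{S}}\ge 1$.

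The one thing to state explicitly is that $\bdv$ plants its witness-recovery instance at a position $i^*\gets[N]$ chosen \emph{before} running $\adv$. It generates the other key pairs itself and aborts if that key is not in $\rho^*$, which also covers forgeries on subrings. Your caveat about the extra $N(\omega+1)$ queries is fair, since the paper likewise applies the capacity bound with $\nrhashqueries$ without counting them.
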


\begin{proof}
    Let $\adv$ be an $\eufnra$-adversary. We will construct a witness-recovery adversary $\bdv$ for $\Sigma$ that succeeds if $\adv$ produces a forgery in the $\eufnra$ game.
    $\bdv$ is given  $\instance$ where $(\instance, w) \gets \gen(\secparam)$.
    Now, $\bdv$ guesses an index $i^{*} \leftarrow [N]$.
    For all other $i \in [N] \setminus \{i^{*}\}$, $\bdv$ computes $(\instance_{i}, w_{i}) \gets \gen(\secparam)$.
    Set $\rho = \{\instance_{1}, \ldots, \instance_{N}\}$.
    For simplicity, we assume that all challenges $c \in \mathcal{C}$ are of the same length.

    $\bdv$ gives $\rho$ to $\adv$ and plays the role of the challenger simulating the random oracle of $\adv$ as a compressed oracle. We denote this by $\adv^{\mathsf{CO}}$.
    After making $\nrhashqueries$ queries to the random oracle $\hash$, \adv~outputs $(\rho^{*}, m^{*}, \sigma^{*})$ where $\rho^{*} \subseteq \rho$.
    If $i^{*}$ is not part of the ring, we abort.
    Now, each $\sigma_{i}$ with $i \in [N^*]$ where $\sigma^{*} = (\sigma_{1}, \ldots, \sigma_{N^*})$ consists of commitments $\bm{y}_{i} \in \mathcal{Y}^{\ell}$ and messages $\bm{m}_{i, c_i} = (m_{i,j})_{j \in c_i}$.
    Now, we obtain $c_{i+1} = \gamma \circ\hash(i+1, \rho^{*}, \bm{y}_{i}, \mu)$ as well as $\bm h_{i,c_i}=\hash(\mathbf m_{i, c_i})$.
    For a signature to be valid, the relation $\tilde{R}_{\rho^{*}}$ must hold for the triple $(\sigma^*, \bm c,\bm h)$ which is defined as
    \[\forall i\in[N^*]: \bm{h}_{i, {c_i}} = \bm{y}_{i, c_i} \quad \land \quad \verifier(\rho_{i}, c_{i}, \bm{m}_{i, c_{i}}).\]

    Next, we measure the internal state of the compressed oracle to obtain a database $D$.
    Let $j^{*}$ be the index with which $i^{*}$ is identified in $\rho^{*}$.
    Now, we obtain $m_{j,c}=D^{-1}(y_{j,c})$ for $c\in\mathcal C$ and run $\extractor^{*}(\rho^{*}, \bm{m}_{j^{*}})$ to obtain a witness $w$.

    Defining $\hat{\bm c}$ and $\hat{\bm h}$ like $\bm c$ and $\bm h$, but using $D$ in place of $\hash$, we get
    \begin{align*}
        \advantage{\eufnra}{\adv, \rsig, N} & := \prob{
            \verify(\rho^{*}, m^{*}, \sigma^{*}) \land \rho^{*}\subseteq \rho
        }
        \leq \prob{
            (\sigma^*, \bm c,\bm h)\in \tilde{R}_{\rho^*} \land \rho^{*}\subseteq \rho
        }                                                 \\
                                            & \leq \prob{
            (\sigma^*, \hat{\bm c},\hat{\bm h})\in \tilde{R}_{\rho^*} \land \rho^{*}\subseteq \rho
        }
        +              2\cdot N^{*}(\omega(\secpar) + 1)\cdot|\mathcal{Y}|^{-1}
    \end{align*}
    In the last step, we have used \cref{lemma: QROM consistency compressed oracle}.
    Now, we bound the first term by distinguishing whether the database $D$ satisfies $\SUC \cup \CL$.
    For simplicity, we drop the explicit notation for how $(\rho^*, m^*, \sigma^*)$ are generated.
    We bound
    \begin{align*}
        \prob{	(\sigma^*, \hat{\bm c},\hat{\bm h})\in \tilde{R}_{\rho^*} \land \rho^{*}\subseteq \rho} & \le \prob{ 	(\sigma^*, \hat{\bm c},\hat{\bm h})\in \tilde{R}_{\rho^*} \land \rho^{*}\subseteq \rho\land D\in\SUC\cup\CL}  \\
                                                                                                       & +\prob{ 	(\sigma^*, \hat{\bm c},\hat{\bm h})\in \tilde{R}_{\rho^*} \land \rho^{*}\subseteq \rho\land D\notin\SUC\cup\CL}.
    \end{align*}
    For the first case, we have
    \begin{align*}
         & \prob{	(\sigma^*, \hat{\bm c},\hat{\bm h})\in\tilde{R}_{\rho^{*}}\land \rho^{*}\subseteq \rho \land D \in \SUC \cup \CL} \leq \prob{D \in \SUC \cup \CL} \\
         & \leq \llbracket \bot \Rightarrow^{\nrhashqueries} \SUC \cup \CL\rrbracket^{2}.
    \end{align*}
    In the second case, we get that $(\sigma^*, \hat{\bm c},\hat{\bm h})\in \tilde{R}_{\rho^{*}}$ and $D \notin \SUC \cup \CL \subseteq \SUC$.
    Considering the definition of $\SUC$, there exists $i \in [N^{*}]: (\instance_i, \extractor^{*}(\instance_i, D^{-1}(\bm{y}_i))) \in R$ succeeds because the other terms are all satisfied due to $\tilde{R}_{\rho^{*}}$. Formally,
    \begin{align*}
         & \prob{ 	(\sigma^*, \hat{\bm c},\hat{\bm h})\in \tilde{R}_{\rho^*} \land \rho^{*}\subseteq \rho\land D\notin\SUC\cup\CL} \\
         & \quad\quad\le \prob{\exists i \in [N^{*}]: (\instance_i, \extractor^{*}(\instance_i, D^{-1}(\bm{y}_i))) \in R}.
    \end{align*}
    With probability $\frac{1}{N^{*}}$, this is the index where the challenge instance of $\bdv$ is placed, and $\bdv$ outputs $\extractor^{*}(\instance_i, \bm{m}_i)$ as a witness. As the index where the challenge instance was placed is used in the signature with probability $\frac{N^*}{N}$, we get
    \[N^{-1}\prob{(\sigma^*, \hat{\bm c},\hat{\bm h})\,\text{satisfies}\,\tilde{R}_{\rho^{*}} \land D \notin \SUC \cup \CL} \leq \advantage{\wrec}{\bdv, \Sigma},\]
    and combining these results with \cref{lemma: transition combined} shows our final bound.
\end{proof}

In \cite{C:DFMS22}, the results for $C\&O$ protocols are generalized to allow for Merkle tree based vector commitments.
We perform the same generalization for the AOS transform, see
\ifthenelse{\boolean{shortver}}{the full version of this work \cite[D]{cryptoeprint:2026/293}}{Appendix~\ref{sec: merkle-tree based co protocols}}.
\section{Instantiation and Discussion of Ring Signatures}\label{sec: Explicit Ring Signature Constructions}

\paragraph{AOS-based Ring Signature.}

There are several AOS-based \ac{RSS} in the literature.
\textsc{Erebor}~\cite{CiC:BorLaiLer24} and \textsc{MayoRS}~\cite{cryptoeprint:2025/1090} are recent examples.
Neither is based on a $C\&O$ $\Sigma$-protocol, but \cref{thm: UFNRA from measure and reprogram}  yields the first \ac{QROM} security bound for both. The bounds are not tight, but at least rule out fundamental quantum vulnerabilities.

Explicit examples for $C\&O$ $\Sigma$-protocols are \textsc{ZKBOO}~\cite{USENIX:GiaMadOrl16} and \textsc{ZKB}++~\cite{CCS:CDGORR17} which are the underlying $\Sigma$-protocols in \textsc{Fish} and \textsc{Picnic}~\cite{CCS:CDGORR17}.
More generally, $C\&O$ $\Sigma$-protocols can be constructed using the MPC-in-the-head paradigm~\cite{STOC:IKOS07}.

\paragraph{RPSF-based Ring Signature.}

As an example, consider the ring signature \textsc{Gandalf}~\cite{C:GajJanKil24}.
The translation of \textsc{Gandalf}-like ring signatures into our \ac{RPSF} framework is relatively straightforward (for details see \ifthenelse{\boolean{shortver}}{the full version of this work \cite[E.2]{cryptoeprint:2026/293}}{Appendix~\ref{subsec: Application to Gandalf}}).
Compared to the classical proof technique, we get the modified hash-collision term from the \ac{QROM}, the additional additive adaptive reprogramming term that requires sufficiently many salt bits $\saltbits$ and the typical square-root loss when reducing to the one-wayness via measure-and-reprogram.
Our framework can also be applied (both for \cref{thm: Unf RPSF RS TAR} and \cref{thm: Unf RPSF RS HF}) to \cite{ICICS:WanSun11}, a lattice-based ring signature directly based on \cite{STOC:GenPeiVai08}.

\paragraph{Comparison and Trade-offs.}The generic AOS bound is exponential in ring size $N$ and is not expected to be tight, but gives meaningful guarantees for small rings.
The $C\&O$ bound is multiplicatively tight and avoids exponential scaling when $sz_{\mathrm{triv}} = 1$, e.g., for protocols with 2-special soundness.
\ac{RPSF}-based bounds are tight but constructing \acp{RPSF} for large rings is non-trivial, as the assumed hardness of finding collisions is impacted by the ring size.

\paragraph{Computational Assumptions.}In our work, we relied on statistical \ac{HVZK} properties of $\Sigma$-protocols.
For $\Sigma$-protocols in AOS, it is sufficient to consider weak special \ac{HVZK} properties for unforgeability, i.e., an adversary given a transcript from the simulator or the actual algorithm cannot distinguish them.
Analogously, for \acp{RPSF}, computational indistinguishability between presampling and conditional domain sampling would suffice.
For domain sampling, computational properties suffice when considering \cref{thm: Unf RPSF RS TAR}, but not when considering \cref{thm: Unf RPSF RS HF}.

Anonymity from computational assumptions requires additional access to \emph{the witness/trapdoor} (called strong \ac{HVZK} of $\Sigma$-protocols \cite{CiC:BorLaiLer24}).

\appendix

\begin{credits}
    \subsubsection{\ackname}

    \ifthenelse{\boolean{anonymous}}{}{We thank the anonymous reviewers of Eurocrypt 2026, Crypto 2026 and Asiacrypt 2026 for their valuable feedback.
        We used Claude for language editing and to improve presentation; all technical content and proofs are the authors' own.
        The authors acknowledge support from the Danish Ministry of Defense Acquisition
        and Logistics Organization (FMI). CM acknowledges support by the Independent Research Fund Denmark via a DFF Sapere Aude grant (IM-3PQC,
        grant ID 10.46540/2064-00034B). }
\end{credits}
\newpage

\bibliographystyle{splncs04}

\renewcommand{\theHsection}{\Alph{section}}    \renewcommand{\theHsubsection}{\Alph{section}.\arabic{subsection}}

\ifthenelse{\boolean{shortver}}{}{

\section{Additional Preliminaries}
\subsection{Probabilities}
\label{app:prob}

Let $P$ and $Q$ be probability distributions on the same countable alphabet $\Omega$.
Then, the statistical distance between $P$ and $Q$ is $\Delta(P, Q) :=\sum_{x \in \Omega} |P(x) - Q(x)|/2$.

The Bhattacharyya coefficient of two probability distributions $P, Q$ on a
countable alphabet $\Omega$ is
\[\BC(P, Q) := \sum_{x \in \Omega}\sqrt{P(x)Q(x)}.\]
The Hellinger distance is defined via
\[\hellingerdistance(P, Q)^2 := 1 - \BC(P, Q)
    = \frac12\sum_{x \in \Omega}\left(\sqrt{P(x)} - \sqrt{Q(x)}\right)^2.\]
It is bounded by the statistical distance as $\hellingerdistance(P,Q) \leq \sqrt{\Delta(P,Q)}$.

The Rényi divergence of order $\alpha \in (1, \infty)$ of distributions $P, Q$ with $\supp{P} \subseteq \supp{Q}$ is
\[R_{\alpha}(P \| Q) = \left({\sum}_{x \in \supp{P}} {P(x)^\alpha\cdot Q(x)^{1-\alpha}}\right)^{\frac{1}{\alpha -1}}\]
and the Rényi divergence of order $\infty$ is defined as
\[R_\infty(P \| Q) = {\sup}_{x \in \supp{P}}P(x)/Q(x) \in [1, \infty].\]
These are the multiplicative relations often used in cryptography. The divergences are defined as
\[D_{\alpha}(P\|Q):=\ln(R_\alpha(P\|Q)).\]

The Kullback-Leibler divergence for two distributions over the same countable set $\Omega$ is defined as
\[D_{KL}(P \| Q) = \sum_{x \in \Omega}\ln\left(\frac{P(x)}{Q(x)}\right)\cdot P(x)\]
with the convention that $\ln(x/0) = + \infty$ for any $x > 0$ and $0$ for $x=0$.
Furthermore, Pinsker's inequality gives an implicit bound on the statistical distance $\Delta(P,Q) \leq \sqrt{D_{KL}(P\| Q)/2}$.
As $D_{KL}(P\| Q) \leq D_{\alpha}(P\| Q)$ for all $\alpha > 1$, the Rényi divergence can be used to bound the statistical distance.
\subsection{Signatures}
\label{subsec: additional prelims Signatures}
We recall the syntax of digital signatures.

\begin{definition}[Signatures]\label{def: signatures}
    A signature scheme $\sig$ is defined as a tuple $(\gen, \sign, \verify)$ of the following three algorithms:
    \begin{itemize}
        \item $\gen(\secparam) \rightarrow (\sk, \pk)$: The probabilistic key generation algorithm returns a pair of secret key $\sk$ and public key $\pk$, where the public key defines the message space.
        \item $\sign(\sk, m) \rightarrow \sigma$: The probabilistic signature algorithm takes in a secret key and a message $m$, and outputs a signature $\sigma$.
        \item $\verify(\pk, m, \sigma) \rightarrow b$: The deterministic verification algorithm takes in the public key $\pk$, a message $m$ and a signature $\sigma$.
              It outputs either $\false$ ($b=0$) or $\true$ ($b=1$).
    \end{itemize}
    $\sig$ has correctness error $\epsilon$ if for all $(\pk, \sk) \in \supp{\gen}$ and any $m$:
    \[\prob{\verify(\pk, m, \sign(\sk, m)) \neq 1} \leq \epsilon,\]
    where the probability is taken over the randomness of $\sign$.
\end{definition}

\begin{definition}[$\seufcma$ of Signatures]
    Consider the unforgeability game in \cref{fig: digital signatures}.
    We say that a signature $\sig$ is strongly unforgeable under $\nrsignqueries$ chosen message attacks, if for any \ac{ppt} adversary $\adv$
    \[\advantage{\seufcma}{\adv, \nrsignqueries} := \prob{1 \gets \seufcma_{\adv, \sig, \nrsignqueries}} = \negl.\]
\end{definition}

\begin{figure}
    \begin{pchstack}[boxed, center, space=0.5em]
        \procedure[linenumbering]{$\seufcma_{\adv, \sig, \nrsignqueries}$}{
            q \gets 0; \mathcal{L} \gets \emptyset\\
            (\pk, \sk) \gets \gen(\secparam)\\
            (m^*, \sigma^*) \gets \adv^{\Oracle{\sign}}(\pk)\\
            \pcif (m^*, \sigma^*) \in \mathcal{L}: \\
            \t \pcreturn 0 \pccomment{return $\false$}\\
            \pcreturn \verify(\pk, m^*, \sigma^*)
        }
        \procedure[linenumbering]{$\Oracle{\sign}(m)$}{
            \pcif q = \nrsignqueries:\\
            \t \pcreturn \bot\\
            q \gets q + 1\\
            \sigma \gets \sign(\sk, m)\\
            \mathcal{L} \gets \mathcal{L} \cup \{(m, \sigma)\}\\
            \pcreturn \sigma
        }
    \end{pchstack}
    \caption{Unforgeability games for signatures.}\label{fig: digital signatures}
\end{figure}
\subsection{Ring Signatures}\label{subsec: additional prelims Ring Signatures}
\begin{definition}[Ring Signatures]\label{def: ring signature}
    A \acf{RSS} $\rsig$ is a quadruple of \ac{ppt} algorithms $(\setup, \kgen, \sign, \verify)$ such that:
    \begin{itemize}
        \item $\setup(\secparam) \rightarrow \pp$: The setup algorithm takes as input the security parameter $\secparam$ and outputs public parameters $\pp$.
              These parameters also define the message space and an upper bound on the ring size $\kappa$.
        \item $\kgen(\pp) \rightarrow (\pk, \sk)$: The key generation algorithm takes as input the public parameters $\pp$ and produces a pair of public and private keys $(\pk, \sk)$.
        \item
              $\sign(\sk, \rho, \msg) \rightarrow \sigma$: The signing algorithm takes as input the secret key of the signer $\sk$, a list of public keys $\rho = \{\pk_{1}, \ldots, \pk_{N}\}$ defining the ring and a message $\msg$.
              The ring must satisfy the size bound, i.e., $N \leq \kappa$, and $\exists i \in [N]: \mu(\sk) = \pk_i$.\footnote{We assume (w.l.o.g.) that there exists a one-way function $\mu$ such that $\mu(\sk) = \pk$ for all $(\pk, \sk) \in \supp{\kgen}$.}
              It outputs a signature $\sigma$.
        \item
              $\verify(\rho, \msg, \sigma) \rightarrow b$: The deterministic verification algorithm takes as input a list of public keys $\rho = \{\pk_{1}, \dots, \pk_{N}\}$ with $N \leq \kappa$, a message $\msg$ and a signature $\sigma$.
              It outputs either $\false$ ($b=0$) or $\true$ ($b=1$).
    \end{itemize}
    $\rsig$ is $\delta(\kappa)$-correct if for any $\pp \gets \setup(\secparam)$, all $N \leq \kappa$,  any $\{(\pk_j, \sk_j)\}_{j \in [N]} \subseteq \supp{\kgen(\pp)}$ defining $\rho = \{\pk_1, \ldots, \pk_N\}$, any $\msg$ and any $i \in [N]$
    \[\prob{\verify(\rho, \msg, \sign(\sk_{i},\rho, \msg))\neq 1
        } \leq \delta(\kappa).\]
\end{definition}

\paragraph{Unforgeability of Ring Signatures.}
As in standard EUF-CMA, a forgery adversary against a \ac{RSS} must output a forgery $\sigma^*$ for a message $\msg^*$ for a ring $\rho^* \subseteq \{\pk_{1}, \ldots, \pk_{N}\}$ of valid public keys.
The adversary can make adaptive queries to a signing oracle with a message, a signer position, and a list of public keys. The signer position must refer to a valid public key.
This is referred to as ``insider security''~\cite{JC:BenKatMor09}.
In the following, we treat the size bound for signing queries and verification queries as implicit, i.e.\ the ring size of queries cannot exceed $\kappa$.

\begin{figure*}
    \centering
    \begin{pchstack}[boxed, center, space=0.5em]
        \procedure[linenumbering]{$\seufcra_{\adv, \rsig, N, \nrsignqueries}(\secpar)$}{
            q \gets 0;\mathcal{L} \gets \emptyset\\
            \pp \gets \setup(\secparam)\\
            \pcfor i \in [N]:\\
            \t (\pk_{i},\sk_{i})\gets\kgen(\pp)\\
            \rho \gets \{\pk_{1}, \ldots, \pk_{N}\}\\
            (\rho^{*}, \msg^{*}, \sigma^{*}) \gets \adv^{\Oracle{\sign}}(\rho)\\
            \pcif (\rho^{*}, \msg^{*}, \sigma^{*}) \in \mathcal{L} \lor \rho^{*}\not\subseteq \rho:\\
            \t \pcreturn 0\pccomment{return \(\false\)}\\
            \pcreturn \verify(\rho^{*}, \msg^{*}, \sigma^{*})
        }
        \procedure[linenumbering]{$\Oracle{\sign}(i, \rho'=\{\pk_{1}', \ldots, \pk_{N'}'\}, \msg)$}{
            \pcif q = \nrsignqueries \lor \pk_{i}' \notin \rho:\\
            \t \pcreturn \bot\\
            \pclinecomment{let $j\in [N]$ be the index s.t. $\mu(\sk_j) = \pk_i'$}\\
            q \gets q+1\\
            \sigma \gets \sign(\sk_j, \rho', \msg)\\
            \mathcal{L} \gets \mathcal{L} \cup \{(\rho', \msg, \sigma)\}\\
            \pcreturn \sigma
        }

    \end{pchstack}
    \caption{Unforgeability game for ring signatures.}\label{fig: rsig unforgeability}
\end{figure*}

\begin{definition}[$\eufnra$ of Ring Signatures]\label{def: UFNRA}
    A \ac{RSS} $\rsig$ is unforgeable under no-ring attacks for $N \leq \kappa$ if, for any \ac{ppt} adversary $\adv$
    \[\advantage{\eufnra}{\adv, \rsig, N} := \prob{1 \gets \eufnra_{\adv, \rsig, N}(\secpar)}= \negl,\]
    where $\eufnra_{\adv, \rsig, N}=\seufcra_{\adv, \rsig, N, 0}$.
\end{definition}

\begin{definition}[$\seufcra$ of Ring Signatures]\label{def: SUFCRA}
    Consider the unforgeability game in \cref{fig: rsig unforgeability}.
    We say a \ac{RSS} $\rsig$ is strongly unforgeable under $\nrsignqueries$ chosen ring attacks for $N \leq \kappa$ if, for any \ac{ppt} adversary $\adv$
    \[\advantage{\seufcra}{\adv, \rsig, N, \nrsignqueries} := \prob{1 \gets \seufcra_{\adv, \rsig, N, \nrsignqueries}(\secpar)}= \negl.\]
\end{definition}

Consider the unforgeability game in \cref{fig: rsig unforgeability}.
Define the $\weufcraone$ game as the $\seufcra$ game, except that i) $\mathcal{L}$ contains pairs $(\rho,m)$ (without the signatures), ii) a forgery counts as fresh if it is for a pair $(\rho^*, m^*)\notin \mathcal L$ and iii) the signing oracle cannot be queried for $(\rho', \msg)$ already in $\mathcal{L}$.
\begin{definition}[$\weufcraone$ of Ring Signatures]\label{def: WUFCRA}
    We say a \ac{RSS} $\rsig$ is one-per-message weakly unforgeable under $\nrsignqueries$ chosen ring attacks for $N \leq \kappa$ if, for any \ac{ppt} adversary $\adv$
    \[\advantage{\weufcraone}{\adv, \rsig, N, \nrsignqueries} := \prob{1 \gets \weufcraone_{\adv, \rsig, N, \nrsignqueries}(\secpar)}= \negl.\]
\end{definition}

\paragraph{Anonymity of Ring Signatures.}
We consider \emph{anonymity under full key exposure} introduced in \cite{JC:BenKatMor09}.
As in~\cite{PKC:BFGJS22,C:GajJanKil24}, we parameterize the anonymity notion with $\nrchalqueries$, defining the number of allowed calls to the challenge oracle.

\begin{figure}
    \centering
    \begin{pchstack}[boxed, center, space=0.5em]
        \procedure[linenumbering]{$\anon_{\adv, \rsig, N, \nrchalqueries}(\secpar)$}{
            q \gets 0; b \gets \bin\\
            \pp \gets \setup(\secparam)\\
            \pcfor i \in [N]:\\
            \t (\pk_{i}, \sk_{i})\gets\kgen(\pp)\\
            \rho \gets {\{\pk_{i}\}}_{i \in [N]};\SK \gets {\{\sk_{i}\}}_{i \in [N]}\\
            b' \gets \adv^{\chal}(\rho, \SK)\\
            \pcreturn b=b'
        }

        \procedure[linenumbering]{$\chal(i_0, i_{1},\rho'=\left\{\pk_1', \ldots, \pk_{N'}'\right\}, \msg)$}{
            \pcif q = \nrchalqueries \lor \pk_{i_{0}}'\notin \rho \lor \pk_{i_{1}}' \notin \rho: \\
            \t\pcreturn \mathord{\perp}\\
            \pclinecomment{let $j\in [N]$ be the index s.t. $\mu(\sk_j) = \pk_{i_b}'$}\\
            q \gets q + 1\\
            \pcreturn \sign(\sk_j, \rho', \msg)
        }
    \end{pchstack}
    \caption{The anonymity game for \ac{RSS}s under full key exposure.}\label{rsig: anonymity under full key exposure}
\end{figure}

\begin{definition}[Anonymity of Ring Signatures]\label{def: anonymity}
    Consider the anonymity game in \cref{rsig: anonymity under full key exposure}.
    We say a \ac{RSS} $\rsig$ is anonymous under full key exposure with $\nrchalqueries$ challenge queries if, for $N\leq \kappa$ and any \ac{ppt} adversary $\adv$
    \[\advantage{\anon}{\adv, \rsig, N, \nrchalqueries} := \left|\prob{\anon_{\adv, \rsig, N, \nrchalqueries}(\secpar)\Rightarrow \true} - 1/2\right|= \negl\]
\end{definition}

\subsection{Sigma Protocols}\label{subsec: additional prelims Sigma Protocols}

\begin{definition}[$\mathsf{\Sigma}$-protocol]\label{def: sigma protocol}
    A $\mathsf{\Sigma}$-protocol $\Sigma=(\gen,\prover,\verifier)$ is a three-move identification protocol consisting of five \ac{ppt} algorithms $\gen, \prover_{1}, \prover_{2}, \verifier_{1}, \verifier_{2}$ where $\verifier_{2}$ is deterministic.
    We assume $\prover_{1}, \prover_{2}$  to share state, and likewise  $\verifier_{1}$ and $\verifier_{2}$.
    Denote the challenge space by $\mathcal{C}$.
    Then, $\Sigma$ behaves as follows.
    \begin{enumerate}
        \item The prover gets  a pair of instance and witness $(\instance, w) \gets \gen(\secparam)$ where $w \in \mathcal{W}$ and publishes its instance $\instance \in \mathcal{I}$.\footnote{We assume (w.l.o.g.) that there exists a function $\mu$ such that $\mu(w) = \instance$ for all $(\instance, w) \in \supp{\gen}$. If this is not the case, we can re-define the witness to contain a copy of the instance.}
        \item The prover runs $\prover_{1}$ with input $\instance$ to get a commitment $(\com, \state) \gets \prover_{1}(\instance)$, and sends $\com$ to the verifier.
        \item The verifier runs $\verifier_{1}$ with input $\instance, \com$, and generates an independent random challenge from the challenge set $\mathcal{C}$ and sends it to the prover.
        \item The prover runs $\prover_{2}$ with input $\state, w, \ch$ to compute a response $\rsp \gets \prover_{2}(\state, w, \ch)$ for the commitment $\com$, and sends it to the verifier.
        \item The verifier runs $\verifier_{2}$ with input $\instance, \com, \ch, \rsp$ and outputs either $\false$ or $\true$.
    \end{enumerate}
    A protocol transcript $(\com, \ch, \rsp)$ is said to be valid in case $\verifier_{2}(\instance, \com, \ch, \rsp)$ outputs $\true$.
    We say a $\mathsf{\Sigma}$-protocol is complete, if
    \[\prob{\verifier_{2}(\instance, \com, \ch, \rsp) : \substack{
                (\instance, w) \gets \gen(\secparam);\
                (\com, \state) \gets \prover_{1}(\instance);\\
                \ch \gets \verifier_{1}(\instance, \com);\
                \rsp \gets \prover_{2}(\state, w, \ch)
            }} = 1.\]
\end{definition}

If a response and a challenge uniquely define the commitment, then we call the $\mathsf{\Sigma}$-protocol \emph{commitment-recoverable}.

\paragraph{Security for the Verifier.}

The verifier needs to be convinced that the prover knows the witness.
Therefore, an adversary without knowing the witness should not be able to impersonate an honest prover.

\begin{definition}[Impersonation under Passive Attacks]\label{def: impersonation}
    We say that a $\mathsf{\Sigma}$-protocol $\Sigma$ is secure against impersonations under passive attacks, if for any \ac{ppt} $\adv = (\adv_1, \adv_2)$ and honest verifier $\verifier = (\verifier_1, \verifier_2)$, we have
    \[\advantage{\imp}{\adv, \Sigma} := \prob{\verifier_{2}(\instance, \com, \ch , \rsp) : \substack{
                (\instance, w) \gets \Sigma.\gen(\secparam);\\
                (\com, \state) \gets \adv_1(\instance);\\
                \ch \gets \verifier_1(\instance, \com);\\
                \rsp \gets \adv_2(\state, \ch)
            }
        } = \negl.\]
\end{definition}

An important security notion for $\mathsf{\Sigma}$-protocols is  \emph{special soundness}.

\begin{definition}[Special Soundness]\label{def: special soundness}
    We say that a $\mathsf{\Sigma}$-protocol $\Sigma$ is \emph{special sound} if there exists an \ac{ppt} $\extractor$ that on input of any $\instance$ from $(\instance, w) \gets \gen(\secparam)$ and any two valid transcripts $(\com, \ch_{0}, \rsp_{0})$ and $(\com, \ch_{1}, \rsp_{1})$ with $\ch_{0} \neq \ch_{1}$, outputs some witness $w'$ such that $(\instance, w') \in R$.
\end{definition}

A $\mathsf{\Sigma}$-protocol has unique responses if there is only one valid response for a commitment-challenge-pair.
A relaxation is computational unique responses.

\begin{definition}[Computational Unique Responses]\label{def: computationally unique responses}
    We say that a $\mathsf{\Sigma}$-protocol $\Sigma$ has \emph{computational unique responses} if for any \ac{ppt} $\adv$, we have
    \[\advantage{\cur}{\adv, \Sigma} := \prob{
            \begin{array}{c}\rsp_0 \neq \rsp_1                           \\
                \land \verifier_2(\instance, \com, \ch, \rsp_0) \\
                \land \verifier_2(\instance, \com, \ch, \rsp_1)
            \end{array}
            :
            \substack{
                (\instance, w) \gets \gen(\secparam); \\
                (\com, \ch, \rsp_0, \rsp_1)\gets \adv(\instance)
            }} = \negl.\]
\end{definition}

\paragraph{Security for the Prover.}
The prover wants to ensure that the witness is not revealed to others.
There are two levels to this.
First, the instance should not reveal anything about the witness, and secondly, the commitments and responses should not reveal anything about the witness.

\begin{definition}[Witness Recovery]\label{def: witness recovery}
    We say that a $\mathsf{\Sigma}$-protocol $\Sigma$ for a witness relation $R$ is secure against witness recovery, if for any \ac{ppt} $\adv$, we have
    \[\advantage{\wrec}{\adv, \Sigma} := \prob{(\instance, w') \in R :
            \substack{
                (\instance, w) \gets \gen(\secparam); \\
                w' \gets \adv(\instance)
            }
        } = \negl.\]
\end{definition}

\begin{definition}[Honest Verifier Zero Knowledge]\label{def: HVZK}
    We say that a $\mathsf{\Sigma}$-protocol $\Sigma$ is \emph{statistically} special $\hvzktv$ \ac{HVZK} if there exists a \ac{ppt} simulator, such that for any $\secpar, (\instance, w) \gets \gen(\secparam)$ and any challenge $\ch \in \mathcal{C}$, we have
    \[\Delta(P_{\ch}(\instance, w),\simulator(\instance, \ch)) \leq \hvzktv \]
    where $P_{\ch}(\instance, w)$ denotes the distribution of honestly generated transcripts according to the algorithm's description, conditioned on having challenge $\ch$.
    Furthermore, we call it $\hvzkrenyi$-divergence \ac{HVZK} for $\alpha \in (1, \infty]$ if
    \[R_{\alpha}(P_\ch(\instance, w)\|\simulator(\instance, \ch)) \leq \hvzkrenyi.\]
    We also define the Kullback-Leibler divergence for \ac{HVZK} as
    \[D_{KL}(P_\ch(\instance, w)\|\simulator(\instance, \ch)) \leq \hvzkkl.\]
\end{definition}

A simulator produces commitments and responses.
However, the distribution of commitments in the range of all commitments is not immediate, so we consider the notion of commitment min-entropy for a \ac{HVZK} simulator for \emph{any} input.

\begin{definition}[Simulator Commitment Min-Entropy]\label{def: sim commitment-min-entropy}
    We say that simulator $\simulator$ for $\Sigma$ has $\minentropy(\secpar)$ commitment min-entropy, if
    \[\max_{C, \ch \in \mathcal{C}, \instance \in \{0,1\}^*}\prob{C = \com: (\com, \rsp) \gets \simulator(\instance, \ch)} \leq 2^{-\minentropy(\secpar)}.\]
\end{definition}

For protocols with statistical \ac{HVZK}, the notion is equivalent to the more common notion of commitment min-entropy.     \subsection{Quantum Computation Basics}
The length of a vector $\ket{\psi} \in \CC^d$ can be measured with the Euclidean norm
\[\euclnorm{\ket{\psi}} := \sqrt{\braket{\psi}{\psi}}.\]
Given an operator $M \in \mathcal{L}(\CC^d)$, define its trace norm and operator norm as
\[\tracenorm{M} :=  \trace{\sqrt{M^\dagger M}} \quad\text{and}\quad \operatornorm{M} := \max_{\text{unit vectors}\, \ket{\psi} \in \CC^d}\euclnorm{M\ket{\psi}}.\]
For a vector $\ket{\psi} \in \CC^d$ and two unitaries $U, V$ on $\CC^d$ it holds that
\begin{equation}
    \label{eq: trace norm same initial state}\tracenorm{U^\dagger\ketbra{\psi}{\psi}U - V^\dagger \ketbra{\psi}{\psi}V} \leq 2 \operatornorm{U - V}.
\end{equation}
Both the trace and the operator norm are also invariant under multiplication with unitaries.
Any rank-1 operator $M = \ketbra{\psi}{\phi} \in \mathcal{L}(\CC^d)$ satisfies
\begin{equation}
    \label{eq: rank 1 operator norm}\operatornorm{M} = \euclnorm{\ket{\psi}}\euclnorm{\ket{\phi}}.
\end{equation}

\subsubsection{Grover Search}\label{subsubsec: Grover Search}
Let $f: \bin^n \to \bin$ be an $n$-bit function.
An element $x \in \bin^n$ is marked, if and only if $f(x) = 1$.
Grover's algorithm~\cite{STOC:Grover96} allows for finding a marked element in time $\bigO{\sqrt{N/M}}$ where $M$ is the number of marked elements and $N = 2^n$.
We follow the notation of \cite{nielsen2010quantum} in this work.
The initial state of Grover's algorithm is an equal superposition
$\ket{\psi} = \frac{1}{\sqrt{N}}\sum_{x \in \bin^n}\ket{x}$.
By separating the marked and unmarked elements, the state can be written as
\[
    \ket{\psi}
    =
    \sqrt{\frac{N-M}{N}}\ket{\alpha}
    +
    \sqrt{\frac{M}{N}}\ket{\beta},
\]
where
\[
    \ket{\alpha}
    =
    \frac{1}{\sqrt{N-M}}
    \sum_{\substack{x \in \bin^n \\ f(x)=0}}
    \ket{x},
    \qquad
    \ket{\beta}
    =
    \frac{1}{\sqrt{M}}
    \sum_{\substack{x \in \bin^n \\ f(x)=1}}
    \ket{x}.
\]
This is interpreted geometrically by defining an angle $\theta$ such that $\cos(\theta/2) = \sqrt{(N-M)/N}$, which yields
$\ket{\psi} = \cos\left(\frac{\theta}{2}\right)\ket{\alpha} + \sin\left(\frac{\theta}{2}\right)\ket{\beta}$.
The Grover iteration $G$ applied $q$ times takes the state to
\begin{equation}\label{eq: Grover state after q iterations}
    \ket{\psi^{(q)}} := G^q\ket{\psi} = \cos\left(\frac{2q+1}{2}\theta\right)\ket{\alpha} + \sin \left(\frac{2q + 1}{2}\theta\right)\ket{\beta}.
\end{equation}     \subsection{NTRU Lattices}\label{subsec: NTRU Lattices}
We introduce the lattice notation used in this appendix. Most definitions are adapted from \cite{C:GajJanKil24,EPRINT:GajJanKil24b}, with details omitted where they are not essential.

We work with polynomial rings of the form $\mathcal{R} = \ZZ[X]/f(X)$ where $f(X) = X^M +1$ and $M$ is a power of two, i.e., $\exists k: M=2^k$.
Polynomials are denoted by lower-case bold letters.
We write $\mathcal{R}_q = \ZZ_q[X]/f(X)$.

A lattice $\bm{\Lambda} \subseteq \RR^n$ is defined by a basis $\bm{B} = \{\bm{b}_1, \ldots, \bm{b}_m\}$ of linearly independent vectors $\bm{b}_i \in \RR^n$.
They form the lattice
\[\bm{\Lambda}(\bm{B}) = \bm{B}\ZZ^m = \left\{\sum_{i \in [m]}c_i \bm{b}_i : c_1, \ldots, c_m \in \ZZ\right\}.\]
For a basis $\bm{B}$, denote the Gram-Schmidt norm from \cite{STOC:GenPeiVai08} as $\norm{\bm{B}}_{GS} = \max_{i \in [m]} \|\tilde{\bm{b}}_i\|$, where $\tilde{\bm{B}}$ is the Gram-Schmidt orthogonalization of $\bm{B}$.

A discrete Gaussian over a lattice $\bm{\Lambda}$ is defined via the $n$-dimensional Gaussian function $\rho_{s, \bm{c}}: \RR^n \to (0,1]$ on $\RR^n$ centered at $\bm{c} \in \RR^n$ with standard deviation $s > 0$:
\[\rho_{s,\bm{c}}(\bm{x}) := \exp\left({-\frac{\euclnorm{\bm{x}-\bm{c}}^2}{2s^2}}\right).\]
A discrete Gaussian over $\bm{\Lambda}$ is the Gaussian function normalized by all points of the lattice
\[\forall \bm{x} \in \bm{\Lambda}: \mathcal{D}_{\bm{\Lambda}, s, \bm{c}}(\bm{x}):=\frac{\rho_{s, \bm{c}}(\bm{x})}{\sum_{\bm{z} \in \bm{\Lambda}}\rho_{s, \bm{c}}(\bm{z})}.\]
We may omit $\bm{c}$ if it is $\bm{0}$, and we may omit $\bm{\Lambda}$, if it is $\ZZ^M$.
We write $\bm{f} \gets \mathcal{D}_{\mathcal{R}, s}$ to denote the polynomial $\bm{f}$ whose coefficient embedding\footnote{This embedding identifies polynomials in $\mathcal{R}$ with their coefficient vectors in $\ZZ^M$.} is sampled as $\mathcal{D}_{\ZZ^M, s}$.

W denote by $\eta_{\epsilon}(\bm{\Lambda})$  the smoothing parameter of a lattice $\bm{\Lambda}$ (introduced and properly defined in \cite{FOCS:MicReg04}).
Intuitively, the smoothing parameter is the smallest $s$ such that a discrete Gaussian distribution with standard deviation $s$ looks ``smooth'' (nearly uniform), when it is wrapped over the quotient $\RR^n/\bm{\Lambda}$.

\begin{lemma}[{\cite[Lemma 2.11]{TCC:PeiRos06}}]\label{lemma: min-entropy shifted discrete Gaussians}
    For any $n$-dimensional lattice $\bm{\Lambda}$, center $\bm{c} \in \RR^n$, $\epsilon >0$, and $s \geq 2 \eta_{\epsilon}(\bm{\Lambda})$, and for every $\bm{x} \in \Lambda$, we have
    \[\mathcal{D}_{\bm{\Lambda}, s, \bm{c}}(\bm{x}) \leq \frac{1+\epsilon}{1-\epsilon}2^{-n}.\]
    In particular, for $\epsilon < 1/3$, the min-entropy is at least $n-1$.
\end{lemma}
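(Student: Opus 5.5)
The statement is the lemma of Peikert and Rosen bounding the point mass of a discrete Gaussian. My plan is the standard smoothing-parameter argument: bound the numerator $\rho_{s,\bm{c}}(\bm{x})$ from above, and the normalizer $\rho_{s,\bm{c}}(\bm{\Lambda}) := \sum_{\bm{z}\in\bm{\Lambda}}\rho_{s,\bm{c}}(\bm{z})$ from below, in terms of the same lattice quantity so that it cancels.

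\textbf{Numerator.} Trivially $\rho_{s,\bm{c}}(\bm{x}) \le 1$.

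\textbf{Normalizer.} I would first use that for $s \ge \eta_\epsilon(\bm{\Lambda})$ the Gaussian mass of every coset is nearly constant. By Poisson summation,
\[\rho_{s,\bm{c}}(\bm{\Lambda}) = \frac{s^n}{\det \bm{\Lambda}}\sum_{\bm{w}\in\bm{\Lambda}^*} \hat\rho_{1/s}(\bm{w})\, e^{-2\pi i\langle \bm{w},\bm{c}\rangle}.\]
The definition of the smoothing parameter gives $\rho_{1/s}(\bm{\Lambda}^*\setminus\{\bm{0}\})\le\epsilon$. Hence
\[\rho_{s,\bm{c}}(\bm{\Lambda}) \in \left[1-\epsilon,\,1+\epsilon\right]\cdot \frac{s^n}{\det\bm{\Lambda}}\]
for every center $\bm{c}$. (This is the classical Micciancio--Regev lemma; I would cite it rather than reprove it.) It only yields a lower bound proportional to $s^n/\det\bm{\Lambda}$. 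So the remaining task is to relate $1/\rho_{s,\bm{c}}(\bm{\Lambda})$ to $2^{-n}$, which is where the hypothesis $s \ge 2\eta_\epsilon$ enters.

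\textbf{Using the factor $2$ (main step).} Consider the sublattice $2\bm{\Lambda}$. Its smoothing parameter satisfies $\eta_\epsilon(2\bm{\Lambda}) = 2\eta_\epsilon(\bm{\Lambda}) \le s$. The lattice $\bm{\Lambda}$ is the disjoint union of the $2^n$ cosets $2\bm{\Lambda}+\bm{v}$, for $\bm{v}\in\bm{\Lambda}/2\bm{\Lambda}$. Applying the coset estimate to $2\bm{\Lambda}$ with shifted centers, each coset has Gaussian mass in $[1-\epsilon,1+\epsilon]\cdot s^n/\det(2\bm{\Lambda})$.

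Summing over the $2^n$ cosets gives
\[\rho_{s,\bm{c}}(\bm{\Lambda}) \ge 2^n(1-\epsilon)\frac{s^n}{\det(2\bm{\Lambda})}.\]
Separately, the coset containing $\bm{x}$ has mass at least $\rho_{s,\bm{c}}(\bm{x})$ but at most $(1+\epsilon)s^n/\det(2\bm{\Lambda})$. Dividing these two estimates yields
\[\mathcal{D}_{\bm{\Lambda},s,\bm{c}}(\bm{x}) \le \frac{\rho_{s,\bm{c}}(2\bm{\Lambda}+\bm{x})}{\rho_{s,\bm{c}}(\bm{\Lambda})} \le \frac{1+\epsilon}{1-\epsilon}2^{-n},\]
which is the claimed bound.

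\textbf{Min-entropy corollary.} For $\epsilon<1/3$ we have $\frac{1+\epsilon}{1-\epsilon}<2$. Hence every point has probability below $2^{-(n-1)}$, i.e.\ min-entropy at least $n-1$.

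The only delicate point is the coset-uniformity step for an arbitrary center. That is exactly where the smoothing condition is consumed, and I expect it to go through by the cited Poisson-summation argument.
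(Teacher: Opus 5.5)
Your proof is correct. The paper gives no proof of this lemma and simply imports it from Peikert--Rosen. Your argument is the original Peikert--Rosen proof: decompose $\bm{\Lambda}$ into the $2^n$ cosets of $2\bm{\Lambda}$, use $\eta_\epsilon(2\bm{\Lambda}) = 2\eta_\epsilon(\bm{\Lambda}) \le s$, and apply the Micciancio--Regev fact that every shifted coset has Gaussian mass within a factor $[1-\epsilon,1+\epsilon]$ of $s^n/\det(2\bm{\Lambda})$.

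Only cosmetic fixes are needed. In your Poisson summation formula, $\hat\rho_{1/s}(\bm{w})$ should read $\rho_{1/s}(\bm{w})$. Your computation uses the Micciancio--Regev normalization, whereas the paper writes the Gaussian as $\exp(-\|\bm{x}-\bm{c}\|^2/(2s^2))$. These differ only by rescaling $s$ by $\sqrt{2\pi}$, which is harmless here. Finally, the division by $1-\epsilon$ implicitly requires $\epsilon<1$.
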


\begin{lemma}[{\cite{EC:Lyubashevsky12,banaszczyk1993new}}]\label{lemma: length bound discrete Gaussian}
    Let $n > 1$ and $s > 0$.
    \begin{enumerate}
        \item For any $\tau > 0$, $\prob{|z| > \tau s: z \gets \mathcal{D}_{\ZZ, s}} \leq 2e^{\frac{-\tau^2}{2}}$.
        \item For any $\tau > 1$, $\prob{\euclnorm{\bm{z}} > \tau s\sqrt{n}: \bm{z} \gets \mathcal{D}_{\ZZ^n, s}} \leq \tau^ne^{\frac{n}{2}(1-\tau^2)}$.
    \end{enumerate}
\end{lemma}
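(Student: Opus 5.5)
The plan is to prove both parts by \emph{Banaszczyk-style Gaussian-mass} arguments for the unnormalized function $\rho_s := \rho_{s,\bm{0}}$. The one fact about lattices I will use throughout is the Poisson summation formula: for $\bm{\Lambda} = \ZZ^n$ (or any lattice) and any $s > 0$, $\rho_s(\bm{\Lambda})$ equals, up to a positive factor, a sum over the dual lattice of a Gaussian with all nonnegative terms. I will use it to derive two inequalities.
\begin{itemize}
\item \textbf{Shift inequality.} For every shift $\bm{c}$, $\rho_s(\bm{\Lambda}+\bm{c}) \le \rho_s(\bm{\Lambda})$. Under Poisson summation the shift only multiplies each dual term by a phase of modulus $1$, and all unshifted dual terms are nonnegative.
\item \textbf{Scaling inequality.} For every $0<c\le 1$, $\rho_{s/c}(\bm{\Lambda}) \le c^{-n}\rho_s(\bm{\Lambda})$. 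Passing from width $s$ to $s/c$ multiplies the prefactor by $c^{-n}$ and shrinks each dual-side Gaussian term, and those terms are nonnegative.
\end{itemize}

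\textbf{Part 1 (one dimension).} I will bound the moment generating function of $z \gets \mathcal{D}_{\ZZ,s}$. Completing the square gives $e^{tz}\rho_s(z) = e^{t^2s^2/2}\rho_{s,ts^2}(z)$. Summing over $\ZZ$ and applying the shift inequality gives
\[
  \mathbb{E}[e^{tz}] = e^{t^2s^2/2}\,\frac{\rho_s(\ZZ - ts^2)}{\rho_s(\ZZ)} \le e^{t^2s^2/2}.
\]
Markov's inequality applied to $e^{tz}$ then gives $\prob{z > \tau s} \le e^{-t\tau s + t^2s^2/2}$. Choosing $t = \tau/s$ yields $e^{-\tau^2/2}$. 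By symmetry of $\mathcal{D}_{\ZZ,s}$ the same bound holds for $\prob{z<-\tau s}$, and a union bound gives the factor $2$.

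\textbf{Part 2 ($n$ dimensions).} Fix $r = \tau s\sqrt{n}$ and $c\in(0,1)$. I will split the Gaussian pointwise as
\[
  \rho_s(\bm{x}) = \rho_{s/c}(\bm{x})\, e^{-(1-c^2)\euclnorm{\bm{x}}^2/(2s^2)}.
\]
On the set $\euclnorm{\bm{x}}>r$ the second factor is at most $e^{-(1-c^2)r^2/(2s^2)}$. Summing over those $\bm{x}\in\ZZ^n$ and then applying the scaling inequality gives
\[
  \rho_s\bigl(\{\bm{x}\in\ZZ^n:\euclnorm{\bm{x}}>r\}\bigr) \le e^{-(1-c^2)r^2/(2s^2)}\,\rho_{s/c}(\ZZ^n) \le c^{-n}e^{-(1-c^2)\tau^2 n/2}\,\rho_s(\ZZ^n).
\]
Setting $c = 1/\tau$, which is valid because $\tau>1$, and dividing by $\rho_s(\ZZ^n)$ gives exactly $\tau^n e^{\frac{n}{2}(1-\tau^2)}$.

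\textbf{Main obstacle.} The step needing the most care is the scaling inequality $\rho_{s/c}(\ZZ^n)\le c^{-n}\rho_s(\ZZ^n)$. I would first verify it with the explicit Poisson-summation (Jacobi theta) formula under the paper's normalization $\rho_s(\bm{x}) = e^{-\euclnorm{\bm{x}}^2/(2s^2)}$. In that formula the Fourier transform carries the factor $(\sqrt{2\pi}s)^n$ and the dual-side Gaussian $e^{-2\pi^2 s^2\euclnorm{\bm{y}}^2}$. Replacing $s$ by $s/c$ multiplies the prefactor by $c^{-n}$ and decreases every dual-side term, which gives the inequality. Since $\ZZ^n$ is self-dual, this also serves as the sanity check for Part 1's shift inequality. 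Alternatively, for $\ZZ^n$ one can reduce to $n=1$ because $\rho_s(\ZZ^n)=\rho_s(\ZZ)^n$.
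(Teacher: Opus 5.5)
Your proof is correct. The paper gives no argument of its own and simply imports the lemma from \cite{EC:Lyubashevsky12,banaszczyk1993new}, so your proposal amounts to reconstructing the cited proofs.

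Your Part~2 is precisely Banaszczyk's argument, which is where the $n$-dimensional bound comes from:
\begin{itemize}
\item factor out $\rho_{s/c}$ and bound the residual factor on the tail;
\item apply $\rho_{s/c}(\ZZ^n)\le c^{-n}\rho_s(\ZZ^n)$;
\item set $c=1/\tau$.
\end{itemize}

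Your Part~1 is the standard Chernoff route. It uses the moment generating function together with the Poisson-summation shift inequality $\rho_s(\ZZ-ts^2)\le\rho_s(\ZZ)$. The constants, including the choice $t=\tau/s$, all check out.
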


\begin{definition}[NTRU Lattice]
    Let $M$ be a power of two, $q$ prime, $\bm{f}, \bm{g} \in \mathcal{R} = \ZZ[X]/(X^M + 1)$ and $\bm{h} = \bm{g}\bm{f}^{-1} \mod q$.
    The NTRU lattice of $\bm{h}$ is defined by the coefficient embedding of
    \[\{(\bm{u}, \bm{v}) \in \mathcal{R}^2 \mid \bm{h}\bm{u} + \bm{v} = \bm{0}\mod q\}.\]
    We denote this lattice as $\bm{\Lambda}_{\bm{h}, q}$.
\end{definition}
The basis of the lattice can be computed easily from the anticirculant matrix of $\bm{h}$.
The sampling procedure also generates $\bm{h}$ typically such that it is in $\mathcal{R}_q^{*}$, i.e., that it is coprime to $q$.

\begin{lemma}[NTRU Trapdoor Generation~{\cite{HofPipSil98,prestthesis}}]\label{lem: NTRU trapgen}
    Let $\mathcal{R} = \ZZ[X]/(X^M + 1)$.
    There exists a \ac{ppt} algorithm, $\ntrutrapgen(q, \alpha)$, that given a modulus $q$, and a target quality $\alpha$, returns a public key $\bm{h} \in \mathcal{R}_q$, and the trapdoor $(\bm{f}, \bm{g}) \in \mathcal{R} \times \mathcal{R}$ such that they each define a basis $\bm{B_h}$ and $\bm{B_{f,g}}$ for the same lattice.
    Furthermore, $\norm{\bm{B_{f,g}}}_{GS} \leq \alpha \sqrt{q}$.
\end{lemma}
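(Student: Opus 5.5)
The plan is constructive: specify $\ntrutrapgen$ as the standard NTRU key generation with a rejection loop, then check the three claims. These are that $\bm{h}$ lies in $\mathcal{R}_q$, that $\bm{B_h}$ and $\bm{B_{f,g}}$ generate the same lattice, and that the Gram--Schmidt bound holds on output.

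\textbf{The algorithm.} It repeats the following until it accepts.
\begin{enumerate}
    \item Sample $\bm{f},\bm{g}\gets\mathcal{D}_{\mathcal{R},s_0}$ with $s_0\approx 1.17\sqrt{q/(2M)}$.
    \item Reject unless $\bm{f}$ is invertible in $\mathcal{R}_q$.
    \item Reject unless the two candidate Gram--Schmidt norms are at most $\alpha\sqrt{q}$. One is the norm of $(\bm{g},-\bm{f})$. The other is that of the second orthogonalized vector, which equals $q\,(\bar{\bm{f}},\bar{\bm{g}})/(\bm{f}\bar{\bm{f}}+\bm{g}\bar{\bm{g}})$ in the FFT domain. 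Neither requires $\bm{F},\bm{G}$ yet.
    \item Solve the NTRU equation $\bm{f}\bm{G}-\bm{g}\bm{F}=q$ over $\mathcal{R}$. The route is the recursive field-norm descent $\ZZ[X]/(X^{M}+1)\to\ZZ[X]/(X^{M/2}+1)\to\cdots\to\ZZ$, an extended gcd at the bottom (reject if the resultants are not coprime), lifting back up, and Babai size reduction of $(\bm{F},\bm{G})$ against $(\bm{f},\bm{g})$.
    \item Output $\bm{h}=\bm{g}\bm{f}^{-1}\bmod q$ and trapdoor $(\bm{f},\bm{g})$, from which $\bm{F},\bm{G}$ and hence $\bm{B_{f,g}}$ are recomputable.
\end{enumerate}
Each iteration runs in polynomial time.

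\textbf{Same lattice.} Here I follow the paper's definition $\bm{\Lambda}_{\bm{h},q}=\{(\bm{u},\bm{v}):\bm{h}\bm{u}+\bm{v}=0 \bmod q\}$. In anticirculant block form, $\bm{B_h}$ has rows $(1,-\bm{h})$ and $(0,q)$, and $\bm{B_{f,g}}$ has rows $(\bm{f},-\bm{g})$ and $(\bm{F},-\bm{G})$.
\begin{itemize}
    \item Membership: $\bm{h}\bm{f}-\bm{g}=0$ holds by the choice of $\bm{h}$. For the second row, $\bm{h}\bm{F}-\bm{G}=\bm{f}^{-1}(\bm{g}\bm{F}-\bm{f}\bm{G})=-q\bm{f}^{-1}\equiv 0 \bmod q$.
    \item So $\bm{\Lambda}(\bm{B_{f,g}})\subseteq\bm{\Lambda}(\bm{B_h})$.
    \item Determinants: the block determinant of $\bm{B_{f,g}}$ is $-(\bm{f}\bm{G}-\bm{g}\bm{F})=-q$ in $\mathcal{R}$, hence $\pm q^{M}$ as an integer matrix. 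This equals $|\det\bm{B_h}|=q^{M}$.
    \item A sublattice with the same covolume is the whole lattice, so the two bases generate the same lattice.
\end{itemize}

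\textbf{Gram--Schmidt bound.} For a $2\times 2$ block basis over $\mathcal{R}$, the orthogonalization splits per FFT slot. The Gram--Schmidt vectors are $(\bm{f},-\bm{g})$, up to coordinate order and signs, and $q\,(\bar{\bm{g}},\bar{\bm{f}})/(\bm{f}\bar{\bm{f}}+\bm{g}\bar{\bm{g}})$ with matching signs. Their norms are exactly the two quantities tested in the rejection step. The acceptance test therefore enforces $\norm{\bm{B_{f,g}}}_{GS}\le\alpha\sqrt{q}$ by construction, and correctness on output needs no probabilistic argument.

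\textbf{Main obstacle.} This is the expected running time: the acceptance probability per iteration must be noticeable.
\begin{itemize}
    \item For invertibility of $\bm{f}$ mod $q$, $X^M+1$ splits mod $q$ with only $M$ factors. A Gaussian $\bm{f}$ is nonzero modulo each factor except with probability about $1/q$ (via smoothing), so invertibility fails with probability at most about $M/q$.
    \item For the Gram--Schmidt test, I would concentrate $\|(\bm{f},\bm{g})\|$ around $s_0\sqrt{2M}$ using \cref{lemma: length bound discrete Gaussian}. For the second vector I would cite the heuristic, empirically validated analysis of \cite{prestthesis}, which shows constant acceptance probability for $\alpha\approx 1.17$.
\end{itemize}
If a fully rigorous bound is needed, the fallback is to take larger $\alpha$, as in \cite{HofPipSil98}-style parameters, where Gaussian tail bounds on each FFT coordinate suffice. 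The remaining failure events (non-coprime resultants in the NTRU-equation solver) are also handled by restarting.
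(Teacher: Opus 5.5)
The paper gives no proof of this lemma. It imports it from \cite{HofPipSil98,prestthesis}, and your algorithm is exactly the Ducas--Lyubashevsky--Prest / \textsc{Falcon} key generation described in \cite{prestthesis}, so your proposal is a faithful unpacking of the cited result. Your same-lattice argument is complete and correct: both rows of $\bm{B_{f,g}}$ lie in $\bm{\Lambda}_{\bm h,q}$, and both bases have covolume $q^M$.

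Three points need tightening.

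\textbf{The Gram--Schmidt justification.} Here $\|\cdot\|_{GS}$ is the Gram--Schmidt norm of the $2M$ integer rows, and that orthogonalization does not ``split per FFT slot'': the $M$ rotations of $(\bm f,-\bm g)$ are generally not mutually orthogonal. The correct argument is as follows.
\begin{enumerate}
\item The span $V$ of the first $M$ rows is invariant under the isometry $(\bm a,\bm b)\mapsto(X\bm a,X\bm b)$, so the orthogonal projection onto $V^\perp$ commutes with it.
\item Hence the projected second-half rows are the $M$ rotations of $\tilde{\bm b}_{M+1}=\pm q(\bar{\bm g},\bar{\bm f})/(\bm f\bar{\bm f}+\bm g\bar{\bm g})$, all of the same norm, and further orthogonalization only shrinks them.
\item The first-half Gram--Schmidt vectors are bounded by $\|(\bm f,\bm g)\|$.
\end{enumerate}
This gives $\|\bm{B_{f,g}}\|_{GS}=\max\{\|(\bm f,\bm g)\|,\|\tilde{\bm b}_{M+1}\|\}$, which is what your rejection test actually needs.

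\textbf{The ``rigorous fallback.''} It uses the wrong tool. By Parseval, $\|\tilde{\bm b}_{M+1}\|^2=\frac{q^2}{M}\sum_j 1/(|\bm f_j|^2+|\bm g_j|^2)$, where $\bm f_j,\bm g_j$ are the evaluations at the $M$ primitive $2M$-th roots of unity. This quantity is driven by how \emph{small} the slots can be, so upper-tail bounds on FFT coordinates say nothing about it. To lower-bound the acceptance probability at larger $\alpha$, you would need an anti-concentration (small-ball) estimate for the slots of a discrete Gaussian above smoothing, followed by Markov's inequality on the expectation. The coprimality of the resultants in the NTRU solver also remains heuristic.

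\textbf{The range of $\alpha$.} The lemma cannot hold for arbitrary $\alpha$. The $2M$ Gram--Schmidt norms multiply to $q^M$, so $\|\bm B\|_{GS}\ge\sqrt q$ for every basis of this lattice, and $\alpha\ge 1$ is necessary. The cited statement is really about $\alpha\approx 1.17$ with heuristically analyzed running time, which is the same level of rigor your main line achieves.
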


\begin{corollary}[{\cite[Corollary 2]{EPRINT:GajJanKil24b}}]\label{cor: Renyi uniformity of NTRU}
    Let $q$ be prime, $\bm{h} \in \mathcal{R}_q\setminus\{\bm{0}\}, \alpha \in (1, \infty), \epsilon \in (0,1/2), s \geq \eta_{\epsilon(\bm{\Lambda}_{\bm{h}, q})}$ and $Q$ be the distribution of $\bm{h}\bm{u} + \bm{v}$ where $\bm{u}, \bm{v} \gets \mathcal{D}_{\mathcal{R}, s, \bm{0}}$.
    Then it holds that
    \[R_{\alpha}(\mathcal{U}(\mathcal{R}_q)\|Q) \lesssim 1 + \frac{2\alpha \epsilon^2}{(1-\epsilon)^2}.\]
\end{corollary}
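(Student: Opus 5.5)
The plan is to compute $Q$ exactly as a distribution over cosets of the NTRU lattice, use the smoothing parameter to show $Q$ is pointwise close to uniform with zero-mean relative error, and then expand the Rényi sum to second order.

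\emph{Coset description of $Q$.} Consider the map $\phi:(\bm{u},\bm{v})\mapsto \bm{h}\bm{u}+\bm{v} \bmod q$ from $\mathcal{R}^2\cong\ZZ^{2M}$ to $\mathcal{R}_q$. It is surjective, since taking $\bm{u}=\bm{0}$ and letting $\bm{v}$ range over coset representatives already hits every element. Its kernel is exactly $\bm{\Lambda}_{\bm{h},q}$, a full-rank lattice of determinant $q^M$. Since $(\bm{u},\bm{v})\gets\mathcal{D}_{\ZZ^{2M},s}$, for every $\bm{c}\in\mathcal{R}_q$ with any preimage $\bm{x}_{\bm{c}}$ we get
\[
Q(\bm{c})=\frac{\rho_s(\bm{\Lambda}_{\bm{h},q}+\bm{x}_{\bm{c}})}{\rho_s(\ZZ^{2M})}.
\]

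\emph{Smoothing.} Next I invoke the standard smoothing lemma of \cite{FOCS:MicReg04}, using the paper's convention $\rho_s(\bm{x})=\exp(-\|\bm{x}\|^2/2s^2)$ with $\eta_\epsilon$ defined consistently. For $s\ge\eta_\epsilon(\bm{\Lambda})$ and every shift $\bm{x}$,
\[
\rho_s(\bm{\Lambda}+\bm{x})\in[1-\epsilon,1+\epsilon]\cdot\frac{(\sqrt{2\pi}s)^{2M}}{\det\bm{\Lambda}}.
\]
The normalizer $\rho_s(\ZZ^{2M})=\sum_{\bm{c}}\rho_s(\bm{\Lambda}+\bm{x}_{\bm{c}})$ is a sum of $q^M$ such terms. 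Taking the ratio and writing $Q(\bm{c})=(1+\delta_{\bm{c}})q^{-M}$, this gives
\[
\frac{1-\epsilon}{1+\epsilon}\le 1+\delta_{\bm{c}}\le\frac{1+\epsilon}{1-\epsilon},
\qquad\text{hence}\qquad |\delta_{\bm{c}}|\le\frac{2\epsilon}{1-\epsilon}.
\]
Because $Q$ is a probability distribution, $\sum_{\bm{c}}\delta_{\bm{c}}=0$, so $\mathbb{E}_{\bm{c}\gets\mathcal{U}}[\delta_{\bm{c}}]=0$.

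\emph{Rényi computation.} With $U=\mathcal{U}(\mathcal{R}_q)$,
\[
R_\alpha(U\|Q)^{\alpha-1}=\sum_{\bm{c}}U(\bm{c})^\alpha Q(\bm{c})^{1-\alpha}=\mathbb{E}_{\bm{c}\gets U}\big[(1+\delta_{\bm{c}})^{1-\alpha}\big].
\]
I Taylor-expand to second order:
\[
(1+\delta)^{1-\alpha}=1-(\alpha-1)\delta+\tfrac{\alpha(\alpha-1)}{2}\delta^2+O(\delta^3).
\]
The linear term vanishes in expectation because $\mathbb{E}_U[\delta_{\bm{c}}]=0$. Bounding $\delta^2$ by $4\epsilon^2/(1-\epsilon)^2$ gives
\[
R_\alpha^{\alpha-1}\le 1+\frac{2\alpha(\alpha-1)\epsilon^2}{(1-\epsilon)^2}+O(\epsilon^3).
\]
Raising to the power $1/(\alpha-1)$ and using $(1+x)^{1/(\alpha-1)}\approx 1+x/(\alpha-1)$ then yields
\[
R_\alpha(U\|Q)\lesssim 1+\frac{2\alpha\epsilon^2}{(1-\epsilon)^2}.
\]

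\emph{Main obstacle.} The delicate part is bookkeeping rather than ideas. First, the Gaussian-width and smoothing-parameter conventions must match so the smoothing lemma applies with exactly $s\ge\eta_\epsilon$. Second, the $\lesssim$ must be justified by controlling the cubic Taylor remainder, which is uniformly small when $\epsilon<1/2$ and $\alpha$ is fixed. The cancellation of the first-order term via $\sum_{\bm{c}}\delta_{\bm{c}}=0$ is essential: without it one would only obtain a bound of order $1+O(\epsilon)$, not $1+O(\epsilon^2)$.
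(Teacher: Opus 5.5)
Your argument is correct. The paper gives no proof of its own and simply imports the statement from \cite{EPRINT:GajJanKil24b}; your derivation follows the standard route for that result. The smoothing lemma gives all $q^M$ cosets of $\bm{\Lambda}_{\bm{h},q}$ the same Gaussian mass up to a factor in $[1-\epsilon,1+\epsilon]$, so the pointwise relative error is at most $\frac{2\epsilon}{1-\epsilon}$. The relative-error lemma for R\'enyi divergences, which you reprove inline via the vanishing first-order term, turns this into $1+\frac{\alpha}{2}\bigl(\frac{2\epsilon}{1-\epsilon}\bigr)^2$, which is exactly the claimed bound.

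If you want the $\lesssim$ to be an actual inequality, replace the unspecified cubic term by the Lagrange form $(1+\delta)^{1-\alpha}=1-(\alpha-1)\delta+\frac{\alpha(\alpha-1)}{2}(1+\xi)^{-\alpha-1}\delta^2$ with $1+\xi\ge\frac{1-\epsilon}{1+\epsilon}$. Then use $(1+x)^{1/(\alpha-1)}\le e^{x/(\alpha-1)}$ rather than Bernoulli's inequality, which only works for $\alpha\ge 2$. With these two changes the factor $\alpha-1$ cancels exactly and the remaining error is $O_\alpha(\epsilon^3)$.
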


\begin{corollary}[{\cite[Corollary 1 that uses \cite{prestthesis}]{C:GajJanKil24}}]\label{cor: presampler KL and Rényi divergence}
    Let $M$ be a positive integer, $\alpha > 1$ and $\epsilon \in (0, 1/4)$.
    Then, there exists a preimage sampling algorithm $\ntrusamplepre(\bm{B}, s, \bm{c})$, such that for any basis $\bm{B}$, standard deviation $s \geq \eta_{\epsilon(\ZZ^{2M})}\cdot \|\bm{B}\|_{GS}$ and arbitrary syndrome $\bm{c}$, the Kullback-Leibler divergence and Rényi divergence are bounded as
    \[D_{KL}(\ntrusamplepre(\bm{B}, s, \bm{c})\|\mathcal{D}_{\bm{\Lambda}(\bm{B}), s, \bm{c}}) \lesssim 2\epsilon^2\]
    and
    \[R_\alpha(\ntrusamplepre(\bm{B}, s, \bm{c})\|\mathcal{D}_{\bm{\Lambda}(\bm{B}), s, \bm{c}}) \lesssim 1 + 2\alpha\epsilon^2.\]
\end{corollary}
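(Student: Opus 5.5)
The plan is to reduce both divergence bounds to a single \emph{pointwise relative-error} bound between $\ntrusamplepre(\bm{B}, s, \bm{c})$ and $\mathcal{D}_{\bm{\Lambda}(\bm{B}), s, \bm{c}}$. We then convert that relative error into KL and Rényi bounds with the standard relative-error lemma from Prest's thesis. We instantiate $\ntrusamplepre$ as the Klein/GPV nearest-plane sampler over the Gram--Schmidt basis $\tilde{\bm{B}}$; the fast Fourier variant used in \textsc{Falcon} has the same output distribution. Concretely, the sampler runs through $i = 2M, \ldots, 1$. At step $i$ it computes a centre $c_i'$ from $\bm{c}$ and the choices already made, samples $z_i \gets \mathcal{D}_{\ZZ, s_i, c_i'}$ with $s_i = s/\|\tilde{\bm{b}}_i\|$, and outputs $\bm{c} - \sum_i z_i \bm{b}_i$ (up to the usual sign and coset conventions). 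Each lattice point $\bm{x}$ of the coset corresponds to a unique sequence $(z_i)$.

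First I would compute the exact output probability. Telescoping the Gram--Schmidt decomposition gives $\prod_i \rho_{s_i, c_i'}(z_i) = \rho_{s, \bm{c}}(\bm{x})$. Hence
\[
\prob{\ntrusamplepre = \bm{x}} = \frac{\rho_{s,\bm{c}}(\bm{x})}{\prod_{i} \rho_{s_i, c_i'}(\ZZ)}.
\]
The only deviation from $\mathcal{D}_{\bm{\Lambda}(\bm{B}), s, \bm{c}}$ therefore comes from the normalisers $\rho_{s_i, c_i'}(\ZZ)$, which depend on $\bm{x}$ through the centres $c_i'$. The hypothesis $s \geq \eta_{\epsilon}(\ZZ^{2M}) \cdot \|\bm{B}\|_{GS}$ gives $s_i \geq \eta_{\epsilon}(\ZZ^{2M})$ for every $i$. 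Combined with the standard smoothing fact $\rho_{s_i, c}(\ZZ) \in [\tfrac{1-\epsilon'}{1+\epsilon'}, 1] \cdot \rho_{s_i}(\ZZ)$ for every centre $c$, this shows that the product of normalisers is independent of $\bm{x}$ up to a factor in $[\tfrac{1-\epsilon}{1+\epsilon}, \tfrac{1+\epsilon}{1-\epsilon}]$. Here $\epsilon'$ is the one-dimensional smoothing error.

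The step I expect to be the main obstacle is getting the constant right: the smoothing errors must aggregate over the $2M$ coordinates to an overall relative error of only about $2\epsilon$, not $4M\epsilon$. I would first try the product structure $\eta_{\epsilon}(\ZZ^{2M})$: for $\ZZ^n$, a per-coordinate error $\epsilon'$ gives $1 + \epsilon = (1 + \epsilon')^{2M}$. The products of per-coordinate normaliser ratios are then controlled directly by $(1 \pm \epsilon')^{2M} \approx 1 \pm \epsilon$. Normalising both distributions, I obtain $P(\bm{x})/Q(\bm{x}) \in [1 - \delta, 1 + \delta]$ with $\delta \approx 2\epsilon$ to first order.

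Finally I would apply the relative-error lemma: if $|P/Q - 1| \leq \delta$ pointwise on a common support, then $D_{KL}(P\|Q) \lesssim \delta^2/2$ and $R_\alpha(P\|Q) \lesssim 1 + \alpha\delta^2/2$. Substituting $\delta = 2\epsilon$ gives the stated $2\epsilon^2$ and $1 + 2\alpha\epsilon^2$. The ``$\lesssim$'' in the statement absorbs the higher-order terms in $\epsilon$, and the restriction $\epsilon < 1/4$ keeps those terms small.
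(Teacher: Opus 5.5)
Your proposal is correct and matches the argument behind this statement. The paper gives no proof of its own but imports the result from \cite{C:GajJanKil24}, which relies on Prest's analysis \cite{prestthesis}: Klein's sampler has pointwise relative error about $2\epsilon$ with respect to $\mathcal{D}_{\bm{\Lambda}(\bm{B}), s, \bm{c}}$ whenever $s \geq \eta_{\epsilon}(\ZZ^{2M})\cdot\|\bm{B}\|_{GS}$, and Prest's lemma turns relative error $\delta$ into $D_{KL} \lesssim \delta^2/2$ and $R_\alpha \lesssim 1 + \alpha\delta^2/2$. Your telescoped output probability, per-coordinate smoothing error $\epsilon' \leq \epsilon/(2M)$, and resulting bound $\bigl(\tfrac{1+\epsilon'}{1-\epsilon'}\bigr)^{2M} \leq \tfrac{1+\epsilon}{1-\epsilon}$ are exactly that relative-error computation; the normalisation step works because $\rho_{s,\bm{c}}(\bm{\Lambda})$ is a $P$-weighted average of the $\bm{x}$-dependent normalisers and so lies in the same interval.
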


We define two hardness problems for NTRU lattices with respect to the trapdoor algorithm $\ntrutrapgen$.
\begin{definition}[NTRU-SIS]
    Let $\mathcal{R} := \ZZ[X]/(X^M + 1)$.
    The \emph{short integer solution} problem relative to the NTRU trapdoor algorithm $\ntrutrapgen$ with parameters $m,q,\alpha,\beta > 0$ is defined via the $\ntrursis$ game from \cref{fig: ntrugames}.
    We define the advantage of $\adv$ in $\ntrursis$ as
    \[\advantage{\ntrursis}{\adv, m,q,\alpha, \beta} := \prob{1 \gets \ntrursis_{\adv,m,q,\alpha,\beta}}.\]
\end{definition}

\begin{definition}[{$t$}-NTRU-ISIS]
    Let $\mathcal{R} := \ZZ[X]/(X^M + 1)$.
    The $t$-target \emph{inhomogeneous short integer solution} problem relative to the NTRU trapdoor algorithm $\ntrutrapgen$ with parameters $m,q,\alpha,\beta > 0$ is defined via the $\tntrurisis{t}$ game from \cref{fig: ntrugames}.
    We define the advantage of $\adv$ in $\tntrurisis{t}$ as
    \[\advantage{\tntrurisis{t}}{\adv, m,q,\alpha, \beta} := \prob{1 \gets \tntrurisis{t}_{\adv,m,q,\alpha,\beta}}.\]
    We may omit $t$, if it is $1$.
\end{definition}

\begin{definition}[NTRU-SPISIS$^*$ {\cite[Slightly modified from Definition 10]{EPRINT:GajJanKil24b}}]
    Let $\mathcal{R} := \ZZ[X]/(X^M + 1)$and let $t \geq 1$. The \emph{second preimage inhomogeneous short integer solution} problem relative to the NTRU trapdoor algorithm $\ntrutrapgen$ with parameters $q, \alpha, \beta > 0$ is defined via the $\tntrurspisis{t}$ game from \cref{fig: ntrugames}.
    We define the advantage of $\adv$ in $\tntrurspisis{t}$ as
    \[\advantage{\tntrurspisis{t}}{\adv, q, \alpha, \beta} := \prob{1 \gets \tntrurspisis{t}_{\adv, q, \alpha, \beta}}.\]
\end{definition}

\begin{remark}[Relation to {\cite[Definition 10]{EPRINT:GajJanKil24b}}]
    Our formulation differs from \cite[Definition 10]{EPRINT:GajJanKil24b} in two respects.
    First, the syndromes are not uniform: we sample $\bm{u}_i, \bm{v}_i \gets \mathcal{D}_{\mathcal{R},s}$ and set $\bm{c}_i := \bm{h}\bm{u}_i + \bm{v}_i$, so that $\bm{c}_i$ follows $\mathcal{Q}_{\bm{h}}$ and, conditioned on $\bm{c}_i$, the pair $(\bm{u}_i, \bm{v}_i)$ follows the coset Gaussian.
    These are exactly the distributions against which $r_u$ and $r_p$ are measured.
    Second, the game performs no rejection: only the returned preimage must be short.
    Rejection is done by the reduction instead, which reprograms every preimage sampling attempt and discards attempt $i$ exactly when $\euclnorm{(\bm{u}_i, \bm{v}_i)} > \beta$, mirroring the loop of $\sign^{+}$.

    A divergence hop requires unconditioned samples: conditioning them costs up to $p^{-\alpha/(\alpha-1)}$ each.
    In \cite{EPRINT:GajJanKil24b} the conditioning is therefore never inside a hop.
    It is either a stopping rule applied identically on both sides of the hop, as in their $\game{3} \to \game{4}$ and in their Theorem 1, where the random oracle is likewise programmed on every iteration, or it is introduced statistically, as in the transition preceding their reduction.
    Doing something similar in the \ac{QROM} would place a conditioned distribution inside the reprogramming hop which is not possible with how we use the Rényi divergence.

    The two assumptions are nevertheless close: on the targets whose given preimage happens to be short they differ only in the syndrome being drawn from $\mathcal{Q}_{\bm{h}}$ rather than $\mathcal{U}(\mathcal{R}_q)$, and the remaining targets have a large preimage as the only help.
    We therefore expect the hardness discussion of \cite[Section 5.1]{EPRINT:GajJanKil24b} to apply here as well.
\end{remark}

\begin{figure}
    \begin{pcvstack}[boxed, center, space=0.5em]
        \procedure[linenumbering]{$\ntrursis_{\adv,m,q,\alpha,\beta}$}{
            \pcfor i \in [m]:\\
            \t (\bm{h}_i, \cdot, \cdot) \gets \ntrutrapgen(q, \alpha)\\
            (\bm{u}_1, \bm{u}_2, \ldots, \bm{u}_m, \bm{v}) \gets \adv(\bm{h}_1, \ldots, \bm{h}_m)\\
            \pcreturn \bm{v} + \sum_{i \in [m]}\bm{h}_i \cdot \bm{u}_i = \bm{0} \land \euclnorm{(\bm{u}_1, \bm{u}_2, \ldots, \bm{u}_m, \bm{v})} \leq \beta
        }
        \procedure[linenumbering]{$\tntrurisis{t}_{\adv,m,q,\alpha,\beta}$}{
            \pcfor i \in [m]:\\
            \t (\bm{h}_i, \cdot, \cdot) \gets \ntrutrapgen(q, \alpha)\\
            \pcfor j \in [t]:\\
            \t \bm{c}_j \gets \mathcal{R}_q\\
            (j, \bm{u}_1, \bm{u}_2, \ldots, \bm{u}_m, \bm{v}) \gets \adv(\bm{h}_1, \ldots, \bm{h}_m, \bm{c}_1, \ldots, \bm{c}_t)\\
            \pcreturn \bm{v} + \sum_{i \in [m]}\bm{h}_i \cdot \bm{u}_i = \bm{c}_j \land \euclnorm{(\bm{u}_1, \bm{u}_2, \ldots, \bm{u}_m, \bm{v})} \leq \beta
        }
        \procedure[linenumbering]{$\tntrurspisis{t}_{\adv, q, \alpha, \beta}$}{
            (\bm{h}, \cdot, \cdot) \gets \ntrutrapgen(q, \alpha)\\
            \pcfor i \in [t]:\\
            \t \bm{u}_i, \bm{v}_i \gets \mathcal{D}_{\mathcal{R}, s}\\
            \t \bm{c}_i := \bm{h}\bm{u}_i + \bm{v}_i \in \mathcal{R}_q\\
            (j, \bm{u}, \bm{v}) \gets \adv(\bm{h}, \{(\bm{c}_i, \bm{u}_i, \bm{v}_i)\}_{i \in [t]})\\
            \pcreturn \bm{h}\bm{u} + \bm{v} = \bm{c}_j \land \euclnorm{(\bm{u}, \bm{v})} \leq \beta \land (\bm{u}, \bm{v}) \neq (\bm{u}_j, \bm{v}_j)
        }
    \end{pcvstack}
    \caption{Games defining $\ntrursis_{\adv,m,q,\alpha,\beta}, \tntrurisis{t}_{\adv,m,q,\alpha,\beta}$ and $\tntrurspisis{t}_{\adv, q, \alpha, \beta}$.}\label{fig: ntrugames}
\end{figure}

\section{Omitted Proofs}\label{sec: omitted proofs}

\subsection{Proof of \cref{thm: Oracle Distribution Switching using Statistical Distance}}\label{subsec: Proof Oracle Distribution Switching using Statistical Distance}

\OracleDistributionSwitchingSD*

We rely on techniques and perspectives on the \ac{QROM} from \cite{EC:CFHL21} utilizing the compressed oracle framework from \cite{C:Zhandry19}.
We recall notation from \cite{EC:CFHL21}, simplified in \cite{EC:DFMS22}.
Refer to the original work for a more detailed introduction.

\newcommand{\initialstate}{\hat{0}}
Consider the multiregister $D = (D_x)_{x \in \mathcal{X}}$, where the state space of $D_x$ is given by $\mathcal{H}_{D_x} = \CC[\bin^n \cup \{\bot\}]$, meaning that it is spanned by an orthonormal set of vectors $\ket{y}$ labelled by $y \in \bin^n \cup \{\bot\}$.
The initial state is set to be $\ket{\bm{\bot}} := \bigotimes_x\ket{\bot}_{D_x}$.
Let $P$ be a probability distribution over $\bin^n$.
Define the state
$\ket{\phi_P} := \sum_{y \in \bin^n} \sqrt{P(y)} \ket{y}$
and the unitary
\[F^P = \ketbra{\bot}{\phi_P} + \ketbra{\phi_P}{\bot} + I_{\bin^n} - \ketbra{\bot}{\bot} - \ketbra{\phi_P}{\phi_P}.\]
\newcommand{\singlecompressedoraclequary}[1]{O_{YD_x}^{#1, x}}
\newcommand{\compressedoraclequery}[1]{O_{XYD}^{#1}}
When the compressed oracle is queried, a unitary $\compressedoraclequery{P}$, acting on the query registers $X$ and $Y$ and the oracle register $D$, is applied, given by
\[\compressedoraclequery{P} = \sum_{x \in \mathcal{X}} \ketbra{x}{x}_X \otimes \singlecompressedoraclequary{P},\quad \text{with} \quad \singlecompressedoraclequary{P} = F_{D_x}^P\CNOT_{YD_x}F_{D_x}^P\]
where $\CNOT_{YD_x}\ket{y}\ket{y_x} = \ket{y \oplus y_x}\ket{y_x}$ for $y, y_x \in \bin^n$ and acting as the identity on $\ket{y}\ket{\bot}$.
Instantiated with the uniform distribution over $\bin^n$, the formalism describes the ``compressed'' oracle as formalized in \cite{EC:DFMS22}.

A $q$-query adversary $\adv$ querying the compressed oracle with underlying distribution $P$ can be described as in \cref{fig: General Strategy q query adversary compressed oracle}
where the measurement describes the final output of $\adv$.
The adversary describes the strategy via a unitary $U$\footnote{One can also consider a sequence of unitaries $U_i$, but they can be combined into a single unitary with a counter and a controlled unitary to get the same behavior.} on $LXY$ and the eventual outcome is described by a measurement on the state
\[\ket{\psi_P} := \underbrace{\left(U \compressedoraclequery{P}\right)^{q}U}_{:= V_P}\ket{0}_{LXY}\ket{\bm{\bot}}_{D}.\]
\begin{figure*}
    \centering
    \begin{quantikz}
        \lstick{$\ket{\bot}_{D}$} &  & \gate[wires=2]{\compressedoraclequery{P}} & &\dots & \gate[wires=2]{\compressedoraclequery{P}} &\qw&\qw\\
        \lstick{$\ket{0}_{XY}$} & \gate[wires=2]{U}      &   & \gate[wires=2]{U}&\dots & &\gate[wires=2]{U} &\meter[wires=2]{}\\
        \lstick{$\ket{0}_{L}$} &       &   & &\dots & & &
    \end{quantikz}
    \caption{General $q$-query quantum adversary using oracle with the underlying distribution of $P$.}\label{fig: General Strategy q query adversary compressed oracle}
\end{figure*}

\newcommand{\coswitch}{\operatorname{Comp}}
\newcommand{\initialstatep}{\initialstate_P}

\subsubsection{Quantum Oracle Switching via Statistical Distance and the Compressed Oracle}\label{subsubsec: Statistical Distance with the Compressed Oracle}

We are interested in the trace distance of $\ketbra{\psi_P}{\psi_P}$ and $\ketbra{\psi_Q}{\psi_Q}$ where we are given the statistical distance of $P$ and $Q$.

\begin{lemma}\label{lemma: trace distance co}
    Let $\epsilon$ be the statistical distance of $P$ and $Q$ over $\bin^n$.
    Then,
    \[\|\ketbra{\psi_Q}{\psi_Q} - \ketbra{\psi_P}{\psi_P}\|_{tr} \leq 8q \hellingerdistance(P,Q) \leq 8q\sqrt{\epsilon}.\]
\end{lemma}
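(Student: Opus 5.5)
A hybrid argument over the $q$ queries handles this directly. Set $V_P^{(k)}$ to be the circuit that uses $\compressedoraclequery{Q}$ for the first $k$ queries and $\compressedoraclequery{P}$ for the remaining $q-k$. Then $V_P^{(0)}=V_P$ and $V_P^{(q)}=V_Q$. Consecutive hybrids differ in one oracle application, with identical unitaries before and after it. Invariance of the trace and operator norms under unitaries, together with \cref{eq: trace norm same initial state}, reduces each step to the operator norm of $\compressedoraclequery{P}-\compressedoraclequery{Q}$. Since this is block diagonal in $x$, that norm equals $\max_x\operatornorm{\singlecompressedoraclequary{P}-\singlecompressedoraclequary{Q}}$. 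Summing the steps gives
\[\tracenorm{\ketbra{\psi_Q}{\psi_Q}-\ketbra{\psi_P}{\psi_P}}\le 2q\max_x\operatornorm{F^P\CNOT F^P-F^Q\CNOT F^Q}.\]
The goal is therefore to show that this single-register norm is at most $4\hellingerdistance(P,Q)$.

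Next I would insert null terms:
\[F^P\CNOT F^P-F^Q\CNOT F^Q=(F^P-F^Q)\CNOT F^P+F^Q\CNOT(F^P-F^Q).\]
Because $F^Q$ and $\CNOT$ are unitary, the norm is at most $2\operatornorm{F^P-F^Q}$. It then remains to show $\operatornorm{F^P-F^Q}\le 2\hellingerdistance(P,Q)$. Writing out the definition,
\[F^P-F^Q=\ketbra{\bot}{\phi_P-\phi_Q}+\ketbra{\phi_P-\phi_Q}{\bot}-\ketbra{\phi_P}{\phi_P}+\ketbra{\phi_Q}{\phi_Q}.\]
Put $\ket{\delta}:=\ket{\phi_P}-\ket{\phi_Q}$. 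Then $\euclnorm{\ket{\delta}}^2=2-2\BC(P,Q)=2\hellingerdistance(P,Q)^2$, so $\euclnorm{\ket{\delta}}=\sqrt2\,\hellingerdistance(P,Q)$. The first two terms are rank one, and by \cref{eq: rank 1 operator norm} each has norm $\sqrt2\,\hellingerdistance$. Moreover, they map between the orthogonal spaces $\ket{\bot}$ and $\bin^n$, so their sum has norm exactly $\sqrt2\,\hellingerdistance$. For the projector difference, $\ketbra{\phi_Q}{\phi_Q}-\ketbra{\phi_P}{\phi_P}=-\ketbra{\delta}{\phi_P}-\ketbra{\phi_Q}{\delta}$, whose norm is at most $2\sqrt2\,\hellingerdistance$ by the same rank-one bound. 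A sharper alternative is the standard identity $\sqrt{1-|\langle\phi_P|\phi_Q\rangle|^2}\le\sqrt2\,\hellingerdistance$.

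The main obstacle is bookkeeping the constants so the total comes to $8q\hellingerdistance$. The crude triangle inequality gives $\operatornorm{F^P-F^Q}\le 3\sqrt2\,\hellingerdistance$, which is too weak. I would instead use the block structure directly: $F^P-F^Q$ acts on $\mathrm{span}\{\ket{\bot},\ket{\phi_P},\ket{\phi_Q}\}$. The off-diagonal $\bot$-part and the projector-difference part act on orthogonal block positions, so the norm is at most the larger of the two block norms. The projector-difference block has norm $\sqrt{1-\BC^2}\le\sqrt2\,\hellingerdistance$. This yields $\operatornorm{F^P-F^Q}\le\sqrt2\,\hellingerdistance\le 2\hellingerdistance$, and hence $2q\cdot 2\cdot 2\hellingerdistance=8q\hellingerdistance$.

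The second inequality, $\hellingerdistance(P,Q)\le\sqrt{\epsilon}$, is the standard bound recalled in \cref{app:prob}. \cref{thm: Oracle Distribution Switching using Statistical Distance} then follows for two reasons. The compressed oracle with underlying distribution $P$ perfectly simulates $\Oraclef{P}$; this is the analogue of Zhandry's equivalence, by the same purification argument, since $F^P$ maps $\ket{\bot}$ to the purification $\ket{\phi_P}$. And the statistical distance of the measured outputs is at most half the trace distance of the final states.
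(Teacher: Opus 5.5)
Your reduction to a single register matches the paper's. The paper inserts $2q$ null terms, one per $F$, which is the same as your $q$ hybrids each split in two, so everything hinges on $\operatornorm{F^P-F^Q}\le 2\hellingerdistance(P,Q)$.

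That step is where your argument breaks. Write $A=\ket{\bot}\bra{\delta}+\ket{\delta}\bra{\bot}$ and $B=\ket{\phi_Q}\bra{\phi_Q}-\ket{\phi_P}\bra{\phi_P}$. You bound $\operatornorm{A+B}$ by $\max\{\operatornorm{A},\operatornorm{B}\}$ because the two pieces occupy different block positions. That principle only holds when the pieces have mutually orthogonal supports and ranges. Here both act on the direction $\ket{\delta}$, so it fails, and your intermediate claim $\operatornorm{F^P-F^Q}\le\sqrt2\,\hellingerdistance(P,Q)$ is false. To see this, let $\ket{\Delta_+}:=\ket{\phi_P}+\ket{\phi_Q}$, which is orthogonal to $\ket{\delta}$. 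Then $B=-\tfrac12\bigl(\ket{\Delta_+}\bra{\delta}+\ket{\delta}\bra{\Delta_+}\bigr)$. In the orthonormal basis $\ket{\bot}$, $\ket{\delta}/\euclnorm{\ket{\delta}}$, $\ket{\Delta_+}/\euclnorm{\ket{\Delta_+}}$, the operator $F^P-F^Q$ is therefore a zero-diagonal tridiagonal $3\times 3$ matrix. Its off-diagonal entries are $a=\sqrt2\,\hellingerdistance(P,Q)$ (between $\ket{\bot}$ and $\ket{\delta}$) and $-b$ with $b=\sqrt{1-\BC(P,Q)^2}$ (between $\ket{\delta}$ and $\ket{\Delta_+}$). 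Its eigenvalues are $0,\pm\sqrt{a^2+b^2}$, so $\operatornorm{F^P-F^Q}=\hellingerdistance(P,Q)\sqrt{3+\BC(P,Q)}$. This is strictly larger than $\sqrt2\,\hellingerdistance(P,Q)$ whenever $P\neq Q$. For example, if $P$ and $Q$ are point masses on two distinct points of $\bin$, the norm is $\sqrt3$, not $\sqrt2$.

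The repair is exactly the paper's explicit $3\times3$ computation. The two contributions share the middle vector $\ket{\delta}$, so they combine in quadrature rather than via a maximum. Since $b\le\sqrt2\,\hellingerdistance(P,Q)$, you still get $\sqrt{a^2+b^2}\le 2\hellingerdistance(P,Q)$. After that, your accounting $2q\cdot 2\cdot 2\hellingerdistance(P,Q)=8q\,\hellingerdistance(P,Q)$ goes through unchanged. There is no slack for a cruder argument: as $\BC(P,Q)\to 1$ the true norm approaches $2\hellingerdistance(P,Q)$. The plain triangle inequality $\operatornorm{A}+\operatornorm{B}\le 2\sqrt2\,\hellingerdistance(P,Q)$ would only give $8\sqrt2\,q\,\hellingerdistance(P,Q)$.
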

\begin{proof}
    Both $\psi_Q$ and $\psi_P$ are constructed by applying $V_Q$ or $V_P$ to the same initial state.
    The trace norm can now be bounded by the operator norm of the unitaries that are applied to the initial state as in \cref{eq: trace norm same initial state}
    \[\tracenorm{\ketbra{\psi_Q}{\psi_Q} - \ketbra{\psi_P}{\psi_P}} \leq 2 \operatornorm{V_Q - V_P}.\]
    By adding $2q$ null terms where we replace one $F_Q$ with $F_P$, we get that the operator norm is bounded by the sum of the operator norms of the compressed oracle switches.

    Let $j \in [2q]$ be the oracle switch, we are replacing in the current step.
    Now, add the null-term of the form
    \[\pm \left(U\compressedoraclequery{P}\right)^{q - \lfloor (j+1)/2\rfloor}UW_j\left(U\compressedoraclequery{Q}\right)^{\lfloor (j-1)/2\rfloor}U\]
    where
    \[W_j := \begin{cases}
            \sum_{x} \ketbra{x}{x}\otimes F_{D_x}^P \CNOT_{YD_x} F_{D_x}^Q & \text{if}\, j \equiv 1 \mod 2, \\
            \sum_{x} \ketbra{x}{x}\otimes F_{D_x}^Q \CNOT_{YD_x} F_{D_x}^Q & \text{if}\, j \equiv 0 \mod 2. \\
        \end{cases}\]
    Now, we consider the original operator norm and add this nullterm for every $j \in [2q]$.
    By using the triangular inequality and the fact that the operator norm is invariant under application of unitaries, it follows that
    \[ \operatornorm{V_Q - V_P} \leq 2q \cdot \operatornorm{\sum_{x} \ketbra{x}{x}\otimes(F_{D_x}^Q - F_{D_x}^P)}.\]
    We bound it in the next lemma.
\end{proof}

\begin{lemma}
    Let $\epsilon$ be the statistical distance of $P$ and $Q$ over $\bin^n$.
    Then,
    \[\operatornorm{\sum_{x} \ketbra{x}{x}\otimes(F_{D_x}^Q - F_{D_x}^P)} \leq 2\hellingerdistance(P,Q) \leq 2\sqrt{\epsilon}.\]
\end{lemma}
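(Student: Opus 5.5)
The operator $\sum_x \ketbra{x}{x}\otimes(F^Q_{D_x}-F^P_{D_x})$ is block diagonal in $x$. Its operator norm is therefore the maximum over $x$ of $\operatornorm{F^Q-F^P}$, acting on the single register $D_x$ (tensored with identity on the other registers, which does not change the norm). So I will reduce to bounding $\operatornorm{F^Q-F^P}$ on $\CC[\bin^n\cup\{\bot\}]$.

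Next, I write out the difference from the definition of $F$. The identity terms $I_{\bin^n}-\ketbra{\bot}{\bot}$ cancel, leaving
\[F^Q-F^P=\ketbra{\bot}{\phi_Q-\phi_P}+\ketbra{\phi_Q-\phi_P}{\bot}-\ketbra{\phi_Q}{\phi_Q}+\ketbra{\phi_P}{\phi_P}.\]
Set $\ket{\delta}:=\ket{\phi_Q}-\ket{\phi_P}$. Its squared norm is $\sum_y(\sqrt{Q(y)}-\sqrt{P(y)})^2=2\hellingerdistance(P,Q)^2$, so $\euclnorm{\ket\delta}=\sqrt2\,\hellingerdistance(P,Q)$.

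I bound the terms as follows.

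The first two terms are rank one. By \cref{eq: rank 1 operator norm}, each has norm $\euclnorm{\ket\delta}$.

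For the remaining two terms, I rewrite
\[\ketbra{\phi_P}{\phi_P}-\ketbra{\phi_Q}{\phi_Q}=-\ketbra{\delta}{\phi_Q}-\ketbra{\phi_P}{\delta}.\]
This is again a sum of two rank-one operators, each of norm at most $\euclnorm{\ket\delta}$, because $\phi_P$ and $\phi_Q$ are unit vectors.

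By the triangle inequality this gives $4\sqrt2\,\hellingerdistance$, which is too weak by a factor $2\sqrt2$. The main obstacle is therefore to get the sharp constant $2$.

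To get it, I exploit orthogonality. The vectors $\ket\bot$ and $\ket\delta$, $\ket{\phi_P}$, $\ket{\phi_Q}$ are orthogonal, since the latter three are supported on $\bin^n$. Hence the block $A=\ketbra{\bot}{\delta}+\ketbra{\delta}{\bot}$ maps $\bin^n\to\bot$ and $\bot\to\bin^n$. The block $B=\ketbra{\phi_P}{\phi_P}-\ketbra{\phi_Q}{\phi_Q}$ acts within the span of $\bin^n$.

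For any unit vector $v=a\ket\bot+w$ with $w\in\operatorname{span}\bin^n$, we have
\[(A+B)v=\braket{\delta}{w}\ket\bot+\bigl(a\ket\delta+Bw\bigr),\]
with the two components orthogonal.

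The first component has norm at most $\euclnorm{\delta}$.

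For the second component, I need $\operatornorm{B}$. The operator $B$ is the difference of two rank-one projectors. Its eigenvalues are $\pm\sqrt{1-|\braket{\phi_P}{\phi_Q}|^2}$. Since $\braket{\phi_P}{\phi_Q}=\BC(P,Q)=1-\hellingerdistance^2$, this gives $\operatornorm{B}=\sqrt{1-\BC^2}\le\sqrt2\,\hellingerdistance$.

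Combining with Cauchy–Schwarz, using $|a|^2+\euclnorm{w}^2=1$, the squared norm of $(A+B)v$ is at most
\[\euclnorm\delta^2+(\euclnorm\delta+\sqrt2\hellingerdistance)^2\le 2\hellingerdistance^2+8\hellingerdistance^2.\]
This already yields roughly $\sqrt{10}\,\hellingerdistance$. To reach exactly $2$, I will compute the norm directly on the at most three-dimensional invariant subspace $\operatorname{span}\{\ket\bot,\ket{\phi_P},\ket{\phi_Q}\}$ and diagonalize there, checking that the largest singular value is at most $2\hellingerdistance$. The constant only matters up to this computation.

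The final inequality $\hellingerdistance\le\sqrt{\epsilon}$ is the stated fact from Appendix~\ref{app:prob}.
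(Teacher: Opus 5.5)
Your reduction to a single register $D_x$, the formula for $F^Q-F^P$, the identity $\euclnorm{\ket{\phi_Q}-\ket{\phi_P}}=\sqrt2\,\hellingerdistance(P,Q)$, and the closing step $\hellingerdistance(P,Q)\le\sqrt\epsilon$ are all correct and match the paper. The gap is that the only step that produces the constant $2$ is announced (``compute the norm directly on the three-dimensional invariant subspace and diagonalize there'') but never carried out. What you have actually established is $\sqrt{10}\,\hellingerdistance(P,Q)$.

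This step cannot be waved through. Your two-block estimate bounds $\euclnorm{a\ket\delta+Bw}$ by $|a|\euclnorm{\delta}+\operatornorm{B}\euclnorm{w}$, and so forgets how $Bw$ is aligned with $\ket\delta$. Even after optimizing over the split $|a|^2+\euclnorm{w}^2=1$, it only gives about $\sqrt{3+\sqrt5}\,\hellingerdistance(P,Q)\approx 2.29\,\hellingerdistance(P,Q)$ for small $\hellingerdistance(P,Q)$. So an exact computation is genuinely required, and as written the proposal does not prove the lemma.

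The paper's proof supplies exactly this computation. The key is to use the pair $\ket{\Delta_\pm}:=\ket{\phi_Q}\pm\ket{\phi_P}$, which spans the same subspace you named together with $\ket\bot$.

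\begin{itemize}
\item These vectors are orthogonal to each other, since $\phi_P,\phi_Q$ have equal norms and real overlap $\BC(P,Q)$.
\item Both are orthogonal to $\ket\bot$.
\item They satisfy $\ketbra{\phi_P}{\phi_P}-\ketbra{\phi_Q}{\phi_Q}=-\tfrac12\bigl(\ketbra{\Delta_+}{\Delta_-}+\ketbra{\Delta_-}{\Delta_+}\bigr)$.
\end{itemize}

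Hence $F^Q-F^P$ vanishes on the orthogonal complement of $\operatorname{span}\{\ket\bot,\ket{\Delta_-},\ket{\Delta_+}\}$. Normalize these three vectors to get an orthonormal basis, and set $a:=\euclnorm{\Delta_-}$ and $b:=\tfrac12\euclnorm{\Delta_-}\euclnorm{\Delta_+}$. In this basis the operator is the real symmetric matrix with $a$ at positions $(1,2),(2,1)$, $-b$ at positions $(2,3),(3,2)$, and zeros elsewhere.

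Its characteristic polynomial is $-\lambda(\lambda^2-a^2-b^2)$, so
$\operatornorm{F^Q-F^P}=\euclnorm{\Delta_-}\sqrt{1+\euclnorm{\Delta_+}^2/4}\le\sqrt2\,\euclnorm{\Delta_-}=2\,\hellingerdistance(P,Q)$,
using $\euclnorm{\Delta_+}\le 2$. Substituting $\euclnorm{\Delta_\pm}^2=2\pm2\BC(P,Q)$ gives the exact value $\hellingerdistance(P,Q)\sqrt{4-\hellingerdistance(P,Q)^2}$, so the constant $2$ is sharp as $\hellingerdistance(P,Q)\to 0$. With this computation in place of your announcement, your argument becomes the paper's proof.
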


\begin{proof}
    \begin{align*}
         & \operatornorm{\sum_{x} \ketbra{x}{x}\otimes(F_{D_x}^Q - F_{D_x}^P)} = \max_{x}\operatornorm{F_{D_x}^Q - F_{D_x}^P}                                                 \\
         & = \operatornorm{\ketbra{\bot}{\phi_Q} - \ketbra{\bot}{\phi_P} + \ketbra{\phi_Q}{\bot} - \ketbra{\phi_P}{\bot} + \ketbra{\phi_P}{\phi_P} - \ketbra{\phi_Q}{\phi_Q}} \\
         & = \operatornorm{\ket{\bot}(\bra{\phi_Q} - \bra{\phi_P}) + (\ket{\phi_Q} - \ket{\phi_P})\bra{\bot} + \ketbra{\phi_P}{\phi_P} - \ketbra{\phi_Q}{\phi_Q}}             \\
    \end{align*}
    We set $\ket{\Delta_{+}} := \ket{\phi_Q} + \ket{\phi_P}$ and $\ket{\Delta_{-}} := \ket{\phi_Q} - \ket{\phi_P}$.
    Now we express the above using these orthogonal states.
    \begin{align*}
        \operatornorm{\sum_{x} \ketbra{x}{x}\otimes(F_{D_x}^Q - F_{D_x}^P)} & =  \big\|\ket{\bot}\bra{\Delta_{-}} + \ket{\Delta_{-}}\bra{\bot}                                   \\
                                                                            & \quad + 1/4(\ket{\Delta_{+}} - \ket{\Delta_{-}})(\bra{\Delta_{+}} - \bra{\Delta_{-}})              \\
                                                                            & \quad - 1/4(\ket{\Delta_{+}} + \ket{\Delta_{-}})(\bra{\Delta_{+}} + \bra{\Delta_{-}})\big\|_\infty
    \end{align*}
    Considering an orthonormal basis that includes normalized versions of $\ket{\bot}, \ket{\Delta_{-}}, \ket{\Delta_{+}}$, it is sufficient to consider how the operator behaves:
    \begin{align*}
        \ket{\bot}                                      & \mapsto \ket{\Delta_{-}}                                                                                                                                                                        \\
        \frac{1}{\euclnorm{\Delta_{-}}}\ket{\Delta_{-}} & \mapsto \frac{1}{\euclnorm{\Delta_{-}}}\left(\braket{\Delta_{-}}{\Delta_{-}}\ket{\bot} - 1/2\braket{\Delta_{-}}{\Delta_{-}}\ket{\Delta_{+}}\right)                                              \\
                                                        & = \euclnorm{\Delta_{-}}\left(\ket{\bot} - 1/2\ket{\Delta_{+}}\right)                                                                                                                            \\
        \frac{1}{\euclnorm{\Delta_{+}}}\ket{\Delta_{+}} & \mapsto \frac{1}{\euclnorm{\Delta_{+}}}\left(-1/2\braket{\Delta_{+}}{\Delta_{+}}\ket{\Delta_{-}}                                             \right)= -1/2\euclnorm{\Delta_{+}}\ket{\Delta_{-}} \\
    \end{align*}
    In matrix form, we thus get
    \begin{align*}
        M = \left(\begin{array}{ccc}
                      0                     & \euclnorm{\Delta_{-}}                                & 0                                                    \\
                      \euclnorm{\Delta_{-}} & 0                                                    & -\frac 12 \euclnorm{\Delta_{-}}\euclnorm{\Delta_{+}} \\
                      0                     & -\frac 12 \euclnorm{\Delta_{-}}\euclnorm{\Delta_{+}} & 0
                  \end{array}\right).
    \end{align*}
    The operator $F_{D_x}^Q - F_{D_x}^P$ is Hermitian, so $M$ is (real) symmetric, and its operator norm is the largest absolute value among its eigenvalues.
    Writing $a := \euclnorm{\Delta_{-}}$ and $b := \frac12\euclnorm{\Delta_{-}}\euclnorm{\Delta_{+}}$, the characteristic polynomial of $M$ is $-\lambda(\lambda^{2} - (a^{2}+b^{2}))$, so its eigenvalues are $0$ and $\pm\sqrt{a^{2}+b^{2}}$, giving $\operatornorm{M} = \sqrt{a^2+b^2} = \euclnorm{\Delta_{-}}\sqrt{1+\frac 14\euclnorm{\Delta_{+}}^2}$.
    Trivially $\euclnorm{\Delta_{+}}\le 2$, and thus $\operatornorm{M} \le \sqrt 2\euclnorm{\Delta_{-}}$.

    We thus have
    \[\operatornorm{\sum_{x} \ketbra{x}{x}\otimes(F_{D_x}^Q - F_{D_x}^P)} \leq \sqrt{2} \euclnorm{\ket{\phi_P} - \ket{\phi_Q}}.\]
    Now we can proceed considering that both are unit vectors and we get
    \[\euclnorm{\ket{\phi_P} - \ket{\phi_Q}} = \sqrt{2 - \braket{\phi_P}{\phi_Q} - \braket{\phi_Q}{\phi_P}}.\]
    With $\braket{\phi_P}{\phi_Q} = \braket{\phi_Q}{\phi_P}$, we can bound this term separately.
    As $\braket{\phi_P}{\phi_Q} = \sum_{y \in \bin^n}\sqrt{P(y)Q(y)} = \BC(P,Q)$ we get that
    \begin{align*}
        \sqrt{2 - \braket{\phi_P}{\phi_Q} - \braket{\phi_Q}{\phi_P}} & = \sqrt{2}\sqrt{1 - \BC(P, Q)}                            \\
                                                                     & = \sqrt{2}\hellingerdistance(P, Q) \leq \sqrt{2\epsilon},
    \end{align*}
    and in total
    \[\operatornorm{\sum_{x} \ketbra{x}{x}\otimes(F_{D_x}^Q - F_{D_x}^P)} \leq 2\hellingerdistance(P, Q) \leq 2\sqrt{\epsilon}.\]
\end{proof}

This makes it possible to get to our main result.

\begin{theorem}\label{thm: query distinguishing bound}
    Let $\adv$ be a quantum algorithm given oracle access to an oracle instantiated with $P$ or $Q$ and
    \[|\prob{\adv \rightarrow 1 | P} - \prob{\adv \rightarrow 1 | Q}| \geq \zeta,\]
    where $\adv$ can make $q$ queries. Then
    $\Delta(P, Q) = \Omega(\zeta^2/q^2)$.
\end{theorem}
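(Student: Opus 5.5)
The plan is to derive the statement directly from \cref{lemma: trace distance co}, using the standard relation between the trace distance of the final states and the distinguishing advantage of any measurement. The output bit of $\adv$ is obtained by measuring the state $\ket{\psi_P}$ or $\ket{\psi_Q}$ produced by the circuit in \cref{fig: General Strategy q query adversary compressed oracle}. Here we use that the compressed-oracle simulation with underlying distribution $P$ is perfectly indistinguishable from the i.i.d.\ oracle $\Oraclef{P}$; this is the standard property of the compressed oracle, and it extends to non-uniform $P$ because $F^P$ maps $\ket{\bot}$ to $\ket{\phi_P}$, i.e.\ it purifies sampling from $P$.

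First I will fix the two-outcome measurement $\{\Pi, I-\Pi\}$ on $LXY$ that defines $\adv$'s output $1$. By Helstrom / the variational characterization of the trace norm,
\[\left|\trace\left(\Pi\left(\ketbra{\psi_P}{\psi_P}-\ketbra{\psi_Q}{\psi_Q}\right)\right)\right| \leq \frac12\tracenorm{\ketbra{\psi_P}{\psi_P}-\ketbra{\psi_Q}{\psi_Q}}.\]
This holds because $\Pi \otimes I_D$ is a valid effect on the joint system, so tracing out $D$ is not needed. Combining this with \cref{lemma: trace distance co} gives
\[\zeta \leq |\prob{\adv \rightarrow 1 \mid P}-\prob{\adv \rightarrow 1 \mid Q}| \leq 4q\,\hellingerdistance(P,Q) \leq 4q\sqrt{\Delta(P,Q)}.\]

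Rearranging gives $\Delta(P,Q)\geq \zeta^2/(16q^2)$, which is $\Omega(\zeta^2/q^2)$. It also gives the stronger statement $\hellingerdistance(P,Q)\geq \zeta/(4q)$.

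The only step needing care is the claim that the compressed oracle with a general distribution $P$ faithfully simulates $\Oraclef{P}$. For that step I would verify that the compressed oracle is obtained from the standard purified oracle $\bigotimes_x\ket{\phi_P}_{D_x}$ by the local isometry $\bigotimes_x F^P_{D_x}$, mirroring Zhandry's uniform case. Because $F^P$ is an involution swapping $\ket{\bot}$ and $\ket{\phi_P}$, conjugating the standard query $\CNOT_{YD_x}$ by it gives exactly $\singlecompressedoraclequary{P}$. The reduced state on $LXY$ is therefore identical in the two pictures.
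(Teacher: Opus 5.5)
Your proposal is correct and follows the same route as the paper: view the interaction through the compressed oracle, invoke \cref{lemma: trace distance co}, and rearrange. Your refinements are also valid. Using that a measurement's bias is at most half the trace norm gives $\Delta(P,Q) \geq \zeta^2/(16q^2)$ instead of the paper's $\zeta^2/(64q^2)$. Your conjugation argument with $\bigotimes_x F^P_{D_x}$ checks that the compressed oracle for a general $P$ perfectly simulates the i.i.d.\ oracle, which the paper only asserts.
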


\begin{proof}
    View the algorithm in the compressed oracle formalism.
    This perfectly simulates the interaction with the oracle.
    The final output of $\adv$ corresponds to a measurement on the final state, which by \cref{lemma: trace distance co} has trace distance $8q\sqrt{\epsilon}$ where $\epsilon$ is the statistical distance of
    $P$ and $Q$.
    The trace distance also provides an upper bound for the difference in output probability, such that
    \[|\prob{\adv \rightarrow 1 | P} - \prob{\adv \rightarrow 1 | Q}| \leq 8q\sqrt{\epsilon}.\]
    Rewriting it, we get
    \[\zeta \leq 8q\sqrt{\epsilon} \Rightarrow \zeta^2/(64q^2) \leq \epsilon = \Delta(P, Q).\]
\end{proof}

This result improves the bound from \cite[Section 7.2]{FOCS:Zhandry12} by a factor of $q$, which itself improved the bound by a factor of $q$ compared to \cite{AC:BDFLSZ11}.
After finishing this, we found that there is another work achieving the same asymptotic bounds \cite{belovs2019quantum}, but using the quantum query complexity.
Our approach therefore only gives a different proof technique to reach the same final statement, albeit they do not provide explicit bounds needed in cryptography.
Their result, however, only provides a bound for constant advantage, and our result gives a bound for any advantage $\zeta$.

\subsection{Proof of \cref{proposition: Renyi divergence proof with sr}}\label{subsec: proof of Oracle Distribution Switching using Small-Range Distributions}
The counterexample presented in \cref{subsec: renyi fails} shows that while the ratio of success probabilities increases, the absolute success probability decreases.
This opens the door to an approach with a mix of additive and multiplicative loss terms.

We use the small-range distributions and subsequent results from \cite{FOCS:Zhandry12} towards such a mixed approach.
We note that this approach opens a path to exploiting a small Rényi divergence even between distributions that have large statistical distance.
The small-range distribution approach replaces the i.i.d.\ function with the composition of two i.i.d functions where the first one is uniform with small range $r$, and the second one uses the actual distribution. This introduces a loss based on $\adv$'s capabilities of distinguishing this oracle from the honest oracle.
The small-range distribution with range size $r$ gets $r$ classical samples from the original i.i.d.\ function, this approach is amenable to applying the classical Rényi divergence machinery.

Formally, given a distribution $D$ on $\mathcal{Y}$, define $\mathsf{SR}_{r}^{D}(\mathcal{X})$ as the following distribution on functions from $\mathcal{X}$ to $\mathcal{Y}$:
\begin{itemize}
    \item For each $i \in [r]$, sample a value $y_i \in \mathcal{Y}$ according to $D$.
    \item For each $x \in \mathcal{X}$, pick a uniformly random $i \in [r]$ and set $O(x) = y_i$.
\end{itemize}

\begin{lemma}[{\cite[Corollary 7.5]{FOCS:Zhandry12}}]\label{cor: small-range-distribution}
    The output distributions of a quantum algorithm making $q$ quantum queries to an oracle either drawn from $\mathsf{SR}_{r}^{D}$ or $D^{\mathcal{X}}$\footnote{This notation corresponds to our oracle notation $\Oraclef{D}$.} have statistical distance at most $\ell(q)/r$, where $\ell(q) = \pi^2(2q)^3/3 < 27q^3$.
\end{lemma}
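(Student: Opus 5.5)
The approach I would take follows Zhandry's polynomial method. The acceptance probability of a $q$-query algorithm, viewed as a function of the oracle distribution, is a low-degree polynomial in a single real parameter $t=1/r$. The i.i.d.\ oracle $D^{\mathcal{X}}$ corresponds to $t=0$, and $\mathsf{SR}_r^D$ corresponds to $t=1/r$. A polynomial inequality then bounds how far the value can move from $t=0$ to $t=1/r$. Since the statistical distance of the outputs is a maximum over events, it suffices to fix an arbitrary output event $E$ and bound the difference between $\prob{E}$ under $\mathsf{SR}_r^D$ and under $D^{\mathcal{X}}$.

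\emph{Step 1: reduction to $2q$-wise marginals.} For a fixed function $O$, I would expand the final state of the algorithm in the computational basis. Each amplitude is a sum over query paths, and each path depends on $O$ only through at most $q$ oracle values. Hence the probability of $E$ is a linear combination of products of at most $2q$ indicator terms of the form $[O(x_j)=y_j]$. Taking expectation over a random $O$, $\prob{E}$ is a fixed linear functional of the family of $2q$-wise marginals $\Pr[O(x_1)=y_1,\ldots,O(x_{2q})=y_{2q}]$ of the oracle distribution. The coefficients depend only on the algorithm, not on the distribution.

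\emph{Step 2: polynomial structure in $1/r$.} Fix $k\le 2q$ distinct inputs $x_1,\ldots,x_k$ and condition on the partition $\pi$ of $[k]$ induced by which of them received the same index $i\in[r]$. The probability of $\pi$ is $(r)_{|\pi|}/r^{k}=\prod_{j<|\pi|}(1-j/r)\cdot r^{|\pi|-k}$, which is a polynomial in $t=1/r$ of degree at most $k-1$. Conditioned on $\pi$, the outputs are constant on blocks of $\pi$ and i.i.d.\ according to $D$ across blocks, independently of $r$. Therefore every $2q$-wise marginal, and hence $p(t):=\prob{E}$, is a polynomial in $t$ of degree $d<2q$. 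At $t=0$ only the all-singletons partition survives, so $p(0)$ equals the acceptance probability under $D^{\mathcal{X}}$. Moreover, $p(1/r)\in[0,1]$ for every $r\in\mathbb{N}_{\ge1}$.

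\emph{Step 3: the polynomial inequality, which I expect to be the main obstacle.} I need to show that a degree-$d$ polynomial lying in $[0,1]$ on all points $1/r$, $r\ge1$, satisfies $|p(1/r)-p(0)|\le \pi^2 d^3/(3r)$, up to the $d\le 2q$ bookkeeping that produces $\ell(q)$. My first attempt would go in two stages:
\begin{enumerate}
\item Bound $\max_{t\in[0,1]}|p(t)|$ by a constant, using that the sample points $1/r$ are spaced closely enough near $0$ relative to $d$. This is a Coppersmith--Rivlin/Ehlich--Zeller type argument: if $p$ is large somewhere on the interval, its derivative (controlled by Markov's inequality) forces it out of $[0,1]$ at a nearby grid point.
\item Apply the Markov brothers' inequality $|p'|\le 2d^2\max|p|$ on $[0,1]$ and integrate from $0$ to $1/r$.
\end{enumerate}
This yields a bound of order $d^2/r$ for $r$ large compared with $d^2$ and a trivial bound otherwise, which combine to the stated cubic dependence. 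Matching the exact constant $\pi^2(2q)^3/3$ would require tracking the grid-density argument carefully, for instance comparing consecutive grid gaps $1/r-1/(r+1)$ with the Chebyshev extremal spacing. Since the lemma is quoted from Zhandry, for the constant I would follow his proof of the analogous theorem rather than rederive it independently.
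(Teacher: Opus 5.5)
Your Steps 1 and 2 are the standard polynomial-method reduction and are fine. The paper itself gives no proof and only cites Zhandry, so the whole difficulty lies in Step 3, and there your argument fails.

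\textbf{Stage 1 is false.} Take $u(t)=\prod_{j=1}^{d}(1-jt)$. It has $u(0)=1$, $u(1/k)=0$ for $k\le d$, and $u(1/k)\in(0,1)$ for $k>d$. Yet $|u(3/4)|=\frac14\prod_{j=2}^{d}(\frac{3j}{4}-1)$ grows superexponentially in $d$. Away from $0$ the grid is too sparse to control $p$ between grid points. An Ehlich--Zeller argument only controls $p$ on $[0,c/d^2]$, where the gaps $\approx t^2$ are below the Markov scale. Markov on that interval then gives $|p'|=O(d^4)$, i.e.\ only an $O(q^4/r)$ bound.

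\textbf{The intended intermediate bound is also false.} Your stages aim at $|p(1/r)-p(0)|=O(d^2/r)$ for large $r$, and that bound does not hold. Moreover, combining it with the trivial bound for $r\lesssim d^2$ would give $O(d^2/r)$ overall, not a cubic bound, so that sentence is a non sequitur. Concretely, take $K+n=d$ and
\[
v(t)=\prod_{j=1}^{K}(1-jt)\cdot T_n(1-2Kt),
\]
with $T_n$ the Chebyshev polynomial. Then:
\begin{itemize}
\item $v(1/k)=0$ for $k\le K$;
\item $|v(1/k)|\le 1$ for $k>K$, since the product lies in $(0,1)$ and $1-2K/k\in(-1,1)$;
\item $v(0)=1$ and $v'(0)=-K(K+1)/2-2Kn^2$.
\end{itemize}
So $p=(1+v)/2$ is a degree-$d$ polynomial with values in $[0,1]$ on $\{0\}\cup\{1/k:k\ge 1\}$. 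Nevertheless $r\,|p(1/r)-p(0)|\to|v'(0)|/2\approx 4d^3/27$ as $r\to\infty$ for $K\approx d/3$.

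Cubic dependence is therefore unavoidable. This is consistent with quantum collision search distinguishing $\mathsf{SR}_r^D$ from $D^{\mathcal X}$ with advantage $\Omega(q^3/r)$. Any valid proof must use the global structure of the grid. The example shows why: $p$ can vanish on the sparse points $1,\dots,1/K$, blow up between them, and oscillate like a Chebyshev polynomial on $[0,1/K]$ with $K=\Theta(d)$. No sup-norm bound on a fixed interval followed by Markov captures this.

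This grid-adapted $O(d^3/r)$ inequality for arbitrary degree-$d$ polynomials bounded on $\{1/r\}$ is the real content of the cited result. By deferring to Zhandry ``for the constant,'' you are actually deferring all of Step 3, which your two-stage plan cannot supply.
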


\smallRangeRenyiProof*

\begin{proof}[Proof of \cref{proposition: Renyi divergence proof with sr}]
    Apply \cref{cor: small-range-distribution}, then apply post-processing of the Rényi divergence on the classical samples of which there are $r$, and then apply \cref{cor: small-range-distribution} again:
    \begin{align}
        \prob{1 \gets \adv^{{\oraclef{P}}}} & \leq \prob{1 \gets \adv^{\mathsf{SR}_r^P}} + \frac{\ell(q)}{r}                                                           \nonumber                               \\
                                            & \leq \left(\delta^r \prob{1 \gets \adv^{\mathsf{SR}_r^Q}}\right)^{\frac{\alpha-1}{\alpha}}+ \frac{\ell(q)}{r}                \nonumber                           \\
                                            & \leq \left(\delta^r \left(\prob{1 \gets \adv^{\oraclef{Q}}} + \ell(q)/r\right)\right)^{\frac{\alpha-1}{\alpha}}+ \frac{\ell(q)}{r}. \label{eq:apply-small-range}
    \end{align}
\end{proof}

\paragraph{Applicability}

This technique can be used in some regimes, but in some situations, using a generic relationship between R\'enyi divergence and the statistical distance in combination with \cref{thm: Oracle Distribution Switching using Statistical Distance} will yield stronger bounds.
Consider a Rényi divergence of factor $(1 + \epsilon)$ which is applied $r$ times.
The multiplicative factor becomes
\[(1 + \epsilon)^r = e^{r\ln(1 + \epsilon)} \approx e^{r\epsilon}.\]
The factor should not be super-polynomial in $q$, so $\epsilon$ must be chosen as $\frac{k\ln q}{r}$ where $k$ is some positive constant.
The factor $\ell(q)/r$  should be negligible, forcing a super-polynomial $r$.
Therefore, $\epsilon$ needs to be selected to be negligible.
Following standard inequalities of the Rényi divergence (see Appendix \ref{app:prob}), it bounds the statistical distance as $\Delta(P,Q) \leq \sqrt{\ln(1 + \epsilon)} \approx \sqrt{\epsilon}$, which is negligible.

Using Pinsker's inequality and \cref{thm: Oracle Distribution Switching using Statistical Distance} we can obtain a bound on the trace distance, namely \[C\cdot q\cdot  \sqrt{\sqrt{\epsilon}},\] for some small constant $C$.
In \cref{eq:apply-small-range}, clearly, $r\le -\log \prob{1 \gets \adv^{\oraclef{Q}}}/\epsilon $, otherwise the $\delta^r\prob{1 \gets \adv^{\oraclef{Q}}}\ge 1$.
This implies that the bound in \cref{cor: small-range-distribution} is at least

\[ \frac{\ell(q)}{r} \geq \frac{q^3}{-\log \prob{1 \gets \adv^{Q^{\mathcal{X}}}}/\epsilon}=\epsilon\frac{q^3}{-\log \prob{1 \gets \adv^{\oraclef{Q}}}}.\]
The statistical distance error bound $C\cdot q\cdot  \sqrt{\sqrt{\epsilon}}$ is smaller than the additive error of the small range distribution technique at least whenever
\[q \geq \sqrt{\frac{-C \log(\prob{1 \gets \adv^{\oraclef{Q}}})}{\epsilon^{3/4}}}.\]

\subsection{Proof of \cref{cor: adaptive reprogramming with other distribution}}\label{subsec: proof of corollary adaptive reprogramming with other distribution}
\AdRwDS*

\begin{proof}[Proof of \cref{cor: adaptive reprogramming with other distribution}]
    The proof is quite straightforward.
    Apply \cref{thm: adaptive reprogramming} and sample all values that need resampling.
    This gives the additive terms from the reprogramming.
    Replace the $R$ classical samples from the uniform distribution, by values from distribution $Q$ and use the properties of the Rényi divergence.
    The factor of $R_\alpha(\mathcal{U}(\mathcal{Y})\|Q)^R$ follows from observing that each sample to be reprogrammed is independent, so we can apply \cref{lemma: renyi multiplication} to get the overall bound on the Rényi divergence.
    \allowdisplaybreaks{
        \begin{align*}
            \Big|\prob{1 \gets \REPRO_{1,Q}^{\dist}} & - \prob{1 \gets \REPRO_{0}^{\dist}}\Big|                                                                                                       \\
                                                     & \leq \left|\prob{1 \gets \REPRO_{1}^{\dist}} - \prob{1 \gets \REPRO_{0}^{\dist}}\right|                                                        \\
                                                     & \quad + \left|\prob{1 \gets \REPRO_{1,Q}^{\dist}} - \prob{1 \gets \REPRO_{1}^{\dist}}\right|                                                   \\
                                                     & \leq \delta_{repr} + R \Delta(Q, \mathcal{U}(\mathcal{Y}))                                                                                     \\
            \prob{1 \gets \REPRO_{0}^{\dist}}        & \leq \prob{1 \gets \REPRO_{1}^{\dist}} + \delta_{repr}                                                                                         \\
                                                     & \leq \left(R_\alpha(\mathcal{U}(\mathcal{Y})\|Q)^R\prob{1 \gets \REPRO_{1,Q}^\distinguisher}\right)^{\frac{\alpha -1}{\alpha}} + \delta_{repr}
        \end{align*}}
    where $\delta_{repr} = \frac{3R}{2}\sqrt{q p_{\max}}$.
\end{proof}
\subsection{Proof of \cref{thm: UFCRA to UFNRA}}\label{subsec: proof of AOS ufcra to ufnra}
\AOSSUFCRAtoUFNRA*

\begin{proof}
    For $\seufcra$ security, we define a sequence of games.
    The proof follows essentially the same pattern as~\cite[Theorem 3]{AC:GHHM21}.

    The first game is the normal $\seufcra$ game.
    In the second game, we make two changes.
    The $j$th signature query $\Oracle{\sign}(i_{j}, \msg_{j}, \rho_{j})$ will be modified as follows.
    We sample a value $\ch_j \gets \mathcal{C}$ uniformly at random and everything in the ring stays as it is, until the query to $\hash(i_j, \rho_j, \com_{i_{j} - 1}, \msg)$ is made.
    This query will now be programmed to the initially sampled value of $\ch_j$.
    All values $\ch_j$ for $j \in [\nrsignqueries]$ can be sampled at the beginning.
    In the next game, we have already defined $\ch_j$ at the beginning, so we can use the HVZK simulator for every participant of the ring and get a valid signature.
    This removes the secret key, and the oracle can be simulated, i.e., we can simulate the oracle and transition to $\eufnra$.
    The games are formally depicted in \cref{fig: ufcra to ufnra and deniability games}.

    \begin{figure*}
        \begin{pcvstack}[boxed, center]
            \begin{pchstack}[space=0.5em ]

                \procedure[linenumbering]{Unforgeability games $\game{0}-\game{2}$}{
                    q \gets 0;\mathcal{L} \gets \emptyset\\
                    \pcfor i \in [N]:\\
                    \t (\instance_{i}, w_{i}) \gets \gen(\secparam)\\
                    \rho = \{\instance_{1}, \ldots, \instance_{N}\}\\
                    (\rho^{*}, \msg^{*}, \sigma^{*}) \gets \adv^{\Oracle{\sign}}(\rho)\\
                    \pcif (\rho^{*}, \msg^{*}, \sigma^{*}) \in \mathcal{L} \lor \rho^{*} \not\subseteq \rho: \\
                    \t \pcreturn \false\\
                    \{\instance^{*}_1, \ldots, \instance^{*}_{N^{*}}\} \gets \rho^{*}\\
                    ((\com_i^*,\rsp_{i}^*))_{i \in [N^*]} \gets \sigma^{*}\\
                    \pcfor i \in [N^{*}]:\\
                    \t \ch_{i+1}^* \gets \hash(i+1, \rho^{*}, \com_{i}^*, \msg^{*})\\
                    \pcreturn \bigwedge_{i \in [N^{*}]} \verifier_{2}(\instance_{i}^{*}, \com_i^*, \ch_i^*, \rsp_i^*)
                }

                \begin{pcvstack}

                    \procedure[linenumbering]{Anonymity games $\game{0}$ to $\game{2}$}{
                    q \gets 0; b \gets \bin\\
                    \pcfor i \in [N]:\\
                    \t (\instance_{i}, w_{i})\gets\gen(\secparam)\\
                    \rho = {\{\instance_{i}\}}_{i \in [N]};\SK = {\{w_{i}\}}_{i \in [N]}\\
                    b' \gets \bdv^{\chal}(\rho, \SK)\\
                    \pcreturn b=b'
                    }

                    \procedure[linenumbering]{$\chal(i_0, i_1,\rho' = \left\{\instance_i'\right\}_{i \in [N']},\msg)$}{
                        \pcif \instance_{i_0}' \notin \rho \lor \instance_{i_1}' \notin \rho:\\
                        \t \pcreturn \bot\\
                        \pcreturn \Oracle{\sign}(i_b, \rho', \msg)
                    }
                \end{pcvstack}
            \end{pchstack}

            \begin{pchstack}[space=0.5em]
                \procedure[linenumbering]{$\Oracle{\sign}(i, \rho' = \{\instance_1', \ldots, \instance_{N'}'\}, \msg)$}{
                    \pcif q = q_{\max{}} \lor i \notin [N'] \lor \instance_{i}' \notin \rho: \pccomment{$q_{\max{}}$ is $\nrsignqueries$ or $\nrchalqueries$ in the respective games}\\
                    \t \pcreturn \bot\\
                    q \gets q+1\\
                    \pclinecomment{let $w$ be the witness such that  $\mu(w) = \instance_i'$}\\
                    \ch_{i}\gets \mathcal{C} \pccomment{$\game{1}-\game{2}$}\\
                    (\com_i, \st) \gets \prover_{1}(\instance_{i}') \pccomment{$\game{0}-\game{1}$}\\
                    (\com_i, \rsp_i) \gets \simulator(\instance_{i}', \ch_i)\pccomment{$\game{2}$}\\
                    \pcfor j = i+1, \ldots, N', 1, \ldots, i-1:\\
                    \t \ch_{j} \gets \hash(j, \rho', \com_{j-1}, \msg)\\
                    \t (\com_{j}, \rsp_{j}) \gets \simulator(\instance_{j}', \ch_{j})\\
                    \ch_{i}\gets \hash(i, \rho', \com_{i-1}, \msg)\pccomment{$\game{0}$}\\
                    \rsp_i \gets \prover_{2}(\state, w, \ch_{i})\pccomment{$\game{0}-\game{1}$}\\
                    \hash := \hash^{(i, \rho', \com_{i-1},  \msg) \mapsto \ch_{i}} \pccomment{$\game{1}-\game{2}$}\\
                    \sigma \gets ((\com_1,\rsp_{1}), \ldots, (\com_{N'}, \rsp_{N'}))\\
                    \mathcal{L} \gets \mathcal{L} \cup \{(\rho', \msg, \sigma)\}\\
                    \pcreturn \sigma
                }

            \end{pchstack}
        \end{pcvstack}
        \caption{Unforgeability and anonymity games $\game{0}$ to $\game{2}$ in \cref{thm: UFCRA to UFNRA}. All indices are interpreted to be modulo $N, N'$, and modulo $N^*$, respectively.
            In $\game{0}$, we have the normal unforgeability/anonymity game. In $\game{1}$, the \ac{RO} is reprogrammed to a \ac{RO} value, and in $\game{2}$, the signature is generated entirely without the secret key $w$.
        }\label{fig: ufcra to ufnra and deniability games}
    \end{figure*}
    \vspace{.1cm}

    \noindent$\mathsf{Game}_0$: This is the original game, so here we have
    \[\advantage{\seufcra}{\adv, \AOS(\Sigma), N, \nrsignqueries} = \prob{\game{0}^{\adv}}.\]

    \noindent $\mathsf{Game}_1$:     In game $\game{1}$, each signing query is adapted as described in \cref{fig: ufcra to ufnra and deniability games}.
    The adaptive reprogramming lemma, \cref{thm: adaptive reprogramming} from \cite{AC:GHHM21}, yields a bound on the difference of $\game{0}$ and $\game{1}$.
    At the cost of at most $N$ additional queries, assume that the adversary runs the verifier on its output signature. We construct a reprogramming distinguisher: Run \adv, calling the reprogramming oracle for each signing query.
    The distinguisher issues $q_H + \kappa \cdot \nrsignqueries + N$ queries to $\hash$ with $\nrsignqueries$ many reprogramming queries, so \cref{thm: adaptive reprogramming} yields
    \[|\prob{\game{0}(\adv)} - \prob{\game{1}(\adv)}| \leq \frac{3 \nrsignqueries}{2}\sqrt{(\nrhashqueries + \kappa \cdot \nrsignqueries + N) \cdot p_{\max}} + \epsilon_{repr}'\]
    where $p_{\max} \leq 2^{-\minentropy(\secpar)}$ by assumption on the simulator commitment min-entropy, and
    $\epsilon_{repr}'$ is an additional error as the eventually produced signature forgery must be valid for the non-reprogrammed hash function $\hash$.
    For the eventual signature $(\rho^*, \msg^*, \sigma^*)$ to be valid, there can not be a pair $(\rho^*, \msg^*, \sigma')$ that was produced by an oracle query where $i$ was the signer and $\com_{i-1}' = \com_{i-1}^*$.
    For this index, there was reprogramming, so the actual signature might not be valid under the non-reprogrammed oracle.
    We bound this probability as $\epsilon_{repr}'$.
    Assume a signing query exists where $\com_{i-1}' = \com_{i-1}^*$.
    Let $\ch_j^*$ and $\ch_j'$ for $j \in [N^*]$ be challenges computed with the reprogrammed oracle $\hash$.
    If $\ch_{i-1}' \neq \ch_{i-1}^*$ and both signatures are valid, we can extract the witness via special soundness, creating an adversary $\adv_{\wrec}$.
    Now assume that $\ch_{i-1}' = \ch_{i-1}^*$, then, either the same commitments for the previous ring member were used, i.e., $\com_{i-2}' = \com_{i-2}^*$, or a hash collision was found.
    For a hash collision with ring $\rho^*$, there are at most $\nrhashqueries + N^* \cdot \nrsignqueries + N^*$ queries to $\hash$ and a collision happens with probability at most $\hashcollisionqrom$ using \cref{lemma: probability hash collision qrom}.\footnote{
        Here, we make use of domain-separation between different rings.
        The hash function we are using in the definition maps directly into $\mathcal{C}$.
        Note that $\gamma$ can be chosen such that hashing into $\mathcal{Y}$ and mapping to $\mathcal{C}$ with $\gamma$ satisfies $\gamma_{cl}/\mathcal{Y} \leq \frac{2}{|\mathcal{C}|}$.}
    If the commitments are the same for $i-2$, we can repeat the same argument.
    Assuming no hash collisions, all commitments and (challenges) are the same.
    If all commitments and challenges are the same, then there must be a response that is different for one index $j \in [N^*]$ because the signatures are different.
    This can be used to construct a $\cur$ adversary $\adv_{\cur}$ against $\Sigma$.
    The $\wrec$ and $\cur$ adversaries rely on guessing the correct index when placing the $\Sigma$-protocol instances before giving them to the adversary, which introduces the multiplicative factor of $N$.
    Altogether, this allows to bound \[\epsilon_{repr}' \leq N\cdot (\advantage{\cur}{\adv_{\cur}, \Sigma} + \advantage{\wrec}{\adv_{\wrec}, \Sigma}) + \hashcollisionqrom.\]

    \noindent $\mathsf{Game}_2$: In game $\game{2}$, we remove the signing key.
    We do not compute an initial commitment, but rather compute the commitment and response from the challenge using the simulator, and reprogram as in the previous game.

    Lets consider the statistical distance approach first.
    There are $\nrsignqueries$ queries, and for each replacement, the statistical difference is at most $\hvzktv$ such that
    \[\left|\prob{\game{1}(\adv)} - \prob{\game{2}(\adv)}\right| \leq \nrsignqueries \cdot \hvzktv.\]
    $\adv$ now makes no actual signing queries, i.e., we can define a $\eufnra$ adversary $\adv_{\eufnra}$ that plays the role of the challenger for $\adv$ by simulating all transcripts themselves and reprogramming the \ac{RO}
    \[\prob{\game{2}(\adv)} \leq \advantage{\eufnra}{\adv_{\eufnra}, \AOS(\Sigma), N}.\]

    For the Rényi divergence approach we can use \cref{lemma: renyi divergence,lemma: renyi multiplication} as the underlying distributions are independent.
    This introduces the multiplicative factor $(\hvzkrenyi)^{\nrsignqueries}$.

    We now prove anonymity.
    Classically, we would rely on reprogramming and running the simulator.
    We start with the normal anonymity game $\game{0} = \anon_{\bdv, \rsig, N, \nrchalqueries}(\secpar)$, and from it, we define two small modifications formally depicted in \cref{fig: ufcra to ufnra and deniability games}.
    The first one is just a reprogramming of the \ac{QROM}, and the second game uses the simulator to remove the use of the secret key.
    With the same reasoning as in the unforgeability part, except for the additional hash assumption that was needed for the eventual forgery, it follows that
    \[\left|\prob{\game{0}(\bdv)} - \prob{\game{1}(\bdv)}\right| \leq 3\nrchalqueries/2\sqrt{(\nrhashqueries + \kappa \cdot \nrchalqueries) \cdot 2^{-\minentropy(\secpar)}}.\]
    We distinguish two cases now.
    First, we can use the statistical distance, and observe that the distance between a signature from the oracle in the two games is the difference of using the simulator or the actual proof algorithm for one of the ring members.
    Hence, the statistical distance for each oracle query is bounded by $\hvzktv$.
    There are $\nrchalqueries$ queries which yield the final bound of $\nrchalqueries \cdot \hvzktv$.
    Second, we can use the Kullback-Leibler divergence to bound the difference in probability between $\game{1}$ and $\game{2}$ using \cref{lemma: KL bound}.
    This introduces a factor of $\sqrt{\left(\nrchalqueries\cdot  \hvzkkl\right)/2}$.
    Considering $\game{2}$, the signing oracle is independent of the choice of $b$, so the probability that $\bdv$ outputs $b$ is exactly $1/2$.
\end{proof}

\subsection{Proof of \cref{thm: UFNRA from measure and reprogram}}\label{subsec: proof of AOS ufnra from measure-and-reprogram}
\AOSUFNRAMeasureRepr*\begin{proof}
    We construct an adversary $\bdv$ from $\adv$.
    $\bdv$ receives an instance $\instance$, where $(\instance,w) \gets \Sigma.\gen(\secparam)$.

    $\bdv$ samples $i \gets [N]$, sets $\instance_i = \instance$ and generates $(\instance_j, w_j) \gets \Sigma.\gen(\secparam)$ for $j \in [N]\setminus\{i\}$.
    It forwards ${(\instance_k)}_{k \in [N]}$ to $\adv$.
    Since all instances are identically distributed, $\adv$'s view is independent of $i$.

    Let $\rho^{*}$ with $N' = |\rho^*| \leq N$ be the ring of $\adv$'s final forgery.
    We ensure that $\adv$ must have made at least a classical query for every $(1, \rho^*,\com_{N'}^*, \msg^*), \ldots,\allowbreak (N',\rho^*,\com_{N'-1}^*, \msg^*)$ from the final forgery.
    \cref{thm: measure and reprogram 2.0} yields an order $\pi$ of the queries in which we may reprogram, at a multiplicative loss of ${(2(\nrhashqueries+N)+1)}^{2N}$.
    This mimics the time-ordered list that is immediate in the classical case.
    One reprogramming carries the $\Sigma$-protocol challenge; the remaining $N-1$ only commit $\adv$ to a classical ordering.

    Recall that the query with index $j$ takes $\com_{j-1}$ as input and defines $\ch_j$ as its output.
    Hence, for $\bdv$ to place its instance at a position $p$ of $\rho^*$, it must read $\com_p$ from the input of query $p+1$ and program the challenge received from $\verifier$ as the output of query $p$, which requires $\pi(p+1) < \pi(p)$ (indices modulo $N'$).
    Such a $p$ always exists.
    As $\pi$ is independent of $i$, the guess $i$ lands on a position that lies in $\rho^*$ and satisfies this condition with probability at least $1/N$.

    For every other query $j$ that must be reprogrammed, $\bdv$ samples $\ch_j \gets \mathcal{C}$ uniformly and programs it into the oracle.
    When $\adv$ outputs a forgery $((\com_1,\rsp_1),\allowbreak \ldots, (\com_{N'},\rsp_{N'}))$, $\bdv$ sends $\rsp_p$ to $\verifier$.
    If the signature verifies, then in particular $\verifier_2(\instance_p, \com_p, \ch, \rsp_p)$ holds, which concludes the proof.
\end{proof}
\subsection{Proof of \cref{lemma: transition collision bound}}\label{subsec: proof lemma transition capacity database collision}
\TransitionCollisionBound*

\begin{proof}[\cref{lemma: transition collision bound}]
    We apply \cref{thm: quantum transition capacity} with $\PProp:=\SZ_{\leq s}\setminus \CL$ and $\PProp' := \CL$.
    Consider an arbitrary $D \in \mathcal{D}$ and input $x \in \mathcal{X}$.
    By \cite[Remark 2.5]{C:DFMS22}, the argument of the maximum in the RHS of \cref{eq:transcap} does not depend on the value $D(x)$, but only on the values of $D$ outside $x$.
    Therefore, we may assume that $D(x) = \bot$ when bounding $\prob{U \in L^{x,D}}$.
    Since $\PProp\subset\SZ_{\leq s}$,
    if $\PProp|_{D|^{x}}$ is non-empty, it follows that $D\in\SZ_{\leq s}$.
    We thus assume $D \in \SZ_{\leq s}$,
    otherwise, \cref{thm: quantum transition capacity} allows us to set $L^{x, D} = P|_{D|^{x}}=\emptyset$.
    The convention of $D(x) = \bot$ implies that $D \in \CL \Rightarrow D[x \mapsto U] \in \CL$ for all $U\in \mathcal{Y}$.
    Hence, with the same argument as before, we may assume that $D \notin \CL$, as we could set $L^{x, D} = P|_{D|^{x}}$ which is empty.

    It remains to consider $D \in \neg \CL \cap \SZ_{\leq s}$.
    Assuming $D(x) = \bot$, we have that $\bot \notin P'|_{D|^{x}}$, and so we choose $L^{x, D} = P'|_{D|^{x}}$.
    We observe that
    \[U \in L^{x, D} \Leftrightarrow D[x \mapsto U] \in P' \Rightarrow \exists x': D(x') \neq \bot \land \gamma \circ D(x') = \gamma \circ D(x).\]
    As $D \in \SZ_{\leq s}$, there are at most $s$ possible values $x'$ with $D(x') \neq \bot$.
    Therefore, we consider $\prob{\gamma(U) \in \{\gamma(D(x')) \mid x': D(x') \neq \bot\}}$.
    This holds if and only if $U \in \gamma^{-1}(\{\gamma(D(x')) \mid x': D(x') \neq \bot\})$ which is of size at most $s \cdot \gamma_{cl}$.
    Thus,
    \[\prob{U \in L^{x, D}} \leq s \cdot \gamma_{cl}/|\mathcal{Y}|\]
    holds. Applying \cref{thm: quantum transition capacity}, this yields
    \[\llbracket \SZ_{\leq s} \setminus \CL \rightarrow \CL \rrbracket \leq \sqrt{10}\sqrt{s\cdot \gamma_{cl}/|\mathcal{Y}|}.\]
\end{proof}
\subsection{Proof of \cref{lemma: transition success bound}}\label{subsec: proof lemma transition success bound}

\transitionSuccessBound*

\begin{proof}
    We apply \cref{thm: quantum transition capacity} with $\PProp:=\SZ_{\leq s}\setminus \SUC \setminus \CL$ and $\PProp' := \SUC$.
    Consider an arbitrary but fixed $D \in \mathcal{D}$ and input $x \in \mathcal{X}$.
    Furthermore, by \cite[Remark 2.5]{C:DFMS22}, the transition capacity does not depend on the value $D(x)$, but only on the values of $D$ outside $x$.
    Therefore, we may assume that $D(x) = \bot$ for any $x$ for which we consider the transition capacity.
    When considering $\PProp$, we observe that it satisfies the database property $\SZ_{\leq s}$.
    If $\PProp|_{D|^{x}}$ is non-empty, the database $D$ must satisfy $\SZ_{\leq s}$, i.e., it is bounded in size.

    We now consider different cases as depicted in \cref{tab: case distinction co proof}.
    We can assume $D \in \SZ_{\leq s}$.
    Otherwise, \cref{thm: quantum transition capacity} allows us to set $L^{x, D} = P|_{D|^{x}}$ which is empty.
    The convention of $D(x) = \bot$ implies that $D \in \CL \Rightarrow D[x \mapsto U] \in \CL$ for all $U\in \mathcal{Y}$.
    Therefore, with the same argument as before, we may assume that $D \notin \CL$, as we could set $L^{x, D} = P|_{D|^{x}}$ which is empty.

    \begin{table*}

        \caption{We have to consider two types of transitions regarding success, namely the transitions from $\SUC$ to $\neg \SUC$ and the transitions from $\neg \SUC$ to $\SUC$.
            We separate the individually domain-separated commitment and challenge queries.}\label{tab: case distinction co proof}
        \centering
        \begin{tabular}{|c|c|c|c|}
            \hline
            Case & Condition                                                         & Query Type & Bound \\
            \hline
            $1a$
                 & \multirow{2}{*}{$D \in \SUC \cap \neg \CL \cap \SZ_{\leq s}$}
                 & commitment
                 & $\leq N \cdot\ell/|\mathcal{Y}|$                                                       \\
            \cline{1-1} \cline{3-4}
            $1b$
                 &
                 & challenge
                 & $= 0$                                                                                  \\
            \hline
            $2a$
                 & \multirow{2}{*}{$D\in \neg \SUC \cap \neg \CL \cap \SZ_{\leq s}$}
                 & commitment
                 & $\leq s\cdot \ell/|\mathcal{Y}|$                                                       \\
            \cline{1-1} \cline{3-4}
            $2b$
                 &
                 & challenge
                 & $\leq \gamma_{cl}(sz_{triv}^{\mathfrak{S}})^N/|\mathcal{Y}|$                           \\
            \hline
        \end{tabular}
    \end{table*}

    \noindent\textbf{1.)}   Let $D \in \SUC \cap \neg \CL \cap \SZ_{\leq s}$.
    We assume $D(x) = \bot$, which implies that $\bot \in \PProp'_{D|^{x}}$.
    Following \cref{thm: quantum transition capacity}, we must set $L^{x, D} := \PProp_{D|^{x}}$.
    As $D \in \SUC$, there are a $(\bm{y}_{i})_{i \in [N']}$ and $\mu$ s.t.\
    \[\verifier(\instance_{i}, c_{i}, \bm{m}_{i, c_{i}})\land (\instance, \extractor^{*}(\instance_{i}, \bm{m}_{i})) \notin R\]
    where $c_{i + 1} := \gamma \circ D(i+1, \rho', \bm{y}_{i}, \mu) \land \bm{m}_{i} := D^{-1}(\bm{y}_{i})$.

    \noindent\textbf{(a)}
    Let $x$ describe a commitment query to $D$ with input $m$, i.e.\ $x=m$.
    By assumption, $D(m) = \bot$, and so, $D(m)$ is not a coordinate of any $\bm{y}_{i} \in \mathcal{Y}^{\ell}$ which we express here as $D(m)\notin \bm{y}_{i}$ for ease of notation.
    Furthermore,
    \[U \in L^{x,D} \Leftrightarrow D[x\mapsto U] \in \PProp \Rightarrow D[x \mapsto U] \notin \SUC \Rightarrow \exists i \in [N'] : U \in \bm{y}_{i},\]
    where the last property can be seen by contraposition: Assuming that $U \notin \bm{y}_{i}$, then the defining properties of $\SUC$ remain as before.
    This means that
    \[\prob{U \in L^{x, D}} \leq \prob{\exists i \in [N']: U \in \bm{y}_{i}} \leq N\cdot \ell/|\mathcal{Y}|.\]

    \noindent\textbf{(b)}
    Let $x$ describe a challenge query.
    Because $x$ describes a challenge query, it has no impact on preimages of $(\bm{y}_i)_{i \in [N]}$ and $\mu$.
    They remain as they are, and hence $D[x \mapsto U] \in \SUC$ for any $U$ which means that
    \[\prob{U \in L^{x, D}} = 0.\]

    \noindent\textbf{2.)}
    Let $D \in \neg \SUC \cap \neg \CL \cap \SZ_{\leq s}$.
    We assume $D(x) = \bot$ which implies that $\bot \in \PProp_{D|^{x}}$.
    Following \cref{thm: quantum transition capacity}, we must set $L^{x, D} := \PProp'_{D|^{x}}$.

    \noindent\textbf{(a)} Let $x$ be a commitment query to $D$ with input $m$, i.e.\ $x=m$.
    We then have
    \[U \in L^{x, D} \Leftrightarrow D[x \mapsto U] \in \PProp' \Rightarrow \exists \bm{y} : D(\beta, \rho', \bm{y}, \mu) \neq \bot \land U \in \bm{y}\]
    for some $\beta \in [N']$ and $\mu$.
    The final implication can be seen by contradiction.
    Assume that $U \notin \bm{y}$, but $D[x \mapsto U] \in \SUC$.
    This implies that for every instance satisfying the $\SUC$ property, $m$ can not be a preimage, because for a valid instance it must hold that each $c_{\beta + 1} \neq \bot$.
    Therefore, the extraction would have also worked before setting $x$ in the database, which means that $D \in \SUC$ already forms a contradiction.
    Thus, for a transition, it must hold that there exists a $\bm{y}$ such that $D(\beta, \rho', \bm{y}, \mu) \neq \bot$ with $U \in \bm{y}$ in the database.
    By $\SZ_{\leq s}$ there are at most $s$ queries with $\ell$ possible $y_i \in \mathcal{Y}$ as targets.
    Hence,
    \[\prob{U \in L^{x, D}} \leq \prob{\exists \bm{y}, \mu, \beta: D(\beta, \rho', \bm{y}, \mu) \neq \bot \land U \in \bm{y}} \leq s \cdot \ell\cdot |\mathcal{Y}|^{-1}.\]

    \noindent\textbf{(b)} Let $x$ describe a challenge query, i.e.\ $x = (\beta, \rho', \bm{y}, \mu)$.
    We then have
    \begin{align*}
        U & \in L^{x, D}  \Leftrightarrow D[x \mapsto U] \in P' = SUC                                                                                                \\
          & \Rightarrow \exists \bm{y}_{i} \in \mathcal{Y}^{k} \,\text{for}\, i \in [N']\setminus\{\beta -1\} \text{ and } \bm{y}_{\beta -1}:=\bm{y} \,\text{s.t.\ } \\
          & \forall j \in [N']: \verifier(\instance_{j}, c_{j}, \bm{m}_{j, c_{j}}) \land (\instance_{j}, \extractor(\instance_{j}, \bm{m}_{j})) \notin R
    \end{align*}
    where $\bm{c}_{j} := D(j, \rho', \bm{y}_{j-1}, \mu)$ and $\bm{m}_{j} = D^{-1}(\bm{y}_{j})$.
    The last implication can be seen by observing that because $x$ is a challenge query, if it is not part of the proof satisfying the database property, the property would hold for $D$, too.

    Considering the satisfying instance, we can observe that $\bm{y}$ uniquely defines $\bm{m}_{\beta - 1}$.
    For these, we consider the set of possible challenges $S_{\beta -1} := \{c \in \mathcal{C} | \verifier(\instance_{\beta -1}, c, \bm{m}_{\beta -1, c})\}$.
    This set must be bounded by $sz_{triv}^{\mathfrak{S}}$, because there would be an extractor otherwise that could extract a witness from $m_{\beta-1}$ which would in turn mean, that $m_{\beta}$ can not be contained in the instance satisfying $\SUC$ causing a contradiction.
    Each of these challenges $c$ has at most one preimage, because there are no collisions in our database.
    This continues around the ring until we reach $\beta $.
    Here, there is again such a set $S_{\beta}$.
    Now we consider all possible paths and thereby get an upper bound on the number of possible values $U$ can take on while causing the transition.
    \[\prob{U \in L^{x, D}} \leq \frac{\gamma_{cl}\cdot \prod_{i \in [N']} |S_{i}|}{|\mathcal{Y}|} \leq \frac{\gamma_{cl} \cdot (sz_{triv}^{\mathfrak{S}})^N}{|\mathcal{Y}|}.\]

    Applying \cref{thm: quantum transition capacity}, we now get
    \begin{align*}
        \llbracket \SZ_{\leq s}\setminus \SUC \setminus \CL \rightarrow \SUC\rrbracket & \leq \sqrt{10}\sqrt{|\mathcal{Y}|^{-1}\max\left\{N \cdot \ell,s\cdot \ell, \gamma_{cl} \cdot (sz_{triv}^{\mathfrak{S}})^N\right\}}.
    \end{align*}
\end{proof}     \section{History-Freeness for Ring Signatures}\label{sec: history free ring signature}

We now formulate a version of history-freeness~\cite[Definition 4]{AC:BDFLSZ11} for ring signatures.
Additionally, we modify and generalize it in the following ways.
First, we remove the \ac{PRF} used in the original version of \cite[Definition 4]{AC:BDFLSZ11} and use $2q$-wise independent functions \cite{C:Zhandry12} instead.
Second, we also add explicit bounds to the final security statement.

\begin{definition}[{Problem~\cite[Definition 2]{AC:BDFLSZ11}}]\label{def: Problem}
    A problem $P$ is defined by a pair $(\game{P}, \alpha_P)$ where $\game{P}$ specifies a game that a (possibly quantum) adversary plays with a classical challenger.
    The game works as follows:
    \begin{itemize}
        \item On input $\secparam$, the challenger computes a value $x$, which it sends to the adversary as its input.
        \item The adversary is then run on $x$, and is allowed to make classical queries to the challenger.
        \item The adversary then outputs a value $y$, which it sends to the challenger.
        \item The challenger then looks at $x, y$, and the classical queries made by the adversary, and outputs $1$ or $0$.
    \end{itemize}
    Here, $\alpha_P\in[0,1)$.
    It may also depend on $n$.
\end{definition}

We say that an adversary $\adv$ wins the game $\game{P}$ if the challenger outputs $1$.
We define the advantage $\advantage{\adv}{P}$ of $\adv$ in problem $P$ as
\[\advantage{\adv}{P} = \left|\prob{\adv\, \text{wins in game}\, \game{P}} - \alpha_P\right|.\]

\begin{definition}[{Hard Problem~\cite[Definition 3]{AC:BDFLSZ11}}]\label{def: hard problem}
    A problem $P$ is hard for quantum computers, if for all polynomial time quantum adversaries $\adv$, $\advantage{\adv}{P}$ is negligible.
\end{definition}

The following definition is a direct adaptation of history-freeness for signatures~\cite[Definition 4]{AC:BDFLSZ11}.
The original definition only allowed for statistical distance as a measure between the actual sign algorithm and the simulated one.
We observe that the Rényi divergence can also be used, and added this to our definition as an alternative.

\begin{definition}[History-free Reduction for Ring Signatures]\label{def: history-free reduction ring signature}
    A reduction from a hard problem $P = (\game{P}, 0)$ to the $\weufcraone$ security of a ring signature $\rsig = (\setup, \kgen, \sign^{\oracle}, \verify^{\oracle})$ in the \ac{ROM} is called history-free, if it has the following form.
    Let $\adv$ be an adversary against the $\weufcraone$ security of $\rsig$ and $\bdv$ be the algorithm $\bdv$ for problem $P$ such that:
    \begin{itemize}
        \item Algorithm $\bdv$ for $P$ contains four (explicit, classical) algorithms: $\START$, $\RAND^{\oracle_{c}}$, $\SIGN^{\oracle_{c}}$, and $\FINISH^{\oracle_{c}}$.
              The latter three algorithms have access to a shared random oracle $\oracle_{c}$.
              These algorithms, except for $\RAND^{\oracle_{c}}$, may also make queries to the challenger for problem $P$.
              The algorithms are used as follows:
              \begin{enumerate}
                  \item Given an instance $x$ for problem $P$ as input, algorithm $B$ first runs $\START(x, N)$ to obtain $(\pk_{i}, z_{i})$ for $i \in [N]$, $\pk_i$ is a public key, and $z_i$ is a private state.
                        $\bdv$ gets these where $z = (z_1, \ldots, z_N)$, and sends $\rho = \{\pk_{1}, \ldots, \pk_{N}\}$ to $\adv$ and plays the role of the challenger for $\adv$.
                  \item When $\adv$ makes a random oracle query to $\oracle(\rho', r)$, algorithm $\bdv$ responds with $\RAND^{\oracle_{c}}(\rho', r, z)$.
                        Note that $\RAND$ is given the current query as input, but is unaware of previous queries and responses.
                  \item When $\adv$ makes a signature query $\sign(\rho', \sk_{i}, m)$, algorithm $\bdv$ responds with $\SIGN^{\oracle_{c}}(\rho', m, z)$.
                  \item When $\adv$ outputs a signature forgery $(m, \sigma)$ for $\rho^*\subseteq \rho$, algorithm $\bdv$ outputs $\FINISH^{\oracle_{c}}(\rho^*, m, \sigma, z)$.
              \end{enumerate}
        \item There is an efficiently computable function $\INSTANCE(\rho)$ which produces an instance $x$ of problem $P$ such that $\START(x) = (\rho, z)$ for some $z$.
              Consider the process of first generating $\pp \gets \setup(\secparam)$, then $(\sk_i, \pk_i)$ from $\kgen$ where $\rho = \{(\sk_i, \pk_i)\}_{i \in [N]}$, and then computing $x = \INSTANCE(\rho)$.
              The distribution of $x$ generated in this way is $\hfinstancetv$ close to the distribution of $x$ generated in $\game{P}$.
        \item For fixed $z$, consider the (classical) random oracle $\oracle(\rho', r) = \RAND^{\oracle_{c}}(\rho', r, z)$.
              Define a quantum oracle $\oracle_{quant}$, which transforms a basis element $|x,y\rangle$ to $|x,y \oplus \oracle(x) \rangle$.
              The oracle, although noted slightly differently, is the same, remains the same.
              We require that $\oracle_{quant}$ is quantum computationally indistinguishable from a random oracle.
              The distinguishing advantage for $\nrhashqueries$ queries is bounded by $\hfrandtv$.\item $\SIGN^{\oracle_{c}}$ either aborts (and hence $B$ aborts) or it generates a valid signature relative to the oracle $\oracle(\rho', r) = \RAND^{\oracle_{c}}(\rho', r, z)$ with a distribution
              that for $\nrsignqueries$ sign queries is \emph{either} $\hfsigntv$ close to the correct signing algorithm or it has a small Rényi divergence $\hfsignrenyi$ compared to the correct signing algorithm.\footnote{The statement about the Rényi divergence was added to capture proofs that use a conceptually different proof technique requiring the Rényi divergence.}
              The probability that none of the signature queries abort is at least $1 - \hfabort$.
        \item If $(m, \sigma)$ is a valid signature forgery relative to $\rho^{*}$ and oracle $\oracle(\rho^{*}, r) = \RAND^{\oracle_{c}}(\rho^{*}, r, z)$ then the output of $\bdv$ (i.e.\ $\FINISH^{\oracle_{c}(\rho^{*}, m, \sigma, z)}$) causes the challenger for problem $P$ to output $1$ with probability at least $1-\hffinish$.
    \end{itemize}

\end{definition}

The following theorem is adapted from the corresponding theorem for signatures~\cite[Theorem 1]{AC:BDFLSZ11}. We give a full proof and fill in the details needed for a concrete analysis of the security loss.
\begin{theorem}\label{thm: ring signature history free}
    Let $\rsig = (\kgen, \sign, \verify)$ be a ring signature scheme.
    Suppose that there is a history-free reduction for $\rsig$ with $\hfinstancetv$, $\hfrandtv$, $\hfsigntv$, $\hfsignrenyi$, $\hfabort$, $\hffinish$.
    Then, for every $\weufcraone$ adversary $\adv$ against $\rsig$ making $\nrsignqueries$ sign queries and $\nrhashqueries$ queries to the \ac{QROM}, we can construct a quantum adversary $\bdv$ for problem $P$ such that
    \[\begin{aligned}
            \advantage{\weufcraone}{\adv, \rsig} & \leq \prob{\bdv\,\text{solves}\, P} +\hfrandtv + \hfinstancetv \\
                                                 & + \hfsigntv +  \hfabort + \hffinish.
        \end{aligned}\]
    and using the Rényi divergence, we can get
    \[\begin{aligned}
            \advantage{\weufcraone}{\adv, \rsig} & \leq  \left[\hfsignrenyi\left(\prob{\bdv\,\text{solves}\, P} + \hfinstancetv + \hfabort + \hffinish\right)\right]^{\frac{\alpha-1}{\alpha}} \\
                                                 & + \hfrandtv
        \end{aligned}\]
\end{theorem}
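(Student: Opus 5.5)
We follow the game-hopping structure of \cite[Theorem 1]{AC:BDFLSZ11} and keep track of every loss explicitly. The adversary $\bdv$ runs $\adv$ as a subroutine. It obtains $(\rho, z)$ from $\START(x, N)$ and answers $\adv$'s quantum hash queries with the unitary for $\oracle(\rho', r) = \RAND^{\oracle_{c}}(\rho', r, z)$. This unitary is efficiently implementable in superposition because $\RAND$ is history-free and classical. The private oracle $\oracle_c$ is instantiated by a $2(\nrhashqueries+\nrsignqueries+1)$-wise independent function, which by \cite{C:Zhandry12} is perfectly indistinguishable from a random function for that many quantum queries. Signing queries are answered by $\SIGN^{\oracle_c}(\rho', m, z)$, and the forgery is processed by $\FINISH^{\oracle_c}$. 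Because $\RAND$ never queries the challenger, all challenger interaction stays classical, so $\bdv$ is a valid adversary in $\game{P}$.

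The hybrids are as follows.
\begin{itemize}
\item $\game{0}$ is the real $\weufcraone$ game with honest keys, the real random oracle and the real $\sign$.
\item $\game{1}$ replaces key generation by $x \gets \INSTANCE(\rho)$ followed by $\START$. Equivalently, it samples $x$ from $\game{P}$'s distribution, at cost $\hfinstancetv$.
\item $\game{2}$ replaces the random oracle by $\oracle_{quant}$, at cost $\hfrandtv$ for $\nrhashqueries$ queries.
\item $\game{3}$ replaces $\sign$ by $\SIGN^{\oracle_c}$, at cost $\hfsigntv$.
\item $\game{4}$ conditions on no abort, at cost $\hfabort$.
\end{itemize}
In $\game{4}$, a valid forgery relative to $\oracle$ makes $\FINISH$ win except with probability $\hffinish$. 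Hence $\prob{\game{4}} \le \prob{\bdv\text{ solves }P} + \hffinish$. Summing the additive terms by the triangle inequality gives the first bound.

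For the Rényi version we reorder the hops so that the single multiplicative step is applied to the classical signing transcript. First we pay $\hfrandtv$ additively to switch to $\oracle_{quant}$; this step has to stay additive because of the counterexample in \cref{subsec: renyi fails}. We then apply the probability-preservation property of \cref{lemma: renyi divergence} to the event ``$\adv$ forges'' under the switch from $\sign$ to $\SIGN$. The whole experiment is a stochastic post-processing of the $\nrsignqueries$ signing outputs, given $z$ and $\oracle_c$, so the data-processing inequality applies and yields the factor $\hfsignrenyi$, which already accounts for all $\nrsignqueries$ queries. After that, $\hfinstancetv$, $\hfabort$ and $\hffinish$ are added inside the bracket, since they apply to the game that uses $\SIGN$. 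This produces the second displayed bound.

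The main obstacle is to justify the Rényi hop when the signing oracle interacts adaptively with an adversary holding quantum state. The plan is to fix the adversary's randomness and treat each adaptive signing query as one sample of a conditional distribution. The per-query Rényi bounds then multiply by \cref{lemma: renyi multiplication}, applied in its adaptive chain-rule form; we expect this to be why the definition states $\hfsignrenyi$ for all $\nrsignqueries$ queries at once. We also need to check that abort handling is compatible with probability preservation. To do so, we treat the abort as a failure event of $\bdv$ inside the $\SIGN$-world, so the hop never involves a conditioned distribution.
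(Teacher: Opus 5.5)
Your hop structure is the paper's. The random-oracle switch is paid additively via $\hfrandtv$, and $\sign$ is replaced by $\SIGN$ using either statistical distance or probability preservation. The terms $\hfinstancetv$, $\hfabort$ and $\hffinish$ are charged to the $\SIGN$-world (inside the bracket for the Rényi bound), and $\oracle_c$ is finally replaced by a $2(\nrhashqueries+\nrsignqueries+1)$-wise independent function.

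Two steps, however, do not work as written. The first is in your additive chain, where $\game{1}$ merges a free change with a costly one. Computing $x \gets \INSTANCE(\rho)$ from honestly generated keys and running $\START(x)$ returns the same $\rho$ and leaves the secret keys with the challenger. Taking $x$ from $\game{P}$'s distribution is not ``equivalent'' to this, because $\START$ applied to a challenger instance yields only $(\rho,z)$ and no $\sk_i$. Your $\game{1}$ and $\game{2}$ still answer signing queries with the honest $\sign$, so they are not well defined once $x$ comes from the problem. The paper makes exactly this remark and switches the instance in the same hop in which $\SIGN$ replaces $\sign$ (its $\game{3}$). Your Rényi chain already uses that order, so moving the $\hfinstancetv$ hop behind the signing hop repairs the additive chain without changing the bound.

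The second is in the Rényi hop, where you propose to work ``given $z$ and $\oracle_c$''. That conditions on too much. In the paper's own instantiation (\cref{thm: Unf RPSF RS HF}), $\SIGN^{\oracle_c}(\rho',m,\rho) = \sampledom_{\rho'}(\secparam : \oracle_c(\rho',m))$ is deterministic once $\oracle_c$ is fixed. The honest signer, by contrast, uses fresh presampling randomness. So the support condition fails, every per-query Rényi divergence is infinite, and the per-query statistical distance is likewise close to $1$.

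Condition only on $z$ and the public function $\oracle = \RAND^{\oracle_c}(\cdot,z)$. The one-per-message rule of $\weufcraone$ guarantees that each signed pair $(\rho',m)$ is fresh. Given $\oracle$, the value of $\oracle_c$ at such a pair is exactly what makes the simulated signature a fresh domain sample conditioned on $f_{\rho'}(d) = \oracle(\rho',m)$, independently across queries. For fixed $\oracle$ the adversary is a fixed quantum strategy whose classical signing transcript factorizes query by query. The per-query bounds therefore multiply to $\hfsignrenyi$ as you intend, and averaging over $\oracle$ only needs concavity of $t \mapsto t^{(\alpha-1)/\alpha}$. The paper merely invokes \cref{lemma: renyi divergence} here, so with this correction your treatment of the adaptive hop is more explicit than the paper's.
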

\begin{proof}
    The history-free reduction includes five (classical) algorithms $\START$, $\RAND$, $\SIGN$, $\FINISH$, and $\INSTANCE$.
    We consider a sequence of games where the first game is the standard $\weufcraone$ ring signature game with respect to $\rsig$ in the QROM.

    \begin{gamedescription}
        \describegame The quantum adversary $\adv_Q$ plays the standard $\weufcraone$ game for $\rsig$.
        Assume that $\adv_Q$ has a non-negligible winning probability.

        \describegame Perform the following modifications compared to $\game{1}$.
        After the challenger generates $(\pk_i, \sk_i) \gets \kgen(\pp)$ for $i \in [N]$ where $\pp \gets \setup(\secparam)$, it computes $x \gets \INSTANCE(\rho)$ as well as $(\rho, z) \gets \START(x)$.
        Further, instead of answering $\adv_Q$'s quantum random oracle queries with a truly random oracle, the challenger simulates for $\adv_Q$ a quantum accessible oracle $\oracle_{q}$ as an oracle that maps a basis element $|x, y \rangle$ into the element $|x, y \oplus \RAND^{\oracle_{q}(\rho', x, z)}\rangle$, where $\oracle_{q}$ is a random quantum-accessible oracle.
        By \cref{def: history-free reduction ring signature},  $\oracle_{quant}$ is computationally indistinguishable from random for quantum adversaries.
        The success probability of $\adv_{Q}$ is thus $ \hfrandtv$ close to the probability in $\game{1}$,

        \[\left|\prob{\adv_Q\,\text{wins in}\,\game{1}} - \prob{\adv_Q\,\text{wins in}\,\game{2}}\right| \leq \hfrandtv.\]

        \describegame Modify the challenger from $\game{2}$ as follows: instead of generating $(\sk_i, \pk_i)$ and computing $x = \INSTANCE(\rho)$, start off by running the challenger for problem $P$.
        When the challenger sends $x$, then start the challenger from the previous game using this $x$.
        The challenger is no longer in possession of $\sk_i$, therefore, when $\adv_Q$ asks for a signature on $m$, answer with $\SIGN^{\oracle_{q}}(\rho', m, z)$.
        First, since $\INSTANCE$ is part of a history-free reduction, this change in how we compute $x$ only affects the distribution of $x$ with $\hfinstancetv$, and hence the behavior of $\adv_Q$.
        Second, as long as all signing algorithms succeed, changing how we answer signing queries only negligibly affects the behavior of $\adv_Q$, i.e., either multiplicatively using the Rényi divergence $\hfsignrenyi$ and \cref{lemma: renyi divergence}, or additively using the statistical distance $\hfsigntv$.

        \[\prob{\adv_Q\,\text{wins in}\,\game{2}} \leq \hfsigntv + \prob{\adv_Q\,\text{wins in}\,\game{3}} + \hfinstancetv\]
        or alternatively using the Rényi divergence
        \[\prob{\adv_Q\,\text{wins in}\,\game{2}} \leq \left(\hfsignrenyi(\prob{\adv_Q\,\text{wins in}\,\game{3}} + \hfinstancetv) \right)^{\frac{\alpha-1}{\alpha}}\]

        For the Rényi divergence, it should be noted that the modification of the instance, i.e., \ removing the use of the secret key, is only possible after the signature queries are simulated.
        Otherwise, the queries cannot be answered.

        \describegame Define $\game{4}$ as in $\game{3}$, except for the following:
        First, generate a key $k$ for the $2(\nrhashqueries + \nrsignqueries + 1)$-wise independent function $F$\footnote{\cite{AC:BDFLSZ11} use quantum-accessible \acp{PRF}, but Zhandry~\cite[Theorem 6.1]{C:Zhandry12} showed that $2(\nrhashqueries + \nrsignqueries + 1)$-wise independent functions suffice in this step.}, and then to answer a query to $\oracle_{q}$, the challenger applies the unitary transformation that takes a basis element $|x, y\rangle$ into $|x, y \oplus F(\rho', k, x)\rangle$.
        This can perfectly simulate the random oracle~\cite[Theorem 6.1]{C:Zhandry12}, i.e.
        \[\prob{\adv_Q\,\text{wins in}\,\game{3}} =  \prob{\adv_Q\,\text{wins in}\,\game{4}} \]

        Given a quantum adversary that has a non-negligible advantage in $\game{4}$, we construct a quantum algorithm $\bdv_Q$ that solves the problem $P$.
        When $\bdv_Q$ receives instance $x$ from the challenger for problem $P$, it computes $(\rho, z) \gets \START(x, N)$ and generates a key $k$ for $F$.
        Then, it simulates $\adv_Q$ on $\rho$.
        $\bdv_Q$ answers $\adv_Q$'s quantum random oracle queries using $\RAND^{F(\rho', k, \cdot)}(\rho', \cdot, z)$.
        It answers signing queries using $\SIGN^{F(\rho', k, \cdot)}(\rho', \cdot, z)$.
        Then, when $\adv_Q$ outputs a forgery candidate $(m, \sigma)$ for $\rho^{*}$, $\bdv_Q$ computes $\FINISH^{F(\rho^*, k, \cdot)}(\rho^*, m, \sigma, z)$, and returns the result to the challenger for problem $P$.

        Observe that the behavior of $\adv_Q$ in $\game{4}$ is identical to that of $\adv_Q$ as a subroutine of $\bdv_Q$.
        If $(m, \sigma)$ is a valid forgery, then since $\FINISH$ is part of a history-free reduction, $\FINISH^{F(\rho^*, k, \cdot)}(\rho^*, m, \sigma, z)$ will cause the challenger for the problem $P$ to accept with non-negligble probability contradicting our assumption that $P$ is hard for quantum computers.
        Now, with probability $(1-\hfabort)$, no abort occurs, and with probability at least $(1-\hffinish)$, the output is a valid solution for the problem $P$.
        Combined, this yields an adversary $\bdv_Q$
        \[\prob{\adv_Q\,\text{wins in}\, \game{4}} \leq \prob{\bdv_Q \, \text{solves}\, P} + \hfabort + \hffinish.\]
    \end{gamedescription}
\end{proof}

\subsection{Proof of \cref{thm: Unf RPSF RS HF}}

\begin{proof}[{\cref{thm: Unf RPSF RS HF}}]
    By adapting the proof of \cite[Theorem 2]{AC:BDFLSZ11} we can construct a history-free reduction for \rsig.
    Using \cref{thm: ring signature history free}, a generalization of the \ac{QROM} lifting theorem for history-free reductions \cite[Theorem 1]{AC:BDFLSZ11}, we then obtain a \ac{QROM} security proof.
    We analyze the explicit security bounds this chain of arguments supports.
    We now describe the history-free reduction to the collision-resistance of the \ac{RPSF}, i.e.,
    the $\FINISH$ algorithm outputs a probable collision.

    \begin{itemize}
        \item On input $\rho$, define $\INSTANCE(\rho) := \rho$ and $\START(\rho) = (\rho, \rho)$.
              The key generation of the ring signature is exactly the key generation of the \ac{RPSF} which refers to the hard problem of collision-resistance of the \ac{RPSF} we reduce to.
        \item When $\adv$ queries the (hash) oracle $\hash(\rho', r)$, $\adv_{\collision}$ responds with
              \[\RAND^{\oracle_{c}}(\rho', r, \rho) := f_{\rho'}(\sampledom_{\rho'}(\secparam: \oracle_{c}(\rho', r))).\]
        \item When $\adv$ queries for a signature on $(\rho', m)$ signed by party $i$ where $\pk_i \in \rho' \cap \rho$, $\adv_{\collision}$ responds with
              \[\SIGN^{\oracle_{c}}(\rho', m, \rho) := (\sampledom_{\rho'}(\secparam: \oracle_{c}(\rho', m))).\]
        \item When $\adv$ outputs $(\rho^{*}, m^{*}, \sigma^{*})$, $\adv_{\collision}$ outputs a potential collision
              \[\FINISH^{\oracle_{c}}(\rho^{*}, m^{*},\sigma^{*}) := (\sampledom_{\rho^{*}}(\secparam: \oracle_{c}(\rho^{*}, m^{*})), \sigma^{*}).\]
    \end{itemize}

    The key generation of the ring keys remains the same as in the original game, which is the same as for the underlying \ac{RPSF}, which means that the instances are distributed the same, i.e., $\hfinstancetv = 0$.

    By definition of a $\rpsf$, the distribution of $f_{\rho^{'}}(\cdot)$ is within TV distance $\rpsfdomtv$ from uniform if at least one of the public keys in $\rho^{'}$ was generated honestly.
    As $\oracle(\rho', r) = \RAND^{\oracle_c}(\rho', r, \rho) = f_{\rho'}(\sampledom_{\rho'}(\secparam: \oracle_{c}(\rho', r)))$ and $\oracle_c$ is a random oracle, the outputs $\oracle(\rho', r)$ are distributed independently according to a distribution that is at most $\rpsfdomtv$ away from uniform.
    Using \cref{thm: Oracle Distribution Switching using Statistical Distance}, for any algorithm $\adv$ making $q$ quantum queries to its oracle, the TV distance between the output of $\adv$ using a random oracle or $\oracle$ is at most $8(\nrsignqueries + \nrhashqueries)\sqrt{\rpsfdomtv} =: \hfrandtv$.

    $\SIGN$ is now following the distribution of conditional preimage sampling, which by definition of the \ac{RPSF} is similar to normal signing up to a statistical distance of $\nrsignqueries\rpsfpretv =: \hfsigntv$ and a Rényi divergence of $(\rpsfprerenyi)^{\nrsignqueries} =: \hfsignrenyi$ for $\nrsignqueries$ sign queries.
    In this process of replacement, there is an additional loss:
    the $\SIGN$ algorithm is now deterministic.
    This could be used for distinguishing.
    However, we are in the $\weufcraone$ game, so the adversary can only query the signing oracle for a fixed ring and message once, so the determinism can not be detected.

    Observe that $\RAND$ and $\SIGN$ correspond to each other in that for a given message $m$, they define the same value in the range, and implicitly in the domain.
    Furthermore, the algorithms abort in the same cases, e.g., if a signature is queried for a secret key $\sk$ that is not contained in the ring, or if the challenger does not have knowledge of it.
    If $\adv$ outputs a valid forgery $(m^*, \sigma^*)$ for $\rho^* \subseteq \rho$, $\FINISH^{\oracle_{c}}(\rho^*, m^*, \sigma^*)$ produces a collision with probability at least $1 -2^{-\minentropy(\secpar)}$ due to the min-entropy of the \ac{RPSF}, i.e., $\hffinish = 2^{-\minentropy(\secpar)}$.
    $(m^*, \salt^*)$ cannot be part of a sign query, so $\oracle_c(\rho^*,\salt^*\|m^*)$ is independent of the forgery.
    The signature produced in $\FINISH$ by running $\SIGN$ is thus using the conditional preimage sampling for a fixed target for which the min-entropy statement is defined.
    Finally, we apply \cref{thm: ring signature history free} to get the claimed security bound.
\end{proof}

\section{AOS Ring Signatures from Merkle Tree Based C\&O Protocols}\label{sec: merkle-tree based co protocols}
\begin{theorem}\label{thm: CO UFNRA Merkle}
    Let $\Sigma$ be a $\mathfrak{S}^{*}$-sound Merkle tree $C\&O$ $\mathsf{\Sigma}$-protocol with challenge space $\mathcal{C}=\mathcal{C}_{\lambda}$, $\ell = \ell(\secparam)$ commitments, $\hash: \{0,1\}^{*} \to \mathcal{Y}$ be a hash function modelled as a \ac{QROM} and $\gamma:\mathcal{Y} \to \mathcal{C}_{\lambda}$
    be a function with $\gammacldef$.
    Set $\omega = \omega(\lambda) := \max_{c \in \mathcal{C}_\lambda} |c|$.
    Furthermore, let $N \in \mathbb{N}$ be fixed with $N(2\ell -1) \leq 2(\nrhashqueries -1)$.
    For any $\eufnra$ adversary $\adv$ making at most $\nrhashqueries$ quantum queries to $\hash$, there exists an efficient adversary $\bdv$ against witness-recovery of the $\Sigma$ with
    \begin{align*}
         & \advantage{\eufnra}{\adv, \AOS(\Sigma), N}  \leq N\cdot \advantage{\wrec}{\bdv, \Sigma}
        + 2\cdot N(\omega(\secpar)\log \ell + 1)\cdot |\mathcal{Y}|^{-1}                                                                                                                                                  \\
         & \hspace{1cm} + \nrhashqueries^{2}10/|\mathcal{Y}|\left(\sqrt{(\nrhashqueries-1) \cdot \gamma_{cl}} + \sqrt{\max\left\{2(\nrhashqueries-1), \gamma_{cl} \cdot (sz_{triv}^{\mathfrak{S}})^N\right\}}\right)^{2}. \\
         & \hspace{1cm}\leq N\cdot \advantage{\wrec}{\bdv, \Sigma}
        + 2\cdot N(\omega(\secpar) \log \ell + 1)\cdot|\mathcal{Y}|^{-1}                                                                                                                                                  \\
         & \hspace{1cm}+ \nrhashqueries^{2}(\nrhashqueries-1)20\gamma_{cl}/|\mathcal{Y}|\left(1 + 2(sz_{triv}^{\mathfrak{S}})^{N/2} + (sz_{triv}^{\mathfrak{S}})^N\right).
    \end{align*}
\end{theorem}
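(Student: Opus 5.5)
The plan is to mirror the proof of \cref{thm: CO UFNRA} step by step, changing only the database properties and the local transition bounds to account for Merkle-tree vector commitments, as done for Fiat--Shamir in \cite{C:DFMS22}.

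The reduction $\bdv$ is unchanged. It guesses $i^*\gets[N]$, embeds its witness-recovery instance there, and simulates $\hash$ as a compressed oracle for $\adv$. After $\adv$ outputs $(\rho^*,m^*,\sigma^*)$, $\bdv$ measures the database $D$ and extracts. Here each $\sigma_i$ consists of a Merkle root $\mathrm{root}_i$ and, for $c_i$, the opened leaves $\bm{m}_{i,c_i}$ with their authentication paths. To extract, $\bdv$ recursively inverts $D$ from $\mathrm{root}_{j^*}$ down the tree: it sets $D^{-1}$ of each inner node to its pair of children, and the leaves give $\bm{m}_{j^*}$. It then runs $\extractor^*$ on $\bm{m}_{j^*}$.

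The consistency step uses \cref{lemma: QROM consistency compressed oracle}. Verifying one ring member now requires re-hashing the opened leaves together with their authentication paths. That is at most $\omega\log\ell$ hash evaluations per member plus one challenge query, which gives the term $2N(\omega\log\ell+1)|\mathcal{Y}|^{-1}$ in place of $2N(\omega+1)|\mathcal{Y}|^{-1}$.

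Next I redefine $\SUC$ so that $\bm{m}_\beta$ is obtained by recursive $D^{-1}$ from $\mathrm{root}_\beta$, and so that challenges are computed as $\gamma\circ D(\beta+1,\rho',\mathrm{root}_\beta,\mu)$. I keep $\CL$ and $\SZ_{\le s}$ as before. The decomposition
\[\llbracket\bot\Rightarrow^{q}\SUC\cup\CL\rrbracket\le\sum_{s}\Bigl(\llbracket\SZ_{\le s}\setminus\CL\to\CL\rrbracket+\llbracket\SZ_{\le s}\setminus\SUC\setminus\CL\to\SUC\rrbracket\Bigr)\]
from \cref{lemma: transition combined} then carries over. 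The first summand is bounded exactly by \cref{lemma: transition collision bound}.

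For the second summand I redo the case analysis of \cref{lemma: transition success bound}, with three query types: leaf hashes, inner-node hashes, and challenge queries.

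For a transition $\SUC\to\neg\SUC$ via a tree-node query on a fresh $x$, the new output $U$ must coincide with some node value in the $N$ Merkle trees of the witnessing forgery. Each tree has $2\ell-1$ nodes, so this happens with probability at most $N(2\ell-1)/|\mathcal{Y}|\le 2(q-1)/|\mathcal{Y}|$ by the hypothesis on $N$. This hypothesis replaces the $N\ell$ term and explains the assumption $N(2\ell-1)\le 2(\nrhashqueries-1)$.

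For $\neg\SUC\to\SUC$ via a tree-node query, $U$ must hit one of the (at most two) child slots of some inner node, or a root entering a challenge query, already present in $D$. Since $|D|\le s$, this gives at most $2s/|\mathcal{Y}|$. Challenge queries give $\gamma_{cl}(sz_{triv}^{\mathfrak{S}})^N/|\mathcal{Y}|$ by the same path-counting around the ring, since absence of collisions makes every root have a unique preimage tree.

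Applying \cref{thm: quantum transition capacity} yields $\sqrt{10\max\{2(q-1),\gamma_{cl}(sz_{triv}^{\mathfrak{S}})^N\}/|\mathcal{Y}|}$. Summing over $s$, squaring, adding the case $D\notin\SUC\cup\CL$ (which yields a witness at the guessed index with probability $1/N$), and simplifying with $(a+b)^2$ gives both stated inequalities.

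The main obstacle is the $\neg\SUC\to\SUC$ tree-node case. A single new node can complete a subtree only if its value equals a child slot already referenced above it. I must argue carefully that recursive $D^{-1}$ is well-defined without collisions, and that the factor $2$ (left and right child) is the only loss. I would follow \cite[Section 5]{C:DFMS22} for this argument.
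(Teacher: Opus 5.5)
Your proposal is correct and follows essentially the same route as the paper's outline. The reduction, the decomposition from \cref{lemma: transition combined} and \cref{lemma: transition collision bound} are kept unchanged, and $\SUC$ is redefined via backward Merkle traversal; only the tree-node cases of \cref{lemma: transition success bound} change, to $N(2\ell-1)$ targets (absorbed into $2(\nrhashqueries-1)$ by the hypothesis on $N$) and $2s$ targets, while the challenge-query cases and the consistency term $2N(\omega\log\ell+1)|\mathcal{Y}|^{-1}$ are handled exactly as in the paper.
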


\begin{proof}[Outline]
    We omit the full proof here, instead we briefly describe how our previous proof would change.

    First, we consider how \cref{lemma: transition combined} changes.
    We must redefine the success game to use a different procedure to recompute the openings.
    As in \cite{C:DFMS22}, this is basically just a backwards traversal of the Merkle tree.
    The collision and size definition stay the same.
    Therefore, we can still utilize \cref{lemma: transition collision bound}.
    This leaves us to consider how to adapt \cref{lemma: transition success bound}.
    The cases $1b$ and $2b$ remain as they are, and it remains to consider the cases $1a$ and $2a$.
    In $1a$, the number of possible targets for a collision grows, and there are a total of $N \cdot (2\ell -1)$ targets now, as $(2\ell -1)$ limits the number of elements of the Merkle tree.
    In $2a$, the number each query now fixes at most $2$ other commitments (challenge queries fix 1, and commitment queries fix up to 2).
    Hence, there are at most $2s$ targets to consider.\footnote{In \cite{C:DFMS22}, the bound was given as $s \cdot (2\ell -1)$, but it can be reduced to this tighter bound.}
    Combining them bounds the transition capacity as
    \[\leq \nrhashqueries\sqrt{\frac{10}{|\mathcal{Y}|}}\left(\sqrt{(\nrhashqueries-1) \cdot \gamma_{cl}} + \sqrt{\max\left\{N(2\ell -1), 2(\nrhashqueries-1), \gamma_{cl} \cdot (sz_{triv}^{\mathfrak{S}})^N\right\}}\right).\]

    Then, we would use this transition bound in \cref{thm: CO UFNRA}.
    Due to the switch, we have to make the same modifications that were done in \cite{C:DFMS22}.
    More precisely, this means that we can keep almost everything as it is, except that we must check for consistency of the eventual signatures.
    This signature now builds a Merkle tree in its verification, and the verification of a response from the $\mathsf{\Sigma}$-protocol requires $\log \ell$ many consistency checks (the height of the Merkle tree).
    Combining this with the ring size of $N$ and the consistency between the classical queries in the verification, and the database is given except with probability
    \[2\cdot N(\omega(\secpar)\cdot \log \ell + 1)\cdot |\mathcal{Y}|^{-1}.\]
    The remainder of the proof is analogous to the proof of \cref{thm: CO UFNRA}.
\end{proof}

\section{Applications and Instantiations}
This section serves to demonstrate how our techniques can be used in practice.
We provide some background in the additional preliminaries in Appendix~\ref{subsec: NTRU Lattices}
\subsection{\textsc{Falcon} in the QROM}\label{sec: Falcon in the QROM}
Recently, the security of \textsc{Falcon} has been revisited formally in \cite{EPRINT:GajJanKil24b} at Eurocrypt 2026, but their proof does not look at the \ac{QROM}.
\cite{qrom_falcon} uses the Rényi divergence for a \ac{QROM} proof of \textsc{Falcon}, but only to derive a bound on the statistical distance using Pinsker's inequality, which, as shown in \cite{EPRINT:GajJanKil24b}, is already insufficient for the classical proof.
To the best of our knowledge, there is no \ac{QROM} proof for \textsc{Falcon} that avoids costly statistical distance arguments for the preimage sampler.

\textsc{Falcon} as a signature scheme can be classified as a \ac{HSwA} signature scheme.
Aborts have been considered in the \ac{QROM} both for signatures and sigma protocols, e.g., in \cite{EPRINT:FalFehHua25,C:DFPS23,C:BBDDFG23}.
The work of Fallahpour et al. \cite{EPRINT:FalFehHua25} could be used for
\textsc{Falcon}. We choose to write a direct proof, to avoid a separate analysis of the necessary accepting-transcript \ac{HVZK} property using Rényi divergence arguments.
The work of \cite{C:DFPS23} is written for \ac{FSwA}, but when considering generalised \ac{FSwA} by fixing the first message of the sigma protocol as the randomness used in the signature, this can capture \ac{HSwA} signatures.
However, we were not able to directly map their simulator to \textsc{Falcon}.

The previous general solutions also share another problem that hinders their application: they assume that the challenge/hash value is distributed uniformly at random, which is not the case for \textsc{Falcon}.
In \textsc{Falcon}, the values that are reprogrammed into the random oracle come from a distribution that is close to uniform in terms of Rényi divergence.

As no existing proof technique can be used directly, we translate the classical proof from \cite{EPRINT:GajJanKil24b} into the \ac{QROM}.

To prove the security of \textsc{Falcon} in the \ac{QROM}, we use the generalized abstraction of \textsc{Falcon} to \textsc{CoreFalcon} from \cite{EPRINT:GajJanKil24b} described in \cref{fig: CoreFalcon}.
\begin{figure}
    \begin{pchstack}[boxed, center, space=0.5em]
        \begin{pcvstack}[space=0.5em]
            \procedure[linenumbering]{$\gen(\secparam)$}{
                (\bm{h}, \bm{f},\bm{g}) \gets \ntrutrapgen(q, \alpha)\\
                \bm{B} := \bm{B}_{\bm{f}, \bm{g}}\\
                \pcreturn (\sk := \bm{B}, \pk := \bm{h})
            }
            \procedure[linenumbering]{$\sign(\sk = \bm{B}, \mu)$}{
                r \gets \bin^{\saltbits}\\
                \bm{c} := \hash(\pk, r, \mu) \in \mathcal{R}_q\\
                \pcwhile \|(\bm{s}_1, \bm{s}_2)\|_2 > \beta:\\
                \t (\bm{s}_1, \bm{s}_2) \gets \ntrusamplepre(\bm{B}, s, \bm{c})\\
                \sigma := (r, \bm{s}_2) \in \bin^\saltbits \times \mathcal{R}\\
                \pcreturn \sigma
            }
        \end{pcvstack}

        \begin{pcvstack}[space=0.5em]

            \procedure[linenumbering]{$\verify(\pk = \bm{h}, \mu, \sigma = (r, \bm{s}_2))$}{
                c := \hash(\pk, r, \mu)\\
                \bm{s}_1 := \bm{c} - \bm{s}_2 \bm{h} \mod q\\
                \pcreturn \llbracket \|(\bm{s}_1, \bm{s}_2)\|_2 \leq \beta\rrbracket
            }
            \procedure[linenumbering]{$\sign^{+}(\sk = \bm{B}, \mu)$}{
                \pcwhile \|(\bm{s}_1, \bm{s}_2)\|_2 > \beta:\\
                \t r \gets \bin^{\saltbits}\\
                \t \bm{c} := \hash(\pk, r, \mu) \in \mathcal{R}_q\\
                \t (\bm{s}_1, \bm{s}_2) \gets \ntrusamplepre(\bm{B}, s, \bm{c})\\
                \sigma := (r, \bm{s}_2) \in \bin^\saltbits \times \mathcal{R}\\
                \pcreturn \sigma
            }
        \end{pcvstack}
    \end{pchstack}
    \caption{Variations of \textsc{Falcon}, namely \textsc{CoreFalcon} $= (\gen, \sign, \verify)$ and \textsc{CoreFalcon}$^{+}= (\gen, \sign^{+}, \verify)$ from \cite{EPRINT:GajJanKil24b}.
        We used the notation of \cite{C:GajJanKil24} for NTRU lattices, which differs slightly from \cite{EPRINT:GajJanKil24b}, but they refer to the same underlying construction}\label{fig: CoreFalcon}
\end{figure}

The proof we show here is primarily to demonstrate that \textsc{Falcon} can be proven secure in the \ac{QROM} using Rényi divergence arguments that mirror the classical proof relatively closely.
Compared to the detailed analysis done in \cite{EPRINT:GajJanKil24b}, we have primarily converted the arguments to the \ac{QROM}. One place where our proof departs from the ROM reasoning is when reducing to $\ntrurisis$. In \cite{EPRINT:GajJanKil24b}, a reduction to $\tntrurisis{\nrhashqueries+1}$ is performed by programming a new instance for each query to the random oracle. This clearly does not work in the QROM. We provide two alternative ways out.
i) We use measure-and-reprogram to include a single $\ntrurisis$ target, instead of their variant with $(\nrhashqueries + 1)$ preimages.
ii) We embed a multi-challenge $\tntrurisis{t}$ challenge using small-range distributions from \cref{cor: small-range-distribution}, yielding a reduction to $\tntrurisis{t}$ for $t=\Omega\left(\frac{(\nrhashqueries + C_s)^3}{\epsilon}\right)$ with a success probability loss of $\epsilon$ and a longer reduction runtime.

\SUFCMAFalcon*

\begin{proof}
    The proof follows the structure of \cite[Theorem 1 and 2]{EPRINT:GajJanKil24b}, and we only highlight those parts that are changed when transitioning to the \ac{QROM}.
    \begin{gamedescription}
        \describegame This is the unforgeability game for \textsc{CoreFalcon}$^{+}$ so by definition
        \[\prob{1 \gets \game{1}} = \advantage{\seufcma}{\adv, \textsc{CoreFalcon}^{+}, \nrsignqueries}.\]
        We assume from here on, for the cost of one additional hash query, that the final forgery was queried for its random oracle value.
        \describegame This game is identical to the previous one, except that it aborts if the overall number of sampled preimages in the signing oracle, i.e., including potential repetitions, exceeds the threshold $C_s$.
        Following \cite[Claim 1.]{EPRINT:GajJanKil24b}, we have
        \begin{align*}
             & \left|\prob{1 \gets \game{1}} - \prob{1 \gets \game{2}}\right| \leq       \\
             & \qquad \sum_{i=0}^{\nrsignqueries}{C_s \choose i}(1 - p)^{C_s -i}(p)^{i}.
        \end{align*}
        The signing oracle still runs $\ntrusamplepre$ here, so $p$ is the acceptance probability of the game we are leaving.
        The threshold is never reverted; all later games inherit the abort.
        \describegame This game is identical to $\game{2}$ except that we change how signature queries are answered.
        Instead of querying the random oracle, we program it with $\bm{c}$ from the freshly generated $\tntrurspisis{C_s}$ samples, and we do so for every preimage sampling attempt, including those that the rejection loop later discards.
        Note that we think of this reprogramming as an operation applied to the random oracle, so when we later replace the random oracle by a different function, the reprogramming operation is applied to that oracle instead.
        The analysis of this step deviates from \cite{EPRINT:GajJanKil24b}, as we need to account for the \ac{QROM}.
        The total number of values whose distribution is changed is $C_s$, and there are a total of $C_s + \nrhashqueries + 1$ queries to the random oracle.
        As the challenge carries no rejection loop, each $\bm{c}_i$ is distributed exactly as $\mathcal{Q}_{\bm{h}}$.
        This new distribution replaces uniform values, and there are $C_s$ independent replacements, which allows us to directly apply the maximal Rényi divergence over the choice of $(\bm{h}, \bm{f},\bm{g}) \gets \ntrutrapgen(q, \alpha)$.
        Applying \cref{cor: adaptive reprogramming with other distribution} with the Rényi divergence yields
        \[\prob{1 \gets \game{2}} \leq (r_u^{C_s}\prob{1 \gets \game{3}})^{\frac{\alpha_u -1}{\alpha_u}} + \frac{3C_s}{2}\sqrt{\frac{(C_s + \nrhashqueries + 1)}{2^{\saltbits}}}.\]

        \describegame Conditioned on $\bm{c}_i$, the pair $(\bm{u}_i, \bm{v}_i)$ from the
        $\tntrurspisis{C_s}$ challenge is distributed exactly as
        $\mathcal{D}_{\bm{\Lambda}_{\bm{c}_i}, s}$, since
        $(\bm{0}, \bm{c}_i)$ is a preimage of $\bm{c}_i$ and
        $\mathcal{D}_{\mathcal{R},s} \times \mathcal{D}_{\mathcal{R},s}$ restricted to that
        coset is the coset discrete Gaussian.
        This is precisely the distribution against which $r_p$ is measured in
        \cref{tab: CoreFalcon params}.
        This game is identical to $\game{3}$ except that the output of the preimage sampler is replaced by the pair $(\bm{s}_1, \bm{s}_2) := (\bm{v}_i, \bm{u}_i)$ generated in the previous game for computing $\bm{c}_i$, and the norm check is applied to that pair.
        This change in distribution behaves as in \cite{EPRINT:GajJanKil24b}, giving us
        \[\prob{1 \gets \game{3}} \leq (r_p^{C_s}\prob{1 \gets \game{4}})^{\frac{\alpha_p -1}{\alpha_p}}.\]
        Attempt $i$ is now discarded exactly when $\euclnorm{(\bm{u}_i, \bm{v}_i)} > \beta$, so the loop of $\sign^{+}$ is simulated faithfully, and $\game{4}$ no longer uses the trapdoor.
        Let $(\mu^*, r^*, \bm{u}^*, \bm{v}^*)$ be the final forgery of an adversary, where $(\bm{u}^*, \bm{v}^*) := (\bm{s}_2^*, \bm{c}^* - \bm{s}_2^*\bm{h})$ for $\bm{c}^* := \hash(\bm{h}, r^*, \mu^*)$.
        If $(\mu^*, r^*)$ was part of a programmed attempt $i^*$, then we have a solution for $\tntrurspisis{C_s}$, as the forgery is a different preimage of $\bm{c}_{i^*}$: if the attempt was accepted this follows from $\seufcma$ freshness, and if it was discarded it follows from $\euclnorm{(\bm{u}_{i^*}, \bm{v}_{i^*})} > \beta \geq \euclnorm{(\bm{u}^*, \bm{v}^*)}$.\footnote{Note that the discarded case is only free because the challenge preimages are not conditioned on being short; otherwise it would require a conditional min-entropy argument.}
        Else, if $(\mu^*, r^*)$ has the same $\bm{c}$ as one of the programmed attempts, then $\adv$ has hit one of the targets $T := \{\bm{c}_i\}_{i \in [C_s]}$ with an un-reprogrammed hash value.
        The targets are drawn as $\tntrurspisis{C_s}$ challenges and are thus fixed independently of the oracle, and $|T| \leq C_s$, so their distribution is immaterial here.
        Substituting $\prob{U \in T} \leq C_s/|\mathcal{R}_q|$ for $\prob{U \in L^{x,D}}$ in the proof of \cref{lemma: transition collision bound} gives a transition capacity of $\sqrt{10 C_s/|\mathcal{R}_q|}$ that is independent of the database size and holds at every step; summing over the $C_s + \nrhashqueries + 1$ queries and squaring as in \cref{lemma: probability hash collision qrom} bounds this case by $\frac{10(C_s + \nrhashqueries + 1)^2C_s}{|\mathcal{R}_q|}$.
        Finally, there are two options to bound the last case, yielding the two different statements in our theorem.

        The first approach uses measure-and-reprogram (\cref{thm: measure and reprogram 2.0}) on the initial oracle before any other reprogramming.
        So, there are two levels of reprogramming, measure-and-reprogramming is the \emph{inner} reprogramming and resampling is the \emph{outer} reprogramming.
        In the case we are considering here, verifying the output forgery does not use a hash input with outer reprogramming, so the measure-and-reprogram experiment fails unless inner reprogramming happens for an input that is not subject to outer reprogramming.
        The freshly sampled output of measure-and-reprogram is now replaced by the challenge from a $\ntrurisis$ adversary.
        This captures all possible cases, which means
        \begin{align*}
            \prob{1 \gets \game{4}} & \leq \advantage{\tntrurspisis{C_s}}{\bdv, q, \alpha, \beta}                         \\
                                    & + (2\nrhashqueries + 2C_s + 3)^{2}\advantage{\ntrurisis}{\cdv, 1, q, \alpha, \beta} \\
                                    & + \frac{10(C_s + \nrhashqueries + 1)^2C_s}{|\mathcal{R}_q|}.
        \end{align*}

        The second approach uses small-range distributions (\cref{cor: small-range-distribution}).
        Given $t$ samples from a $\tntrurisis{t}$ adversary, we can replace the random oracle by a small-range function with range size $t$.
        Note that we replace the un-reprogrammed random oracle by a small-range distribution, effectively replacing the reprogrammed oracle with a small-range distribution that has been reprogrammed in the same way.
        This transition is possible as the un-reprogrammed oracle is indistinguishable from the small-range version, and the other game hops that happened after can be simulated by a distinguisher.
        Now, every valid forgery is a valid answer in the $\tntrurisis{t}$ game, as a valid forgery will always relate to an un-reprogrammed hash output.
        This transition does incur an additive loss in the order of $\frac{27(\nrhashqueries + C_s+ 1)^3}{t}$ such that
        \begin{align*}
            \prob{1 \gets \game{4}} & \leq \advantage{\tntrurspisis{C_s}}{\bdv, q, \alpha, \beta}                                       \\
                                    & + \advantage{\tntrurisis{t}}{\ddv, 1, q, \alpha, \beta} + \frac{27(\nrhashqueries + C_s+ 1)^3}{t} \\
                                    & + \frac{10(C_s + \nrhashqueries + 1)^2C_s}{|\mathcal{R}_q|}.
        \end{align*}
        As the reduction needs to write the $t$ $\ntrurisis$ samples into its QRAM to instantiate the small-range distribution, it has an additional additive runtime overhead of $\Theta(t)$ compared to the the measure-and-reprogram reduction.
    \end{gamedescription}
\end{proof}

\begin{remark}\label{remark: spisis}
    The $\tntrurspisis{C_s}$ term is only needed for $\seufcma$.
    Every programmed position refers to a message that was submitted to the signing oracle, so under $\eufcma$ the freshness condition on $\mu^*$ excludes that case and the corresponding summand can be dropped from $\epsilon'$.
\end{remark}
\subsection{Application to \textsc{Gandalf}}\label{subsec: Application to Gandalf}
This section performs a \ac{QROM} security analysis of \textsc{Gandalf} using our techniques.
The original ring signature scheme~\cite[Figure 5]{C:GajJanKil24} can be translated into a \ac{RPSF} (as detailed in \cref{fig:Gandalf RPSF}).

When instantiating \textsc{Gandalf} as a \ac{RPSF}-based ring signature (see \cref{fig: rsig from rpsf}), the resulting signature is syntactically equivalent to \textsc{Gandalf}, except for two small modifications.
First, we need to add $\saltbits$ salt bits such that we can use the Rényi divergence properties (see \cref{subsec: Unforgeability via TAR}).
This does not have any functional impact on the \ac{RPSF} and is also what is done for \textsc{Gandalf}, when turning it into a $\seufcra$ ring signature \cite[Appendix B.2, Theorem 7]{C:GajJanKil24}.
Secondly, the \ac{RPSF} presampling algorithm returns $N+1$ elements for a ring $\rho$ with $|\rho| = N$ instead of $N$ as in \textsc{Gandalf}.
This one element can, however, be saved by changing the evaluation in \cref{fig: rsig from rpsf} such that the \ac{RPSF} recomputes the missing element with the hash target and then all are evaluated.
Thus, this second modification is just an additional optimization in \textsc{Gandalf} and has no loss in security.

\begin{figure}
    \begin{pchstack}[boxed, center, space=0.5em]
        \begin{pcvstack}[space=0.5em]
            \procedure[linenumbering]{$\setup(\secparam)$}{
                (s, \kappa, M) \pccomment{derive from $\secparam$}\\
                \tau := \phi(\kappa)\\
                \beta := \tau \cdot s\cdot \sqrt{(\kappa + 1)M}\\
                \pp := (\kappa, \tau, \beta)\\
                \pcreturn \pp
            }
            \procedure[linenumbering]{$\trapgen(\pp)$}{
                (\bm{h}, \bm{f}, \bm{g}) \gets \ntrutrapgen(q, \alpha)\\
                t := (\bm{f}, \bm{g}) \in \mathcal{R}_q \times \mathcal{R}_q\\
                a := \bm{h} \in \mathcal{R}_q\\
                \pcreturn (a, t)
            }

            \procedure[linenumbering]{$f_{\rho}(d)$}{
                \pcparse \rho \rightarrow (\bm{h}_1, \bm{h}_2, \ldots, \bm{h}_N)\\
                \pcparse d \rightarrow (\bm{u}_1, \bm{u}_2, \ldots, \bm{u}_N, \bm{v})\\
                r := \bm{v} + \sum_{i \in [N]} \bm{h}_i\cdot\bm{u}_i \in \range_{\rho}\\
                \pcreturn r
            }
        \end{pcvstack}

        \begin{pcvstack}[space=0.5em]
            \procedure[linenumbering]{$\samplepre(\rho, t, r)$}{
                \pcparse \rho \rightarrow (\bm{h}_1, \bm{h}_2, \ldots, \bm{h}_N)\\
                \pcparse t \rightarrow (\bm{f}, \bm{g})\\
                \pcparse r \rightarrow \bm{r} \in \range_\rho = \mathcal{R}_q\\
                \pcrequire N \leq \kappa\\
                \pcrequire \exists j \in [N]: \mu(t) = a_j\\
                \pcfor i \in [N] \setminus \{j\}:\\
                \t \bm{u}_i \gets \mathcal{D}_{\mathcal{R}, s}\\
                \t \bm{c}_i := \bm{u}_i \cdot \bm{h}_i \in \mathcal{R}_q\\
                \bm{c}_{j} := \bm{r} - \sum_{i \in {N}\setminus \{j\}}\bm{c}_i\\
                (\bm{u}_j, \bm{v}) \gets \ntrusamplepre(\bm{B}_{\bm{f}, \bm{g}}, s, \bm{c}_j)\\
                d := (\bm{u}_1, \bm{u}_2, \ldots, \bm{u}_N, \bm{v}) \in \mathcal{R}_q^{N+1}\\
                \pcreturn d
            }
        \end{pcvstack}
    \end{pchstack}

    \caption{
        A \ac{RPSF} build from \textsc{Gandalf}.
        The domain $\domain_\rho$ for $|\rho| = N$ is $\{(\bm{d}_1, \bm{d}_2, \ldots \bm{d}_N, \bm{d}) \in \mathcal{R}_q^{N+1}: \euclnorm{(\bm{d}_1, \bm{d}_2, \ldots, \bm{d}_N, \bm{d})} \leq \beta\}$.
        The range $\range_\rho = \mathcal{R}_q$ for every $\rho$.
        The algorithms $\ntrutrapgen$ and $\ntrusamplepre$ are from \cref{lem: NTRU trapgen,cor: presampler KL and Rényi divergence}.
        The function $\mu$ identifies the public value $a$ that belongs to the trapdoor $t$.
        $\sampledom(\rho)$ for $|\rho| = N$ is $\mathcal{D}_{\mathcal{R}, s}^{N + 1}$.}\label{fig:Gandalf RPSF}
\end{figure}

\subsubsection{Properties of the \ac{RPSF}}

The effective \ac{RPSF} built from \textsc{Gandalf} has slightly different correctness properties than those outlined in the definition of \acp{RPSF} (\cref{def: RPSF}).
Both $\sampledom$ and $\samplepre$ for a fixed $\rho$ are defined over $\mathcal{R}^{|\rho|}$ instead of $\domain_\rho$ which is a ball in $\mathcal{R}^{|\rho|}$.
$\samplepre$ will always return valid preimages under evaluation of the map, but the images might not be in the domain.
The correctness property of \acp{RPSF}, for simplicity, does not consider that $\sampledom$ and $\samplepre$ might be outside the domain.
To account for this, the actual function $f_{\rho}$ needs to check if the provided value is in the domain when used in \cref{fig: rsig from rpsf}.
In the $\rsig$ proof (\cref{thm: Unf RPSF RS TAR}), treat $f_{\rho}$ simply as the map from $\mathcal{R}^{|\rho|}$ to $\mathcal{R}_{q}$.
However, when the adversary provides a forgery for a previously signed target, this is only a proper signature, if the simulated signature is also in the domain.
Therefore, we get an additional error term based on whether the sample from $\sampledom$ is in the domain.
This is bounded using \cref{lemma: length bound discrete Gaussian} and describes the additional error term (multiplied by $\nrsignqueries$).

\paragraph{Correctness}

The correctness can be imported directly from \cite{C:GajJanKil24}, as the two ring signatures are semantically equivalent.

\begin{corollary}
    $\rpsf$ (\cref{fig:Gandalf RPSF}) is $\delta(\kappa)$-correct where
    \[\delta(\kappa) = \tau^{(\kappa + 1)M}\cdot e^{\frac{(\kappa + 1)M}{2}(1 - \tau^2)}\]
    with $\tau > 1$.
\end{corollary}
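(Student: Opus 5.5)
The plan is to reduce correctness of the \textsc{Gandalf}-based $\rpsf$ in \cref{fig:Gandalf RPSF} to a tail bound on the norm of the preimage returned by $\samplepre$. First I will check that $\samplepre$ always outputs an exact preimage in $\mathcal{R}^{N+1}$. By construction $\sum_{i\neq j}\bm{h}_i\bm{u}_i = \sum_{i\neq j}\bm{c}_i$, and $\ntrusamplepre(\bm{B}_{\bm{f},\bm{g}}, s, \bm{c}_j)$ outputs $(\bm{u}_j,\bm{v})$ in the coset $\bm{\Lambda}_{\bm{c}_j}$, so $\bm{h}_j\bm{u}_j+\bm{v}=\bm{c}_j$. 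Summing gives $f_\rho(d)=\bm{r}$ with certainty. Hence the only way correctness can fail is that $d$ lies outside the domain $\domain_\rho$, that is, $\euclnorm{d} > \beta = \tau s\sqrt{(\kappa+1)M}$. The plan is therefore to bound $\prob{\euclnorm{d}>\beta}$ uniformly over $\rho$, the trapdoors and $\bm{r}$.

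Second, I will argue that $d$, viewed as a vector in $\ZZ^{(N+1)M}$, is distributed close to a discrete Gaussian of parameter $s$. The coordinates $\bm{u}_i$ with $i\neq j$ are exactly $\mathcal{D}_{\mathcal{R},s}$. The pair $(\bm{u}_j,\bm{v})$ is, up to the presampler error controlled by \cref{cor: presampler KL and Rényi divergence}, a coset Gaussian $\mathcal{D}_{\bm{\Lambda}_{\bm{c}_j},s}$. A coset Gaussian is a (shifted) discrete Gaussian over a lattice of dimension $2M$, so the Banaszczyk-type tail bound in \cref{lemma: length bound discrete Gaussian}(2) applies to it. To be faithful to the paper's phrasing I would simply import the argument of \cite{C:GajJanKil24}, since the construction is semantically identical to \textsc{Gandalf}. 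There, the $\ntrusamplepre$ output is treated as satisfying the same tail bound as $\mathcal{D}_{\ZZ^{2M},s}$, which is exactly how \textsc{Falcon}/\textsc{Gandalf} set $\beta$.

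Third, I will apply \cref{lemma: length bound discrete Gaussian}(2) with dimension $n=(N+1)M$. This gives $\prob{\euclnorm{d}>\tau s\sqrt{(N+1)M}}\le \tau^{(N+1)M}e^{\frac{(N+1)M}{2}(1-\tau^2)}$. Since $\beta$ is set with $\kappa\ge N$, the event $\euclnorm{d}>\beta$ is contained in this one. The remaining step is a short calculus check that $n\mapsto \tau^n e^{\frac n2(1-\tau^2)} = \bigl(\tau e^{(1-\tau^2)/2}\bigr)^n$ behaves monotonically. For $\tau>1$ the base $\tau e^{(1-\tau^2)/2}$ is less than $1$, so the expression is decreasing in $n$. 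The step that actually needs care is the threshold mismatch: the threshold $\tau s\sqrt{(\kappa+1)M}$ dominates $\tau s\sqrt{(N+1)M}$. To get the bound in $\kappa$ directly, I would rescale to $\tau' = \tau\sqrt{(\kappa+1)/(N+1)}\ge\tau$, or alternatively pad $d$ with zero coordinates to dimension $(\kappa+1)M$, so the bound stated with $\kappa$ follows.

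The main obstacle is the second step, namely justifying that the $\ntrusamplepre$ output obeys the Gaussian tail bound, since it is only close to $\mathcal{D}_{\bm{\Lambda}_{\bm{c}_j},s}$ in Rényi/KL divergence and that lattice is a coset rather than $\ZZ^{2M}$. I would first try to cite \cite{C:GajJanKil24}, where the identical statement is proved. Failing that, I would note that the stated $\delta(\kappa)$ is, as in \cite{C:GajJanKil24}, the idealized-sampler bound: the coset Gaussian tail bound holds by Banaszczyk's lemma above the smoothing parameter, and any sampler discrepancy is absorbed elsewhere via Property~\ref{prop: rpsf preimage sampling undetectable}.
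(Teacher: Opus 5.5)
Your proposal is essentially the paper's argument. The paper proves this corollary only by importing correctness from \cite{C:GajJanKil24} via semantic equivalence, which is exactly your fallback for the Gaussian tail of the $\ntrusamplepre$ output. Your exact-preimage observation and the rescaling $\tau'=\tau\sqrt{(\kappa+1)/(N+1)}$ usefully fill in the $N<\kappa$ case that the paper glosses over; the rescaling is valid because, with $r=\frac{\kappa+1}{N+1}\ge 1$, the logarithm of the ratio of the $\tau'$-bound in dimension $(N+1)M$ to $\delta(\kappa)$ is $\frac{(N+1)M}{2}\bigl[\ln r-(r-1)(1+2\ln\tau)\bigr]\le 0$.

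Two small corrections. First, the monotonicity in $n$ runs the wrong way to help, so the rescaling is essential rather than optional. Second, zero-padding does not work as stated, since $(d,\bm{0})$ is not a discrete Gaussian; pad with independent $\mathcal{D}_{\mathcal{R},s}$ coordinates instead, which only increases the norm.
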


\paragraph{Domain Sampling with Uniform Output.}
We can use the Rényi uniformity arguments for NTRU from \cref{cor: Renyi uniformity of NTRU}.
\begin{corollary}
    Let $q$ be prime, $\alpha \in (1, \infty), \epsilon \in (0,1/2), s \geq \eta_{\epsilon(\bm{\Lambda}_{\bm{h}, q})}$.
    The $\sampledom$ algorithm of \cref{fig:Gandalf RPSF} satisfies domain sampling with uniform outputs (see notation in Property \ref{prop: rpsf domain uniformity}) with
    \[\rpsfdomrenyi \lesssim 1 + \frac{2\alpha \epsilon^2}{(1-\epsilon)^2}.\]
\end{corollary}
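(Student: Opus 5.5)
The plan is to reduce the statement directly to \cref{cor: Renyi uniformity of NTRU}, using the data processing inequality of \cref{lemma: renyi divergence} to discard the extra coordinates. Fix an honestly generated ring $\rho$ and any $\rho'$ with $\rho\cap\rho'\neq\emptyset$, and write $\rho'=(\bm{h}_1,\ldots,\bm{h}_N)$. By \cref{fig:Gandalf RPSF}, $D:=f_{\rho'}(\sampledom(\rho'))$ is the distribution of
\[
\bm{v}+\sum_{i\in[N]}\bm{h}_i\bm{u}_i,\qquad \bm{u}_1,\ldots,\bm{u}_N,\bm{v}\gets\mathcal{D}_{\mathcal{R},s}\ \text{independently}.
\]
Since $\rho\cap\rho'\neq\emptyset$, some index $j$ has $\bm{h}_j$ output by $\ntrutrapgen$. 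In particular $\bm{h}_j\in\mathcal{R}_q\setminus\{\bm{0}\}$, and the hypothesis $s\geq\eta_{\epsilon}(\bm{\Lambda}_{\bm{h}_j,q})$ is meant for this $\bm{h}_j$.

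The first step splits $D$ as $Q_j * Z$. Here $Q_j$ is the distribution of $\bm{h}_j\bm{u}_j+\bm{v}$, and $Z$ is the distribution of $\sum_{i\neq j}\bm{h}_i\bm{u}_i$, which is independent of $Q_j$. The second step defines the stochastic map $g(x):=x+\bm{z}$ with a fresh $\bm{z}\gets Z$. Then $g(Q_j)=D$. Also $g(\mathcal{U}(\mathcal{R}_q))=\mathcal{U}(\mathcal{R}_q)$, because adding an independent value to a uniform element of the group $\mathcal{R}_q$ leaves it uniform. The support condition $\supp{\mathcal{U}}\subseteq\supp{Q_j}$ holds because $\bm{v}$ alone already has full support modulo $q$, so $D$ is fully supported and the divergence is finite.

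The data processing inequality of \cref{lemma: renyi divergence} then gives
\[
R_\alpha(\mathcal{U}(\mathcal{R}_q)\|D)=R_\alpha(g(\mathcal{U})\|g(Q_j))\le R_\alpha(\mathcal{U}(\mathcal{R}_q)\|Q_j)\lesssim 1+\frac{2\alpha\epsilon^2}{(1-\epsilon)^2}.
\]
The last step is exactly \cref{cor: Renyi uniformity of NTRU} applied to $\bm{h}_j$. Taking the maximum over admissible $\rho'$ gives the claimed bound on $\rpsfdomrenyi$.

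The main subtlety is the ordering of arguments. \cref{lemma: renyi divergence} is stated for a function of $\supp{Q}$, so it must be applied with the map acting on both arguments. Uniform has to be the first argument, as in Property~\ref{prop: rpsf domain uniformity}, and one must check that the convolution fixes it. A second point is that the other $\bm{h}_i$ may be adversarial, possibly zero or not invertible. This does not matter, since $Z$ enters only through the data processing map.

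A further issue is that $\sampledom$ samples from all of $\mathcal{R}^{N+1}$ rather than from the norm ball $\domain_\rho$. As in the surrounding discussion, I would treat $f_{\rho'}$ as the plain map $\mathcal{R}^{N+1}\to\mathcal{R}_q$ and leave the domain-membership error to the separate term bounded by \cref{lemma: length bound discrete Gaussian}.
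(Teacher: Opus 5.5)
Your proof is correct and follows the paper's argument: isolate an honestly generated index $j$, view $\sum_{i\neq j}\bm{h}_i\bm{u}_i$ as a shift of $\bm{h}_j\bm{u}_j+\bm{v}$, and invoke \cref{cor: Renyi uniformity of NTRU}. The only difference is that the paper fixes the other coordinates and relies on shift invariance, whereas you treat the random shift as a stochastic map and apply data processing, which makes the averaging over that shift explicit.
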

\begin{proof}
    Considering the definition of Property \ref{prop: rpsf domain uniformity}, there is at least one honestly generated (non-zero) public value $\bm{h}_j$ in the ring $\rho = (a_1, \ldots, a_N) = (\bm{h}_1, \ldots, \bm{h}_N)$ for $j \in[N]$.
    Let $d = (\bm{u}_1, \ldots, \bm{u}_{N+1})$ be a sample from $\sampledom$.
    Fixing all $i \in[N]\setminus{j}$, and the respective $\bm{u}_i$ from $\sampledom$, they describe a simple shift in $\mathcal{R}_q$:
    \[f_{\rho}(d) = \underbrace{\bm{u}_j\bm{h}_j + \bm{u}_{N+1}}_{:= \bm{x}}+ \sum_{i \in [N]\setminus \{j\}}\bm{u}_i\bm{h}_i.\]
    Applying \cref{cor: Renyi uniformity of NTRU} to $\bm{x}$ gives the Rényi divergence bound, which is preserved by the fixed shift applied after.
\end{proof}

\paragraph{Preimage Sampling is not detectable.}

To show that the preimage sampling is not detectable, we can use \cref{cor: presampler KL and Rényi divergence}.
\begin{corollary}
    Let $M$ be a positive integer, $\alpha > 1$ and $\epsilon \in (0, 1/4)$.
    Then, with parameter choice according to \cref{cor: presampler KL and Rényi divergence} ($s \geq \eta_{\epsilon(\ZZ^{2M})}\cdot \|\bm{B}\|_{GS}$), the $\samplepre$ algorithm of \cref{fig:Gandalf RPSF} has Kullback-Leibler divergence and Rényi divergence bounded as
    \[\rpsfprekl \lesssim 2\epsilon^2\quad \text{and}\quad\rpsfprerenyi \lesssim 1 + 2\alpha\epsilon^2.\]
\end{corollary}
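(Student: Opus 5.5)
The plan is to reduce the statement to \cref{cor: presampler KL and Rényi divergence}, using the data processing inequality from \cref{lemma: renyi divergence} and its analogue for the Kullback-Leibler divergence. Fix an honestly generated ring $\rho'$ (one that intersects the honest ring), an honest signer index $j$ with trapdoor $(\bm{f},\bm{g})$, and a uniform target $\bm{r} \gets \mathcal{R}_q$. Let $D$ be the output distribution of $\samplepre(\rho', t_j, \bm{r})$. Let $Q$ be the distribution of $d \gets \sampledom(\rho') = \mathcal{D}_{\mathcal{R},s}^{N+1}$ conditioned on $f_{\rho'}(d) = \bm{r}$.

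The first step is to factor both $D$ and $Q$ identically over the coordinates $(\bm{u}_i)_{i \neq j}$. In $D$, these coordinates are sampled independently from $\mathcal{D}_{\mathcal{R},s}$, and $\bm{c}_j := \bm{r} - \sum_{i\neq j}\bm{u}_i\bm{h}_i$ is then determined. The remaining pair $(\bm{u}_j,\bm{v})$ comes from $\ntrusamplepre(\bm{B}_{\bm{f},\bm{g}}, s, \bm{c}_j)$.

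In $Q$, I would show that conditioning on $f_{\rho'}(d)=\bm{r}$ is the same as conditioning on $\bm{h}_j\bm{u}_j + \bm{v} = \bm{c}_j$. Since the target $\bm{r}$ is uniform and independent, the marginal of $(\bm{u}_i)_{i\neq j}$ stays $\mathcal{D}_{\mathcal{R},s}^{N-1}$ provided that $\bm{r}$ is averaged over rather than fixed. Given $\bm{c}_j$, the pair $(\bm{u}_j,\bm{v})$ is then a product discrete Gaussian restricted to the coset $\{\bm{h}_j\bm{u}+\bm{v}=\bm{c}_j\}$, which is exactly $\mathcal{D}_{\bm{\Lambda}(\bm{B}),s,\bm{c}_j}$ in the notation of \cref{cor: presampler KL and Rényi divergence}, with the coset shift convention of \cref{tab: CoreFalcon params}.

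The second step is to apply the chain rule: the divergence of $D$ from $Q$ is at most the divergence of the joint laws of $(\bm{r},(\bm{u}_i)_{i\neq j})$, which is zero, plus the worst case over $\bm{c}_j$ of the divergence between the preimage-sampler output and the coset Gaussian. For $R_\alpha$ I would bound the conditional term by its maximum over $\bm{c}_j$, which gives a multiplicative bound. For $D_{KL}$ the chain rule is additive. In both cases \cref{cor: presampler KL and Rényi divergence} gives $2\epsilon^2$ and $1+2\alpha\epsilon^2$, provided $s \ge \eta_\epsilon(\ZZ^{2M})\|\bm{B}\|_{GS}$.

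The main obstacle is the first step: justifying that the $(\bm{u}_i)_{i\neq j}$-marginal of $Q$ coincides with that of $D$. For a fixed $\bm{r}$ this marginal is skewed by the probability mass of the coset of $\bm{c}_j$. That skew disappears only because Property~\ref{prop: rpsf preimage sampling undetectable} draws $\bm{r}$ uniformly. I would therefore compare the joint distributions of $(\bm{r}, d)$. A leftover mismatch between $\mathcal{Q}_{\bm{h}_j}$ and the uniform distribution might remain, and I would note that it is absorbed as a $\lesssim$ term, consistent with the approximate statement.
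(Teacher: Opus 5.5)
Your decomposition is the one the paper uses. Its proof also treats $(\bm{u}_i)_{i\neq j}$ as a target shift drawn beforehand, applies \cref{cor: presampler KL and Rényi divergence} to $(\bm{u}_j,\bm{v})$ at the induced syndrome $\bm{c}_j$, and simply asserts that the shift has ``the same distribution'' on both sides. You are right that this assertion is not literally true, but your repair is wrong.

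Let $P$ be the law of $f_{\rho'}(\sampledom(\rho'))$. Conditioning $\mathcal{D}_{\mathcal{R},s}^{N+1}$ on $f_{\rho'}(d)=\bm{r}$ reweights $(\bm{u}_i)_{i\neq j}$ by the coset mass $\mathcal{Q}_{\bm{h}_j}(\bm{c}_j)$. Fix any common law for the target: uniform, or $P$ as in the $\game{1}\to\game{2}$ hop of \cref{thm: Unf RPSF RS TAR}. Then the density ratio of the $(\bm{r},(\bm{u}_i)_{i\neq j})$-marginals of $D$ and $Q$ is
\[
\frac{P(\bm{r})}{\mathcal{Q}_{\bm{h}_j}(\bm{c}_j)},\qquad \bm{c}_j=\bm{r}-\textstyle\sum_{i\neq j}\bm{h}_i\bm{u}_i .
\]
This equals $1$ only if $\mathcal{Q}_{\bm{h}_j}$ is exactly uniform, and averaging over $\bm{r}$ does not cancel it. In particular, your claim that the marginal ``stays $\mathcal{D}_{\mathcal{R},s}^{N-1}$'' for uniform $\bm{r}$ is false. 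That marginal is $\mathcal{D}_{\mathcal{R},s}^{N-1}$ multiplied by $|\mathcal{R}_q|^{-1}\sum_{\bm{r}}\mathcal{Q}_{\bm{h}_j}(\bm{c}_j)/P(\bm{r})$, which in general depends on $(\bm{u}_i)_{i\neq j}$.

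Nor can this factor be absorbed into $\lesssim$ as a lower-order term. Suppose $s$ smooths $\bm{\Lambda}_{\bm{h}_j,q}$ with error $\epsilon'$. A second-order expansion then shows the factor contributes $1+O(\alpha\epsilon'^2)$ to $R_\alpha$ and $O(\epsilon'^2)$ to $D_{KL}$. That is in general the same order as the quantity you are bounding.

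A rigorous version of your argument keeps the chain rule you already set up:
\[
R_\alpha(D\|Q)\le R_\alpha(D_1\|Q_1)\cdot\max_{\bm{c}}R_\alpha\bigl(\ntrusamplepre(\bm{B},s,\bm{c})\|\mathcal{D}_{\bm{\Lambda}_{\bm{c}},s}\bigr),
\]
additively for $D_{KL}$, where $D_1,Q_1$ are the marginals above. It then bounds the first factor explicitly. Since $P(\bm{r})$ is a convex combination of values of $\mathcal{Q}_{\bm{h}_j}$, one gets $R_\alpha(D_1\|Q_1)\le R_\infty(D_1\|Q_1)\le\max\mathcal{Q}_{\bm{h}_j}/\min\mathcal{Q}_{\bm{h}_j}\le\frac{1+\epsilon'}{1-\epsilon'}$.

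This yields $1+2\alpha\epsilon^2$ times a correction factor, not the bound as stated. The paper's one-line proof does not account for this factor either. One way to avoid it is to restructure the hops: draw $\bm{c}_j$ uniformly and independently of $(\bm{u}_i)_{i\neq j}$, and only later switch it to $\mathcal{Q}_{\bm{h}_j}$. That, however, no longer matches the form of Property~\ref{prop: rpsf preimage sampling undetectable}.
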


\begin{proof}
    Considering the definition of Property \ref{prop: rpsf preimage sampling undetectable}, there is at least one honestly generated (non-zero) public value $\bm{h}_j$ in the ring $\rho = (a_1, \ldots, a_N) = (\bm{h}_1, \ldots, \bm{h}_N)$ for $j \in[N]$.
    Observe that \cref{cor: presampler KL and Rényi divergence} can be applied for an arbitrary target.
    For a preimage sample $d = (\bm{u}_1, \ldots \bm{u}_{N+1}) \gets \samplepre(\rho, t, \bm{r})$ where $j \in [N]$ is the honest index for which the trapdoor is given.
    Then, treat all other indices $[N+1]\setminus\{j, N+1\}$ as a target shift that is sampled before using the same distribution.
    Now, they have both a fixed arbitrary target with the same distribution so far.
    Applying \cref{cor: presampler KL and Rényi divergence} yields the respective bounds.
\end{proof}

\paragraph{One-Wayness.}
\begin{lemma}
    For any adversary $\adv = (\adv_1, \adv_2)$ against the one-wayness (Property~\ref{prop: rpsf onewayness}) of \cref{fig:Gandalf RPSF}, there exists an adversary $\bdv$ such that
    \[\advantage{\oneway}{\adv, \rpsf} \leq \advantage{\ntrurisis}{\bdv, \kappa, q,\alpha, \beta}.\]
\end{lemma}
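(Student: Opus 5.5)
The reduction is direct: $\bdv$ embeds its $\ntrurisis$ instance as the ring and relays the challenge target to $\adv$, so that any preimage $\adv$ outputs is already an $\ntrurisis$ solution. Concretely, $\bdv$ receives $(\bm{h}_1,\ldots,\bm{h}_\kappa,\bm{c}_1)$ from the $\tntrurisis{1}$ game with $m=\kappa$. The keys $\bm{h}_i$ are produced by $\ntrutrapgen(q,\alpha)$, exactly as in $\trapgen$ of \cref{fig:Gandalf RPSF}, so $\bdv$ forwards $\rho=\{\bm{h}_1,\ldots,\bm{h}_\kappa\}$ to $\adv_1$ with no distributional change. $\adv_1$ returns $(\state,\rho')$ with $\rho'\subseteq\rho$, say $\rho'=\{\bm{h}_{i_1},\ldots,\bm{h}_{i_{N'}}\}$. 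Since $\range_{\rho'}=\mathcal{R}_q$ for every ring, the uniform target $r\gets\range_{\rho'}$ has the same distribution as $\bm{c}_1$, so $\bdv$ hands $r:=\bm{c}_1$ to $\adv_2$.

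When $\adv_2$ returns $d=(\bm{u}_{i_1},\ldots,\bm{u}_{i_{N'}},\bm{v})$ with $f_{\rho'}(d)=\bm{v}+\sum_k\bm{h}_{i_k}\bm{u}_{i_k}=\bm{c}_1$, $\bdv$ pads the vector by setting $\bm{u}_i:=\bm{0}$ for every $i\notin\{i_1,\ldots,i_{N'}\}$. It then outputs $(1,\bm{u}_1,\ldots,\bm{u}_\kappa,\bm{v})$. The linear relation $\bm{v}+\sum_{i\in[\kappa]}\bm{h}_i\bm{u}_i=\bm{c}_1$ holds because the added terms vanish, and zero-padding leaves the Euclidean norm unchanged.

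The step that needs care is the norm condition. As discussed in the correctness paragraph above, $f_\rho$ in the \textsc{Gandalf} instantiation must check domain membership, that is, $\euclnorm{d}\le\beta$ with $\domain_{\rho'}$ the $\beta$-ball. I would read one-wayness success as requiring $d\in\domain_{\rho'}$, so that a successful $d$ satisfies $\euclnorm{(\bm{u}_1,\ldots,\bm{u}_\kappa,\bm{v})}\le\beta$ after padding. With the same $\beta$ in both games, $\bdv$ wins whenever $\adv$ does and the advantages satisfy the claimed inequality. The parameters must line up: the $\tntrurisis{t}$ game with $m=\kappa$ keys and $\beta$ from $\setup$ ($\beta=\tau s\sqrt{(\kappa+1)M}$) has to match; if $\adv_1$ selects a smaller ring, the padding handles it.

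The remaining subtleties are minor. One is ring ordering: $\rho'$ is a set, so $\bdv$ fixes the index map by matching public keys. The other is that no trapdoor is ever used, so the simulation is perfect. $\bdv$'s runtime is that of $\adv$ plus linear overhead.
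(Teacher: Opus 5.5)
Your proposal is correct and matches the paper's proof. Both embed the $\ntrurisis$ keys as the ring, pass the uniform target to $\adv_2$, and zero-pad the returned preimage, which preserves both the linear relation and the norm bound $\beta$; the paper likewise obtains that bound by requiring $d\in\domain_{\rho'}$. Your explicit remarks that the target is withheld until $\adv_1$ has chosen $\rho'$, and that $\range_{\rho'}=\mathcal{R}_q$ for every ring, make the distributional equivalence slightly more precise than the paper's write-up.
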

\begin{proof}
    Assume an adversary $\adv$ against the one-wayness of \cref{fig:Gandalf RPSF}.
    We can make a straightforward reduction to the $\ntrurisis$ problem.

    Consider $\ntrurisis$ game with $N$ values $\bm{h}_i$, $\beta$ as defined in the setup algorithm of the \ac{RPSF} and a random target $\bm{r} \gets \mathcal{R}_q$.
    All $\bm{h}_i$ will now be given to $\adv$ as the public values and $\bm{r}$.
    $\adv_1$ outputs $(\state, \rho')$ where $\rho' \subseteq \rho$.
    We define $L$ as the set of indices such that $\{\rho_{i}\}_{i \in L} = \rho'$.
    Now, we give the target $\bm{r}$ to $\adv_2$ with $\state$ and we will receive a preimage $d$ of $r$.
    The preimage $d = (\bm{u}_i)_{i \in L \cup\{N+1\}}$ must be in the domain to be properly evaluated by $f_{\rho'}$,
    so $\euclnorm{(\bm{u}_i)_{i \in L \cup\{N+1\}}} \leq \beta$ and $f_{\rho'}(d) = \bm{r}$.
    This preimage can be extended to a $\ntrurisis$ solution by setting the remaining positions to $\bm{0}$, i.e., output $(\bm{u}_i)_{i \in [N+1]}$
    where $\bm{u}_j = \bm{0}$ for $j \notin L \cup\{N+1\}$.
    The evaluation remains the same, and the norm bounds have not changed, making it a valid $\ntrurisis$ solution.
\end{proof}

\paragraph{Conditional Preimage Min-Entropy.}
\begin{lemma}
    If $s \geq 2\eta_{\epsilon}(\bm{\Lambda}_{\bm{h},q})$ and $\epsilon \in (0,1/3)$, then the conditional min-entropy described in Property \ref{prop: rpsf preimage min entropy} of \sampledom~as defined in \cref{fig:Gandalf RPSF} is at least $2M -1$.
\end{lemma}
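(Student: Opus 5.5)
The plan is to reduce the conditional distribution to a single shifted discrete Gaussian over a coset of the NTRU lattice and then apply \cref{lemma: min-entropy shifted discrete Gaussians}.

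\textbf{Step 1: isolate one coordinate pair.} Fix $\rho' \subseteq \rho$ and a target $r \in \mathcal{R}_q$. Let $d = (\bm{u}_1, \ldots, \bm{u}_{N}, \bm{v}) \gets \mathcal{D}_{\mathcal{R},s}^{N+1}$. Pick an honest index $j$ with public value $\bm{h}_j = \bm{h}$. Condition additionally on all $\bm{u}_i$ with $i \neq j$. For any fixed $D$, the probability $\prob{d = D \mid f_{\rho'}(d) = r}$ is an average over these other coordinates. Each term in the average is at most
\[
\prob{(\bm{u}_j, \bm{v}) = (D_j, D_{N+1}) \;\middle|\; \bm{h}\bm{u}_j + \bm{v} = \bm{c}},
\qquad
\bm{c} := r - \sum_{i \neq j} \bm{h}_i \bm{u}_i .
\]
It therefore suffices to bound this inner probability uniformly in $\bm{c}$.

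\textbf{Step 2: identify the conditional law.} Since $(\bm{u}_j, \bm{v}) \sim \mathcal{D}_{\ZZ^{2M}, s}$, conditioning on $\bm{h}\bm{u}_j + \bm{v} = \bm{c} \bmod q$ restricts it to the coset $\bm{\Lambda}_{\bm{h},q} + (\bm{0}, \bm{c})$. This coset is nonempty because $(\bm{0}, \bm{c})$ lies in it. The restricted law is the discrete Gaussian on that coset, exactly as in the argument for $\game{4}$ in the \textsc{Falcon} proof. Translating by $-(\bm{0}, \bm{c})$ gives a shifted discrete Gaussian $\mathcal{D}_{\bm{\Lambda}_{\bm{h},q}, s, -(\bm{0},\bm{c})}$ on the $2M$-dimensional lattice $\bm{\Lambda}_{\bm{h},q}$, and translation preserves point probabilities.

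\textbf{Step 3: apply the lemma.} With $s \geq 2\eta_\epsilon(\bm{\Lambda}_{\bm{h},q})$ and $\epsilon < 1/3$, \cref{lemma: min-entropy shifted discrete Gaussians} with $n = 2M$ bounds every point probability by $\frac{1+\epsilon}{1-\epsilon}2^{-2M} \le 2 \cdot 2^{-2M} = 2^{-(2M-1)}$. The inequality holds because $\frac{1+\epsilon}{1-\epsilon} < 2$ for $\epsilon < 1/3$. Averaging over the other coordinates preserves the bound, so $\beta(\lambda) \ge 2M-1$.

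\textbf{Main obstacle.} The delicate step is Step 2. It must justify that conditioning a product Gaussian on a linear congruence yields exactly the coset Gaussian, and that the lemma's "every $\bm{x} \in \Lambda$" applies to a coset, i.e.\ to a shifted center. The shift formulation above handles both points. One should also note that the domain restriction to the $\beta$-ball only renormalizes the distribution by a factor close to $1$ via \cref{lemma: length bound discrete Gaussian}. If this renormalization matters, I would absorb it by taking $\sampledom$ over $\mathcal{R}^{N+1}$, as the paper does.
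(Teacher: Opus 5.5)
Your proposal is correct and follows the paper's proof. Both arguments fix all ring coordinates except one honest NTRU pair, identify the conditional law as a shifted discrete Gaussian on a coset of $\bm{\Lambda}_{\bm{h},q}$, and apply \cref{lemma: min-entropy shifted discrete Gaussians} with $n = 2M$; you use the coset representative $(\bm{0},\bm{c})$ where the paper, following Stehl\'e--Steinfeld, uses $(\bm{1},\bm{t}-\bm{h})$, and you spell out the step that the joint point probability is at most the conditional probability of the honest pair (strictly a product bound rather than an average), which the paper leaves implicit.
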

\begin{proof}
    The domain is defined by the ring size $N$ of $\rho$, where all are honestly generated.
    We consider a subring $\rho' \subseteq \rho$.
    To be meaningful $\rho'$ is of size $N \geq 1$.
    $\sampledom$ samples a discrete Gaussian over $\mathcal{R}^{N+1}$.
    Consider the first $N-1$ to be fixed.
    This allows us to use the conditional min-entropy of the normal NTRU \acp{PSF}, so we can use the existing literature.
    So, for any value of the first $N-1$ elements, we can already get high min-entropy.

    We use the arguments from \cite[Proof of Property 4 in Theorem 4.2]{EC:SteSte11}.
    For any fixed target $\bm{t} \in \mathcal{R}_q$, the conditional distribution of $\bm{z} = (\bm{z}_1, \bm{z}_2) \gets \mathcal{D}_{\mathcal{R}, s}^2$, given $\bm{h}\bm{z}_1 + \bm{z}_2 = \bm{t}$ is exactly $\bm{c} + \mathcal{D}_{\bm{\Lambda}_{\bm{h}, q}, s, - \bm{c}}$ where $\bm{c} = (\bm{1}, \bm{t} - \bm{h})$\footnote{$\bm{c}$ is a preimage of $\bm{t}$, i.e., $\bm{h}\bm{1} - \bm{h} + \bm{t} = \bm{t}$.}.
    Now, apply \cref{lemma: min-entropy shifted discrete Gaussians} to get the conditional min-entropy of at least $2M-1$.
\end{proof}

\paragraph{Collision-Resistance.}
\begin{lemma}
    For any adversary $\adv$ against the collision-resistance (Property~\ref{prop: rpsf collision resistance}) of \cref{fig:Gandalf RPSF}, there exists an adversary $\bdv$ such that
    \[\advantage{\collision}{\adv, \rpsf} \leq \advantage{\ntrursis}{\bdv, \kappa,q,\alpha, 2\beta}.\]
\end{lemma}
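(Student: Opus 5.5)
The reduction will be direct and tight. Given a collision adversary $\adv$ against the \ac{RPSF} of \cref{fig:Gandalf RPSF}, we build $\bdv$ for $\ntrursis$ with $m=\kappa$ instances and norm bound $2\beta$. $\bdv$ receives $\bm{h}_1,\ldots,\bm{h}_\kappa$, all generated by $\ntrutrapgen$. These are distributed exactly as the public values of an honestly generated ring $\rho$ produced by $\trapgen$. $\bdv$ therefore forwards $\rho=(\bm{h}_1,\ldots,\bm{h}_\kappa)$ to $\adv$ unchanged; no trapdoors are needed, since collision resistance gives $\adv$ only $\rho$.

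When $\adv$ outputs $(d_1,d_2,\rho')$ with $\rho'\subseteq\rho$, let $L$ be the index set with $\{\rho_i\}_{i\in L}=\rho'$. Parse $d_1=((\bm{u}_i)_{i\in L},\bm{v})$ and $d_2=((\bm{u}'_i)_{i\in L},\bm{v}')$. $\bdv$ outputs the difference $\bm{z}_i:=\bm{u}_i-\bm{u}'_i$ for $i\in L$, $\bm{z}_i:=\bm{0}$ for $i\notin L$, and $\bm{w}:=\bm{v}-\bm{v}'$. Because $f_{\rho'}$ is linear, $f_{\rho'}(d_1)=f_{\rho'}(d_2)$ gives
\[\bm{w}+\sum_{i\in[\kappa]}\bm{h}_i\bm{z}_i=\bm{0}\bmod q.\]
Both $d_1$ and $d_2$ lie in the domain $\domain_{\rho'}$, so each has Euclidean norm at most $\beta$, and the triangle inequality bounds the norm of the difference by $2\beta$. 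Padding with zeros changes neither the equation nor the norm, exactly as in the one-wayness proof.

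The remaining point is nontriviality: $\ntrursis$ presumably forbids the all-zero solution, since otherwise it is trivial. This holds because $d_1\neq d_2$ as elements of $\mathcal{R}^{|\rho'|+1}$, so the difference vector is nonzero. I expect this to be the only subtle step. Domain elements live in $\mathcal{R}$, with coefficients bounded by $\beta$, and not merely in $\mathcal{R}_q$. If $2\beta<q$, distinct short vectors remain distinct, and the difference is a genuine nonzero short vector. I would state this parameter condition explicitly, or observe that it is implied by the $\setup$ choices.

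Putting these together, $\bdv$ succeeds whenever $\adv$ does, which gives $\advantage{\collision}{\adv,\rpsf}\le\advantage{\ntrursis}{\bdv,\kappa,q,\alpha,2\beta}$.
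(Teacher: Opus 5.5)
Your proposal is correct and follows the paper's proof: forward the $\kappa$ NTRU public keys, pad the subring collision with zeros, and output the coordinate-wise difference $d_1-d_2$, which lies in the kernel by linearity and has norm at most $2\beta$ by the triangle inequality. Your nontriviality discussion is more careful than the paper's, whose $\ntrursis$ game as written in \cref{fig: ntrugames} does not require a nonzero solution at all.
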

\begin{proof}
    The collision resistance, similarly to the one-wayness, can be converted directly from a subring $\rho'$ to the highest ring $\rho$, by adding $\bm{0}$ values.
    Two different short values $d_1$ and $d_2$, mapping to the same target, means that $d_1 - d_2$ (coordinate-wise) maps to $\bm{0}$ due to the linearity.
    This is directly a solution to $\mathcal{R}$-SIS and has length at most $2\beta$.
\end{proof}

\subsubsection{QROM bounds on \textsc{Gandalf}}\label{subsubsec: QROM bounds on Gandalf}

The general proof for \ac{RPSF} considers perfect correctness, but for \textsc{Gandalf} we need to consider that signatures and domain samples might be too long.
The transitioning from presampling to the simulation using $\sampledom$ does not use correctness properties and is independent of those.
The security proof could only fail in the last step when reducing to $\ntrursis$, because the simulated values might be too large.
Other positions do not fail, because they do not explicitly use the correctness properties, and the transitions themselves are independent of correctness.
With a probability of at least $1- \nrsignqueries \cdot \delta(\kappa)$, all are correct, and the reduction works.
Now we combine all previous results with \cref{thm: Unf RPSF RS TAR} to get the following.

\begin{theorem}
    Let $\mathcal{R} = \ZZ[X]/(X^M + 1)$ where $M$ is a power-of-two, $q$ prime, $\alpha_{1}, \alpha_{2} \in (1, \infty), \epsilon \in (0,1/4), s \geq 2 \eta_{\epsilon}(\bm{\Lambda}_{\bm{h}, q})$, $\kappa \in \NN$, $\saltbits \in \NN$ and $\tau > 1$.
    For any adversary $\adv$ against the $\seufcra$ security of $\rsig$ (\cref{fig: rsig from rpsf}) instantiated with $\rpsf$ (\cref{fig:Gandalf RPSF}) with $\ntrusamplepre$ and $\ntrutrapgen$ making at most $\nrsignqueries$ signature and $\nrhashqueries$ \ac{QROM} queries to $\hash$, there exists adversaries $\bdv$ and $\cdv$ such that
    \begin{align*}
        \advantage{\seufcra}{\adv, \rsig, N, \nrsignqueries} & \leq\left(\left(\rpsfdomrenyi \right)^{\nrsignqueries}\left(\left(\rpsfprerenyi\right)^\nrsignqueries\cdot \epsilon'\right)^{\frac{\alpha_{2}-1}{\alpha_{2}}}\right)^{\frac{\alpha_{1} -1}{\alpha_{1}}}  + \frac{3\nrsignqueries}{2}\sqrt{\frac{\nrsignqueries + \nrhashqueries + 1}{2^\saltbits}}
    \end{align*}
    where $\epsilon'  = \advantage{\ntrursis}{\bdv, \kappa,q,\alpha, 2\beta} + \frac{20(\nrsignqueries + \nrhashqueries + 1)^2\nrsignqueries}{q^M} + (2\nrsignqueries + 2\nrhashqueries + 3)^2 \advantage{\ntrurisis}{\cdv, \kappa,q,\alpha, \beta} +\nrsignqueries \cdot \delta(\kappa)$    and $\delta(\kappa) \leq \tau^{(\kappa + 1)M}\cdot e^{\frac{(\kappa + 1)M}{2}(1 - \tau^2)}$.
\end{theorem}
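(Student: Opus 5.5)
The bound will come from instantiating \cref{thm: Unf RPSF RS TAR} in its RD/RD row with the \textsc{Gandalf} \ac{RPSF} of \cref{fig:Gandalf RPSF}. Every abstract quantity appearing there gets replaced by the concrete bound proved in the preceding corollaries and lemmas. Concretely, $\rpsfdomrenyi[\alpha_1]$ comes from the domain-uniformity corollary via \cref{cor: Renyi uniformity of NTRU}, and $\rpsfprerenyi[\alpha_2]$ from the preimage corollary via \cref{cor: presampler KL and Rényi divergence}. The one-wayness advantage is bounded by $\advantage{\ntrurisis}{\cdv, \kappa,q,\alpha, \beta}$ and the collision advantage by $\advantage{\ntrursis}{\bdv, \kappa,q,\alpha, 2\beta}$. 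The range is $\range_\rho=\mathcal{R}_q$ for every ring, so $C=|\mathcal{R}_q|=q^M$, which turns the QROM collision term into $20(\nrsignqueries + \nrhashqueries + 1)^2\nrsignqueries/q^M$. The reprogramming term $\epsilon'$ of \cref{tab: seufcra bounds RPSF via TAR} is unchanged, with $p_{\max}\le 2^{-\saltbits}$ coming from the salt.

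The hypotheses need checking. The condition $s\ge 2\eta_\epsilon(\bm{\Lambda}_{\bm{h},q})$ with $\epsilon<1/4<1/3$ gives the min-entropy property needed by the lemma. The same condition implies $s\ge\eta_\epsilon$, which is what the domain-uniformity corollary requires. I will assume, as the corollaries do, that $s$ also satisfies the Gram--Schmidt condition of \cref{cor: presampler KL and Rényi divergence}.

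The only deviation from the generic proof is correctness. The \ac{RPSF} definition implicitly assumes that $\sampledom$ and $\samplepre$ land in the domain $\domain_\rho$, but here they sample Gaussians over $\mathcal{R}^{N+1}$ and may exceed the norm bound $\beta$. My plan is to rerun the game sequence $\game{0}\to\game{1}\to\game{2}$ of \cref{thm: Unf RPSF RS TAR} verbatim, treating $f_\rho$ as the plain linear map $\mathcal{R}^{N+1}\to\mathcal{R}_q$. None of these hops uses correctness: the Rényi hops compare distributions of samples, not their norms. Correctness enters only in the final reduction from $\game{2}$. There, a forgery on a previously signed target produces a valid $\ntrursis$ solution of norm at most $2\beta$ only if the simulated signature $\sigma_i\gets\sampledom$ is itself short.

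I therefore condition on the event that all $\nrsignqueries$ simulated samples have norm at most $\beta$. By \cref{lemma: length bound discrete Gaussian}(2) in dimension $(\kappa+1)M$ and a union bound, this event fails with probability at most $\nrsignqueries\,\delta(\kappa)$. This adds $\nrsignqueries\cdot\delta(\kappa)$ inside $\epsilon'$, before the Rényi exponents are applied, because the event is evaluated in $\game{2}$.

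The main point to get right is where this error term sits relative to the multiplicative Rényi factors. Placing it inside $\epsilon'$ matches the stated bound, and it is correct precisely because the bad event is defined and bounded in the last game, after both divergence hops. The remaining risk is whether correctness silently enters either Rényi hop. It should not: Property~\ref{prop: rpsf preimage sampling undetectable} already compares unconditioned $\samplepre$ output with conditional domain samples, as the footnote in the generic proof notes.
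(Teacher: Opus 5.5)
Your proposal is correct and matches the paper's argument: it instantiates the RD/RD row of \cref{thm: Unf RPSF RS TAR} with the \textsc{Gandalf} corollaries and lemmas, using $C=|\mathcal{R}_q|=q^M$. As in the paper, it handles over-long samples only in the final collision reduction from $\game{2}$, adding $\nrsignqueries\cdot\delta(\kappa)$ inside $\epsilon'$ before the Rényi exponents are applied; your extra Gram--Schmidt assumption on $s$ is one the paper also makes implicitly through the preimage-sampling corollary.
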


When instantiating, explicit Rényi divergence analysis can be done using for example \cite[Lemma 9]{C:GajJanKil24} or optimized analysis of Rényi divergence as in \cite{C:GajJanKil24}.

\paragraph{Comparison to Classical Proof}
The \ac{QROM} result is very close to what can be expected when transitioning the classical results to the \ac{QROM}.
Guessing the index for inclusion comes at the additional multiplicative cost of all queries, as we need to rely on \cref{thm: measure and reprogram 2.0}.
The term for hash collisions suffers the same factor as we transition to the \ac{QROM} and is standard.
Finally, the tight adaptive reprogramming \cref{cor: adaptive reprogramming with other distribution} only introduces an additive term that can be controlled entirely by the saltbits $\saltbits$.
The remaining terms remain untouched, showing that the classical proof can be transferred to the \ac{QROM} without significant overhead. }


\begin{thebibliography}{10}
    \providecommand{\url}[1]{\texttt{#1}}
    \providecommand{\urlprefix}{URL }
    \providecommand{\doi}[1]{https://doi.org/#1}

    \bibitem{AC:AbeOhkSuz02}
    Abe, M., Ohkubo, M., Suzuki, K.: 1-out-of-n signatures from a variety of keys.
    In: Zheng, Y. (ed.) ASIACRYPT~2002. {LNCS}, vol.~2501, pp. 415--432.
    Springer, Berlin, Heidelberg (Dec 2002). \doi{10.1007/3-540-36178-2_26}

    \bibitem{EC:ABKM22}
    Alagic, G., Bai, C., Katz, J., Majenz, C.: Post-quantum security of the
        {Even}-{Mansour} cipher. In: Dunkelman, O., Dziembowski, S. (eds.)
    EUROCRYPT~2022, Part~III. {LNCS}, vol. 13277, pp. 458--487. Springer, Cham
    (May~/~Jun 2022). \doi{10.1007/978-3-031-07082-2_17}

    \bibitem{AC:BLLSS15}
    Bai, S., Langlois, A., Lepoint, T., Stehl{\'e}, D., Steinfeld, R.: Improved
    security proofs in lattice-based cryptography: Using the {R}{\'e}nyi
    divergence rather than the statistical distance. In: Iwata, T., Cheon, J.H.
    (eds.) ASIACRYPT~2015, Part~I. {LNCS}, vol.~9452, pp. 3--24. Springer,
    Berlin, Heidelberg (Nov~/~Dec 2015). \doi{10.1007/978-3-662-48797-6_1}

    \bibitem{banaszczyk1993new}
    Banaszczyk, W.: New bounds in some transference theorems in the geometry of
    numbers. Mathematische Annalen  \textbf{296}(1),  625--635 (1993)

    \bibitem{C:BBDDFG23}
    Barbosa, M., Barthe, G., Doczkal, C., Don, J., Fehr, S., Gr{\'e}goire, B.,
    Huang, Y.H., H{\"u}lsing, A., Lee, Y., Wu, X.: Fixing and mechanizing the
    security proof of {Fiat}-{Shamir} with aborts and {Dilithium}. In: Handschuh,
    H., Lysyanskaya, A. (eds.) CRYPTO~2023, Part~V. {LNCS}, vol. 14085, pp.
    358--389. Springer, Cham (Aug 2023). \doi{10.1007/978-3-031-38554-4_12}

    \bibitem{rfc9180}
    Barnes, R., Bhargavan, K., Lipp, B., Wood, C.A.: {Hybrid Public Key
    Encryption}. RFC 9180 (Feb 2022). \doi{10.17487/RFC9180},
    \url{https://www.rfc-editor.org/info/rfc9180}

    \bibitem{belovs2019quantum}
    Belovs, A.: Quantum algorithms for classical probability distributions. arXiv
    preprint arXiv:1904.02192  (2019)

    \bibitem{TCC:BenKatMor06}
    Bender, A., Katz, J., Morselli, R.: Ring signatures: Stronger definitions, and
    constructions without random oracles. In: Halevi, S., Rabin, T. (eds.)
    TCC~2006. {LNCS}, vol.~3876, pp. 60--79. Springer, Berlin, Heidelberg (Mar
    2006). \doi{10.1007/11681878_4}

    \bibitem{JC:BenKatMor09}
    Bender, A., Katz, J., Morselli, R.: Ring signatures: Stronger definitions, and
    constructions without random oracles. Journal of Cryptology  \textbf{22}(1),
    114--138 (Jan 2009). \doi{10.1007/s00145-007-9011-9}

    \bibitem{EPRINT:Bernstein22d}
    Bernstein, D.J.: Multi-ciphertext security degradation for lattices. Cryptology
    ePrint Archive, Report 2022/1580 (2022),
    \url{https://eprint.iacr.org/2022/1580}

    \bibitem{AC:BeuKatPin20}
    Beullens, W., Katsumata, S., Pintore, F.: {Calamari} and {Falafl}: Logarithmic
    (linkable) ring signatures from isogenies and lattices. In: Moriai, S., Wang,
    H. (eds.) ASIACRYPT~2020, Part~II. {LNCS}, vol. 12492, pp. 464--492.
    Springer, Cham (Dec 2020). \doi{10.1007/978-3-030-64834-3_16}

    \bibitem{AC:BDFLSZ11}
    Boneh, D., Dagdelen, {\"O}., Fischlin, M., Lehmann, A., Schaffner, C., Zhandry,
    M.: Random oracles in a quantum world. In: Lee, D.H., Wang, X. (eds.)
    ASIACRYPT~2011. {LNCS}, vol.~7073, pp. 41--69. Springer, Berlin, Heidelberg
    (Dec 2011). \doi{10.1007/978-3-642-25385-0_3}

    \bibitem{CiC:BorLaiLer24}
    Borin, G., Lai, Y.F., Leroux, A.: Erebor and durian: Full anonymous ring
    signatures from quaternions and isogenies. {CiC}  \textbf{1}(4), ~4 (2024).
    \doi{10.62056/ava3zivrzn}

    \bibitem{EPRINT:BraKal10}
    Brakerski, Z., Kalai, Y.T.: A framework for efficient signatures, ring
    signatures and identity based encryption in the standard model. Cryptology
    ePrint Archive, Report 2010/086 (2010),
    \url{https://eprint.iacr.org/2010/086}

    \bibitem{PKC:BFGJS22}
    Brendel, J., Fiedler, R., G{\"u}nther, F., Janson, C., Stebila, D.:
    Post-quantum asynchronous deniable key exchange and the {Signal} handshake.
    In: Hanaoka, G., Shikata, J., Watanabe, Y. (eds.) PKC~2022, Part~II. {LNCS},
    vol. 13178, pp. 3--34. Springer, Cham (Mar 2022).
    \doi{10.1007/978-3-030-97131-1_1}

    \bibitem{CCS:CDGORR17}
    Chase, M., Derler, D., Goldfeder, S., Orlandi, C., Ramacher, S., Rechberger,
    C., Slamanig, D., Zaverucha, G.: Post-quantum zero-knowledge and signatures
    from symmetric-key primitives. In: Thuraisingham, B.M., Evans, D., Malkin,
    T., Xu, D. (eds.) ACM CCS 2017. pp. 1825--1842. {ACM} Press (Oct~/~Nov 2017).
    \doi{10.1145/3133956.3133997}

    \bibitem{PKC:CCLM22}
    Chatterjee, R., Chung, K.M., Liang, X., Malavolta, G.: A note on the
    post-quantum security of (ring) signatures. In: Hanaoka, G., Shikata, J.,
    Watanabe, Y. (eds.) PKC~2022, Part~II. {LNCS}, vol. 13178, pp. 407--436.
    Springer, Cham (Mar 2022). \doi{10.1007/978-3-030-97131-1_14}

    \bibitem{C:CGHKLM21}
    Chatterjee, R., Garg, S., Hajiabadi, M., Khurana, D., Liang, X., Malavolta, G.,
    Pandey, O., Shiehian, S.: Compact ring signatures from learning with errors.
    In: Malkin, T., Peikert, C. (eds.) CRYPTO~2021, Part~I. {LNCS}, vol. 12825,
    pp. 282--312. Springer, Cham, Virtual Event (Aug 2021).
    \doi{10.1007/978-3-030-84242-0_11}

    \bibitem{TCC:ChiManSpo19}
    Chiesa, A., Manohar, P., Spooner, N.: Succinct arguments in the quantum random
    oracle model. In: Hofheinz, D., Rosen, A. (eds.) TCC~2019, Part~II. {LNCS},
    vol. 11892, pp. 1--29. Springer, Cham (Dec 2019).
    \doi{10.1007/978-3-030-36033-7_1}

    \bibitem{EC:CFHL21}
    Chung, K.M., Fehr, S., Huang, Y.H., Liao, T.N.: On the compressed-oracle
    technique, and post-quantum security of proofs of sequential work. In:
    Canteaut, A., Standaert, F.X. (eds.) EUROCRYPT~2021, Part~II. {LNCS}, vol.
    12697, pp. 598--629. Springer, Cham (Oct 2021).
    \doi{10.1007/978-3-030-77886-6_21}

    \bibitem{EPRINT:CHHHLY21}
    Chung, K.M., Hsieh, Y.C., Huang, M.Y., Huang, Y.H., Lange, T., Yang, B.Y.:
    Group signatures and accountable ring signatures from isogeny-based
    assumptions. Cryptology ePrint Archive, Report 2021/1368 (2021),
    \url{https://eprint.iacr.org/2021/1368}

    \bibitem{USENIX:CHNRV24}
    Collins, D., {Huguenin-Dumittan}, L., Nguyen, N.K., Rolin, N., Vaudenay, S.:
    K-waay: Fast and deniable post-quantum {X3DH} without ring signatures. In:
    Balzarotti, D., Xu, W. (eds.) USENIX Security 2024. {USENIX} Association (Aug
    2024),
    \url{https://www.usenix.org/conference/usenixsecurity24/presentation/collins}

    \bibitem{PQCRYPTO:DerRamSla18}
    Derler, D., Ramacher, S., Slamanig, D.: Post-quantum zero-knowledge proofs for
    accumulators with applications to ring signatures from symmetric-key
    primitives. In: Lange, T., Steinwandt, R. (eds.) Post-Quantum Cryptography -
    9th International Conference, PQCrypto 2018. pp. 419--440. Springer, Cham
    (2018). \doi{10.1007/978-3-319-79063-3_20}

    \bibitem{deutsch1992rapid}
    Deutsch, D., Jozsa, R.: Rapid solution of problems by quantum computation.
    Proceedings of the Royal Society of London. Series A: Mathematical and
    Physical Sciences  \textbf{439}(1907),  553--558 (1992)

    \bibitem{C:DFPS23}
    Devevey, J., Fallahpour, P., Passel{\`e}gue, A., Stehl{\'e}, D.: A detailed
    analysis of {Fiat}-{Shamir} with aborts. In: Handschuh, H., Lysyanskaya, A.
    (eds.) CRYPTO~2023, Part~V. {LNCS}, vol. 14085, pp. 327--357. Springer, Cham
    (Aug 2023). \doi{10.1007/978-3-031-38554-4_11}

    \bibitem{C:DonFehMaj20}
    Don, J., Fehr, S., Majenz, C.: The measure-and-reprogram technique 2.0:
    Multi-round {Fiat}-{Shamir} and more. In: Micciancio, D., Ristenpart, T.
    (eds.) CRYPTO~2020, Part~III. {LNCS}, vol. 12172, pp. 602--631. Springer,
    Cham (Aug 2020). \doi{10.1007/978-3-030-56877-1_21}

    \bibitem{C:DFMS19}
    Don, J., Fehr, S., Majenz, C., Schaffner, C.: Security of the {Fiat}-{Shamir}
    transformation in the quantum random-oracle model. In: Boldyreva, A.,
    Micciancio, D. (eds.) CRYPTO~2019, Part~II. {LNCS}, vol. 11693, pp. 356--383.
    Springer, Cham (Aug 2019). \doi{10.1007/978-3-030-26951-7_13}

    \bibitem{C:DFMS22}
    Don, J., Fehr, S., Majenz, C., Schaffner, C.: Efficient {NIZKs} and signatures
    from commit-and-open protocols in the {QROM}. In: Dodis, Y., Shrimpton, T.
    (eds.) CRYPTO~2022, Part~II. {LNCS}, vol. 13508, pp. 729--757. Springer, Cham
    (Aug 2022). \doi{10.1007/978-3-031-15979-4_25}

    \bibitem{EC:DFMS22}
    Don, J., Fehr, S., Majenz, C., Schaffner, C.: Online-extractability in the
    quantum random-oracle model. In: Dunkelman, O., Dziembowski, S. (eds.)
    EUROCRYPT~2022, Part~III. {LNCS}, vol. 13277, pp. 677--706. Springer, Cham
    (May~/~Jun 2022). \doi{10.1007/978-3-031-07082-2_24}

    \bibitem{van_Erven_2014}
    van Erven, T., Harremoes, P.: Rényi divergence and kullback-leibler
    divergence. IEEE Transactions on Information Theory  \textbf{60}(7),
    3797–3820 (Jul 2014). \doi{10.1109/tit.2014.2320500},
    \url{http://dx.doi.org/10.1109/TIT.2014.2320500}

    \bibitem{EPRINT:FalFehHua25}
    Fallahpour, P., Fehr, S., Huang, Y.H.: Tighter quantum security for
        {Fiat}-{Shamir}-with-aborts and hash-and-sign-with-retry signatures.
    Cryptology ePrint Archive, Report 2025/985 (2025),
    \url{https://eprint.iacr.org/2025/985}

    \bibitem{C:FiaSha86}
    Fiat, A., Shamir, A.: How to prove yourself: {Practical} solutions to
    identification and signature problems. In: Odlyzko, A.M. (ed.) CRYPTO'86.
    {LNCS}, vol.~263, pp. 186--194. Springer, Berlin, Heidelberg (Aug 1987).
    \doi{10.1007/3-540-47721-7_12}

    \bibitem{fouque2018falcon}
    Fouque, P.A., Hoffstein, J., Kirchner, P., Lyubashevsky, V., Pornin, T., Prest,
    T., Ricosset, T., Seiler, G., Whyte, W., Zhang, Z., et~al.: Falcon:
    Fast-fourier lattice-based compact signatures over ntru. Submission to the
    NIST’s post-quantum cryptography standardization process  \textbf{36}(5),
    1--75 (2018)

    \bibitem{EPRINT:GajJanKil24b}
    Gajland, P., Janneck, J., Kiltz, E.: A closer look at falcon. Cryptology ePrint
    Archive, Report 2024/1769 (2024), \url{https://eprint.iacr.org/2024/1769}

    \bibitem{C:GajJanKil24}
    Gajland, P., Janneck, J., Kiltz, E.: Ring signatures for deniable {AKEM}:
    Gandalf's fellowship. In: Reyzin, L., Stebila, D. (eds.) CRYPTO~2024, Part~I.
    {LNCS}, vol. 14920, pp. 305--338. Springer, Cham (Aug 2024).
    \doi{10.1007/978-3-031-68376-3_10}

    \bibitem{STOC:GenPeiVai08}
    Gentry, C., Peikert, C., Vaikuntanathan, V.: Trapdoors for hard lattices and
    new cryptographic constructions. In: Ladner, R.E., Dwork, C. (eds.) 40th ACM
    STOC. pp. 197--206. {ACM} Press (May 2008). \doi{10.1145/1374376.1374407}

    \bibitem{USENIX:GiaMadOrl16}
    Giacomelli, I., Madsen, J., Orlandi, C.: {ZKBoo}: Faster zero-knowledge for
        {Boolean} circuits. In: Holz, T., Savage, S. (eds.) USENIX Security 2016. pp.
    1069--1083. {USENIX} Association (Aug 2016),
    \url{https://www.usenix.org/conference/usenixsecurity16/technical-sessions/presentation/giacomelli}

    \bibitem{AC:GHHM21}
    Grilo, A.B., H{\"o}velmanns, K., H{\"u}lsing, A., Majenz, C.: Tight adaptive
    reprogramming in the {QROM}. In: Tibouchi, M., Wang, H. (eds.)
    ASIACRYPT~2021, Part~I. {LNCS}, vol. 13090, pp. 637--667. Springer, Cham (Dec
    2021). \doi{10.1007/978-3-030-92062-3_22}

    \bibitem{STOC:Grover96}
    Grover, L.K.: A fast quantum mechanical algorithm for database search. In: 28th
    ACM STOC. pp. 212--219. {ACM} Press (May 1996). \doi{10.1145/237814.237866}

    \bibitem{PKC:HaqSca20}
    Haque, A., Scafuro, A.: Threshold ring signatures: New definitions and
    post-quantum security. In: Kiayias, A., Kohlweiss, M., Wallden, P., Zikas, V.
    (eds.) PKC~2020, Part~II. {LNCS}, vol. 12111, pp. 423--452. Springer, Cham
    (May 2020). \doi{10.1007/978-3-030-45388-6_15}

    \bibitem{PKC:HKKP21}
    Hashimoto, K., Katsumata, S., Kwiatkowski, K., Prest, T.: An efficient and
    generic construction for {Signal}'s handshake ({X3DH}): Post-quantum, state
    leakage secure, and deniable. In: Garay, J. (ed.) PKC~2021, Part~II. {LNCS},
    vol. 12711, pp. 410--440. Springer, Cham (May 2021).
    \doi{10.1007/978-3-030-75248-4_15}

    \bibitem{USENIX2025:HKW}
    Hashimoto, K., Katsumata, S., Wiggers, T.: Bundled authenticated key exchange:
    A concrete treatment of (post-quantum) signal's handshake protocol.
    Cryptology {ePrint} Archive, Paper 2025/040 (2025),
    \url{https://eprint.iacr.org/2025/040}

    \bibitem{HofPipSil98}
    Hoffstein, J., Pipher, J., Silverman, J.H.: {NTRU:} {A} ring-based public key
    cryptosystem. In: Third Algorithmic Number Theory Symposium (ANTS). {LNCS},
    vol.~1423, pp. 267--288. Springer (Jun 1998)

    \bibitem{STOC:IKOS07}
    Ishai, Y., Kushilevitz, E., Ostrovsky, R., Sahai, A.: Zero-knowledge from
    secure multiparty computation. In: Johnson, D.S., Feige, U. (eds.) 39th ACM
    STOC. pp. 21--30. {ACM} Press (Jun 2007). \doi{10.1145/1250790.1250794}

    \bibitem{cryptoeprint:2025/1090}
    Katsumata, S., Niot, G., Tucker, I., Wiggers, T.: Comprehensive deniability
    analysis of signal handshake protocols: {X3DH}, {PQXDH} to fully post-quantum
    with deniable ring signatures. Cryptology {ePrint} Archive, Paper 2025/1090
    (2025), \url{https://eprint.iacr.org/2025/1090}

    \bibitem{Signal:PQXDH}
    Kret, E., Schmidt, R.: The pqxdh key agreement protocol. Online (2023),
    \url{https://signal.org/docs/specifications/pqxdh/}

    \bibitem{EC:LanSteSte14}
    Langlois, A., Stehl{\'e}, D., Steinfeld, R.: {GGHLite}: More efficient
    multilinear maps from ideal lattices. In: Nguyen, P.Q., Oswald, E. (eds.)
    EUROCRYPT~2014. {LNCS}, vol.~8441, pp. 239--256. Springer, Berlin, Heidelberg
    (May 2014). \doi{10.1007/978-3-642-55220-5_14}

    \bibitem{qrom_falcon}
    Li, J., Jiang, H., Wang, H., Duan, Q.: Tighter security proof of falcon in the
    quantum random oracle model. In: Chen, R., Deng, R.H., Yung, M. (eds.)
    Information Security and Cryptology. pp. 142--165. Springer Nature Singapore,
    Singapore (2026)

    \bibitem{ACNS:LuAuZha19}
    Lu, X., Au, M.H., Zhang, Z.: Raptor: {A} practical lattice-based (linkable)
    ring signature. In: Deng, R.H., {Gauthier-Uma{\~n}a}, V., Ochoa, M., Yung, M.
    (eds.) ACNS 2019. {LNCS}, vol. 11464, pp. 110--130. Springer, Cham (Jun
    2019). \doi{10.1007/978-3-030-21568-2_6}

    \bibitem{EC:Lyubashevsky12}
    Lyubashevsky, V.: Lattice signatures without trapdoors. In: Pointcheval, D.,
    Johansson, T. (eds.) EUROCRYPT~2012. {LNCS}, vol.~7237, pp. 738--755.
    Springer, Berlin, Heidelberg (Apr 2012). \doi{10.1007/978-3-642-29011-4_43}

    \bibitem{C:LyuNguSei21}
    Lyubashevsky, V., Nguyen, N.K., Seiler, G.: {SMILE}: Set membership from ideal
    lattices with applications to ring signatures and confidential transactions.
    In: Malkin, T., Peikert, C. (eds.) CRYPTO~2021, Part~II. {LNCS}, vol. 12826,
    pp. 611--640. Springer, Cham, Virtual Event (Aug 2021).
    \doi{10.1007/978-3-030-84245-1_21}

    \bibitem{Signal:X3DH}
    Marlinspike, M., Perrin, T.: The x3dh key agreement protocol. Online (2016),
    \url{https://signal.org/docs/specifications/x3dh/}

    \bibitem{FOCS:MicReg04}
    Micciancio, D., Regev, O.: Worst-case to average-case reductions based on
        {Gaussian} measures. In: 45th FOCS. pp. 372--381. {IEEE} Computer Society
    Press (Oct 2004). \doi{10.1109/FOCS.2004.72}

    \bibitem{mittalring}
    Mittal, B.: Ring trapdoor functions: A lattice-based framework for secure ring
    signatures,
    \url{https://bhumikamittal.in/assets/pdf/undergrad_thesis_bhumika.pdf}

    \bibitem{nielsen2010quantum}
    Nielsen, M.A., Chuang, I.L.: Quantum computation and quantum information.
    Cambridge university press (2010)

    \bibitem{cryptoeprint:2025/853}
    Niot, G.: Practical deniable post-quantum {X3DH}: A lightweight split-{KEM} for
    k-waay. Cryptology {ePrint} Archive, Paper 2025/853 (2025),
    \url{https://eprint.iacr.org/2025/853}

    \bibitem{TCC:PeiRos06}
    Peikert, C., Rosen, A.: Efficient collision-resistant hashing from worst-case
    assumptions on cyclic lattices. In: Halevi, S., Rabin, T. (eds.) TCC~2006.
    {LNCS}, vol.~3876, pp. 145--166. Springer, Berlin, Heidelberg (Mar 2006).
    \doi{10.1007/11681878_8}

    \bibitem{CHES:PopDucGun14}
    P{\"o}ppelmann, T., Ducas, L., G{\"u}neysu, T.: Enhanced lattice-based
    signatures on reconfigurable hardware. In: Batina, L., Robshaw, M. (eds.)
    CHES~2014. {LNCS}, vol.~8731, pp. 353--370. Springer, Berlin, Heidelberg (Sep
    2014). \doi{10.1007/978-3-662-44709-3_20}

    \bibitem{prestthesis}
    Prest, T.: {Gaussian sampling in lattice-based cryptography}. Theses, {Ecole
            normale sup{\'e}rieure - ENS PARIS} (Dec 2015),
    \url{https://theses.hal.science/tel-01245066}

    \bibitem{AC:RivShaTau01}
    Rivest, R.L., Shamir, A., Tauman, Y.: How to leak a secret. In: Boyd, C. (ed.)
    ASIACRYPT~2001. {LNCS}, vol.~2248, pp. 552--565. Springer, Berlin, Heidelberg
    (Dec 2001). \doi{10.1007/3-540-45682-1_32}

    \bibitem{EC:SteSte11}
    Stehl{\'e}, D., Steinfeld, R.: Making {NTRU} as secure as worst-case problems
    over ideal lattices. In: Paterson, K.G. (ed.) EUROCRYPT~2011. {LNCS},
    vol.~6632, pp. 27--47. Springer, Berlin, Heidelberg (May 2011).
    \doi{10.1007/978-3-642-20465-4_4}

    \bibitem{ICICS:WanSun11}
    Wang, J., Sun, B.: Ring signature schemes from lattice basis delegation. In:
    Qing, S., Susilo, W., Wang, G., Liu, D. (eds.) ICICS 11. {LNCS}, vol.~7043,
    pp. 15--28. Springer, Berlin, Heidelberg (Nov 2011).
    \doi{10.1007/978-3-642-25243-3_2}

    \bibitem{C:YELAD21}
    Yuen, T.H., Esgin, M.F., Liu, J.K., Au, M.H., Ding, Z.: {DualRing}: Generic
    construction of ring signatures with efficient instantiations. In: Malkin,
    T., Peikert, C. (eds.) CRYPTO~2021, Part~I. {LNCS}, vol. 12825, pp. 251--281.
    Springer, Cham, Virtual Event (Aug 2021). \doi{10.1007/978-3-030-84242-0_10}

    \bibitem{FOCS:Zhandry12}
    Zhandry, M.: How to construct quantum random functions. In: 53rd FOCS. pp.
    679--687. {IEEE} Computer Society Press (Oct 2012).
    \doi{10.1109/FOCS.2012.37}

    \bibitem{C:Zhandry12}
    Zhandry, M.: Secure identity-based encryption in the quantum random oracle
    model. In: Safavi-Naini, R., Canetti, R. (eds.) CRYPTO~2012. {LNCS},
    vol.~7417, pp. 758--775. Springer, Berlin, Heidelberg (Aug 2012).
    \doi{10.1007/978-3-642-32009-5_44}

    \bibitem{C:Zhandry19}
    Zhandry, M.: How to record quantum queries, and applications to quantum
    indifferentiability. In: Boldyreva, A., Micciancio, D. (eds.) CRYPTO~2019,
    Part~II. {LNCS}, vol. 11693, pp. 239--268. Springer, Cham (Aug 2019).
    \doi{10.1007/978-3-030-26951-7_9}

    \bibitem{zhang2025pegasus}
    Zhang, X., Li, Z., Steinfeld, R., Zhao, R.K., Liu, J.K., Yuen, T.H.: Pegasus
    and pegaring: Efficient (ring) signatures from sigma-protocols for power
    residue prfs with (q) rom security. Cryptology ePrint Archive  (2025)

\end{thebibliography}
\end{document}